\renewenvironment{abstract}{\section*{Abstract}\small}{}
\newtheorem{definition}{Definition}[section]
\newtheorem{proposition}{Proposition}[section]
\newtheorem{theorem}{Theorem}[section]
\newtheorem{example}{Example}[section]
\newtheorem{lemma}{Lemma}[section]
\newcommand{\fullG}{\overline{\overline{\mathsf{G4}}}}
\newcommand{\fullGpn}{\overline{\overline{\mathsf{G4^{pn}}}}}
\newcommand{\Gastc}{\mathsf{G4^{\delta}}}
\newcommand{\coax}{\overline{ax}}
\newcommand{\G}{\mathsf{G4^{\delta}}}
\newcommand{\topp}{\mathsf{top}}
\newcommand{\topstar}{\mathsf{top}^{\#}}
\newcommand{\topf}{\mathsf{top_{cp}}}
\newcommand{\topfo}{\mathsf{top_{r}}}
\newcommand{\topfstar}{\mathsf{top_{cp}^{\#}}}
\newcommand{\topfstaro}{\mathsf{top_{r}^{\#}}}
\newcommand{\quest}{\scriptsize\sststile{}{\ast}}
\def\ruleoffset{1pt}
\newcommand\specialvdash[2]{\mathrel{\ensurestackMath{
  \mkern2mu\rule[-\dp\strutbox]{.4pt}{\baselineskip}\stackon[\ruleoffset]{
    \stackunder[\dimexpr\ruleoffset-.5\ht\strutbox+.5\dp\strutbox]{
      \rule[\dimexpr.5\ht\strutbox-.5\dp\strutbox]{2.5ex}{.4pt}}{
        \scriptstyle #1}}{\scriptstyle#2}\mkern2mu}}
}
\newcommand\specialvdashdef[2]{\mathrel{_{_{{\sf F}}}\ensurestackMath{
  \mkern2mu\rule[-\dp\strutbox]{.4pt}{\baselineskip}\stackon[\ruleoffset]{
    \stackunder[\dimexpr\ruleoffset-.5\ht\strutbox+.5\dp\strutbox]{
      \rule[\dimexpr.5\ht\strutbox-.5\dp\strutbox]{2.5ex}{.4pt}}{
        \scriptstyle #1}}{\scriptstyle#2}\mkern2mu}}
}
\newcommand\specialvdashnorm[2]{\mathrel{_{_{{\sf O}}}\ensurestackMath{
  \mkern2mu\rule[-\dp\strutbox]{.4pt}{\baselineskip}\stackon[\ruleoffset]{
    \stackunder[\dimexpr\ruleoffset-.5\ht\strutbox+.5\dp\strutbox]{
      \rule[\dimexpr.5\ht\strutbox-.5\dp\strutbox]{2.5ex}{.4pt}}{
        \scriptstyle #1}}{\scriptstyle#2}\mkern2mu}}
}
\newcommand\specialvdashprohib[2]{\mathrel{_{_{{\sf PR}}}\ensurestackMath{
  \mkern2mu\rule[-\dp\strutbox]{.4pt}{\baselineskip}\stackon[\ruleoffset]{
    \stackunder[\dimexpr\ruleoffset-.5\ht\strutbox+.5\dp\strutbox]{
      \rule[\dimexpr.5\ht\strutbox-.5\dp\strutbox]{2.5ex}{.4pt}}{
        \scriptstyle #1}}{\scriptstyle#2}\mkern2mu}}
}
\newcommand\specialvdashperm[2]{\mathrel{_{_{{\sf P}}}\ensurestackMath{
  \mkern2mu\rule[-\dp\strutbox]{.4pt}{\baselineskip}\stackon[\ruleoffset]{
    \stackunder[\dimexpr\ruleoffset-.5\ht\strutbox+.5\dp\strutbox]{
      \rule[\dimexpr.5\ht\strutbox-.5\dp\strutbox]{2.5ex}{.4pt}}{
        \scriptstyle #1}}{\scriptstyle#2}\mkern2mu}}
}
\newcommand\specialvdashany[2]{\mathrel{_{_{{\sf \times}}}\ensurestackMath{
  \mkern2mu\rule[-\dp\strutbox]{.4pt}{\baselineskip}\stackon[\ruleoffset]{
    \stackunder[\dimexpr\ruleoffset-.5\ht\strutbox+.5\dp\strutbox]{
      \rule[\dimexpr.5\ht\strutbox-.5\dp\strutbox]{2.5ex}{.4pt}}{
        \scriptstyle #1}}{\scriptstyle#2}\mkern2mu}}
}
\newcommand{\remove}[1]{}
\begin{document}

%\title[]{Controlled sequent calculi for propositional default logics} 
%5\title[]{A proof theory of a logic for norms and defaults}
%\title[]{A general proof system for norms and defaults}
%\title[]{A Controlled Proof System for Norms and Default Logics}
%\title[]{A controlled proof system for defaults and norms}
%\title[]{A Uniform Proof System for Defaults and Norms}
%\title[]{Uniform Proof System for Defaults and Norms}
%\title[]{Reasoning with defaults and norms via controlled sequents}
%\title[]{A logical system for defaults and norms}
%\title{A uniform system for reasoning with defaults and norms}
%\title{Norms and defaults: two sides of the same proof-theoretic coin}
%The Proof-Theoretic Symmetry of Norms and Defaults

%Norms and Defaults Revisited: A Proof-Theoretic Convergence

%\title{Norms as defaults and vice versa: a proof-theoretic perspective}
%\title{The Symmetry of Norms and Defaults: a Proof-Theoretic Perspective}
%\title{The interplay between Norms and Defaults: a Proof-Theoretic Perspective}
%\title{Conflict resolution in normative and default reasoning: an integrated approach}
%\title{Navigating conflicts in normative and default reasoning: an integrated approach}

%\title{A logic for deontic default reasoning}
\title{A logic for default deontic reasoning}

\author[M. Piazza]{Mario Piazza}
\address{Scuola Normale Superiore di Pisa
\\Classe di Lettere e Filosofia}
\email[M. Piazza]{mario.piazza@sns.it}
%\thanks{}

\author[A. Sabatini]{Andrea Sabatini}
\address{Scuola Normale Superiore di Pisa
\\Classe di Lettere e Filosofia}
\email[A. Sabatini]{andrea.sabatini@sns.it}

\maketitle

\begin{abstract} 
\begin{changemargin}{3em}{3em}
\linespread{1.1}

In many real-life settings, agents must navigate dynamic environments while reasoning under incomplete information and acting on a {\em corpus} of unstable, context-dependent, and often conflicting norms. We introduce a general, non-modal, proof-theoretic framework for deontic reasoning grounded in default logic. Its central feature  is the notion of {\em controlled sequent} -- a sequent annotated with sets of formulas (control sets) that prescribe what should or should not be entailed by the formulas in the antecedent. When combined with distinct extra-logical rules representing defaults and norms, these control sets record the conditions and constraints governing their applicability, thereby enabling local soundness checks for derived sequents. We prove that controlled sequent calculi satisfies admissibility of contraction and non-analytic cuts, and we establish their strong completeness with respect to credulous consequence in default theories and normative systems. Finally, we illustrate in depth how controlled sequent calculi provide a flexible and expressive basis for resolving deontic conflicts and capturing dynamic deontic notions via appropriate extra-logical rules.
\end{changemargin}
\end{abstract}

\section{Introduction}

Norms -- whether expressed as parental recommendations, military orders, traffic regulations, religious commandments, workplace policies, or self-imposed codes of conduct -- present a logical puzzle. On one hand, their observance or violation is a binary matter, naturally modeled by the semantics of classical propositional logic. On the other hand, norms are often unstable and context-dependent: the introduction of a new norm can override or invalidate previous ones, which places their behavior squarely within the realm of non-monotonic logic. To take an example from the first book of Plato's {\em Republic}, the norm ``One ought to return what one has borrowed'' may be overridden in exceptional circumstances -- for instance, if the person who lent a weapon has since gone mad. The norm is not monotonic: it applies by default, but can be defeated by relevant contextual changes, just as the generalization ``dogs bark'' is overridden by the more specific fact that ``Basenjis do not bark''. Finally, norms interact in complex ways, sometimes generating conflicts, even though -- being commands rather than declarative statements -- they do not possess truth values in the traditional logical sense. 

%Not only the formalization of normative systems, but also the resolution of conflicts between multiple norms are tasks of growing relevance in the field of artificial intelligence. For AI systems to interact effectively and responsibly with humans, they must be capable of understanding, representing, and reasoning about the intricate web of norms that structure human societies. Accordingly, growing attention has been devoted to developing logical tools that equip artificial agents to manage such normative complexity.

As is well known, from a logical standpoint, the most traditional approach to normative reasoning is provided by the family of deontic logics, which extend classical modal logic by introducing specific operators that enable the formal representation of norms, obligations and permissions within a Hilbert-style system, sequent calculus, or one of their variants \cite{wrightbook, hilpinen2013, Orl14, CT24}. However, standard deontic logics suffer from several well-documented limitations, including difficulties in addressing deontic paradoxes (such as the Good Samaritan paradox \cite{Prior58} or contrary-to-duty obligations \cite{Chis63, For84, vdTT97}), and limited flexibility in accommodating context-sensitive, specificity-based or dynamic normative structures. By contrast, about 25 years ago a different proposal was developed in terms of the so-called Input/Output Logics (I/O logics) \cite{MakvdT00, MakvdT01,MvT03}. Such systems offer a non-modal alternative by treating conditional obligations as inference rules, explicitly separating input conditions (I) from outputs (O) expressing what is obliged, prohibited or permitted under those conditions. This framework emphasizes what ought to be done in a given situation, without treating obligations as modal truths. However -- while more adaptable and better suited for modeling conditional and defeasible norms -- I/O logics remain at a relatively early stage in the development of robust and comprehensive proof systems\footnote{For a recent proposal based on a hypersequent framework, see \cite{IOhypersequent}.}.

The novel perspective proposed in this work differs significantly from both of these paradigms: it is modular, computationally more efficient, and provides fine-grained control over inference steps. These features allow for more transparent, scalable, and adaptive normative reasoning, particularly in complex or multi-layered settings.
Our aim is to develop a general and non-modal proof-theoretic framework for norms from the standpoint of a rational or artificial agent navigating a dynamic environment. This environment is stable enough to support expected obligations and prohibitions while remaining open and adaptable to new, potentially conflicting norms. As evidence of such flexibility, we will consider a wide range of concrete deontic scenario. %not limited to the standard examples and paradoxes typically found in the literature. }

Moreover, the framework introduced articulates the interplay between normative reasoning and default reasoning. The latter is a type of non-monotonic reasoning that enables the derivation of plausible conclusions in contexts of incomplete information and absent explicit contradictory evidence. Such conclusions are defeasible: they can be retracted when challenged by new information \cite{Reiter80,marekbook,Antoniou97}. Default reasoning can be formalized by extending classical logic with a collection of {\em extra-logical axioms} -- which represent the propositional contents of an ideal reasoner's beliefs -- and a set of {\em default rules}, which encode the inferential pathways that lead to defeasible yet consistent conclusions. 
In adopting a {\em credulous} approach, the ideal reasoner accepts as many individually consistent beliefs as possible. When reasoning with norms, such credulity is not merely permissible but necessary: a skeptical stance would unduly inhibit normative inference by prematurely excluding viable obligations. The idea of connecting normative and default reasoning is not new, as it has been especially suggested by John Horty \cite{Horty1993, Horty2012}. However, our approach is implemented beyond a purely programmatic level, by means of a proof-theoretic platform.

Technically, such implementation is based on the notion of {\em controlled} classical proof-system: a system built upon classical sequents augmented with a layer of extra-logical information, whose propagation is  governed by the very structure of derivations.
%Our approach relies on the notion of {\em controlled} sequent (written:  $\Gamma\vdash_\mathbf{S}\Delta$), namely a classical sequent encoding some  extra-logical information $\mathbf{S}$ (called "control set"), which overturns the derivability of $\Delta$ from $\Gamma$. In prior work, this framework enabled to encapsulate both nonmonotonicity and paraconsistency phenomena within a unified classical sequent calculus \cite{JLC17}. 
%In the present context, controlled sequents are syntactic constructs conceived in relation to norms. We interpret a sequent $\Gamma\vdash\Delta$ as denoting an {\em indicative conditional}: if the formula $\bigwedge\Gamma$ holds, then $\bigvee\Delta$ holds. Precisely, each sequent in the calculus is decorated with a \emph{control pair} $\langle \mathbf{T}, \mathbf{S} \rangle$ (written:  $\Gamma\specialvdash{\mathbf{S}}{\mathbf{T}}\Delta$), where $\mathbf{T}$ and $\mathbf{S}$ are sets of formulas for prescriptions regarding what should 
In earlier work, controlled classical sequents $\Gamma \vdash_\mathbf{S} \Delta$ were introduced to capture non-monotonic and paraconsistent behaviour within a classical framework \cite{JLC17}. The control set $\mathbf{S}$ encodes extra-logical information that can selectively block the derivability of $\Delta$ from $\Gamma$ \cite{JAL2013,JAL2014,EJPS2015,JLC17,piatesi24}.
In the present paper, by contrast, controlled sequents are syntactic devices tailored specifically to represent defaults and norms over a classical base.
 We adopt specific extra-logical rules to formalize defaults, as well as (un)conditional obligations and permissions:  %We interpret a sequent $\Gamma\vdash\Delta$ as an {\em indicative conditional}: if the conjunction $\bigwedge\Gamma$ holds, then disjunction $\bigvee\Delta$ holds. 
each sequent in the calculus is annotated with a \emph{control pair} $\langle \mathbf{T}, \mathbf{S} \rangle$ (written:  $\Gamma\specialvdash{\mathbf{S}}{\mathbf{T}}\Delta$), where $\mathbf{T}$ and $\mathbf{S}$ are sets of formulas that specify prescriptions on what should or should not be entailed by the formulas in the antecedent for an extra-logical rules to apply. To provide an initial sense of the approach advanced in this paper, we begin with an informal example.

\remove{Informally, a controlled sequent $\Gamma\specialvdash{\mathbf{S}}{\mathbf{T}}\Delta$ signifies %an indicative conditional with normative constraints
a normative statement: if $\bigwedge\Gamma$ holds, then $\bigvee\Delta$ {\em ought to} hold provided that
%prohibitions in $\mathbf{S}$ and all obligations in $\mathbf{T}$ are met. 
all formulas in $\mathbf{T}$ are entailed by $\bigwedge\Gamma$ {\color{blue}(plus other formulas)}, and no formula in $\mathbf{S}$ is entailed by $\bigwedge\Gamma$ {\color{blue}(plus other formulas)}.}

\begin{example}
    Let the extra-logical rule
    \smallskip
    \begin{center}
        {\AxiomC{$\Gamma\specialvdash{{\bf S_{1}}}{{\bf T_{1}}}p$}
        \AxiomC{$\Gamma\specialvdash{{\bf S_{2}}}{{\bf T_{2}}}q$}
        \BinaryInfC{$\Gamma\specialvdash{{\bf S}}{{\bf T}}r$}
        \DisplayProof}
    \end{center}
    \smallskip
    stand for the normative statement: `If one prays and he is a male, one ought to wear Tefillin' -- where ${\bf T}={\bf T_{1}}\cup{\bf T_{2}}\cup\{\{p,q\}\}$. Let $s$ stand for the (factual) statement `it is nighttime'. Now, consider the following extra-logical rule:
    \smallskip
    \begin{center}
        {\AxiomC{$\Gamma\specialvdash{{\bf S_{1}}}{{\bf T_{1}}}p$}
        \AxiomC{$\Gamma\specialvdash{{\bf S_{2}}}{{\bf T_{2}}}q$}
        \BinaryInfC{$\Gamma\specialvdash{{\bf S'}}{{\bf T}}r$}
        \DisplayProof}
    \end{center}
    \smallskip
    where ${\bf S'}={\bf S}\cup\{\{s\}\}$. Such extra-logical rule expresses the normative claim: `if one prays, then one ought to wear the Tefillin provided that one is male, unless it is nighttime'. This formulation captures the rule governing the wearing of Tefillin during prayer, taking both gender and temporal constraints into account.
\end{example}

\remove{\begin{example}
    Consider the sequent $p \specialvdash{\varnothing}{\{p\}} q$, representing the normative statement:  `If one prays, one ought to wear the Tefillin'. 

    \noindent The Torah mandates Tefillin for males during prayer, while the Talmud forbids wearing them at night. Let $q_{1},q_{2}$ stand for the (factual) statements `one is a male' and `it is nighttime', respectively.
    
    \noindent  The controlled sequent $p \specialvdashnorm{{\bf S_1}}{{\bf T_1}} q$, where ${\bf T_1} = {p, q_1}$ and ${\bf S_1} = {q_2}$, expresses the normative statement: `if one prays, then one ought to wear the Tefillin provided that one is male, unless it is nighttime'. This formulation captures the rule for wearing Tefillin during prayer, accounting for gender and time constraints.
\end{example}}

The paper is structured as follows. Section \ref{preliminary} contains the notions and results concerning normative systems and default logics that we will subsequently employ. In particular, we introduce a notion of {\em deontic extension} for systems of obligations and permissions, analogous to the notion of \L ukasiewicz extension for defaults. In Section \ref{controlled} we introduce a uniform proof-theoretic platform for defaults, obligations and permissions based on controlled sequents. Controlled sequent calculi enjoy admissibility of contraction and non-analytic cut, and proofs without non-analytic cuts exhibit a weakened form of analyticity. We establish that controlled sequent calculi are strongly complete with respect to credulous consequence based on \L ukasiewicz extensions (for defaults) and deontic extensions (for obligations and permissions). Hence, we show that controlled sequent calculi enable the formalization of weak versions of cumulative transitivity and cautious monotony for the underlying credulous consequence relations. Section \ref{examples} gathers a number of examples of deontic scenarios, observed through the {\em lens} of controlled calculi. Lastly, Section \ref{conclusion} outlines directions for future research. \remove{Below is a conceptual map of our overarching framework.

\begin{figure}
\centering
\begin{tikzpicture}[
    >=Stealth,
    node distance = 0.5cm and 1.5cm,
    box/.style={
        rectangle,
        draw,
        rounded corners,
        align=center,
        inner sep=3pt,
        minimum width=3.0cm,
        font=\footnotesize
    }
]

% CONTEXT
\node[box, fill=white!10] (env) {
    Incomplete information\\
    Conflicting norms\\
    Dynamic environments
};

% CORE
\node[box, fill=white!15, below=0.5cm of env] (csc) {
    \textbf{Controlled sequent calculi}\\[0pt]
    Uniform proof-theoretic platform
};

% CONTROLLED SEQUENTS
\node[box, fill=white!10, below left=0.4cm and 0.8cm of csc] (cs) {
    \textbf{Controlled sequents}\\ [0pt]
    Control pair $\langle{\bf S},{\bf T}\rangle$\\ 
    ${\bf T}$: conditions \\
    ${\bf S}$: constraints
};

% EXTRA-LOGICAL RULES
\node[box, fill=white!15, below right=0.4cm and 0.8cm of csc] (rules) {
    \textbf{Extra-logical rules}\\[0pt]
    Defaults \& norms\\
    Applicability {\em via} $\langle{\bf S},{\bf T}\rangle$
};

% META-THEORY
\node[box, fill=white!15, below=1.8cm of csc] (meta) {
    \textbf{Meta-theory}\\[0pt]
    Non-analytic cut elimination\\
    Strong completeness
};

% APPLICATIONS
\node[box, fill=white!15, below=0.5cm of meta] (apps) {
    \textbf{Applications}\\[0pt] Typicality-based obligations \\
    CTD and specificity \\ Dynamic deontic notions \\ Paradoxes of Right Weakening
};

% ARROWS (shortened)
\draw[->, shorten <=2pt, shorten >=2pt] (env) -- (csc);
\draw[->, shorten <=2pt, shorten >=2pt] (csc) -- (cs);
\draw[->, shorten <=2pt, shorten >=2pt] (csc) -- (rules);
\draw[->, shorten <=2pt, shorten >=2pt] (cs) -- (meta);
\draw[->, shorten <=2pt, shorten >=2pt] (rules) -- (meta);
\draw[->, shorten <=2pt, shorten >=2pt] (meta) -- (apps);

\end{tikzpicture}
\caption{Conceptual map of the framework.}
\end{figure}}

\section{Preliminary notions and results}\label{preliminary}

\begin{figure}
\begin{tabular}{llll}
{\sc axioms} & \\
& \\
{\AxiomC{}
\RightLabel{$ax$}
\UnaryInfC{$\Gamma,p\vdash p,\Delta$}
\DisplayProof} \quad
{\AxiomC{}
\RightLabel{$ax$}
\UnaryInfC{$\Gamma,p,\neg p\vdash\Delta$}
\DisplayProof} \quad
{\AxiomC{}
\RightLabel{$ax$}
\UnaryInfC{$\Gamma\vdash p,\neg p,\Delta$}
\DisplayProof} \quad
{\AxiomC{}
\RightLabel{$ax$}
\UnaryInfC{$\Gamma,\neg p\vdash\neg p,\Delta$}
\DisplayProof}
\\
& \\
{\AxiomC{}
\RightLabel{$\coax$}
\UnaryInfC{$\Theta\dashv\Lambda$}
\DisplayProof} &  
\\
& \\
{\sc logical rules} & \\

& \\
{\AxiomC{$\Gamma,A,B\vdash \Delta$}
\RightLabel{$L\wedge$}
\UnaryInfC{$\Gamma,A\wedge B\vdash\Delta$}
\DisplayProof}
\quad\quad\quad\quad\quad\quad\quad
{\AxiomC{$\Gamma,A,B\dashv\Delta$}
\RightLabel{$L'\wedge$}
\UnaryInfC{$\Gamma,A\wedge B\dashv\Delta$}
\DisplayProof}
\\
& \\
{\AxiomC{$\Gamma\vdash\Delta,A$}
\AxiomC{$\Gamma\vdash\Delta,B$}
\RightLabel{$R\wedge$}
\BinaryInfC{$\Gamma\vdash\Delta,A\wedge B$}
\DisplayProof}
\quad\quad\quad
{\AxiomC{$\Gamma\dashv\Delta,A_{i}$}
\RightLabel{$R'_{i}\wedge$}
\UnaryInfC{$\Gamma\dashv\Delta,A_{1}\wedge A_{2}$}
\DisplayProof}
\\
& \\
{\AxiomC{$A,\Gamma\vdash\Delta$}
\AxiomC{$B,\Gamma\vdash\Delta$}
\RightLabel{$L\vee$}
\BinaryInfC{$A\vee B,\Gamma\vdash\Delta$}
\DisplayProof}
\quad\quad\quad
{\AxiomC{$A_{i},\Gamma\dashv\Delta$}
\RightLabel{$L'_{i}\vee$}
\UnaryInfC{$A_{1}\vee A_{2},\Gamma\dashv\Delta$}
\DisplayProof}
\\
& \\
{\AxiomC{$\Gamma\vdash \Delta,A,B$}
\RightLabel{$R\vee$}
\UnaryInfC{$\Gamma\vdash\Delta,A\vee B$}
\DisplayProof}
\quad\quad\quad\quad\quad\quad\quad
{\AxiomC{$\Gamma\dashv\Delta,A,B$}
\RightLabel{$R_{i}'\vee$}
\UnaryInfC{$\Gamma\dashv\Delta,A\vee B$}
\DisplayProof}
\\
& \\
{\AxiomC{$\Gamma,\neg A\vdash \Delta$}
\AxiomC{$\Gamma,\neg B\vdash\Delta$}
\RightLabel{$L\neg\wedge$}
\BinaryInfC{$\Gamma,\neg (A\wedge B)\vdash\Delta$}
\DisplayProof}
\quad
{\AxiomC{$\Gamma,\neg A_{i}\dashv\Delta$}
\RightLabel{$L_{i}'\neg\wedge$}
\UnaryInfC{$\Gamma,\neg(A_{1}\wedge A_{2})\dashv\Delta$}
\DisplayProof}
\\
& \\
{\AxiomC{$\Gamma\vdash\Delta,\neg A,\neg B$}
\RightLabel{$R\neg\wedge$}
\UnaryInfC{$\Gamma\vdash\Delta,\neg(A\wedge B)$}
\DisplayProof}
\quad\quad\quad\quad\quad
{\AxiomC{$\Gamma\dashv\Delta,\neg A,\neg B$}
\RightLabel{$R'\neg\wedge$}
\UnaryInfC{$\Gamma\dashv\Delta,\neg(A\wedge B)$}
\DisplayProof}
\\
& \\
{\AxiomC{$\neg A,\neg B,\Gamma\vdash\Delta$}
\RightLabel{$L\neg\vee$}
\UnaryInfC{$\neg(A\vee B),\Gamma\vdash\Delta$}
\DisplayProof}
\quad\quad\quad\quad\quad
{\AxiomC{$\neg A,\neg B,\Gamma\dashv\Delta$}
\RightLabel{$L'\neg\vee$}
\UnaryInfC{$\neg(A\vee B),\Gamma\dashv\Delta$}
\DisplayProof}
\\
& \\
{\AxiomC{$\Gamma\vdash \Delta,\neg A$}
\AxiomC{$\Gamma\vdash \Delta,\neg B$}
\RightLabel{$R\neg\vee$}
\BinaryInfC{$\Gamma\vdash\Delta,\neg(A\vee B)$}
\DisplayProof}
\quad
{\AxiomC{$\Gamma\dashv\Delta,\neg A_{i}$}
\RightLabel{$R_{i}'\neg\vee$}
\UnaryInfC{$\Gamma\dashv\Delta,\neg(A_{1}\vee A_{2})$}
\DisplayProof}
\\
& \\
{\AxiomC{$\Gamma,A\vdash \Delta$}
\RightLabel{$L\neg\neg$}
\UnaryInfC{$\Gamma,\neg\neg A\vdash\Delta$}
\DisplayProof}
\quad\quad\quad\quad\quad\quad\quad
{\AxiomC{$\Gamma,A\dashv\Delta$}
\RightLabel{$L'\neg\neg$}
\UnaryInfC{$\Gamma,\neg\neg A\dashv\Delta$}
\DisplayProof}
\\
& \\
{\AxiomC{$\Gamma\vdash \Delta,A$}
\RightLabel{$R\neg\neg$}
\UnaryInfC{$\Gamma\vdash\Delta,\neg\neg A$}
\DisplayProof}
\quad\quad\quad\quad\quad\quad\quad
{\AxiomC{$\Gamma\dashv\Delta,A$}
\RightLabel{$R'\neg\neg$}
\UnaryInfC{$\Gamma\dashv\Delta,\neg\neg A$}
\DisplayProof}
\end{tabular}
\caption{${\sf G4pn}$ and $\fullGpn$ sequent calculi}
\label{fig:fullGpn}
\end{figure}

\subsection{The $\fullGpn$ calculus} We work with a propositional language containing a denumerable set of atoms $p,q,r,\ldots$, the unary connective $\neg$, and the binary connectives $\wedge,\vee$ for conjunction and disjunction. Capital Latin letters $A,B,C,\ldots$ range over formulas, while capital Greek letters $\Gamma,\Delta,\Pi,\Sigma,\ldots$ denote finite multisets of formulas. We use $\Theta,\Lambda,\ldots$ for finite multisets of {\em literals}, that is, atoms $p,q,r,\ldots$ and their negations $\neg p,\neg q,\neg r,\ldots$.
For contexts $\Theta=\{p_{1},\ldots,p_{k},\neg q_{1},\ldots,\neg q_{m}\}$ and $\Gamma=\{A_{1},\ldots,A_{n}\}$, we set:
 $$\Theta^{\bot}=\{\neg p_{1},\ldots,\neg p_{k},q_{1},\ldots,q_{m}\}\hspace{13pt}
\bigwedge\Gamma=A_{1}\wedge A_{2}\wedge\cdots\wedge A_{n}\hspace{13pt}
\bigvee\Gamma=A_{1}\vee A_{2}\vee\cdots\vee A_{n}$$

For $\Theta={A}$, we write $A^{\bot}$ for $\Theta^{\bot}$. For $\Gamma=\emptyset$, we stipulate $\Gamma^{\perp}=\Gamma$, $\bigwedge\Gamma=\top$, and $\bigvee\Gamma=\bot$, where $\top$ and $\bot$ denote an arbitrary tautology and contradiction, respectively. The {\em logical complexity} $\mathrm{C}(A)$ of a formula $A$ is defined as follows: $\mathrm{C}(A)=1$ if $A$ is a literal; $\mathrm{C}(A)=\mathrm{C}(B)+\mathrm{C}(C)+1$ if $A$ has the form $B\otimes C$; and $\mathrm{C}(A)=\mathrm{C}(\neg B)+\mathrm{C}(\neg C)+1$ if $A$ has the form $\neg(B\otimes C)$, where $\otimes\in\{\wedge,\vee\}$. This extends to multisets $\Gamma=A_{1},\ldots,A_{n}$ by setting $\mathrm{C}(\Gamma)=\mathrm{C}(A_{1})+\ldots+\mathrm{C}(A_{n})$.

We consider Gentzen-style sequents $\Gamma \vdash \Delta$ as well as  {\em antisequents} $\Gamma \dashv \Delta$, where $\Gamma \dashv \Delta$ is valid if and only if $\Gamma \vdash \Delta$ is invalid \cite{Goranko94,GPS}. In classical logic, an antisequent is valid precisely when some Boolean valuation makes every formula in $\Gamma$ true and every formula in $\Delta$ false.

The system $\fullGpn$ for classical propositional logic is a modification of the system $\fullG$. The latter, imported from \cite{RSL20}, treats logical contexts as multisets of formulas and is obtained by extending Kleene's original ${\sf G4}$ system \cite[pp. 289-290, p. 306]{Kleene1967} with the complementary axiom
{\AxiomC{}
\RightLabel{$\coax$}
\UnaryInfC{$\Theta\dashv\Lambda$}
\DisplayProof}, where $\Theta\cap\Lambda=\varnothing$, together with distinct rules for antisequents. 
The system $\fullGpn$ retains the rules for $\neg$ from $\fullG$ via a broader class of $ax$ instances, and further incorporates rules for $\neg\wedge$, $\neg\vee$, and $\neg\neg$--- hence the superscript ${\sf pn}$ for `primitive negation'. To speak uniformly about sequents ($\Gamma \vdash \Delta$) and antisequents ($\Gamma \dashv \Delta$) whenever the distinction is immaterial and they are taken independently of any specific deduction-refutation system deriving them, we use the term `$\ast$-sequent' and write the generic form $\Gamma \mathbin{\quest} \Delta$.
The measure $\mathrm{C}$ extends to any $\ast$-sequent $\Gamma \quest \Delta$ by setting $\mathrm{C}(\Gamma \quest \Delta)=\mathrm{C}(\Gamma)+\mathrm{C}(\Delta)$.

A $\fullGpn$ derivation $\pi$ may terminate either in a sequent $\Gamma \vdash \Delta$ or in an antisequent $\Gamma \dashv \Delta$. In the former case, we say that $\pi$ is a {\em proof}  of $\Gamma \vdash \Delta$; in the latter, $\pi$ is a {\em refutation}  of $\Gamma \vdash \Delta$. 
Any $\ast$-sequent $\Gamma \quest \Delta$ can be decomposed into a set of atomic $\ast$-sequents by applying, bottom-up, the rules $L\wedge, R\wedge, L\vee, R\vee, L\neg\wedge, R\neg\wedge, L\neg\vee, R\neg\vee, L\neg\neg, R\neg\neg$ from Figure \ref{fig:fullGpn}, with $\quest$ in place of $\vdash$, until every leaf of the resulting tree is an atomic $\ast$-sequent -- namely, one containing only literals.

We now highlight two crucial features of the $\fullGpn$ proof system:

\begin{proposition}
$\fullGpn$ proves (refutes) $\Gamma\vdash\Delta$ if and 
only if the formula $\neg\bigwedge\Gamma\vee\bigvee\Delta$ is classically valid (\,{\em in}valid).
\end{proposition}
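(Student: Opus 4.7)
The plan is to prove the two biconditionals in parallel: soundness by induction on the height of $\fullGpn$ derivations, and completeness by exploiting the invertibility of the logical rules together with the bottom-up decomposition to atomic $\ast$-sequents described just above.

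For soundness, the base cases cover the four schemas of $ax$ and the complementary axiom $\coax$. Each $ax$ instance forces $\neg\bigwedge\Gamma\vee\bigvee\Delta$ to be a tautology---either because some literal appears on both sides, or because a pair $p,\neg p$ sits on the same side. For $\coax$, the side condition $\Theta\cap\Lambda=\varnothing$, together with the absence of complementary literal pairs within either side, allows one to define the literal valuation sending each member of $\Theta$ to $1$ and each member of $\Lambda$ to $0$; this falsifies $\neg\bigwedge\Theta\vee\bigvee\Lambda$. The inductive step then checks, rule by rule, that each sequent rule preserves classical validity from premises to conclusion, and that each antisequent rule lifts a falsifying valuation from a premise to the conclusion. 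The asymmetry between the multi-premise sequent rules and the single-premise antisequent rules (e.g.\ $R\wedge$ versus $R_i'\wedge$) reflects the duality between validity as a universal condition and invalidity as an existential witness.

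For completeness, the bottom-up procedure described in the text reduces any $\ast$-sequent $\Gamma\quest\Delta$ to a tree of atomic leaves in finitely many steps: each logical rule strictly decreases $\mathrm{C}$, and in particular the clause $\mathrm{C}(\neg(B\otimes C))=\mathrm{C}(\neg B)+\mathrm{C}(\neg C)+1$ guarantees termination through the De Morgan-style rules. All logical rules are classically invertible---in the choice sense for the primed antisequent rules like $R_i'\wedge$, where the appropriate $i$ is determined by a witnessing valuation---so validity or invalidity of $\neg\bigwedge\Gamma\vee\bigvee\Delta$ transfers cleanly between conclusion and premise(s). At each atomic leaf $\Theta\quest\Lambda$, either some overlap of literals makes it an $ax$ instance and classically valid, or the disjointness conditions hold and $\coax$ applies, delivering a witnessing literal valuation. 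Propagating these leaf verdicts upward through the invertible rules yields a $\fullGpn$ proof of $\Gamma\vdash\Delta$ when $\neg\bigwedge\Gamma\vee\bigvee\Delta$ is valid, and a refutation otherwise.

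The main obstacle is the uniform verification of the negation rules $L\neg\wedge, R\neg\wedge, L\neg\vee, R\neg\vee$ and their primed counterparts: each must be checked to internalize De Morgan equivalence correctly while keeping the invertibility pattern aligned with the validity-invalidity duality. This is mechanical but must be done carefully, one rule at a time; once complete, the remaining dovetailing of the soundness and completeness halves is routine.
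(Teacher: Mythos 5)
Your argument is correct and follows essentially the route the paper leaves implicit (the proposition is stated without proof, the analogous fact being deferred, as in Proposition 2.2, to the standard treatment of $\fullG$ in the cited Kleene/Avron/RSL-style calculi): soundness by induction on derivation height, checking that sequent rules preserve validity and antisequent rules propagate a falsifying valuation, and completeness via the terminating invertible decomposition into atomic $\ast$-sequents, with a witnessing valuation steering the choice-style primed rules and the leaves closed by $ax$ or $\coax$. Your reading of $\coax$ as requiring, beyond $\Theta\cap\Lambda=\varnothing$, that neither side contain a complementary literal pair is the intended one (it matches the paper's definition of $\topf$) and is indeed needed for the base case of soundness, so no gap remains.
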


\begin{proposition}
\label{stab}
Maximal $\fullGpn$-decomposition yields a unique set of atomic $\ast$-sequents.
\end{proposition}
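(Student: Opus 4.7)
The plan is to prove uniqueness by the standard termination-plus-confluence strategy, since the maximal decomposition procedure behaves like a rewriting system on $\ast$-sequents whose normal forms are the atomic $\ast$-sequents.

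First, I would verify termination. Every decomposition rule, read bottom-up, replaces a single composite occurrence $A \otimes B$ (or $\neg(A \otimes B)$, or $\neg\neg A$) in the conclusion with its proper subformulas in the premise(s), leaving the rest of the context unchanged (aside from duplication in the branching rules). A direct calculation from the definition of $\mathrm{C}$ shows that the complexity of each premise is strictly smaller than that of the conclusion. Hence every decomposition sequence terminates, and every leaf is necessarily atomic because no decomposition rule applies to a literal.

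Second, I would establish a diamond (local confluence) lemma: if $\Gamma \quest \Delta$ admits two distinct composite redexes, then decomposing them in either order yields sub-derivations with the same multiset of atomic leaves. The argument is a routine case analysis on the shape of the two redexes and on whether they occur on the same or on opposite sides of $\quest$. Since the rules act on disjoint formula occurrences, after the first rule fires the second redex is preserved verbatim in every premise; in the binary cases ($R\wedge$ and $L\vee$) it is merely duplicated, and decomposing it in each branch independently commutes with the first step.

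Third, combining strong termination with local confluence via Newman's lemma -- or equivalently via a direct induction on $\mathrm{C}(\Gamma \quest \Delta)$ using the diamond lemma at the inductive step -- yields that any two maximal decompositions of the same $\ast$-sequent produce the same multiset of atomic leaves. The main obstacle I anticipate is the bookkeeping required when a binary rule duplicates contexts containing the second redex; adopting as invariant the multiset (or set) of atomic $\ast$-sequents at the leaves, rather than the decomposition tree itself, keeps this bookkeeping clean and yields the desired uniqueness.
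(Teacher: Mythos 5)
Your argument is correct and is essentially the argument the paper appeals to: the paper simply cites the corresponding proof for $\fullG$ in Avron's and the RSL20 treatment, which likewise rests on termination of the bottom-up decomposition together with the fact that decomposing distinct redexes commutes (each composite occurrence determines a unique applicable rule, and contexts are copied unchanged into the premises), so order-independence of the atomic leaves follows. Your termination-plus-local-confluence write-up, with the multiset of atomic leaves as invariant, is a faithful self-contained version of that same strategy, so there is nothing substantive to add.
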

\begin{proof}
We argue as in the proof of the same property for $\fullG$ in \cite{Avron93,RSL20}.
\end{proof}

Proposition \ref{stab} allows us to refer directly to the unique set of atomic $\ast$-sequents associated with any given $\ast$-sequent $\Gamma \quest \Delta$, since this decomposition is independent of the particular derivation producing it.  For any set of atomic $\ast$-sequents $\mathcal{C}$, we say that $\mathcal{C}$ is {\em closed under Negation} whenever $\Theta_{1},\Lambda_{2}^{\bot}\quest\Theta_{2}^{\bot},\Lambda_{1}$ belongs to $\mathcal{C}$ if $\Theta_{1},\Theta_{2}\quest\Lambda_{2},\Lambda_{1}$ belongs to $\mathcal{C}$. We write $\topp(\Gamma \quest \Delta)$ for the closure under Negation of the atomic $\ast$-sequents associated with $\Gamma \quest \Delta$. We write $\topf(\Gamma \quest \Delta)$ for the subset of those atomic $\ast$-sequents $\Theta \quest \Lambda$ in $\topp(\Gamma \quest \Delta)$ such that $\Theta \cap \Lambda = \varnothing$ and neither $\Theta$ nor $\Lambda$ does not contains both an atom and its negation; the subscript ${\sf cp}$ stands for `complementary'.  Finally, $\topfo(\Gamma \quest \Delta)$ denotes the complementary $\ast$-sequents of the form $\quest\Lambda$ in $\topf(\Gamma\quest\Delta)$ -- where the subscript ${\sf r}$ stands for `right-sided'.

\begin{example}
This is a decomposition-tree of $\neg(p\wedge q)\vee r\quest s\vee\neg t$:
\smallskip
\begin{center}
    {\rootAtTop
    \AxiomC{$\neg p\quest s,\neg t$}
    \AxiomC{$\neg q\quest s,\neg t$}
    \LeftLabel{\scriptsize{$L\neg\wedge$}}
    \BinaryInfC{$\neg(p\wedge q)\quest s,\neg t$}
    \AxiomC{$r\quest s,\neg t$}
    \RightLabel{\scriptsize{$L\vee$}}
    \BinaryInfC{$\neg(p\wedge q)\vee r\quest s,\neg t$}
    \RightLabel{\scriptsize{$R\vee$}}
    \UnaryInfC{$\neg(p\wedge q)\vee r\quest s\vee\neg t$}
    \DisplayProof}
\end{center}
\smallskip
Hence, we have that $\topp(\neg(p\wedge q)\vee r\quest s\vee\neg t)=\{\neg p\quest s,\neg t\,;\,\neg q\quest s,\neg t\,;\,r\quest s,\neg t\,;\,\neg p,\neg s\quest\neg t\,;\,\neg p,t\quest s\,;\,\neg p,t,\neg s\quest\ \,;\,\ \quest s,\neg t,p\,;\,\neg s\quest\neg t,p\,;\,t\quest s,p\,;\,t,\neg s\quest p\,;\,\neg q,\neg s\quest\neg t\,;\,\neg q,\neg s,t\quest\ \,;\,\neg q,t\quest s\,;\,\ \quest s,\neg t,q\,;\,\neg s\quest \neg t,q\,;\,\neg s,t\quest q\,;\,t\quest s,q\,;\,r,\neg s\quest \neg t\,;\,r,t\quest s\,;\,r,\neg s,t\quest\ \,;\,\neg s\quest \neg t,\neg r\,;\,t\quest s,\neg r\,;\,\neg s,t\quest\neg r\}$, $\topf(\neg(p\wedge q)\vee r\quest s\vee\neg t)=\topp(\neg(p\wedge q)\vee r\quest s\vee\neg t)$ and $\topfo(\neg(p\wedge q)\vee r\quest s\vee\neg t)=\{\quest p,s,\neg t\,;\,\ \quest s,\neg t,q\,;\,\ \quest\neg r,s,\neg t\}$.
\end{example}

\noindent For any set $\mathcal{C}$ of atomic $\ast$-sequents, we say that $\mathcal{C}$ is closed under Contraction whenever
\smallskip
    \begin{itemize}
        \item[$-$] if $A,A,\Theta\quest\Lambda$ or $A,\Theta\quest\Lambda,A^{\bot}$ is in $\mathcal{C}$, then $A,\Theta\quest\Lambda$ is in $\mathcal{C}$;
        \smallskip
        \item[$-$] if $\Theta\quest\Lambda,A,A$ or $A^{\bot},\Theta\quest\Lambda,A$ is in $\mathcal{C}$, then $\Theta\quest\Lambda,A$ is in $\mathcal{C}$.
    \end{itemize}
\smallskip 
Furthermore, we say that $\mathcal{C}$ is closed under Cut when $\Phi,\Theta\quest\Lambda,\Psi$ is in $\mathcal{C}$ if one of the following condition holds:
\smallskip
    \begin{itemize}
            \item[$-$] $\Theta\quest\Lambda,A$ and $A,\Phi\quest\Psi$;
            \smallskip
            \item[$-$] $\Theta\quest\Lambda,A$ and $\Phi\quest\Psi,A^{\bot}$;
            \smallskip
            \item[$-$] $A^{\bot},\Theta\quest\Lambda$ and $A,\Phi\quest\Psi$\footnote{When employing $\quest$ in the definition of closure under Cut, we do not mean to imply that closure under Cut preserves refutability within any particular deduction-refutation system; indeed, it clearly does not. Rather, our aim is to provide a formulation of closure under Cut that abstracts away from the commitments of any specific deduction-refutation framework.}.
    \end{itemize}
\smallskip
We write $\topstar(\Gamma\quest\Delta)$ to refer to the set of atomic $\ast$-sequents which is obtained from $\topp(\Gamma\quest\Delta)$ by maximal application of the following steps (cf. \cite[p. 9]{PiazzaTesi24}):
\smallskip
\begin{itemize}
    \item[$(i)$] start with $\mathcal{C}_{0}=\topp(\Gamma\quest\Delta)$;
    \smallskip
    \item[$(ii)$] take the closure under Contraction of $\mathcal{C}_{n}$;
    \smallskip
    \item[$(iii)$] if either $\Theta\quest\Lambda,A$ and $A,\Phi\quest\Psi$, or $A^{\bot},\Theta\quest\Lambda$ and $A,\Phi\quest\Psi$, or $A,\Theta\quest\Lambda$ and $A^{\bot},\Phi\quest\Psi$ belong to $\mathcal{C}_{n}$, and $\Phi,\Theta\quest\Lambda,\Psi$ does not belong to $\mathcal{C}_{n}$, then take $\mathcal{C}_{n+1}=\mathcal{C}_{n}\cup\{\Phi,\Theta\quest\Lambda,\Psi\}$.
\end{itemize}
\smallskip
We use $\topfstar(\Gamma\quest\Delta)$ to denote the set of complementary $\ast$-sequents in $\topstar(\Gamma\quest\Delta)$. Finally, we employ $\topfstaro(\Gamma\quest\Delta)$ to denote the smallest set containing all the complementary $\ast$-sequents in $\topstar(\Gamma\quest\Delta)$ whose left-hand side is empty.

\begin{example}
Consider the $\ast$-sequent $\quest(\neg p\vee q)\wedge(\neg q\vee r)$: we have that $\topp(\quest(\neg p\vee q)\wedge(\neg q\vee r))=\{\ \quest\neg p,q\,;\,\neg q\quest\neg p\,;\,p\quest q\,;\,p,\neg q\quest\ \,;\,\ \quest\neg q,r,r\,;\,q\quest r,r\,;\,\neg r\quest\neg q,r\,;\,q,\neg r\quest r\,;\,\neg r\quest\neg q,r\,;\,\neg r,q\quest r\,;\,\neg r,\neg r\quest\neg q\,;\,\neg r,q,\neg r\quest\ \}$ and $\topf(\quest(\neg p\vee q)\wedge(\neg q\vee r))=\topp(\quest(\neg p\vee q)\wedge(\neg q\vee r))$.

\noindent Hence, $\topfstar(\quest(\neg p\vee q)\wedge(\neg q\vee r))=\topf(\quest(\neg p\vee q)\wedge(\neg q\vee r))\cup\{q,\neg r\quest\ \,;\,q\quest r\,;\,\neg r\quest\neg q\,;\,\ \quest\neg q,r\,;\,q\quest r\,;\,\ \quest\neg q,r\,;\,\ \quest\neg p,r,r\,;\,\neg r\quest\neg p,r\,;\,\neg r,\neg r\quest\neg p\,;\,\ \quest\neg p,r\,;\,\neg r\quest\neg p\,;\,\ \quest \neg p,r\,;\,\neg r\quest\neg p\,;\,\ p\quest r,r\,;\,\neg r,p\quest r\,;\,\neg r,\neg r,p\quest\ \,;\,\ p\quest r\,;\,\neg r,p\quest\ \,;\,p\quest r\,;\,p,\neg r\quest\ \}$.   
\end{example}

\subsection{Default theories}\label{default theories}

A {\em default} is a domain-specific inference rule of the form 
\begin{equation}\label{eq:def}
\dfrac{B:C_{1},\ldots,C_{n}}{D}
\end{equation}
where $B$ is the {\em prerequisite}, the non-contradictory formulas $C_{1},\ldots,C_{n}$ are the {\em justifications} and the non-tautological formula $D$ is the {\em conclusion} of the default. Its interpretation is that if $B$ is proved, then $D$ is provable, so long as the formulas $\neg C_{1},\ldots,\neg C_{n}$ are not provable. We say that a default rule is {\em triggered} whenever its prerequisite is proved.

If $n=1$, then a default rule of the form (\ref{eq:def}) is {\em normal} if and only if $\fullGpn$ proves both the sequent $C_{1}\vdash D$ and $D\vdash C_{1}$ and {\em semi-normal} if and only if $\fullGpn$ proves $C_{1}\vdash D$. For convenience, when discussing normal default rules, we only focus on those for which $C_{1}=D$. 

A {\em default theory} is a pair $\langle\mathcal{W},\mathcal{D}\rangle$, where $\mathcal{W}$ is a finite, consistent set of extra-logical axioms and $\mathcal{D}$ is a finite, non-empty set of default rules. A default theory $\langle\mathcal{W},\mathcal{D}\rangle$ is said to be normal if and only if all the defaults in $\mathcal{D}$ are normal. We will use $req(\mathcal{D}'),just(\mathcal{D}')$ and $concl(\mathcal{D}')$ to refer to the set of prerequisites, justifications and conclusions, respectively, of the defaults in any $\mathcal{D}'\subseteq\mathcal{D}$. %Moreover, we write $red(\mathcal{D})$ to denote the set of inference rules obtained from the defaults in $\mathcal{D}$ by deleting their justifications.

A {\em modified extension} (called also {\em \L ukaszewicz extension}) is a set of formulas derivable from $\mathcal{W}$ by classical logic and the maximal application of default rules in $\mathcal{D}$ whose consistency condition holds (both before and after their being triggered) relatively to a simultaneously defined support set (\cite[pp. 75-76]{Antoniou97}, \cite{Lukas88}). Specifically, we say that $\langle\mathcal{E},\mathcal{F}\rangle$ is a modified extension of $\langle\mathcal{W},\mathcal{D}\rangle$ if and only if $\mathcal{E}$ and $\mathcal{F}$ are quasi-inductively defined as follows:
\smallskip
\begin{center}
$\mathcal{E}^{0}=\mathcal{W}\ \ \ \ \ \mathcal{F}^{0}=\varnothing$
\end{center}
\begin{center}
\small{
$\mathcal{E}^{k+1}=Cn(\mathcal{E}^{k})\cup\{concl(\mathcal{D'})\mid\mathcal{D}'\subseteq\mathcal{D},req(\mathcal{D}')\subseteq\mathcal{E}^{k},A\in(\mathcal{F}\cup just(\mathcal{D}'))\Rightarrow \neg A\not\in Cn(\mathcal{E}\cup concl(\mathcal{D}'))\}$}
\end{center}
\begin{center}
\small{
$\mathcal{F}^{k+1}=\mathcal{F}^{k}\cup\{just(\mathcal{D'})\mid\mathcal{D}'\subseteq\mathcal{D},req(\mathcal{D}')\subseteq\mathcal{E}^{k},A\in(\mathcal{F}\cup just(\mathcal{D}'))\Rightarrow \neg A\not\in Cn(\mathcal{E}\cup concl(\mathcal{D}'))\}$}
\end{center}
\begin{center}
$\mathcal{E}=\bigcup\limits_{i=0}^{\omega}\mathcal{E}^{i}\ \ \ \ \ \mathcal{F}=\bigcup\limits_{i=0}^{\omega}\mathcal{F}^{i}$
\end{center}

\begin{example}
Let $\langle\mathcal{W},\mathcal{D}\rangle$ be defined as follows:
$$\mathcal{W}=\varnothing$$
$$\mathcal{D}=\Big\{\dfrac{\top:\neg q\wedge p}{p}\ ,\ \dfrac{\top:\neg r\wedge q}{q}\ ,\ \dfrac{\top:\neg p\wedge r}{r}\Big\}$$
$\langle\mathcal{W},\mathcal{D}\rangle$ has multiple modified extensions -- namely, $\langle\mathcal{E}_{1},\mathcal{F}_{1}\rangle=\langle Cn(\{p\}),\{\neg q\wedge p\}\rangle$, $\langle\mathcal{E}_{2},\mathcal{F}_{2}\rangle=\langle Cn(\{q\})$, $\{\neg r\wedge q\}\rangle$ and $\langle\mathcal{E}_{3},\mathcal{F}_{3}\rangle=\langle Cn(\{r\}),\{\neg p\wedge r\}\rangle$.
\end{example}

If $\langle\mathcal{W},\mathcal{D}\rangle$ is a default theory, then $A$ is a {\em modified credulous consequence} (in short, an $m$-credulous consequence) of $\mathcal{W}$ if and only if $A$ belongs to at least one modified extension of $\langle\mathcal{W},\mathcal{D}\rangle$. Intuitively, the credulous agent commits to as many individually consistent beliefs as possible.

It is straightforward to see that the modified credulous consequence relation is supraclassical yet {\em non-monotonic}: conclusions it supports may fail to persist when arbitrary premises are added \cite{Makinson94}. Nevertheless, it satisfies the key property of {\em semi-monotonicity}: no modified credulous consequence of $\mathcal{W}$ can be invalidated by adding new default rules:

\begin{lemma}\label{semimon}
Let $\langle\mathcal{W},\mathcal{D}\rangle$ be a default theory, and $\mathcal{D}\subseteq\mathcal{D}'$. For any modified extension $\langle\mathcal{E},\mathcal{F}\rangle$ of $\langle\mathcal{W},\mathcal{D}\rangle$ there exists (at least) one modified extension $\langle\mathcal{E}',\mathcal{F}'\rangle$ of $\langle\mathcal{W},\mathcal{D}'\rangle$ such that $\mathcal{E}\subseteq\mathcal{E}'$ and $\mathcal{F}\subseteq\mathcal{F}'$.
\end{lemma}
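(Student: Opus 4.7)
The plan is to construct $\langle\mathcal{E}',\mathcal{F}'\rangle$ by extending, in a greedy manner, the very set of defaults that generate $\langle\mathcal{E},\mathcal{F}\rangle$ with additional defaults drawn from $\mathcal{D}'\setminus\mathcal{D}$. Because the quasi-inductive definition of a modified extension is driven by a subset of defaults that is selected to jointly satisfy the prerequisite and consistency clauses, the argument reduces to exhibiting a suitable enlargement of this generating set.

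First, I would identify the subset $\mathcal{D}_{0}\subseteq\mathcal{D}$ of defaults actually triggered in the construction of $\langle\mathcal{E},\mathcal{F}\rangle$, namely those whose prerequisite appears in some $\mathcal{E}^{k}$ and whose justifications pass the consistency clause with respect to the limit $\mathcal{E}$ and $\mathcal{F}$. From the definition it follows that $\mathcal{F}=just(\mathcal{D}_{0})$ and $\mathcal{E}=Cn(\mathcal{W}\cup concl(\mathcal{D}_{0}))$. Since $\mathcal{D}\subseteq\mathcal{D}'$, the whole of $\mathcal{D}_{0}$ sits inside $\mathcal{D}'$ as well.

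Second, I would enlarge $\mathcal{D}_{0}$ iteratively to a set $\mathcal{D}_{0}^{\star}\subseteq\mathcal{D}'$. Starting from $\mathcal{D}_{0}$, at each stage select any default $\delta=\frac{B:C_{1},\ldots,C_{n}}{D}\in\mathcal{D}'\setminus\mathcal{D}_{0}^{\text{cur}}$ such that (a) $B\in Cn(\mathcal{W}\cup concl(\mathcal{D}_{0}^{\text{cur}}))$ and (b) for every $A\in just(\mathcal{D}_{0}^{\text{cur}}\cup\{\delta\})$, one has $\neg A\notin Cn(\mathcal{W}\cup concl(\mathcal{D}_{0}^{\text{cur}}\cup\{\delta\}))$. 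Add $\delta$ to the current collection and repeat; since $\mathcal{D}'$ is finite, this terminates at some maximal $\mathcal{D}_{0}^{\star}$. Setting $\mathcal{E}'=Cn(\mathcal{W}\cup concl(\mathcal{D}_{0}^{\star}))$ and $\mathcal{F}'=just(\mathcal{D}_{0}^{\star})$, the inclusions $\mathcal{E}\subseteq\mathcal{E}'$ and $\mathcal{F}\subseteq\mathcal{F}'$ are immediate from $\mathcal{D}_{0}\subseteq\mathcal{D}_{0}^{\star}$.

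Third, I would verify that $\langle\mathcal{E}',\mathcal{F}'\rangle$ is indeed a modified extension of $\langle\mathcal{W},\mathcal{D}'\rangle$ by replaying the quasi-inductive clauses with the defaults in $\mathcal{D}_{0}^{\star}$ ordered by the stage at which they were admitted: at each level, the selected defaults have their prerequisites already derivable and their enlarged justification set compatible with the enlarged conclusion set, which matches exactly the consistency condition in the definition of $\mathcal{E}^{k+1}$ and $\mathcal{F}^{k+1}$. Maximality of the selection guarantees that no further default from $\mathcal{D}'$ can be added, which is precisely the fixed-point requirement for a modified extension.

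The main obstacle is the subtle point that augmenting $\mathcal{D}_{0}$ to $\mathcal{D}_{0}^{\star}$ must not retroactively falsify the consistency clause for any default already in $\mathcal{D}_{0}$: enlarging the conclusion set $concl(\mathcal{D}_{0}^{\star})$ could in principle derive $\neg C_{i}$ for some justification $C_{i}$ of a default in $\mathcal{D}_{0}$. This is exactly what is prevented by the greedy step's condition (b), which admits a new default only when the \emph{full} pooled justification set remains consistent with the \emph{full} pooled conclusion set; hence the defaults in $\mathcal{D}_{0}$ inherit consistency by design, and the construction delivers a genuine modified extension of $\langle\mathcal{W},\mathcal{D}'\rangle$ containing $\mathcal{E}$ and $\mathcal{F}$.
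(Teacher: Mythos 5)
Correct: the paper does not prove this lemma in-house but defers to \L ukaszewicz's original semi-monotonicity argument (\cite[pp. 12-13]{Lukas88}), and your proof is essentially a faithful reconstruction of that standard argument -- pass to the generating defaults of $\langle\mathcal{E},\mathcal{F}\rangle$, enlarge them greedily to a maximal set within $\mathcal{D}'$ under the pooled justification-vs-conclusion consistency check, and verify the quasi-inductive fixed-point clauses, with maximality ruling out any further applicable default. The obstacle you single out (later additions retroactively defeating the justifications of earlier defaults) is indeed the crux, and your condition (b), which re-checks every accumulated justification against the whole accumulated conclusion set at each step, is precisely what the modified (\L ukaszewicz-style) consistency condition is designed to secure, so the argument goes through.
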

\begin{proof}
For a proof see \cite[pp. 12-13]{Lukas88}.
\end{proof}

\remove{Modified credulous consequence does not allow reasoning by cases: if $C$ is an $m$-credulous consequence of $A$ and an $m$-credulous consequence of $B$, it may be the case that it is not an $m$-credulous consequence of $A\vee B$. One can modify modified credulous consequence in different ways so as to permit reasoning by cases \cite{GF91,Moinard,Roos98}. In this paper, we focus on two approaches to disjunctive default reasoning, respectively presented in \cite{Konolige88} and \cite{Moinard}, which are conceptually related to disjunctive normative reasoning in ${\sf out_{4}^{+}}$.

The {\em Konolige disjunctive translation} $\mathcal{D}^{k}$ of a set $\mathcal{D}$ of defaults is the smallest set of default rules which satisfies the following condition:
\smallskip
\begin{itemize}
    \item[$(k)$] For any default rule in $\mathcal{D}$ of the form:
    \smallskip\begin{center}$\small{\dfrac{B:C}{D}}$\end{center} 
    \smallskip
    there exists a default rule in $\mathcal{D}^{k}$ of the form:
    \smallskip
        \begin{center}
        $\small{\dfrac{\top:C}{B\to D}}$
        \end{center}
\end{itemize}
\smallskip
We say that $A$ is a {\em Konolige disjunctive consequence} (in short, a $k$-disjunctive consequence) of $\mathcal{W}$ if and only if $A$ is an $m$-credulous consequence of $\langle\mathcal{W},\mathcal{D}^{k}\rangle$.

\begin{example}
    Let $\langle\mathcal{W},\mathcal{D}\rangle$ be defined as follows:
    $$\small{\mathcal{W}=\{B\vee C\}}$$
    $$\small{\mathcal{D}=\Big\{\dfrac{B:D}{D}\,,\,\dfrac{C:D}{D}\Big\}}$$
    The unique modified extension $\langle\mathcal{E},\mathcal{F}\rangle$ of $\langle\mathcal{W},\mathcal{D}\rangle$ is $\langle Cn(\{B\vee C\}),\varnothing\rangle$. On the other hand, the unique modified extension $\langle\mathcal{E}',\mathcal{F}'\rangle$ of $\langle\mathcal{W},\mathcal{D}^{k}\rangle$ is $\langle Cn(\{B\vee C,D\}),\{D\})$.
\end{example}

We define the {\em brave disjunctive closure} $\mathcal{D}^{b}$ of $\mathcal{D}$ as the smallest set of default rules which includes $\mathcal{D}$ and satisfies the condition:
\smallskip
\begin{itemize}
    \item[$(b)$] For any $m\geq 1$ default rule in $\mathcal{D}$ of the form:
    \smallskip
    \begin{center}
        \small{$\dfrac{B_{1}:C_{1}}{D_{1}}\,,\ldots,\,\dfrac{B_{m}:C_{m}}{D_{m}}$}
    \end{center}
    \smallskip
    there exist a default rule in $\mathcal{D}^{b}$ of the form:
    \smallskip
    \begin{center}
       \small{$\dfrac{B_{1}\vee\cdots\vee B_{m}\vee A:C_{1},\ldots,C_{m}}{D_{1}\vee\cdots\vee D_{m}\vee A}$}
    \end{center}
    \smallskip
    for any formula $A$.
\end{itemize}

We state the following results (detailed proofs are presented in \cite{IOhypersequent}).

\begin{theorem}\label{konolige}
    Let $\langle\mathcal{W},\mathcal{D}\rangle$ be a default theory. $A$ is a $k$-disjunctive consequence of $\mathcal{W}$ if and only if $A$ belongs to a modified extension of $\langle\mathcal{W},\mathcal{D}^{b}\rangle$.
\end{theorem}

\begin{theorem}\label{konolige2}
Let $\mathcal{D},\mathcal{D}'$ be defined as follows:
$$\mathcal{D}=\Big\{\dfrac{B_{i}\vee A:B_{i}\to C_{i}}{C_{i}\vee A}\Big\}_{i\in I}$$
$$\mathcal{D}'=\Big\{\dfrac{\top:B_{i}\to C_{i}}{B_{i}\to C_{i}}\Big\}_{i\in I}$$
Then $A$ is an $m$-credulous consequence of $\mathcal{W}$ if and only if $A$ belongs to a modified extension of $\langle\mathcal{W},\mathcal{D}'\rangle$\footnote{Notice that the disjunctive translation presented in \cite{BQQ83,DJ91} transforms any set $\{B_{1}:C_{i}/C_{i}\}_{i\in I}$ of normal defaults into the set $\mathcal{D}'$.}.
\end{theorem}

The {\em Moinard disjunctive translation} $\mathcal{D}^{m}$ of a set $\mathcal{D}$ of defaults is the smallest set of default rules which satisfies the  condition:
\smallskip
\begin{itemize}
    \item[$(m)$] For any default rule in $\mathcal{D}$ of the form:
    \smallskip\begin{center}$\small{\dfrac{B:C}{D}}$\end{center} 
    \smallskip
    there exists a semi-normal default rule in $\mathcal{D}^{m}$ of the form:
    \smallskip
        \begin{center}
        $\small{\dfrac{\top:B\wedge C\wedge D}{B\to D}}$
        \end{center}
\end{itemize}
\smallskip
We say that $A$ is a {\em Moinard disjunctive consequence} (in short, an $m$-disjunctive consequence) of $\mathcal{W}$ if and only if $A$ is an $m$-credulous consequence of $\langle\mathcal{W},\mathcal{D}^{m}\rangle$.

\begin{example}
    Let $\langle\mathcal{W},\mathcal{D}\rangle$ be defined as follows:
    $$\small{\mathcal{W}=\{B\vee C\}}$$
    $$\small{\mathcal{D}=\Big\{\dfrac{D:\neg B}{\neg B}\,,\,\dfrac{D:\neg C}{\neg C}\Big\}}$$
    The unique modified extension $\langle\mathcal{E},\mathcal{F}\rangle$ of $\small{\langle\mathcal{W},\mathcal{D}^{k}\rangle}$ is $\small{\langle Cn(\{B\vee C,D\to\neg B,D\to\neg C\},\{\neg B,\neg C\}\rangle}$: this implies that $\neg D$ is a $k$-disjunctive consequence of $\mathcal{W}$. Hence, we have that $k$-disjunctive consequence permits undesired instances of contraposition. On the other hand, the modified extensions $\langle\mathcal{E}_{1},\mathcal{F}_{1}\rangle$ and $\langle\mathcal{E}_{2},\mathcal{F}_{2}\rangle$ of $\langle\mathcal{W},\mathcal{D}^{m}\rangle$ are $\langle Cn(\{B\vee C,D\to\neg B\}),\{D\wedge\neg B\}\rangle$ and $\langle Cn(\{B\vee C,D\to\neg C\}),\{D\wedge\neg C\}\rangle$, respectively. As a result, we have that $\neg D$ is not an $m$-disjunctive consequence of $\mathcal{W}$.
\end{example}

We define the {\em cautious disjunctive closure} $\mathcal{D}^{c}$ of $\mathcal{D}$ as the smallest set of default rules which satisfies the following condition:
\smallskip
\begin{itemize}
    \item[$(c)$] For any $m\geq 1$ default rule in $\mathcal{D}$ of the form:
    \smallskip
    \begin{center}
        \small{$\dfrac{B_{1}:C_{1}}{D_{1}}\,,\ldots,\,\dfrac{B_{m}:C_{m}}{D_{m}}$}
    \end{center}
    \smallskip
    there exists a default rule in $\mathcal{D}^{c}$ of the form:
    \smallskip
    \begin{center}
        \small{$\dfrac{B_{1}\vee\cdots\vee B_{m}\vee A:B_{1}\wedge C_{1}\wedge D_{1},\ldots,B_{m}\wedge C_{m}\wedge D_{m}}{D_{1}\vee\cdots\vee D_{m}\vee A}$}
    \end{center}
    \smallskip
    for any formula $A$.
\end{itemize}

We end this subsection by stating the following results (again, detailed proofs are presented in \cite{IOhypersequent}).

\begin{theorem}\label{moinard}
    Let $\langle\mathcal{W},\mathcal{D}\rangle$ be a default theory. $A$ is an $m$-disjunctive consequence of $\mathcal{W}$ if and only if $A$ belongs to a modified extension of $\langle\mathcal{W},\mathcal{D}^{c}\rangle$.
\end{theorem}

\begin{theorem}\label{moinard2}
Let $\mathcal{D},\mathcal{D}'$ be defined as follows:
$$\mathcal{D}=\Big\{\dfrac{B_{i}\vee A:B_{i}\wedge C_{i}}{C_{i}\vee A}\Big\}_{i\in I}$$
$$\mathcal{D}'=\Big\{\dfrac{\top:B_{i}\wedge C_{i}}{B_{i}\to C_{i}}\Big\}_{i\in I}$$
Then $A$ is an $m$-credulous consequence of $\mathcal{W}$ if and only if $A$ belongs to a modified extension of $\langle\mathcal{W},\mathcal{D}'\rangle$.
\end{theorem}}

\subsection{Obligations and permissions as extra-logical rules}

We consider a {\em constrained conditional obligation} (permission) as a domain-specific inference rule of the form
\begin{equation}\label{eq:obperm}
\dfrac{B:C_{1},\ldots,C_{n}}{D}
\end{equation}
where $B$ is the {\em conditions}, the non-contradictory formulas $C_{1},\ldots,C_{n}$ are the {\em constraints} and the non-tautological formula $D$ is the {\em conclusion} of the constrained conditional obligation (permission, respectively). Its interpretation is that if $B$ is/ought to be (is/is permitted to be) the case, then $D$ ought to be (is permitted to be) the case, so long as $C_{1},\ldots,C_{n}$ ought not (are not permitted to, respectively) be the case. We say that a conditional obligation (permission) has a {\em factual condition} if the intended interpretation of the prerequisite is factual: if $B$ {\em is the case}, then $D$ ought (is permitted) to be the case. On the other hand, we say that a conditional obligation (permission) has a {\em deontic condition} if the intended interpretation of the prerequisite is deontic, non-factual: if $B$ {\em ought (is permitted) to be the case}, then $D$ ought (is permitted) to be the case. If $n=1$, we say that a constrained conditional obligation (permission) of the form (\ref{eq:obperm}) is {\em normal} whenever $C_{1}=D$\footnote{Let us emphasize that, whereas \cite{MakvdT01} considers a fixed set of constraints for all conditional obligations, we expand this perspective by allowing each conditional obligation (and each permission) to be associated with its own specific set of constraints. In addition, we address defaults and conditional norms within a unified framework, thereby enabling default reasoning over the factual conditions that govern obligations and permissions.}. If $B$ is logically equivalent to $\top$, a rule of the form \ref{eq:obperm} is a constrained {\em unconditional} obligation. We say that an obligation (permission) is {\em triggered} whenever its condition is proved. 

An {\em obligation system} is a pair $\langle \mathcal{W}^{\sf o}, \mathcal{O} \rangle$, where $\mathcal{W}^{\sf o}$ is a finite and consistent set of extra-logical axioms -- expressing unconstrained, unconditional obligations -- and $\mathcal{O}=\mathcal{O}^{\sf f}\cup\mathcal{O}^{\sf d}$, where $\mathcal{O}^{\sf f}$ is a finite, non-empty set of constrained (un)conditional obligations with factual conditions and $\mathcal{O}^{\sf d}$ is a finite set of constrained (un)conditional obligations with deontic conditions.
Similarly, a {\em permission system} is a pair $\langle \mathcal{W}^{\sf p}, \mathcal{P} \rangle$, where $\mathcal{W}^{\sf p}$ is a finite, consistent set of extra-logical axioms -- expressing unconstrained, unconditional permissions -- and $\mathcal{P}=\mathcal{P}^{\sf f}\cup\mathcal{P}^{\sf d}$, where $\mathcal{P}^{\sf f}$ is a finite, non-empty set of constrained (un)conditional permissions with factual conditions and $\mathcal{P}^{\sf d}$ is a finite set of constrained (un)conditional permissions with deontic conditions.

A {\em normative system} $\mathcal{N}$ is a pair $\langle\langle\mathcal{W}^{\sf o},\mathcal{O}\rangle,\langle\mathcal{W}^{\sf p},\mathcal{P}\rangle\rangle$, where $(i)$ $\langle\mathcal{W}^{\sf o},\mathcal{O}\rangle$ is an obligation system, $(ii)$ $\langle\mathcal{W}^{\sf p},\mathcal{P}\rangle$ is a permission system and $(iii)$ $\mathcal{W}^{\sf o}\subseteq\mathcal{W}^{\sf p}$ and $\mathcal{O}^{\star}\subseteq\mathcal{P}^{\star}$ with $\star\in\{{\sf f},{\sf d}\}$. A normative system $\langle\langle\mathcal{W}^{\sf o},\mathcal{O}\rangle,\langle\mathcal{W}^{\sf p},\mathcal{P}\rangle\rangle$ is said to be normal if and only if all the permissions in $\mathcal{P}$ are normal. We shall use $cond(\mathcal{Q}')$, $constr(\mathcal{Q}')$ and $concl(\mathcal{Q}')$ to refer to the set of conditions, constraints and conclusions, respectively, of the obligations (permissions) in any $\mathcal{Q}'\subseteq\mathcal{O}$ ($\mathcal{Q}'\subseteq\mathcal{P}$, respectively). %If $\mathcal{Q}\in\{\mathcal{O},\mathcal{P}\}$, we write $red(\mathcal{Q})$ to denote the set of inference rules obtained from $\mathcal{Q}$ by deleting their constraints. 

Let $\langle\mathcal{W},\mathcal{D}\rangle$ be a default theory and $\mathcal{N}$ a normative system. Given a modified extension $\langle\mathcal{E},\mathcal{F}\rangle$ of $\langle\mathcal{W},\mathcal{D}\rangle$, an {\em obligation} ({\em permission}) {\em extension} is a set of formulas derivable from $\mathcal{E},\mathcal{W}^{\sf o}$ and $\mathcal{W}^{\sf p}$ by classical logic and maximal application of the obligations (permissions, respectively) in $\mathcal{N}$ whose consistency condition holds (both before and after their being triggered) relatively to a simultaneously defined support set. Specifically, we say that $\langle\mathcal{S}_{\mathcal{E}},\mathcal{T}_{\mathcal{E}}\rangle$ is an obligation extension of $\langle\langle\mathcal{W},\mathcal{D}\rangle,\mathcal{N}\rangle$ if and only if $\mathcal{S}_{\mathcal{E}}$ and $\mathcal{T}_{\mathcal{E}}$ are quasi-inductively defined as follows:
\smallskip
\begin{center}
\small{
$\mathcal{S}_{\mathcal{E}}^{0}=\mathcal{W}^{\sf o}\ \ \ \ \ \mathcal{T}_{\mathcal{E}}^{0}=\varnothing$}
\end{center}
\begin{center}
\footnotesize{
\[
\mathcal{S}_{\mathcal{E}}^{k+1}
=
Cn(\mathcal{S}_{\mathcal{E}}^{k})
\,\cup\,
\left\{
    concl(\mathcal{O}^{\sf f}_{x}),concl(\mathcal{O}^{\sf d}_{x})
    \;\middle|\;
    \begin{array}{l}
    \mathcal{O}^{\sf f}_{x}\subseteq\mathcal{O}^{\sf f},\mathcal{O}^{\sf d}_{x}\subseteq\mathcal{O}^{\sf d},\\
    req(\mathcal{O}^{\sf f}_{x})\subseteq\mathcal{E},req(\mathcal{O}^{\sf d}_{x})\subseteq\mathcal{S}_{\mathcal{E}}^{k},\\[2pt]
    A\in(\mathcal{T}_{\mathcal{E}}\cup just(\mathcal{O}^{\sf f}_{x}))\Rightarrow \neg A\not\in Cn(\mathcal{S}_{\mathcal{E}}\cup concl(\mathcal{O}_{x}^{\sf f})\cup concl(\mathcal{O}_{x}^{\sf d}))
    \end{array}
\right\}
\]}
\end{center}
\begin{center}
\footnotesize{
\[
\mathcal{T}_{\mathcal{E}}^{k+1}
=
\mathcal{T}_{\mathcal{E}}^{k}
\,\cup\,
\left\{
    just(\mathcal{O}^{\sf f}_{x}),just(\mathcal{O}^{\sf d}_{x})
    \;\middle|\;
    \begin{array}{l}
    \mathcal{O}^{\sf f}_{x}\subseteq\mathcal{O}^{\sf f},\mathcal{O}^{\sf d}_{x}\subseteq\mathcal{O}^{\sf d},\\
    req(\mathcal{O}^{\sf f}_{x})\subseteq\mathcal{E},req(\mathcal{O}^{\sf d}_{x})\subseteq\mathcal{S}_{\mathcal{E}}^{k},\\[2pt]
    A\in(\mathcal{T}_{\mathcal{E}}\cup just(\mathcal{O}^{\sf f}_{x}))\Rightarrow \neg A\not\in Cn(\mathcal{S}_{\mathcal{E}}\cup concl(\mathcal{O}_{x}^{\sf f})\cup concl(\mathcal{O}_{x}^{\sf d}))
    \end{array}
\right\}
\]}
\end{center}
\begin{center}
$\mathcal{S}_{\mathcal{E}}=\bigcup\limits_{i=0}^{\omega}\mathcal{S}_{\mathcal{E}}^{i}\ \ \ \ \ \mathcal{T}_{\mathcal{E}}=\bigcup\limits_{i=0}^{\omega}\mathcal{T}_{\mathcal{E}}^{i}$
\end{center}
\smallskip
A permission extension $\langle\mathcal{S}_{\mathcal{E}},\mathcal{T}_{\mathcal{E}}\rangle$ of $\langle\langle\mathcal{W},\mathcal{D}\rangle,\mathcal{N}\rangle$ is defined as an obligation extension of $\langle\langle\mathcal{W},\mathcal{D}\rangle,\mathcal{N}\rangle$, except for the fact that we replace $\mathcal{W}^{\sf o}$ with $\mathcal{W}^{\sf p}$ and $\mathcal{O}_{x}^{\star}$ with $\mathcal{P}_{x}^{\star}$, where $\star\in\{{\sf f},{\sf d}\}$. We say that $\langle\mathcal{S}_{\mathcal{E}},\mathcal{T}_{\mathcal{E}}\rangle$ is a deontic extension of $\langle\langle\mathcal{W},\mathcal{D}\rangle,\mathcal{N}\rangle$ if $\langle\mathcal{S}_{\mathcal{E}},\mathcal{T}_{\mathcal{E}}\rangle$ is an obligation extension or a permission extension of $\langle\langle\mathcal{W},\mathcal{D}\rangle,\mathcal{N}\rangle$.

If $\langle\langle\mathcal{W},\mathcal{D}\rangle,\mathcal{N}\rangle$ is a default theory associated with a normative system, then $A$ is a {\em deontic credulous consequence} (in short, a $d$-credulous consequence) of $\mathcal{W}\cup\mathcal{W}^{\sf o}\cup\mathcal{W}^{\sf p}$ if and only if $A$ belongs to at least one deontic extension of $\langle\langle\mathcal{W},\mathcal{D}\rangle,\mathcal{N}\rangle$. Intuitively, the credulous agent commits herself to as many individually consistent obligations and permissions as possible.

\begin{lemma}\label{deonticsemimon}
    Let $\langle\mathcal{E},\mathcal{F}\rangle$ be a modified extension of a default theory $\langle\mathcal{W},\mathcal{D}\rangle$, $\mathcal{N}$ be a normative system $\langle\langle \mathcal{W}^{\sf o},\mathcal{O}\rangle,\langle \mathcal{W}^{\sf p},\mathcal{P}\rangle\rangle$ and $\mathcal{N}'$ obtained from $\mathcal{N}$ by extending $\mathcal{O}$ or $\mathcal{P}$. For any deontic extension $\langle\mathcal{S}^{\mathcal{E}}_{1},\mathcal{T}^{\mathcal{E}}_{1}\rangle$ of $\langle\langle\mathcal{W},\mathcal{D}\rangle,\mathcal{N}\rangle$, there exists (at least) one deontic extension $\langle\mathcal{S}^{\mathcal{E}}_{2},\mathcal{T}^{\mathcal{E}}_{2}\rangle$ of $\langle\langle\mathcal{W},\mathcal{D}\rangle,\mathcal{N}'\rangle$ such that $\mathcal{S}^{\mathcal{E}}_{1}\subseteq\mathcal{S}^{\mathcal{E}}_{2}$ and $\mathcal{T}^{\mathcal{E}}_{1}\subseteq\mathcal{T}^{\mathcal{E}}_{2}$.
\end{lemma}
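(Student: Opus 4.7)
The plan is to lift the standard semi-monotonicity argument for \L ukaszewicz extensions (Lemma \ref{semimon}) to the deontic layer. Since the definitions of obligation and permission extensions differ only by interchanging $\mathcal{W}^{\sf o},\mathcal{O}$ with $\mathcal{W}^{\sf p},\mathcal{P}$, the two sub-cases of the hypothesis are formally parallel. It thus suffices to treat the case in which $\mathcal{N}'=\langle\langle\mathcal{W}^{\sf o},\mathcal{O}'\rangle,\langle\mathcal{W}^{\sf p},\mathcal{P}\rangle\rangle$ with $\mathcal{O}\subseteq\mathcal{O}'$, and $\langle\mathcal{S}_1^{\mathcal{E}},\mathcal{T}_1^{\mathcal{E}}\rangle$ is an obligation extension of $\langle\langle\mathcal{W},\mathcal{D}\rangle,\mathcal{N}\rangle$; the permission case runs by the same argument applied to the corresponding clauses.

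The first step is to extract from the quasi-inductive construction of $\langle\mathcal{S}_1^{\mathcal{E}},\mathcal{T}_1^{\mathcal{E}}\rangle$ the set $\mathcal{Q}_1\subseteq\mathcal{O}$ of obligations that are actually \emph{applied}: those whose conditions become available at some stage (in $\mathcal{E}$ for factual conditions, in $\mathcal{S}_1^{\mathcal{E}}$ for deontic ones) and whose justifications are consistency-preserving with respect to $\mathcal{W}^{\sf o}\cup concl(\mathcal{Q}_1)$. By unfolding the definition, one then has $\mathcal{S}_1^{\mathcal{E}}=Cn(\mathcal{W}^{\sf o}\cup concl(\mathcal{Q}_1))$ and $\mathcal{T}_1^{\mathcal{E}}=just(\mathcal{Q}_1)$.

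Next, I would consider the family of \emph{safe extensions} $\mathcal{Q}_1\subseteq\mathcal{R}\subseteq\mathcal{O}'$ satisfying the following conditions: every $r\in\mathcal{R}$ has its condition derivable either from $\mathcal{E}$ (if factual) or from $\mathcal{W}^{\sf o}\cup concl(\mathcal{R})$ (if deontic); and for every $A\in\mathcal{T}_1^{\mathcal{E}}\cup just(\mathcal{R})$, $\neg A\notin Cn(\mathcal{W}^{\sf o}\cup concl(\mathcal{R}))$. This family is non-empty (it contains $\mathcal{Q}_1$, by hypothesis) and closed under unions of chains, since any inconsistency witnessed by a union is already witnessed by finitely many rules, hence by some member of the chain. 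Zorn's Lemma then produces a maximal safe extension $\mathcal{Q}_2$; setting $\mathcal{S}_2^{\mathcal{E}}=Cn(\mathcal{W}^{\sf o}\cup concl(\mathcal{Q}_2))$ and $\mathcal{T}_2^{\mathcal{E}}=just(\mathcal{Q}_2)$, one verifies stage by stage that the pair satisfies the quasi-inductive definition of an obligation extension of $\langle\langle\mathcal{W},\mathcal{D}\rangle,\mathcal{N}'\rangle$, with maximality precluding the addition of further rules at the fixed point. The inclusions $\mathcal{S}_1^{\mathcal{E}}\subseteq\mathcal{S}_2^{\mathcal{E}}$ and $\mathcal{T}_1^{\mathcal{E}}\subseteq\mathcal{T}_2^{\mathcal{E}}$ are immediate from $\mathcal{Q}_1\subseteq\mathcal{Q}_2$.

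The main obstacle is the quasi-inductive (rather than genuinely inductive) shape of the definition: the consistency clause at stage $k{+}1$ mentions the \emph{final} sets $\mathcal{S}_{\mathcal{E}}$ and $\mathcal{T}_{\mathcal{E}}$, not their stage-$k$ approximations. Consequently, enlarging $\mathcal{Q}_1$ to $\mathcal{Q}_2$ enlarges $Cn(\mathcal{W}^{\sf o}\cup concl(\mathcal{Q}_2))$ and could \emph{a priori} invalidate the consistency of the justifications originally in $\mathcal{T}_1^{\mathcal{E}}$. The safe-extension clause is precisely engineered to block this; verifying that maximal safe extensions coincide with fixed points of the quasi-inductive operator -- and that the factual-vs.-deontic distinction in conditions is correctly threaded through the stage-wise construction -- is the technical heart of the argument, mirroring the analogous step in \L ukaszewicz's proof of Lemma \ref{semimon}, to which the deontic case ultimately reduces.
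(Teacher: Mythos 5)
There is a genuine gap, and it lies in your definition of ``safe extension''. In the paper's quasi-inductive definition, the conditions of deontic-condition rules must be established at an \emph{earlier stage} (the clause reads $req(\mathcal{O}^{\sf d}_{x})\subseteq\mathcal{S}_{\mathcal{E}}^{k}$), so groundedness is built into the construction; only the consistency check refers to the final sets. Your safety clause keeps the correct global consistency check, but it relaxes groundedness: you ask only that the deontic condition of each $r\in\mathcal{R}$ be derivable from $\mathcal{W}^{\sf o}\cup concl(\mathcal{R})$, which permits a rule (or a cycle of rules) whose condition is supported only by its own conclusion. Concretely, add to $\mathcal{O}^{\sf d}$ a normal rule $r$ with condition $p$ and conclusion $p\wedge q$, where $p\notin Cn(\mathcal{W}^{\sf o}\cup concl(\mathcal{Q}_1))$. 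Then $\mathcal{Q}_1\cup\{r\}$ is ``safe'' in your sense, so \emph{every} maximal safe extension $\mathcal{Q}_2$ produced by Zorn's Lemma contains $r$; yet the quasi-inductive construction started from the guess $\mathcal{S}_2=Cn(\mathcal{W}^{\sf o}\cup concl(\mathcal{Q}_2))$ never triggers $r$, because $p$ never appears at any stage $\mathcal{S}^{k}_{\mathcal{E}}$ before $r$ is applied. Hence the fixed-point verification you defer to the end ("maximal safe extensions coincide with fixed points") is false as stated, and the pair $\langle\mathcal{S}_2^{\mathcal{E}},\mathcal{T}_2^{\mathcal{E}}\rangle$ you construct need not be an obligation extension of $\langle\langle\mathcal{W},\mathcal{D}\rangle,\mathcal{N}'\rangle$.

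The repair is to make groundedness part of the notion you maximize, which is exactly how the argument the paper points to (Lemma \ref{semimon}, i.e.\ \L ukaszewicz's proof) proceeds: work with \emph{sequences} (processes) of rules $r_{1},\ldots,r_{n}$ extending a generating sequence of $\langle\mathcal{S}_1^{\mathcal{E}},\mathcal{T}_1^{\mathcal{E}}\rangle$, requiring that the condition of each $r_{i+1}$ be derivable from $\mathcal{E}$ (factual case) or from $\mathcal{W}^{\sf o}$ together with the conclusions of $r_{1},\ldots,r_{i}$ only (deontic case), while keeping your global consistency requirement against the final conclusion set. Maximal such grounded, successful sequences do yield obligation (resp.\ permission) extensions of the enlarged system, and the inclusions then follow as you say. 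Aside from this, your reduction to one case is slightly too quick: since a normative system requires $\mathcal{O}^{\star}\subseteq\mathcal{P}^{\star}$, enlarging $\mathcal{O}$ also enlarges $\mathcal{P}$, so the permission-extension case under an extension of $\mathcal{O}$ is not vacuous but must be run through the same (grounded) argument.
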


\begin{proof}
    We argue as in the proof of Lemma \ref{semimon}.
\end{proof}

\section{Controlled sequent calculi for conflicting information}\label{controlled}

%Informally, a {\em control set} is a set of sets of formulas that are supposed to overturn a certain proof within a given proof system. 
To design controlled sequent calculi for deontic default reasoning, we introduce  the notion of {\em control pair}.  %{\color{blue}MI CHIEDO, quanto sono necessarie queste righe, dato che abbiamo gia' presentato i control set nell'intro? }

\subsection{Control pairs}
In this subsection, we  use capital Greek letters $\Gamma,\Delta,\Theta,\ldots$ to denote multisets as well as {\em sets} of formulas (the distinction will be made explicit whenever necessary to avoid ambiguity). Boldface capital letters ${\bf S},{\bf S_{1}},\ldots...$ stand for finite sets of finite sets of formulas. 

A {\em control pair} is a pair of sets $\langle{\bf T},{\bf S}\rangle$, where ${\bf T}$ collects sets of {\em conditions}, and ${\bf S}$ collects sets of {\em constraints}. For their part, conditions and constraints are formulas, possibly labelled with specific superscripts ${\sf f}, {\sf o}, {\sf p},\ldots$ to denote their role as facts, obligations, permissions -- respectively. We decorate the turnstile $\vdash$ with a control pair. Intuitively, this decoration expresses that the derivation of a sequent $\Gamma \vdash \Delta$ must satisfy all conditions contained in each element of ${\bf T}$ and respect all constraints contained in each element of ${\bf S}$. In particular, we require that $\Gamma$ --- possibly strengthened by additional formulas (see Definition \ref{proof}) --- be compatible with the sets of conditions in ${\bf T}$ and with the sets of constraints in ${\bf S}$, in the following sense.

\begin{definition}\label{controlpair}
    Given a multiset $\Gamma$ and a control pair $\langle{\bf T},{\bf S}\rangle$, $\Gamma$ is {\em compatible} with $\langle{\bf T},{\bf S}\rangle$ (in symbols $\Gamma \, ||\, \langle{\bf T},{\bf S}\rangle$) if and only if for every formula $A$ in $\bigcup{\bf T}$ and every formula $B$ in $\bigcup{\bf S}$, $\fullGpn$ proves $\Gamma\vdash A$ and refutes $\Gamma\vdash B$.
\end{definition}

\begin{lemma}\label{compatibility}
    Let ${\bf S}\subseteq{\bf S'}$, ${\bf T}\subseteq{\bf T'}$ and the sequent $\Gamma\vdash\bigwedge\Delta$ be provable in $\fullGpn$. The following statements about compatibility hold:
    \smallskip
    \begin{enumerate}
    \item $\Gamma\,||\,\langle{\bf T'},{\bf S'}\rangle$ implies $\Gamma\,||\,\langle{\bf T},{\bf S}\rangle$; 
    \smallskip
    \item $\Delta\,||\,\langle{\bf T},\varnothing\rangle$ implies $\Gamma\,||\,\langle{\bf T},\varnothing\rangle$;
    \smallskip
    \item $\Gamma\,||\,\langle\varnothing,{\bf S}\rangle$ implies $\Delta\,||\,\langle\varnothing,{\bf S}\rangle$. \remove{\item $A\wedge B,\Gamma\, ||\, \langle{\bf T},{\bf S}\rangle$ if and only if $A,B,\Gamma\, ||\, \langle{\bf T},{\bf S}\rangle$.
        \smallskip
        \item $\neg(A\vee B),\Gamma\, ||\, \langle{\bf T},{\bf S}\rangle$ if and only if $\neg A,\neg B,\Gamma\, ||\, \langle{\bf T},{\bf S}\rangle$.
        \smallskip
        \item $A\vee B,\Gamma\, ||\, \langle{\bf T},{\bf S}\rangle$ if and only if $A,\Gamma\, ||\, \langle{\bf T},{\bf S}\rangle$ or $B,\Gamma\, ||\, \langle{\bf T},{\bf S}\rangle$.
        \smallskip
        \item $\neg(A\wedge B),\Gamma\, ||\, \langle{\bf T},{\bf S}\rangle$ if and only if $\neg A,\Gamma\, ||\, \langle{\bf T},{\bf S}\rangle$ or $\neg B,\Gamma\, ||\, \langle{\bf T},{\bf S}\rangle$.
        \smallskip
        \item $\neg\neg A,\Gamma\, ||\, \langle{\bf T},{\bf S}\rangle$ if and only if $A,\Gamma\, ||\, \langle{\bf T},{\bf S}\rangle$.
        \smallskip
        \item If $\Gamma\subseteq\Delta$, then $\Gamma\,||\,\langle{\bf T},\varnothing\rangle$ implies $\Delta\,||\,\langle{\bf T},\varnothing\rangle$ and $\Delta\,||\,\langle\varnothing,{\bf S}\rangle$ implies $\Gamma\,||\,\langle\varnothing,{\bf S}\rangle$.}
    \end{enumerate}
\end{lemma}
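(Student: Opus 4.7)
All three statements reduce to unpacking Definition \ref{controlpair} together with the soundness and completeness of $\fullGpn$ with respect to classical (in)validity, as recorded in the first proposition of this subsection. Statement (1) is a purely set-theoretic observation: since ${\bf T}\subseteq{\bf T'}$ and ${\bf S}\subseteq{\bf S'}$, we have $\bigcup{\bf T}\subseteq\bigcup{\bf T'}$ and $\bigcup{\bf S}\subseteq\bigcup{\bf S'}$, so the universally quantified provability/refutability clauses required for $\Gamma\,||\,\langle{\bf T},{\bf S}\rangle$ form a sub-collection of those already guaranteed by $\Gamma\,||\,\langle{\bf T'},{\bf S'}\rangle$; the claim is then immediate from the definition.

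For (2) and (3) I would pass to the semantic side via the completeness proposition: the provability in $\fullGpn$ of $\Gamma\vdash\bigwedge\Delta$ is equivalent to the fact that every Boolean valuation making $\bigwedge\Gamma$ true also makes $\bigwedge\Delta$ true, and hence makes each $D_{i}\in\Delta$ true. To prove (2), fix $A\in\bigcup{\bf T}$; by $\Delta\,||\,\langle{\bf T},\varnothing\rangle$, the sequent $\Delta\vdash A$ is provable, hence valid. Any valuation $v$ satisfying $\bigwedge\Gamma$ must then satisfy every formula in $\Delta$, and so $A$; thus $\Gamma\vdash A$ is valid and, by completeness, provable, yielding $\Gamma\,||\,\langle{\bf T},\varnothing\rangle$.

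For (3) I would argue contrapositively. Fix $B\in\bigcup{\bf S}$; the hypothesis $\Gamma\,||\,\langle\varnothing,{\bf S}\rangle$ tells us that $\fullGpn$ refutes $\Gamma\vdash B$, i.e.\ the sequent is invalid, so some valuation $v$ satisfies $\bigwedge\Gamma$ while falsifying $B$. Since $\Gamma\vdash\bigwedge\Delta$ is valid, that same $v$ satisfies every formula in $\Delta$; hence $v$ witnesses the invalidity of $\Delta\vdash B$, which is therefore refuted by $\fullGpn$, giving $\Delta\,||\,\langle\varnothing,{\bf S}\rangle$.

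The only real subtlety is the asymmetric treatment required by the fact that $\Gamma\,||\,\langle{\bf T},{\bf S}\rangle$ packages simultaneously a provability claim (for the conditions in ${\bf T}$) and a refutability claim (for the constraints in ${\bf S}$): entailment propagates ``forward'' along $\Gamma\vdash\bigwedge\Delta$ for (2), whereas non-entailment must propagate ``backward'' for (3). The semantic route above dispatches both sides uniformly. A purely syntactic alternative is available --- extract $\bigwedge\Delta\vdash A$ from $\Delta\vdash A$ via iterated $L\wedge$ and cut with $\Gamma\vdash\bigwedge\Delta$ for (2), and contrapose the same cut for (3) --- but it leans on cut-admissibility for $\fullGpn$ and is less symmetric; I would prefer the model-theoretic argument.
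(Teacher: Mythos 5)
Your proof is correct, and it takes essentially the route the paper intends: the paper's own proof is just ``Immediate from Definition \ref{controlpair}'', and your argument is the natural unpacking of that claim, with (1) by monotonicity of the quantification over $\bigcup{\bf T}$ and $\bigcup{\bf S}$, and (2)--(3) by transferring provability forwards and refutability backwards along $\Gamma\vdash\bigwedge\Delta$ via the adequacy of $\fullGpn$ with respect to classical validity and invalidity. Making that appeal to the adequacy proposition explicit is a reasonable (indeed slightly more careful) rendering of what the authors leave implicit, so nothing further is needed.
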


\begin{proof}
    Immediate from Definition \ref{controlpair}.
\end{proof}

Let $\langle\mathcal{W},\mathcal{D}\rangle$ be any default theory and $\mathcal{N}$ be any normative system associated with $\langle\mathcal{W},\mathcal{D}\rangle$. We exploit control pairs to define controlled sequent calculi for $m$-credulous consequences of $\mathcal{W}$ and $d$-credulous consequences of $\mathcal{W}\cup\mathcal{W}^{\sf o}\cup\mathcal{W}^{\sf p}$. 

To distinguish the roles of derivable formulas, we use labelled turnstiles in sequents: 
\smallskip
\begin{itemize}
    \item[$-$] $\specialvdashdef{}{}$ for sequents tracking $m$-credulous consequences,  
    \smallskip
    \item[$-$] $\specialvdashnorm{}{}$ for sequents tracking $d$-credulous consequences which correspond to obligations, and 
    \smallskip
    \item[$-$] $\specialvdashperm{}{}$ for sequents tracking $d$-credulous consequences which correspond to permissions.
\end{itemize}
\smallskip
When generalizing over a subset of $\{\specialvdashdef{}{}\,,\,\specialvdashnorm{}{}\,,\,\specialvdashperm{}{}\}$, we use the labelled turnstile $\specialvdashany{}{}$. 

\subsection{Control pairs for defaults} 

We are ready to present controlled sequent calculi for $m$-credulous consequence.

\begin{figure}
\begin{flushleft}\sc axioms and structural rules \\
\medskip
\bigskip
{\AxiomC{}
\RightLabel{$ax$}
\UnaryInfC{$\Theta\specialvdashdef{\varnothing}{\varnothing}\Lambda$}
\DisplayProof}
\\[\baselineskip]
{\AxiomC{$\Gamma\specialvdashdef{{\bf S}}{{\bf T}}\Delta$}
\RightLabel{$LW$}
\UnaryInfC{$A,\Gamma\specialvdashdef{{\bf S}}{{\bf T}}\Delta$}
\DisplayProof}\qquad
{\AxiomC{$\Gamma\specialvdashdef{{\bf S}}{{\bf T}}\Delta$}
\RightLabel{$RW$}
\UnaryInfC{$\Gamma\specialvdashdef{{\bf S}}{{\bf T}}\Delta,A$}
\DisplayProof}
\\[\baselineskip]
{\AxiomC{$\Gamma\specialvdashdef{{\bf S_{1}}}{{\bf T_{1}}}\Delta,A$}
\AxiomC{$\Gamma\specialvdashdef{{\bf S_{2}}}{{\bf T_{2}}}\Delta,\neg A$}
\RightLabel{$cut_{asa}$}
\BinaryInfC{$\Gamma\specialvdashdef{{\bf S}}{{\bf T}}\Delta$}
\DisplayProof}\qquad
{\AxiomC{$\Gamma\specialvdashdef{{\bf S}}{{\bf T}}\Delta$}
\RightLabel{$\sigma$}
\UnaryInfC{$\Gamma\specialvdashdef{{\bf S'}}{{\bf T}}\Delta$}
\DisplayProof}
\\[\baselineskip]
\medskip
\sc logical rules \\
\medskip
\bigskip
{\AxiomC{$\Gamma,A,B\specialvdashdef{{\bf S}}{{\bf T}} \Delta$}
\RightLabel{$L\wedge$}
\UnaryInfC{$\Gamma,A\wedge B\specialvdashdef{{\bf S}}{{\bf T}} \Delta$}
\DisplayProof}\qquad
{\AxiomC{$\Gamma\specialvdashdef{{\bf S_{1}}}{{\bf T_{1}}} \Delta,A$}
\AxiomC{$\Gamma\specialvdashdef{{\bf S_{2}}}{{\bf T_{2}}} \Delta,B$}
\RightLabel{$R\wedge$}
\BinaryInfC{$\Gamma\specialvdashdef{{\bf S}}{{\bf T}} \Delta,A\wedge B$}
\DisplayProof}
\\[\baselineskip]
{\AxiomC{$\Gamma,A\specialvdashdef{{\bf S_{1}}}{{\bf T_{1}}} \Delta$}
\AxiomC{$\Gamma,B\specialvdashdef{{\bf S_{2}}}{{\bf T_{2}}} \Delta$}
\RightLabel{$L\vee$}
\BinaryInfC{$\Gamma,A\vee B\specialvdashdef{{\bf S}}{{\bf T}}\Delta$}
\DisplayProof}\qquad
{\AxiomC{$\Gamma\specialvdashdef{{\bf S}}{{\bf T}} \Delta,A,B$}
\RightLabel{$R\vee$}
\UnaryInfC{$\Gamma\specialvdashdef{{\bf S}}{{\bf T}} \Delta, A\vee B$}
\DisplayProof}
\\[\baselineskip]
{\AxiomC{$\Gamma,\neg A\specialvdashdef{{\bf S_{1}}}{{\bf T_{1}}} \Delta$}
\AxiomC{$\Gamma,\neg B\specialvdashdef{{\bf S_{2}}}{{\bf T_{2}}} \Delta$}
\RightLabel{$L\neg\wedge$}
\BinaryInfC{$\Gamma,\neg(A\wedge B)\specialvdashdef{{\bf S}}{{\bf T}}\Delta$}
\DisplayProof}\qquad
{\AxiomC{$\Gamma\specialvdashdef{{\bf S}}{{\bf T}} \Delta,\neg A,\neg B$}
\RightLabel{$R\neg\wedge$}
\UnaryInfC{$\Gamma\specialvdashdef{{\bf S}}{{\bf T}} \Delta, \neg(A\wedge B)$}
\DisplayProof}
\\[\baselineskip]
{\AxiomC{$\Gamma,\neg A,\neg B\specialvdashdef{{\bf S}}{{\bf T}} \Delta$}
\RightLabel{$L\neg\vee$}
\UnaryInfC{$\Gamma,\neg(A\vee B)\specialvdashdef{{\bf S}}{{\bf T}} \Delta$}
\DisplayProof}\qquad
{\AxiomC{$\Gamma\specialvdashdef{{\bf S_{1}}}{{\bf T_{1}}} \Delta,\neg A$}
\AxiomC{$\Gamma\specialvdashdef{{\bf S_{2}}}{{\bf T_{2}}} \Delta,\neg B$}
\RightLabel{$R\neg\vee$}
\BinaryInfC{$\Gamma\specialvdashdef{{\bf S}}{{\bf T}} \Delta,\neg(A\vee B)$}
\DisplayProof}
\\[\baselineskip]
{\AxiomC{$A,\Gamma\specialvdashdef{{\bf S}}{{\bf T}}\Delta$}
\RightLabel{$L\neg\neg$}
\UnaryInfC{$\neg\neg A,\Gamma\specialvdashdef{{\bf S}}{{\bf T}}\Delta$}
\DisplayProof}\qquad
{\AxiomC{$\Gamma\specialvdashdef{{\bf S}}{{\bf T}}\Delta,A$}
\RightLabel{$R\neg\neg$}
\UnaryInfC{$\Gamma\specialvdashdef{{\bf S}}{{\bf T}}\Delta,\neg \neg A$}
\DisplayProof}
\\[\baselineskip]
\medskip
\sc extra-logical rules \\
\medskip
\bigskip
{\AxiomC{$\{\Gamma\specialvdashdef{{\bf S_{i}}}{{\bf T_{i}}}\Theta_{i}\}_{0\leq i\leq m}$}
\RightLabel{$\delta$}
\UnaryInfC{$\Gamma\specialvdashdef{{\bf S}}{{\bf T}}\Phi$}
\DisplayProof}
\end{flushleft}
\caption{Controlled sequent calculus for a default theory $\langle\mathcal{W},\mathcal{D}\rangle$}
\label{fig:Gdelta}
\end{figure}

\begin{definition}\label{definitioncalculi}
 Let $\langle\mathcal{W},\mathcal{D}\rangle$ be any default logic. The controlled sequent calculus $\Gastc$ for $\langle\mathcal{W},\mathcal{D}\rangle$ can be defined by adopting the rules in Figure \ref{fig:Gdelta}, provided that the conditions $(i)-(vii)$ below are fulfilled.
\smallskip
\begin{itemize}
    \item[$(i)$] For each instance of $ax$, one of the following conditions holds: $\Theta=\Lambda=\{A\}$ for some literal $A$,  $\Theta=\{A,\neg A\}$ or $\Lambda=\{A,\neg A\}$ for some atom $A$, $\Theta\quest\Lambda$ belongs to $\topfstar(\quest W)$, with $W$ being the conjunction of the formulas in $\mathcal{W}$.
    \smallskip
    \item[$(ii)$] For each instance of binary structural and logical rules, ${\bf T}={\bf T_{1}}\cup{\bf T_{2}}$ and ${\bf S}={\bf S_{1}}\cup{\bf S_{2}}$.
    \smallskip
    \item[$(iii)$] For each instance of $cut_{asa}$, $A$ is an atom occurring in some formula in $\Gamma\cup\mathcal{W}$.
    \smallskip
    \item[$(iv)$] For each instance of $\sigma$, ${\bf S}\subseteq{\bf S'}$.
    \smallskip
    \item[$(v)$] For any default rule of the form $\dfrac{B:C_{1},\ldots,C_{k}}{D}$ in $\mathcal{D}$, if $\topfo(\quest B)=\{\quest\Theta_{1},\ldots,\quest\Theta_{m}\}$ with $m>0$ and $\quest\Phi$ belongs to $\topfstaro(\quest D)$, then there exists an extra-logical rule of the  form:
    \smallskip
    \begin{center}
        {\AxiomC{$\{\Gamma\specialvdashdef{{\bf S_{i}}}{{\bf T_{i}}}\Theta_{i}\}_{1\leq i\leq m}$}
        \RightLabel{$\delta$}
        \UnaryInfC{$\Gamma\specialvdashdef{{\bf S}}{{\bf T}}\Phi$}
        \DisplayProof}
    \end{center}
    \smallskip
    with ${\bf T}=\bigcup_{i=1}^{m}{\bf T_{i}}\cup\{\{(\bigwedge_{i=1}^{m}(\bigvee\Theta_{i}))^{{\sf f}}\}\}$ and ${\bf S}=\bigcup\limits_{i=1}^{m}{\bf S_{i}}\cup\{\{\neg C_{1}^{{\sf f}},\ldots,\neg C_{k}^{{\sf f}}\}\}$. 
    
    \noindent If $\topfo(\quest B)=\varnothing$ and $\quest\Phi$ belongs to $\topfstaro(\quest D)$, then there exists an extra-logical rule of the  form:
    \smallskip
    \begin{center}
        {\AxiomC{$ $}
        \RightLabel{$\delta$}
        \UnaryInfC{$\Gamma\specialvdashdef{{\bf S}}{{\bf T}}\Phi$}
        \DisplayProof}
    \end{center}
    \smallskip
    with ${\bf T}$ being $\{\{\top^{\sf f}\}\}$ and ${\bf S}$ being $\{\{\neg C_{1}^{\sf f},\ldots,\neg C_{k}^{\sf f}\}\}$.
    \smallskip
    \item[$(vi)$] For any extra-logical rule
    \smallskip
    \begin{center}
        {\AxiomC{$\{\Gamma\specialvdashdef{{\bf S_{i}}}{{\bf T_{i}}}\Theta_{i}\}_{1\leq i\leq m}$}
        \RightLabel{$\delta$}
        \UnaryInfC{$\Gamma\specialvdashdef{{\bf S}}{{\bf T}}\Phi$}
        \DisplayProof}
    \end{center}
    \smallskip
    with $m\geq0$, if $\quest\Phi'$ occurs in $\topfstaro(\quest(\bigvee\Phi)\wedge W)$ without belonging to $\topfstaro(\quest W)$, then there is an extra-logical rule of the form:
    \smallskip
    \begin{center}
        {\AxiomC{$\{\Gamma\specialvdashdef{{\bf S_{i}}}{{\bf T_{i}}}\Theta_{i}\}_{1\leq i\leq m}$}
        \RightLabel{$\delta$}
        \UnaryInfC{$\Gamma\specialvdashdef{{\bf S}}{{\bf T}}\Phi'$}
        \DisplayProof}
    \end{center}
    \smallskip
    \item[$(vii)$] For any pair of extra-logical rules
    \smallskip
    \begin{center}
        {\AxiomC{$\{\Gamma\specialvdashdef{{\bf S_{i}'}}{{\bf T_{i}'}}\Theta_{i}\}_{1\leq i\leq m}$}
        \RightLabel{$\delta$}
        \UnaryInfC{$\Gamma\specialvdashdef{{\bf S_{1}}}{{\bf T_{1}}}\Phi$}
        \DisplayProof}\quad
        {\AxiomC{$\{\Gamma\specialvdashdef{{\bf S_{i}'}}{{\bf T_{i}'}}\Theta_{i}\}_{m+1\leq i\leq n}$}
        \RightLabel{$\delta$}
        \UnaryInfC{$\Gamma\specialvdashdef{{\bf S_{2}}}{{\bf T_{2}}}\Phi'$}
        \DisplayProof}
    \end{center}
    \smallskip
    with $m+n\geq0$, if $\quest\Psi$ occurs in $\topfstaro(\quest(\bigvee\Phi)\wedge(\bigvee\Phi'))$ without belonging to $\topfstaro(\quest(\bigvee\Phi))$, $\topfstaro(\quest(\bigvee\Phi'))$ or $\topfstaro(\quest W)$, then there exists an extra-logical rule of the  form:
    \smallskip
    \begin{center}
        {\AxiomC{$\{\Gamma\specialvdashdef{{\bf S_{i}'}}{{\bf T_{i}'}}\Theta_{i}\}_{1\leq i\leq n}$}
        \RightLabel{$\delta$}
       \UnaryInfC{$\Gamma\specialvdashdef{{\bf S}}{{\bf T}}\Psi$}
        \DisplayProof}
    \end{center}
    \smallskip
    with ${\bf S}={\bf S_{1}}\cup{\bf S_{2}}$ and ${\bf T}={\bf T_{1}}\cup{\bf T_{2}}$.
\end{itemize}
\end{definition}

Condition $(i)$  ensures that the $\Gastc$ calculus incorporates the extra-logical axioms from $\mathcal{W}$, while condition $(v)$ guarantees the presence of extra-logical rules corresponding to the default rules in $\mathcal{D}$. In turn, closure conditions $(vi)$  and $(vii)$  are required to ensure that the conclusions of extra-logical rules are closed under Contraction and Cut (cf. Lemmas \ref{rightcontraction} and \ref{leftcontraction}, and Theorems \ref{cutempty}, \ref{atomicnonanalytic}, \ref{cut}, \ref{wccut}).

It is easy to find cases in which there exists a $\Gastc$-derivation of a controlled sequent $\Gamma\specialvdashdef{{\bf S}}{{\bf T}}\Delta$ even though $\bigvee\Delta$ does not belong to any modified extension of $\langle\mathcal{W}\cup\Gamma,\mathcal{D}\rangle$.

\begin{example}
Let $\langle\mathcal{W},\mathcal{D}\rangle$ be defined as follows: $\mathcal{W}=\varnothing$ and $\mathcal{D}=\Big\{\dfrac{\top:p}{p}\Big\}$. Consider the  $\Gastc$-derivation:
\smallskip
\begin{center}
    {\AxiomC{$ $}
    \RightLabel{\scriptsize{$\delta$}}
    \UnaryInfC{$\neg p\specialvdashdef{{\bf S}}{{\bf T}}p$}
    \DisplayProof}
\end{center}
\smallskip
with ${\bf T}=\{\{\top^{\sf f}\}\}$ and ${\bf S}=\{\{\neg p^{\sf f}\}\}$. It is easy to check that $p$ does not belong to any modified extension of $\langle\mathcal{W}\cup\{\neg p\},\mathcal{D}\rangle$. 
\end{example}

As a result, we need to offer a criterion to single out $\Gastc$-derivations which deliver $m$-credulous consequences of $\mathcal{W}\cup\Gamma$ from those which do not. To this aim, we make some additions to our terminological and conceptual apparatus.

For any extra-logical rule $\delta$ in $\Gastc$, we define the label $\delta_{\mathcal{D}'}$ as follows:
\begin{itemize}
\item[$(a)$] if $\delta$ is generated in accordance with point $(iii)$ in Definition \ref{definitioncalculi}, then $\mathcal{D}'=\Big\{\dfrac{B:C_{1},\ldots,C_{k}}{D}\Big\}$;
\smallskip
\item[$(b)$] if $\delta$ is generated from extra-logical rules $\delta'$ and $\delta''$ with labels $\delta_{\mathcal{D}_{1}}$ and $\delta_{\mathcal{D}_{2}}$, respectively, in accordance with points $(iv)-(v)$ in Definition \ref{definitioncalculi}, then $\mathcal{D}'=\mathcal{D}_{1}\cup\mathcal{D}_{2}$.
\end{itemize}
\smallskip
For each $\G$-derivation $\pi$ we say that a default rule $\dfrac{B:C_{1},\ldots,C_{k}}{D}$ {\em belongs to $def(\pi)$} if and only if there is (at least) one extra-logical rule labelled $\delta_{\mathcal{D}'}$ which is applied in $\pi$ and such that $\dfrac{B:C_{1},\ldots,C_{k}}{D}$ belongs to $\mathcal{D}'$. Moreover, we employ $def'(\pi)$ to denote $def(\pi_{1})\cup\cdots\cup def(\pi_{m})$ whenever $\pi_{1},\ldots,\pi_{m}$ are the immediate subderivations yielding the premises of the lowermost extra-logical rules applications in $\pi$. 

We shall use $D_{\pi}$ to refer to the conjunction of the formulas in $concl(def(\pi))$, and $E_{\pi}$ to denote the conjunction of the formulas in $concl(def'(\pi))$.

\begin{definition}\label{soundness}
    Let $\pi$ be a $\Gastc$-derivation of $\Gamma\specialvdashdef{{\bf S}}{{\bf T}}\Delta$. The sequent $\Gamma\specialvdashdef{{\bf S}}{{\bf T}}\Delta$ is
    \smallskip
    \begin{itemize}
        \item[$(i)$] {\em sound under conditions} if and only if $([W,E_{\rho}]\cup\Gamma)\, ||\, \langle{\bf V},\varnothing\rangle$ for any subderivation $\rho$ of $\pi$ with $\Pi\specialvdashdef{{\bf U}}{{\bf V}}\Sigma$ as conclusion;
        \smallskip
        \item[$(ii)$] {\em sound under constraints} if and only if $([W,D_{\pi}]\cup\Gamma)\, ||\, \langle\varnothing,{\bf S}\rangle$;
        \smallskip
        \item[$(iii)$] {\em sound} if and only if it is sound under conditions and sound under constraints.
    \end{itemize}
\end{definition}

\begin{definition}\label{proof}
    Let $\pi$ be a $\Gastc$-derivation of $\Gamma\specialvdashdef{{\bf S}}{{\bf T}}\Delta$. Then, $\pi$ is a {\em proof} of $\Gamma\specialvdashdef{{\bf S}}{{\bf T}}\Delta$ if and only if any sequent $\Pi\specialvdashdef{{\bf S'}}{{\bf T'}}\Sigma$ occurring in $\pi$ is sound -- and a {\em paraproof}, otherwise.
\end{definition}

\begin{example}
    Let $\mathcal{W}=\varnothing$ and $\mathcal{D}=\Big\{\dfrac{p:q}{r}\,;\,\dfrac{q:s}{t}\Big\}$. The following derivation $\pi$ in the $\Gastc$ calculus for $\langle\mathcal{W},\mathcal{D}\rangle$ is a paraproof:
    \smallskip
    \begin{center}
        {\AxiomC{$ $}
        \LeftLabel{\scriptsize{$ax$}}
        \UnaryInfC{$p\specialvdashdef{\varnothing}{\varnothing}p$}
        \LeftLabel{\scriptsize{$\delta_{\mathcal{D}_{1}}$}}
        \UnaryInfC{$p\specialvdashdef{{\bf S_{1}}}{{\bf T_{1}}}r$}
        \LeftLabel{\scriptsize{$RW$}}
        \UnaryInfC{$p\specialvdashdef{{\bf S_{1}}}{{\bf T_{1}}}r,t$}
        \AxiomC{$ $}
        \RightLabel{\scriptsize{$ax$}}
        \UnaryInfC{$q\specialvdashdef{\varnothing}{\varnothing}q$}
        \RightLabel{\scriptsize{$\delta_{\mathcal{D}_{2}}$}}
        \UnaryInfC{$q\specialvdashdef{{\bf S_{2}}}{{\bf T_{2}}}t$}
        \RightLabel{\scriptsize{$RW$}}
        \UnaryInfC{$q\specialvdashdef{{\bf S_{2}}}{{\bf T_{2}}}r,t$}
        \RightLabel{\scriptsize{$L\vee$}}
        \BinaryInfC{$p\vee q \specialvdashdef{{\bf S}}{{\bf T}}r,t$}
        \DisplayProof}
    \end{center}
    \smallskip
    with ${\bf T_{1}}=\{\{p^{\sf f}\}\}$, ${\bf S_{1}}=\{\{\neg q^{\sf f}\}\}$, ${\bf T_{2}}=\{\{q^{\sf f}\}\}$ and ${\bf S_{2}}=\{\{\neg s^{\sf f}\}\}$. On the other hand, the immediate subderivations of $\pi$ are proofs.
\end{example}

\begin{example}
Let $\mathcal{W}=\varnothing$ and $\mathcal{D}$ be defined as $\Big\{\dfrac{p:q}{r}\,;\,\dfrac{q:s}{\neg r}\Big\}$. The following derivation $\pi$ in the $\Gastc$ calculus for $\langle\mathcal{W},\mathcal{D}\rangle$ is a paraproof:
\smallskip
    \begin{center}
        {\AxiomC{$ $}
        \LeftLabel{\scriptsize{$ax$}}
        \UnaryInfC{$p\specialvdashdef{\varnothing}{\varnothing}p$}
        \LeftLabel{\scriptsize{$LW$}}
        \UnaryInfC{$p,q\specialvdashdef{\varnothing}{\varnothing}p$}
        \LeftLabel{\scriptsize{$\delta_{\mathcal{D}_{1}}$}}
        \UnaryInfC{$p,q\specialvdashdef{{\bf S_{1}}}{{\bf T_{1}}}r$}
        \AxiomC{$ $}
        \RightLabel{\scriptsize{$ax$}}
        \UnaryInfC{$q\specialvdashdef{\varnothing}{\varnothing}q$}
        \RightLabel{\scriptsize{$LW$}}
        \UnaryInfC{$p,q\specialvdashdef{\varnothing}{\varnothing}q$}
        \RightLabel{\scriptsize{$\delta_{\mathcal{D}_{2}}$}}
        \UnaryInfC{$p,q\specialvdashdef{{\bf S_{2}}}{{\bf T_{2}}}\neg r$}
        \RightLabel{\scriptsize{$R\wedge$}}
        \BinaryInfC{$p,q\specialvdashdef{{\bf S}}{{\bf T}}r\wedge\neg r$}
        \DisplayProof}
    \end{center}
    \smallskip
    with ${\bf T_{1}}=\{\{p^{\sf f}\}\}$, ${\bf S_{1}}=\{\{\neg q^{\sf f}\}\}$, ${\bf T_{2}}=\{\{q^{\sf f}\}\}$ and ${\bf S_{2}}=\{\{\neg s^{\sf f}\}\}$.
On the other hand, the immediate subderivations of $\pi$ are proofs.
\end{example}

Before turning to the structural analysis of $\Gastc$ calculi, let us record a few additional remarks concerning the conditions in Definition \ref{definitioncalculi}.
\smallskip
\begin{itemize}
   \item[$(A)$] For any extra-logical rule, the formulas that are principal in its applications may occur only on the right-hand side of the sequent. Replacing $\topfstaro$ with $\topfstar$ would therefore lead to unsound formalizations of default rules. Consider, for example, the default rule
\[
\dfrac{\top : \neg p}{\neg p}.
\]
The corresponding extra-logical rule would be
\smallskip
\begin{center}
{\AxiomC{}\UnaryInfC{$\Gamma, p\specialvdashdef{\mathbf{S}}{\mathbf{T}}$}\DisplayProof}
\end{center}
\smallskip
with $\mathbf{T}=\{\{\top^{\sf f}\}\}$ and $\mathbf{S}=\{\{\neg\neg p^{\sf f}\}\}$. For any context $\Gamma$, this rule produces an unsound conclusion: adopting it would make every derivation of $\specialvdashdef{\mathbf{S}}{\mathbf{T}}\neg p$ have a paraproof as an immediate subderivation. The use of $\topfstaro$ is legitimate only because we work in the calculus $\fullGpn$ for classical logic, where negation is primitive and no rule moves a formula between the two sides of a sequent.
    \smallskip
    \item[$(B)$] We use the term {\em semi-analytic Cut} for Cut applications where (a) both Cut formulas occur on the right-hand side, and (b) the Cut formula is a subformula of the formulas on the left-hand side of the conclusion or of the extra-logical axioms. Condition $(iii)$ guarantees that $\Gastc$ calculi contain atomic, semi-analytic Cut as a primitive rule. This is required for strong adequacy (cf. Theorem \ref{adequacy}), since $\Gastc$ calculi without $cut_{asa}$ are not closed under such applications. Consider e.g. the $\Gastc$ calculus for $\langle\mathcal{W},\mathcal{D}\rangle$, with
    \smallskip
    \begin{center}
        $\mathcal{W}=\varnothing$
    \end{center}
    \smallskip
    \begin{center}
        $\mathcal{D}=\Big\{\dfrac{\top:p}{p}\,,\,\dfrac{p\vee q:\neg p}{\neg p}\Big\}$
    \end{center}
    \smallskip
    Now, take the  $\Gastc$-proof $\pi$:
    \smallskip
\begin{center}
    {\AxiomC{$ $}
    \LeftLabel{\scriptsize{$ax$}}
    \UnaryInfC{$p\specialvdash{\varnothing}{\varnothing}p$}
    \LeftLabel{\scriptsize{$RW$}}
    \UnaryInfC{$p\specialvdash{\varnothing}{\varnothing}p,q$}
    \AxiomC{$ $}
    \LeftLabel{\scriptsize{$ax$}}
    \UnaryInfC{$q\specialvdash{\varnothing}{\varnothing}q$}
    \LeftLabel{\scriptsize{$RW$}}
    \UnaryInfC{$q\specialvdash{\varnothing}{\varnothing}p,q$}
    \LeftLabel{\scriptsize{$L\vee$}}
    \BinaryInfC{$p\vee q\specialvdash{\varnothing}{\varnothing}p,q$} 
    \AxiomC{$ $}
    \RightLabel{\scriptsize{$ax$}}
    \UnaryInfC{$p\specialvdash{\varnothing}{\varnothing}p$}
    \RightLabel{\scriptsize{$RW$}}
    \UnaryInfC{$p\specialvdash{\varnothing}{\varnothing}p,q$}
    \AxiomC{$ $}
    \RightLabel{\scriptsize{$ax$}}
    \UnaryInfC{$q\specialvdash{\varnothing}{\varnothing}q$}
    \RightLabel{\scriptsize{$RW$}}
    \UnaryInfC{$q\specialvdash{\varnothing}{\varnothing}p,q$}
    \RightLabel{\scriptsize{$L\vee$}}
    \BinaryInfC{$p\vee q\specialvdash{\varnothing}{\varnothing}p,q$}
    \RightLabel{\scriptsize{$\delta_{\mathcal{D}''}$}}
    \UnaryInfC{$p\vee q\specialvdash{\{\neg\neg p\}}{\{p\vee q\}}\neg p$}
    \RightLabel{\scriptsize{$cut_{asa}$}}
    \BinaryInfC{$p\vee q\specialvdash{\{\neg\neg p\}}{\{p\vee q\}}q$}
    \DisplayProof}
\end{center}
    \smallskip
    It is easy to verify that closure conditions $(vi)$-$(vii)$ in Definition \ref{definitioncalculi} are insufficient to absorb this Cut application. 
    \smallskip
    \item[$(C)$] Let us call a semi-analytic Cut {\em analytic} when the Cut formula is a subformula of the formulas on the left-hand side of the conclusion. If $\topfstar$ in condition (i) of Definition \ref{definitioncalculi} is replaced by $\topfstaro$, one can show that every non-analytic instance of $cut_{asa}$ can be eliminated. By Theorems \ref{atomicanalytic}, \ref{atomicnonanalytic}, and \ref{cut}, it follows that $\Gastc$ calculi are non-analytic cut-free complete. To eliminate analytic Cuts, one might consider imposing the following closure condition on the set of extra-logical rules. For any rule
\[
\AxiomC{$\{\Gamma \specialvdashdef{\mathbf{S_i}}{\mathbf{T_i}} \Theta_i\}_{1 \leq i \leq m}$}
\UnaryInfC{$\Gamma \specialvdashdef{\mathbf{S}}{\mathbf{T}} \Phi$}
\DisplayProof
\]
if a formula $\quest\Phi'$ occurs in $\topfstaro(\quest(\bigvee\Phi)\wedge \bigwedge\Gamma)$ but not in $\topfstaro(\quest\bigvee\Phi)$ nor in $\topfstaro(\quest\bigwedge\Gamma)$, then one would also require the presence of the rule
\[
\AxiomC{$\{\Gamma \specialvdashdef{\mathbf{S_i}}{\mathbf{T_i}} \Theta_i\}_{1 \leq i \leq m}$}
\UnaryInfC{$\Gamma \specialvdashdef{\mathbf{S}}{\mathbf{T}} \Phi'$}
\DisplayProof
\]
Yet this closure condition would generate an \emph{infinite} collection of extra-logical rules, a situation we regard as highly undesirable.
\end{itemize}

\subsection{Structural properties} 

In this subsection, we carry out a structural analysis of the $\Gastc$ calculi for default theories in order to establish strong adequacy with respect to $m$-credulous consequence. To this end, we introduce some terminology.

%All definitions below are to be understood under the empty set of constraints.
A rule application is said to be \emph{safe} when the existence of proofs for its premises guarantees the existence of a proof for its conclusion. A rule is \emph{admissible} whenever
\smallskip
 \begin{itemize}
\item[$(i)$] all its instances are safe;
\smallskip
\item[$(ii)$] the set of conditions associated with the conclusion is included in the union of the sets of conditions associated with the premises; and
\smallskip
\item[$(iii)$]the set of constraints of the premises coincides with that of the conclusion.
\end{itemize}
\smallskip
Moreover, a rule is said to be \emph{invertible} when each inverse version of the rule is admissible.

\begin{lemma}\label{rightinvertibility}
    The rules $R\wedge,R\vee,R\neg\wedge,R\neg\vee$ and $R\neg\neg$ are invertible in $\Gastc$.
\end{lemma}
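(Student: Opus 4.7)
The plan is to prove invertibility by a simultaneous induction on the height of the derivation for all five rules. The key structural observation is that each principal formula to be inverted, namely $A\wedge B$, $A\vee B$, $\neg(A\wedge B)$, $\neg(A\vee B)$, and $\neg\neg A$, is non-literal. Since, by clause (i) of Definition \ref{definitioncalculi}, the axioms of $\Gastc$ contain only literals, and since, by clauses (v)--(vii) of the same definition together with the definition of $\topfstaro$, the right-hand side of every extra-logical rule's conclusion is a multiset of literals, neither an axiom nor an extra-logical rule can close a derivation whose right-hand side carries the target principal formula. These two cases are therefore vacuous in the induction, and the extra-logical layer is entirely bypassed.

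The remaining cases proceed by case analysis on the last rule $r$ of the derivation. When $r$ is the rule being inverted, acting on the target principal formula, its premises are precisely the sequents sought, up to control-pair bookkeeping. When $r$ is any other logical rule, a left logical rule, or one of $LW$, $\sigma$, or $cut_{asa}$ (whose atomic cut formula is necessarily distinct from the non-literal target), one applies the inductive hypothesis to each premise, which still carries the target formula on the right, and re-applies $r$; for $cut_{asa}$, clause (iii) of Definition \ref{definitioncalculi} is preserved because the cut formula still occurs in $\Gamma\cup\mathcal{W}$. For $RW$, two subcases arise: if the weakening introduces the target formula itself, a fresh $RW$ on the pre-weakening sequent yields the desired output with the same control pair; otherwise, the inductive hypothesis is applied to the premise and $RW$ re-instantiated.

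The main bookkeeping hurdle concerns the binary rules $R\wedge$ and $R\neg\vee$, together with $cut_{asa}$, for which admissibility requires that the $\mathbf{T}$-component of the output be contained in that of the input, and that the $\mathbf{S}$-component coincide with it. The inductive hypothesis supplies on each premise a $\mathbf{T}'_{i}\subseteq\mathbf{T}_{i}$, so the union satisfies $\mathbf{T}'_{1}\cup\mathbf{T}'_{2}\subseteq\mathbf{T}_{1}\cup\mathbf{T}_{2}=\mathbf{T}$, as required by Definition \ref{definitioncalculi}(ii). When the base case for the inversion picks a single premise of a binary rule (for instance, one conjunct of $R\wedge$ when that rule acts on the target), the $\mathbf{S}$-component of that premise is only a subset of $\mathbf{S}_{1}\cup\mathbf{S}_{2}$, so a subsequent application of $\sigma$ lifts the constraints to match $\mathbf{S}$ exactly. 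Finally, preservation of soundness in the sense of Definition \ref{soundness} is routine: the inversion procedure introduces no new extra-logical rule applications, so $def(\pi')\subseteq def(\pi)$, and the compatibility conditions, depending only on $\Gamma$, $W$, and these default sets, transfer from $\pi$ to $\pi'$ by Lemma \ref{compatibility}.
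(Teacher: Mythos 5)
Your proposal is correct and follows essentially the same route as the paper's proof: induction on the height of the proof of the premise, case analysis on the last rule (taking the premises of the principal application with $\sigma$-steps to match the constraint set, replacing a principal $RW$ by weakenings introducing the components, and pushing the inductive hypothesis through non-principal rules), with soundness preserved because no new extra-logical applications are introduced and antecedents are unchanged. Your added observations -- that axiom and extra-logical last-rule cases are vacuous since their succedents are literal, and that the $cut_{asa}$ side condition survives because $\Gamma\cup\mathcal{W}$ is untouched -- are correct refinements of what the paper leaves implicit.
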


\begin{proof}
    We reason by induction on the height of a $\Gastc$-proof $\pi$ of the premise(s). Let us focus on the rule $R\wedge$, since the others are analogous. Moreover, let's take $B\wedge C$ to be the active formula in the premise of the inverse of $R\wedge$.
    
    If $h(\pi)=1$, the claim holds vacuously. Otherwise, we reason by cases over the last rule of $\pi$. If the last rule is  $R\wedge$ with principal formula $B\wedge C$, we take the $\Gastc$-proofs of the premises of the latter, possibly followed by $\sigma$ steps. Notice that the height may increase, but the $\sigma$-steps preserve correctness, so the result is still a proof. If the last rule is $RW$ with principal $B\wedge C$, we replace that weakening with one yielding either $B$ or $C$ to obtain the desired conclusion. If $B\wedge C$ is not principal in the last rule, apply the inductive hypothesis to the premises to obtain $\Gastc$-derivations of the conclusions of the inverse rules. Applying the hypothesis above the last sequent in $\pi$ preserves soundness, so each resulting $\Gastc$-derivation is again a proof, as required.
\end{proof}

\begin{lemma}\label{rightcontraction}
    The rule of Right Contraction
    \smallskip
    \begin{center}
        {\AxiomC{$\Gamma\specialvdashdef{{\bf S}}{{\bf T}}\Delta,A,A$}
        \RightLabel{$RC$}
        \UnaryInfC{$\Gamma\specialvdashdef{{\bf S}}{{\bf T'}}\Delta,A$}
        \DisplayProof}
    \end{center}
    \smallskip
    is admissible in $\Gastc$.
\end{lemma}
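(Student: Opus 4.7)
The plan is to prove admissibility by induction on the height of a $\Gastc$-proof $\pi$ of the premise $\Gamma\specialvdashdef{{\bf S}}{{\bf T}}\Delta,A,A$, producing a proof of $\Gamma\specialvdashdef{{\bf S}}{{\bf T}'}\Delta,A$ with $\mathbf{T}'\subseteq\mathbf{T}$ and the same $\mathbf{S}$. For the base case, when $\pi$ is an axiom, the only subcase needing work is $\Theta\quest\Delta,A,A\in\topfstar(\quest W)$; since $\topstar$ is closed under Contraction by construction, the contracted atomic sequent $\Theta\quest\Delta,A$ also lies in $\topfstar(\quest W)$ and therefore remains an axiom. The remaining axiom shapes have single-formula or literal-complementary succedents in which no contraction is needed.

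For the inductive step I would reason by cases on the last rule of $\pi$. If the rule touches neither duplicated occurrence of $A$---all left rules, $LW$, $\sigma$, and right rules with principal formula distinct from $A$---I apply the induction hypothesis to the premise(s) and reapply the rule; the splitting conventions ${\bf T}={\bf T_{1}}\cup{\bf T_{2}}$ and ${\bf S}={\bf S_{1}}\cup{\bf S_{2}}$ on binary rules ensure that $\mathbf{S}$ is preserved and that $\mathbf{T}'\subseteq\mathbf{T}$ propagates. If the last rule is $RW$ introducing one of the $A$'s, I simply omit the weakening; if it is $cut_{asa}$, I apply the IH to both premises and reapply the cut. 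If the last rule is a right logical rule with $A$ principal---the paradigmatic case being $R\wedge$ with $A=B\wedge C$---I would invoke Lemma \ref{rightinvertibility} on the non-principal copy of $A$ to extract proofs of $\Gamma\specialvdashdef{{\bf S}_1}{{\bf T}_1}\Delta,B,B$ and $\Gamma\specialvdashdef{{\bf S}_2}{{\bf T}_2}\Delta,C,C$ (of no greater height), apply IH to contract them, and reapply $R\wedge$ to conclude $\Gamma\specialvdashdef{{\bf S}}{{\bf T}'}\Delta,A$; the remaining right-logical cases are handled analogously.

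The delicate case is when the last rule is an extra-logical rule $\delta$. Here the conclusion's succedent $\Delta,A,A$ is an atomic multiset (so $A$ is a literal), and it was produced from some complementary $\ast$-sequent in $\topfstaro(\quest D)$ for the conclusion $D$ of a default in $\mathcal{D}$, or more generally from the closures of conditions $(vi)$--$(vii)$. Since $\topstar$ is closed under Contraction, the sequent $\quest\Delta,A$ belongs to $\topfstaro(\quest(\bigvee(\Delta,A,A))\wedge W)$; closure condition $(vi)$ of Definition \ref{definitioncalculi} then delivers an extra-logical rule with the same premises and conclusion $\Gamma\specialvdashdef{{\bf S}}{{\bf T}}\Delta,A$---unless $\quest\Delta,A$ already lies in $\topfstar(\quest W)$, in which case $\Gamma\specialvdashdef{\varnothing}{\varnothing}\Delta,A$ is itself an axiom, reached if necessary by $LW$/$RW$ steps to match the ambient context.

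The main obstacle is precisely this extra-logical case: one has to verify that the closure conditions of Definition \ref{definitioncalculi} interact correctly with closure under Contraction to supply a rule whose conclusion matches the contracted succedent, and---crucially---that the resulting derivation is still a \emph{proof} in the sense of Definition \ref{proof} rather than degenerating into a paraproof. This is secured by the observation that contracting the succedent leaves $\Gamma$ and, at every subderivation, the sets $E_{\rho}$ and $D_{\pi}$ unchanged (since the extra-logical rule is replaced by one labelled with the same default set $\mathcal{D}'$), so the compatibility conditions $([W,E_{\rho}]\cup\Gamma)\,||\,\langle\mathbf{V},\varnothing\rangle$ and $([W,D_{\pi}]\cup\Gamma)\,||\,\langle\varnothing,\mathbf{S}\rangle$ transfer verbatim from the input proof to the output one.
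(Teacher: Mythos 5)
Your case analysis follows the same route as the paper (height induction, axioms handled via closure of $\topfstar$ under Contraction, $RW$ deleted, $cut_{asa}$ handled by its context-sharing formulation, extra-logical conclusions re-derived through the closure conditions of Definition \ref{definitioncalculi}, soundness preserved because $\Gamma$, $E_{\rho}$ and $D_{\pi}$ are untouched), but there is one genuine gap in the principal-formula case. You justify the case of a right logical rule with $A$ principal (e.g.\ $R\wedge$ with $A=B\wedge C$) by claiming that Lemma \ref{rightinvertibility} yields inverted proofs \emph{of no greater height} and then applying your height-based inductive hypothesis to them. In $\Gastc$ right-rule invertibility is \emph{not} height-preserving: the proof of Lemma \ref{rightinvertibility} explicitly inserts $\sigma$-steps to realign the constraint sets, so the inverted derivations may be strictly taller (only Lemma \ref{invertibilityempty} gives height preservation, and only under empty control sets). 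Since your induction is on height alone, the appeal to the inductive hypothesis after inversion is unjustified; moreover, the contractions you then need are on $B$ and $C$, i.e.\ on formulas different from $A$, so what you are really invoking is the lemma for a formula of \emph{smaller logical complexity}, not for a shorter proof.

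The paper closes exactly this hole by running a primary induction on the height of $\pi$ together with a secondary induction on the logical complexity of $A$: the non-atomic principal case is discharged by the secondary hypothesis after inversion, with no height bound required. Adding this secondary induction (and dropping the height-preservation claim) repairs your argument; the remaining cases you give coincide with the paper's treatment.
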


\begin{proof}
    We proceed by primary induction on the height of the $\Gastc$-proof $\pi$ of the premise and secondary induction on the logical complexity of $A$. If $h(\pi)=1$, conditions $(i)$ and $(v)$ of Definition \ref{definitioncalculi} yield the conclusion, since $\topfstar(\quest B)$ is closed under Contraction for every formula $B$. Otherwise, we analyze the last rule of $\pi$. If $A$ is atomic and both occurrences are principal, we apply conditions $(v)$ - $(vii)$ of Definition \ref{definitioncalculi}. If exactly one atomic occurrence of $A$ is principal, the latter must be introduced by $RW$, and taking the premise of that $RW$ instance suffices. If neither occurrence is principal, we apply the inductive hypothesis to the premises to permute $RC$ upwards; this remains possible even in the presence of $cut_{asa}$, due to its context-sharing formulation. Soundness is preserved, as the left-hand side of the sequents is unchanged. If $A$ is non-atomic, the only non-trivial cases occur when one occurrence of $A$ is principal in the last rule. To handle these, we use Lemma \ref{rightinvertibility}: since right-rule invertibility does not preserve height, the secondary inductive hypothesis is required to obtain the conclusion.
\end{proof}

\begin{lemma}\label{leftinvertibility}
    The rules $L\wedge,L\neg\vee$ and $L\neg\neg$ are invertible in $\Gastc$.
\end{lemma}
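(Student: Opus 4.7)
The plan is to mirror the proof of Lemma \ref{rightinvertibility}, adapting it to the left-rule setting. I would argue by induction on the height of a $\Gastc$-proof $\pi$ of the premise of each inverse rule, focusing on $L\wedge$ as the representative case; the arguments for $L\neg\vee$ and $L\neg\neg$ are entirely analogous, exploiting the classical equivalences $\neg(A\vee B)\equiv \neg A\wedge\neg B$ and $\neg\neg A\equiv A$ respectively. So fix a proof $\pi$ of $\Gamma,A\wedge B\specialvdashdef{{\bf S}}{{\bf T}}\Delta$; I want to extract a proof of $\Gamma,A,B\specialvdashdef{{\bf S}}{{\bf T'}}\Delta$ for an appropriate ${\bf T'}\subseteq{\bf T}$.

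For the base case $h(\pi)=1$, the claim holds vacuously: by condition $(i)$ of Definition \ref{definitioncalculi}, every axiom instance is literal-only (or consists of atomic $\ast$-sequents coming from $\topfstar(\,\quest\! W)$), so no compound formula $A\wedge B$ can occur in the antecedent of an axiom. For the inductive step I case-split on the last rule of $\pi$. If the last rule is $L\wedge$ with $A\wedge B$ principal, its premise already has the required shape, possibly followed by a $\sigma$-step. If the last rule is $LW$ introducing $A\wedge B$, I replace that single weakening with two successive applications of $LW$ introducing $A$ and $B$ in turn. In all remaining cases, $A\wedge B$ occurs as a side formula: applying the inductive hypothesis to the premise(s) yields proofs of the corresponding $A,B$-variants, to which the original rule can be reapplied. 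This is uniform because the binary structural and logical rules, as well as $cut_{asa}$, are context-sharing, so replacing $A\wedge B$ by $A,B$ on the left propagates identically across both branches.

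The main obstacle lies in handling the extra-logical rule $\delta$. An instance has the parametric shape with $\Gamma,A\wedge B$ on both the premises and the conclusion; after applying the inductive hypothesis to each premise one can reapply the same $\delta$-instance with $\Gamma,A,B$ in place of $\Gamma,A\wedge B$, since no closure condition in Definition \ref{definitioncalculi} constrains the left context. The delicate point is verifying that the resulting derivation is still a proof and not a paraproof in the sense of Definition \ref{proof}. This amounts to checking that the soundness conditions of Definition \ref{soundness} are preserved when $A\wedge B$ in $\Gamma$ is replaced by $A,B$. Since $\fullGpn$ proves both $A\wedge B\vdash A$, $A\wedge B\vdash B$ and $A,B\vdash A\wedge B$, the two multisets entail the same formulas, and Lemma \ref{compatibility} (items (2)--(3)) transfers compatibility with any control pair from one side to the other. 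Hence $[W,E_{\rho}]\cup\Gamma\cup\{A,B\}\,||\,\langle{\bf V},\varnothing\rangle$ holds iff $[W,E_{\rho}]\cup\Gamma\cup\{A\wedge B\}\,||\,\langle{\bf V},\varnothing\rangle$ does, and similarly for $\langle\varnothing,{\bf U}\rangle$, so soundness under conditions and under constraints is preserved at every sequent in the transformed derivation.
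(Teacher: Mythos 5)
Your proposal is correct and follows essentially the same route as the paper, whose proof of this lemma simply says to argue as in Lemma \ref{rightinvertibility}: induction on the height of the proof of the premise, with the principal-formula, weakening, side-formula and extra-logical cases handled exactly as you do, and your explicit soundness check via Lemma \ref{compatibility} (using that $A\wedge B$ and $A,B$ are interderivable on the left) is a welcome piece of care that the paper leaves implicit. One small imprecision: the base case $h(\pi)=1$ is not entirely vacuous, since a height-one derivation may end in a $0$-premise extra-logical rule $\delta$ whose parametric antecedent can contain $A\wedge B$ (cf.\ the base case in the proof of Lemma \ref{partialinvertibility}); this is harmless, because your own treatment of $\delta$ — reapply the rule with $A,B$ in place of $A\wedge B$ in the left context and note that soundness is preserved by the classical equivalence — covers that case verbatim.
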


\begin{proof}
    We reason by induction on the height of a $\Gastc$-proof $\pi$ of the premise. We argue as in the proof of Lemma \ref{rightinvertibility}.
\end{proof}

\begin{lemma}\label{invertibilityempty}
    The logical rules of $\Gastc$ are height-preserving invertible under the empty set of constraints.
\end{lemma}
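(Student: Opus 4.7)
The plan is to prove height-preserving invertibility by induction on proof height, after first isolating a simple structural invariant about proofs whose root is annotated with the empty constraint set.

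First I would establish the invariant that, if a $\Gastc$-proof $\pi$ ends in a sequent annotated with $\mathbf{S}=\varnothing$, then every sequent occurring in $\pi$ has empty constraint set and no extra-logical rule application appears anywhere in $\pi$. Inspecting the rules in Figure~\ref{fig:Gdelta}: the unary logical rules and the weakenings preserve $\mathbf{S}$, while the binary logical rules, $cut_{asa}$, $\sigma$, and extra-logical rules each make the conclusion's constraint set equal to the union of the premises' (possibly enlarged in the case of $\sigma$). Crucially, by clause (v) of Definition~\ref{definitioncalculi}, every extra-logical rule adds the non-empty set $\{\{\neg C_{1}^{\sf f},\ldots,\neg C_{k}^{\sf f}\}\}$ to $\mathbf{S}$, so no extra-logical step can occur in a proof whose root carries $\mathbf{S}=\varnothing$. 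This invariant is exactly what makes height preservation available here: in Lemmas~\ref{rightinvertibility} and~\ref{leftinvertibility} the height grew because $\sigma$-steps had to be inserted to reconcile distinct $\mathbf{S}_{1},\mathbf{S}_{2}$ coming from the premises, whereas here their union is forced to be $\varnothing$ and no such adjustment is needed.

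Next I would argue by induction on $h(\pi)$, separately for each logical rule. The base case $h(\pi)=1$ is vacuous: every axiom of $\Gastc$ contains only literals (clause (i) of Definition~\ref{definitioncalculi}), so no compound principal formula can appear. For the inductive step I would case-split on the last rule of $\pi$, illustrating with $R\wedge$ on $\Gamma\specialvdashdef{\varnothing}{\mathbf{T}}\Delta,A\wedge B$. If the last rule is $R\wedge$ with $A\wedge B$ principal, the invariant forces $\mathbf{S}_{1}=\mathbf{S}_{2}=\varnothing$ in its premises, which are therefore the inverted sequents themselves, at strictly smaller height and with no $\sigma$-step required. If the last rule is $RW$ introducing $A\wedge B$, I would replace that weakening by one yielding $A$ (or $B$), preserving height. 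If $A\wedge B$ is not principal---whether the last rule is a structural rule, another logical rule, $\sigma$, or $cut_{asa}$---the invariant guarantees that every premise already carries empty constraints, so the IH applies and yields the inverted premises at no greater height, after which reapplication of the last rule restores the desired conclusion within height $h(\pi)$. The same template handles $R\vee,R\neg\wedge,R\neg\vee,R\neg\neg$ on the right and $L\wedge,L\vee,L\neg\wedge,L\neg\vee,L\neg\neg$ on the left.

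The hard part will be nothing beyond the preliminary invariant: once it is in place, the induction becomes a routine refinement of the proofs of Lemmas~\ref{rightinvertibility} and~\ref{leftinvertibility}, with the decisive improvement that no $\sigma$-step ever needs to be inserted to reconcile premise constraint sets, so the height is preserved throughout.
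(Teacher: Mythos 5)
Your overall strategy is the paper's own: induction on the height of the given proof, re-running the case analysis of Lemmas~\ref{rightinvertibility} and~\ref{leftinvertibility} and observing that with empty control sets no $\sigma$-adjustments are ever needed, so height is preserved. Your preliminary invariant (a root with $\mathbf{S}=\varnothing$ forces empty control pairs throughout and excludes every extra-logical rule, since by clauses $(v)$--$(vii)$ of Definition~\ref{definitioncalculi} an extra-logical conclusion always carries a non-empty constraint set, and all rules transmit constraint sets monotonically from premises to conclusion) is correct and is a useful way of making explicit what the paper leaves implicit; it also yields $\mathbf{T}=\varnothing$ everywhere for free.

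There is, however, one step that does not go through as written. You claim that whenever the formula being inverted is not principal in the last rule --- ``whether the last rule is a structural rule, another logical rule, $\sigma$, or $cut_{asa}$'' --- you may apply the induction hypothesis to the premises and then \emph{reapply the last rule}, and that ``the same template handles'' the inversions of $L\vee$ and $L\neg\wedge$. For those two inversions the reapplication of $cut_{asa}$ can be illegal: condition $(iii)$ of Definition~\ref{definitioncalculi} requires the cut atom to occur in some formula of the antecedent or of $\mathcal{W}$, and if that atom occurs only inside the discarded disjunct $A_{3-i}$ (e.g.\ $\mathcal{W}=\varnothing$, a proof of $q\vee p\specialvdashdef{\varnothing}{\varnothing}q\vee p$ ending in $cut_{asa}$ on $p$: after inverting to keep $q$, the atom $p$ no longer occurs in the antecedent, so the cut cannot be reinstated), your procedure stalls. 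This is exactly the failure mode the paper flags in the proof of Lemma~\ref{partialinvertibility}, where it is repaired by appealing to Theorem~\ref{atomicnonanalytic} --- a global argument that is neither height-preserving nor available to a ``reapply the last rule'' template. So for the $L\vee$ and $L\neg\wedge$ inversions you need a separate argument in the $cut_{asa}$ case (for instance, tracing the predecessors of the cut formula in the two premises, or otherwise showing that a cut whose atom has vanished from the antecedent can be pruned without increasing height); as it stands, this case is a genuine gap, even though for the right rules and for $L\wedge$, $L\neg\vee$, $L\neg\neg$ your template is fine because the inverted antecedent retains all subformulas and the side condition of $cut_{asa}$ survives.
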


\begin{proof}
    We reason by induction on the height of a $\Gastc$-proof $\pi$ of the premise. We argue as in the proof of Lemma \ref{rightinvertibility}: since we deal with empty sets of constraints and conditions, the height of the $\Gastc$-proof of the conclusion of the inverse rule does not surpass the height of the $\Gastc$-proofs of the premise.
\end{proof}

\begin{lemma}\label{contractionempty}
    The rule of Left Contraction
    \smallskip
    \begin{center}
        {\AxiomC{$A,A,\Gamma\specialvdashdef{{\bf S}}{{\bf T}}\Delta$}
        \LeftLabel{$LC$}
        \UnaryInfC{$A,\Gamma\specialvdashdef{{\bf S}}{{\bf T}}\Delta$}
        \DisplayProof}
    \end{center}
    \smallskip
    is admissible in $\Gastc$ under the empty set of constraints. 
\end{lemma}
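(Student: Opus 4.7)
The plan is to mirror the argument of Lemma~\ref{rightcontraction}, proceeding by primary induction on the height $h(\pi)$ of a $\Gastc$-proof $\pi$ of the premise $A,A,\Gamma\specialvdashdef{\varnothing}{{\bf T}}\Delta$ and by secondary induction on the logical complexity $\mathrm{C}(A)$ of the contracted formula. The base case $h(\pi)=1$ is handled by inspection of the admissible axiom shapes from condition $(i)$ of Definition~\ref{definitioncalculi}: the literal and the two-formula axioms trivially support contraction, while any atomic $\ast$-sequent drawn from $\topfstar(\quest W)$ is closed under Contraction by construction of $\topstar$, so an axiom with one occurrence of $A$ removed is always available.

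For the inductive step, I would perform case analysis on the last rule of $\pi$. If both occurrences of $A$ are non-principal -- in particular, whenever the last rule acts on the right, or is a binary rule whose active formula differs from $A$ -- the contraction permutes upward: the left context is shared across the premises of every structural, logical, and extra-logical rule (including $cut_{asa}$, thanks to its context-sharing formulation), so applying the primary inductive hypothesis to each premise yields a proof of the contracted conclusion, and the rule is reapplied with the same control pair. If $A$ is atomic and one occurrence is principal on the left, the only available left rule is $LW$, and taking its premise furnishes the contracted sequent directly.

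The delicate case is when $A$ is non-atomic and one of its copies is principal in the last rule. Here I would invoke Lemma~\ref{invertibilityempty}: since ${\bf S}=\varnothing$, every left logical rule is height-preservingly invertible -- crucially including $L\vee$ and $L\neg\wedge$, which fail to be invertible in general (cf. Lemma~\ref{leftinvertibility}). Inverting the last rule on the principal occurrence of $A$ and then inverting on the second copy produces proofs whose antecedents replace both copies of $A$ by their proper subformulas; the secondary inductive hypothesis then contracts each resulting pair of subformulas (each of strictly smaller complexity), after which reapplying the corresponding left rule rebuilds $A$ in the antecedent. The main obstacle is precisely this reduction step: absent the empty-constraint hypothesis the non-invertibility of $L\vee$ and $L\neg\wedge$ would block the descent to proper subformulas, and the secondary induction on $\mathrm{C}(A)$ could not be sustained. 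This is exactly why the admissibility of $LC$ must be stated relative to the empty set of constraints.
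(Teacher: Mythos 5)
Your proof is correct and follows essentially the same route as the paper: induction on the height of the given proof, with the principal non-atomic cases on the left handled by the height-preserving invertibility of Lemma~\ref{invertibilityempty}, which is exactly where the empty-constraint hypothesis is used. The only difference is that your secondary induction on $\mathrm{C}(A)$ is superfluous: because inversion is height-preserving under the empty set of constraints, the single induction on height already suffices, as the paper's proof explicitly remarks.
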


\begin{proof}
    We reason by induction on the height of the $\Gastc$-proof $\pi$ of the premise. As usual, we exploit Lemma \ref{invertibilityempty} whenever one of the two occurrences of the contracted formula is principal in the last rule  in $\pi$ and the latter is a logical rule. Notice that we do not need secondary induction over the logical complexity of $A$, due to the fact invertibility is height-preserving under the empty sets of constraints.
\end{proof}

    \begin{theorem}\label{cutempty}
    The rule of Cut
    \smallskip
\begin{center}
    {\AxiomC{$\Gamma\specialvdashdef{\varnothing}{\varnothing}A,\Delta$}
    \AxiomC{$A,\Pi\specialvdashdef{\varnothing}{\varnothing}\Sigma$}
    \RightLabel{$Cut$}
    \BinaryInfC{$\Pi,\Gamma\specialvdashdef{\varnothing}{\varnothing}\Delta,\Sigma$}
    \DisplayProof}
\end{center}
\smallskip
    is admissible in $\Gastc$.
\end{theorem}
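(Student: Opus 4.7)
The plan is to prove admissibility of Cut by a standard double induction: primary induction on the logical complexity $\mathrm{C}(A)$ of the cut formula, with secondary induction on the sum $h(\pi_1)+h(\pi_2)$ of the heights of the two given proofs of the premises.

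A preliminary observation shortens the case analysis considerably. Because the conclusion carries the empty control pair $\langle\varnothing,\varnothing\rangle$ and because every rule of $\Gastc$ either copies the control pair ($LW$, $RW$, unary logical rules), or takes componentwise unions ($cut_{asa}$ and the binary logical rules), or monotonically extends $\mathbf{S}$ (as for $\sigma$), every sequent occurring in $\pi_1$ and $\pi_2$ must also have an empty control pair. In particular, no instance of the extra-logical rule $\delta$ can appear in either proof, since by condition $(v)$ of Definition \ref{definitioncalculi} each instance of $\delta$ introduces a non-empty $\mathbf{T}$ and $\mathbf{S}$, and applications of $\sigma$ degenerate into identity steps. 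Hence $\pi_1$ and $\pi_2$ live in the fragment generated by axioms (including those coming from $\topfstar(\quest W)$), the logical rules, the two weakenings, and atomic semi-analytic $cut_{asa}$, all carrying the empty control pair.

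The case analysis then proceeds in the customary style, based on the last rules applied in $\pi_1$ and $\pi_2$. In the \emph{axiom cases}, if one of $\pi_1,\pi_2$ consists solely of a literal or contradictory axiom involving the cut formula, the conclusion follows from the other proof together with weakening; when both proofs end in axioms drawn from $\topfstar(\quest W)$, closure of this set under Cut (built into the construction of $\topfstar$) guarantees that the resulting sequent is itself an axiom instance. In the \emph{commutation cases}, where the cut formula $A$ is not principal in the last rule of one of $\pi_i$, the cut is pushed upward to the premise(s), the secondary IH is applied, and the commuting rule is reapplied below. The rule $cut_{asa}$ falls under this case, since its active formula occurs only in the premises. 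In the \emph{principal reduction case}, where $A$ is non-atomic and principal in the last logical rules of both $\pi_1$ and $\pi_2$, we appeal to height-preserving invertibility under empty constraints (Lemmas \ref{rightinvertibility}, \ref{leftinvertibility}, and \ref{invertibilityempty}) to extract proofs of the premises of these rules with the immediate subformulas of $A$ in place of $A$; the resulting cuts are then discharged via the primary IH on the strictly smaller subformulas. Any duplicated context-formulas produced along the way are absorbed using admissibility of Right Contraction (Lemma \ref{rightcontraction}) and of Left Contraction under empty constraints (Lemma \ref{contractionempty}).

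The main obstacle will be the principal reduction case, specifically in managing the nested cuts on the immediate subformulas of $A$ when at least one of the invertibility steps produces two proofs (as for $R\wedge$ and $R\neg\vee$). Two nested invocations of the primary IH are then required, and the duplicated contexts must be merged by contraction; the setting is delicate because Left Contraction is admissible only under empty constraints, but this matches precisely our working hypothesis. The secondary obstacle is keeping the axiom cases uniform in the presence of the enlarged axiom base: one must verify that the atomic $\ast$-sequents obtained by Cut from elements of $\topfstar(\quest W)$ lie again in $\topfstar(\quest W)$, which is guaranteed by the very definition of $\topstar$ via closure under Contraction and Cut.
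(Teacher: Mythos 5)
The paper itself gives no in-text argument for Theorem \ref{cutempty} (it defers to the cited prior work), so your reconstruction has to stand on its own. Your preliminary observation is correct and valuable: since every rule of $\Gastc$ passes control pairs upward by identity, componentwise union, or (for $\sigma$) shrinking $\mathbf{S}$, an endsequent with $\langle\varnothing,\varnothing\rangle$ forces empty control pairs throughout $\pi_1,\pi_2$, and since every extra-logical rule installs a non-empty $\mathbf{T}$, no $\delta$-instance can occur; soundness is then vacuous, and you are indeed reduced to a classical-looking fragment with the $\topfstar(\quest W)$ axioms, weakenings and $cut_{asa}$.

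The gap is in your treatment of $cut_{asa}$ as a plain commutation case. Suppose $\pi_2$ ends with $cut_{asa}$ on an atom $p$, licensed by condition $(iii)$ of Definition \ref{definitioncalculi} because $p$ occurs in a formula of the antecedent $A,\Pi$ -- and suppose $p$ occurs only inside the cut formula $A$, not in $\Pi\cup\Gamma\cup\mathcal{W}$. After you push the main Cut above $cut_{asa}$ by the secondary IH, you obtain proofs of $\Pi,\Gamma\specialvdashdef{\varnothing}{\varnothing}\Delta,\Sigma,p$ and $\Pi,\Gamma\specialvdashdef{\varnothing}{\varnothing}\Delta,\Sigma,\neg p$, but you may no longer reapply $cut_{asa}$: its side condition now requires $p$ to occur in $\Pi\cup\Gamma\cup\mathcal{W}$, which can fail precisely because cutting $A$ removed the only occurrences of $p$. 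So the step ``$cut_{asa}$ falls under the commutation case'' does not go through as stated, and this is exactly the phenomenon the paper is careful about elsewhere (remark $(B)$ after Definition \ref{definitioncalculi}, and the predecessor-tree/global-rewriting argument of Theorem \ref{atomicnonanalytic} for non-semi-analytic atomic cuts). To close the gap one must either first establish that unrestricted atomic cut is admissible in the empty-control-pair fragment (e.g.\ by tracing the ancestors of the occurrences of $p$ and $\neg p$ to weakenings or axioms and rewriting the proof), or restructure the induction so that such residual atomic cuts are discharged rather than reapplied as $cut_{asa}$. A further, minor, imprecision: cutting two axioms from $\topfstar(\quest W)$ yields a sequent in $\topstar(\quest W)$ that need not be \emph{complementary}, hence need not lie in $\topfstar(\quest W)$; in that case it contains a shared literal or a clash on one side and is recovered from the literal/contradiction axioms by weakening -- easily patched, but it should be said.
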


\begin{proof}
    We argue as in \cite{PiazzaTesi24} and \cite{defaultACM}.
\end{proof}

\begin{proposition}\label{negation}
    The Left and Right Negation rules
    \smallskip
    \begin{center}
        {\AxiomC{$\Gamma\specialvdashdef{\varnothing}{\varnothing}\Delta,A$}
        \LeftLabel{$L\neg$}
        \UnaryInfC{$\neg A,\Gamma\specialvdashdef{\varnothing}{\varnothing}\Delta$}
        \DisplayProof}\qquad
        {\AxiomC{$A,\Gamma\specialvdashdef{\varnothing}{\varnothing}\Delta$}
        \RightLabel{$R\neg$}
        \UnaryInfC{$\Gamma\specialvdashdef{\varnothing}{\varnothing}\Delta,\neg A$}
        \DisplayProof}
    \end{center}
    \smallskip
    are admissible and invertible in $\Gastc$.
\end{proposition}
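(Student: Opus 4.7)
The plan is to derive Proposition \ref{negation} from two ingredients already available: the admissibility of Cut under empty control pairs (Theorem \ref{cutempty}), and a generalised identity lemma establishing, for every formula $A$, the provability in $\Gastc$ of the two ``excluded middle'' sequents $\specialvdashdef{\varnothing}{\varnothing} A,\neg A$ and $A,\neg A \specialvdashdef{\varnothing}{\varnothing}$. Once these are in hand, each of the four claims (admissibility of $L\neg$, admissibility of $R\neg$, invertibility of $L\neg$, invertibility of $R\neg$) reduces to a single Cut on $A$ or $\neg A$ against the appropriate excluded-middle sequent, and since all sequents involved carry the empty control pair, conditions (ii) and (iii) in the definition of admissibility are satisfied trivially.

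First I would prove the generalised identity lemma by induction on $\mathrm{C}(A)$. The base cases, where $A$ is a literal, are immediate from the $ax$ schema, which includes the instances $\specialvdashdef{\varnothing}{\varnothing} p,\neg p$ and $p,\neg p \specialvdashdef{\varnothing}{\varnothing}$ for any atom $p$; the corresponding sequents for $A=\neg p$ follow by a single application of $R\neg\neg$ or $L\neg\neg$. The inductive step proceeds uniformly over the shape of $A$: for $A=B\wedge C$, the two-premise excluded middle $\specialvdashdef{\varnothing}{\varnothing} B\wedge C,\neg(B\wedge C)$ is obtained from the inductive instances for $B$ and $C$ by weakening, $R\wedge$, $R\neg\wedge$, and (bottom-up applications of) the structural scaffolding; analogous derivations handle $A=B\vee C$ using $L\vee, R\vee, L\neg\vee, R\neg\vee$, and $A=\neg\neg B$ using $L\neg\neg, R\neg\neg$. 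The left-sided sequent $A,\neg A\specialvdashdef{\varnothing}{\varnothing}$ is treated symmetrically.

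Second, I would derive the four claims by Cut. For admissibility of $R\neg$, from a proof of $A,\Gamma\specialvdashdef{\varnothing}{\varnothing}\Delta$ and $\specialvdashdef{\varnothing}{\varnothing} A,\neg A$, a Cut on $A$ yields $\Gamma\specialvdashdef{\varnothing}{\varnothing}\Delta,\neg A$. Admissibility of $L\neg$ is symmetric, cutting the premise $\Gamma\specialvdashdef{\varnothing}{\varnothing}\Delta,A$ against $A,\neg A\specialvdashdef{\varnothing}{\varnothing}$. Invertibility of $R\neg$ follows by cutting $\Gamma\specialvdashdef{\varnothing}{\varnothing}\Delta,\neg A$ against $A,\neg A\specialvdashdef{\varnothing}{\varnothing}$ on the formula $\neg A$, and invertibility of $L\neg$ by cutting $\neg A,\Gamma\specialvdashdef{\varnothing}{\varnothing}\Delta$ against $\specialvdashdef{\varnothing}{\varnothing} A,\neg A$ on $\neg A$. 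In every case the ambient control pair is empty, so Theorem \ref{cutempty} applies.

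The main obstacle is essentially bookkeeping rather than conceptual: the generalised identity lemma must handle each connective together with its negated form, because the calculus treats negation primitively, so the inductive step splits into six combinations ($\wedge$, $\vee$, $\neg\wedge$, $\neg\vee$, $\neg\neg$, plus literal bases) instead of the usual three. Each case, however, is routine and mirrors the standard derivation of $A\vdash A$ in multiple-conclusion sequent systems with primitive negation, so no real technical difficulty arises.
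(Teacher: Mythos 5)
Your proposal is correct and follows essentially the same route as the paper: the paper derives both rules (and their inverses) as immediate consequences of the empty-control-pair Cut of Theorem \ref{cutempty}, cutting against the sequents $\specialvdashdef{\varnothing}{\varnothing}A,\neg A$ and $A,\neg A\specialvdashdef{\varnothing}{\varnothing}$, exactly as you do. Your explicit induction establishing these generalised identity sequents is left implicit in the paper (they also follow from Theorem \ref{completenessbase}), but it is the same argument in substance.
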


\begin{proof}
    Straightforward from Theorem \ref{cutempty}.\remove{We leverage Theorem \ref{cutempty} as follows:
    \smallskip
    \begin{center}
        {\AxiomC{$\Gamma\specialvdashdef{\varnothing}{\varnothing}\Delta,A$}
        \AxiomC{$\vdots$}
        \noLine
        \UnaryInfC{$A,\neg A\specialvdashdef{\varnothing}{\varnothing}$}
        \LeftLabel{\scriptsize{$Cut$}}
        \BinaryInfC{$\neg A,\Gamma\specialvdashdef{\varnothing}{\varnothing}\Delta$}
        \DisplayProof}\quad
        {\AxiomC{$\vdots$}
        \noLine
        \UnaryInfC{$\specialvdashdef{\varnothing}{\varnothing}A,\neg A$}
        \AxiomC{$A,\Gamma\specialvdashdef{\varnothing}{\varnothing}\Delta$}
        \RightLabel{\scriptsize{$Cut$}}
        \BinaryInfC{$\Gamma\specialvdashdef{\varnothing}{\varnothing}\Delta,\neg A$}
        \DisplayProof}
    \end{center}}
\end{proof}

\begin{theorem}\label{completenessbase}
    $\Gastc$ proves $\Gamma\specialvdashdef{\varnothing}{\varnothing}\Delta$ if and only if $\bigvee\Delta$ belongs to $Cn(\mathcal{W}\cup\Gamma)$.
\end{theorem}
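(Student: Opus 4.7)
The plan is to prove both directions by reducing everything to ordinary classical reasoning in $\fullGpn$, exploiting the fact that a controlled sequent with empty control pair carries no extra-logical commitments.

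For the completeness direction ($\Leftarrow$), assume $\bigvee\Delta \in Cn(\mathcal{W}\cup\Gamma)$. The initial proposition characterising $\fullGpn$-provability semantically yields a $\fullGpn$-proof of $\mathcal{W},\Gamma\vdash\Delta$ and hence, via $L\wedge$, of $W,\Gamma\vdash\Delta$ with $W=\bigwedge\mathcal{W}$. This proof lifts verbatim to a $\Gastc$-derivation of $W,\Gamma\specialvdashdef{\varnothing}{\varnothing}\Delta$: every $\fullGpn$ axiom is derivable from the literal instances of $ax$ in $\Gastc$ by $LW$ and $RW$, and every logical rule is available under the empty control pair. Separately, I would build a $\Gastc$-proof of $\specialvdashdef{\varnothing}{\varnothing}W$ by decomposing $\vdash W$ bottom-up with the right logical rules; the atomic leaves are either standard $\fullGpn$-style axioms (those containing an atom and its negation on the same side) or complementary atomic $\ast$-sequents in $\topf(\vdash W)\subseteq\topfstar(\vdash W)$, all of which are $\Gastc$ axioms by condition $(i)$ of Definition \ref{definitioncalculi}. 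One application of admissible Cut (Theorem \ref{cutempty}) then combines $\specialvdashdef{\varnothing}{\varnothing}W$ with $W,\Gamma\specialvdashdef{\varnothing}{\varnothing}\Delta$ to yield $\Gamma\specialvdashdef{\varnothing}{\varnothing}\Delta$. Since every sequent in the resulting derivation carries the empty control pair, the soundness clauses of Definition \ref{soundness} are satisfied vacuously, so we obtain a genuine $\Gastc$-proof.

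For the soundness direction ($\Rightarrow$), suppose $\pi$ is a $\Gastc$-proof of $\Gamma\specialvdashdef{\varnothing}{\varnothing}\Delta$. The crucial observation is a propagation argument on control pairs: no rule of $\Gastc$ can shrink $\mathbf{T}$ as one moves from premises to conclusion, since unary logical rules and $\sigma$ preserve $\mathbf{T}$ while binary rules (including $cut_{asa}$) set $\mathbf{T}=\mathbf{T_1}\cup\mathbf{T_2}$; meanwhile, every extra-logical rule $\delta$ produces a non-empty $\mathbf{T}$ by condition $(v)$ (always containing $\{\top^{\sf f}\}$ or $\{(\bigwedge_i\bigvee\Theta_i)^{\sf f}\}$). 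Hence, for the terminal $\mathbf{T}$ to be empty, every sequent in $\pi$ must have $\mathbf{T}=\varnothing$ and no $\delta$-application can occur. A parallel analysis ($\mathbf{S}=\mathbf{S_1}\cup\mathbf{S_2}$ for binary rules and $\mathbf{S}\subseteq\mathbf{S}'$ for $\sigma$) forces $\mathbf{S}=\varnothing$ everywhere as well.

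Therefore $\pi$ collapses to a $\fullGpn$-derivation augmented only with the axioms from $\topfstar(\vdash W)$, weakenings, and atomic semi-analytic cuts. Each atomic $\ast$-sequent in $\topfstar(\vdash W)$ is a classical consequence of $W$, since the set is obtained from $\topp(\vdash W)$ by sound closure operations (Negation, Contraction, Cut). Soundness of $\fullGpn$ then yields that $\mathcal{W},\Gamma\vdash\Delta$ is classically valid, i.e.\ $\bigvee\Delta\in Cn(\mathcal{W}\cup\Gamma)$. The main obstacle is the propagation argument that rules out $\delta$-applications: one must verify carefully that no rule of $\Gastc$, and in particular neither $\sigma$ nor the admissible $cut_{asa}$, can ever remove a nonempty condition set while descending the derivation tree. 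Once that observation is secured, the rest of the argument reduces cleanly to the classical adequacy of $\fullGpn$.
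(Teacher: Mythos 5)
Your argument is correct, and it is essentially a self-contained reconstruction of what the paper only gestures at: the in-text proof of Theorem \ref{completenessbase} is a bare pointer to the cited literature, so there is no internal argument to diverge from. Both halves of your proof are sound: for ($\Leftarrow$), simulating the $\fullGpn$-proof of $W,\Gamma\vdash\Delta$ under empty control pairs (recovering the general axioms by $LW/RW$), deriving $\specialvdashdef{\varnothing}{\varnothing}W$ from the condition-$(i)$ axioms — note that since only right rules fire in the decomposition of $\quest W$, the leaves are all right-sided, so the case analysis is even simpler than you state — and then discharging $W$ by Theorem \ref{cutempty}; soundness of every sequent is indeed vacuous when both control sets are empty. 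For ($\Rightarrow$), your monotone-propagation observation is the right key: every rule's conclusion carries a condition set containing each premise's, and every extra-logical rule has a nonempty $\mathbf{T}$, so an empty terminal $\mathbf{T}$ excludes $\delta$-applications altogether; what remains is classically sound once one checks that each axiom in $\topfstar(\quest W)$ is a consequence of $W$ (decomposition leaves are entailed by $W$, and Negation, Contraction and Cut closure preserve this). The only point you should make explicit is that the nonemptiness of $\mathbf{T}$ for extra-logical rules is not given by condition $(v)$ alone: rules generated by the closure conditions $(vi)$--$(vii)$ inherit or union the condition sets of previously generated rules, so nonemptiness follows by a short induction on the generation of the rule set; with that sentence added, the propagation step is airtight.
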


\begin{proof}
We argue as in \cite{PiazzaTesi24} and \cite{defaultACM}. 
\end{proof}

\begin{example}
    Let $\langle\mathcal{W},\mathcal{D}\rangle$ be defined as follows:
    \smallskip
    \begin{center}
        $\mathcal{W}=\varnothing$
    \end{center}
    \smallskip
    \begin{center}
        $\mathcal{D}=\Big\{\dfrac{\top:p}{p}\,,\,\dfrac{p\vee q:\neg p}{\neg p}\Big\}$
    \end{center}
    \smallskip
    Consider $\Gastc$-derivation $\pi$:
    \smallskip
\begin{center}
    {\AxiomC{$ $}
    \LeftLabel{\scriptsize{$\delta_{\mathcal{D}'}$}}
    \UnaryInfC{$\specialvdash{\{\neg p\}}{\top}p,q $}
    \LeftLabel{\scriptsize{$LW$}}
    \UnaryInfC{$p\vee q\specialvdash{\{\neg p\}}{\top}p,q $}
    \AxiomC{$ $}
    \RightLabel{\scriptsize{$ax$}}
    \UnaryInfC{$p\specialvdash{\varnothing}{\varnothing} p$}
    \RightLabel{\scriptsize{$RW$}}
    \UnaryInfC{$p\specialvdash{\varnothing}{\varnothing} p,q$}
    \AxiomC{$ $}
    \RightLabel{\scriptsize{$ax$}}
    \UnaryInfC{$q\specialvdash{\varnothing}{\varnothing}q$}
    \RightLabel{\scriptsize{$RW$}}
    \UnaryInfC{$q\specialvdash{\varnothing}{\varnothing} p,q$}
    \RightLabel{\scriptsize{$L\vee$}}
    \BinaryInfC{$p\vee q\specialvdash{\varnothing}{\varnothing} p,q$}
    \RightLabel{\scriptsize{$\delta_{\mathcal{D}''}$}}
    \UnaryInfC{$p\vee q\specialvdash{\{\neg\neg p\}}{\{p\vee q\}}\neg p$}
    \RightLabel{\scriptsize{$cut_{asa}$}}
    \BinaryInfC{$p\vee q\specialvdash{\{\neg p,\neg\neg p\}}{\{\top,\ p\vee q\}}q $}
    \DisplayProof}
\end{center}
\smallskip
By Definition \ref{proof}, $\pi$ is a $\Gastc$-paraproof: the conclusion is sound with respect to conditions but unsound with respect to constraints. Hence, the $cut_{asa}$ application in $\pi$ is unsafe. On the other hand, consider the following $\Gastc$-derivation $\rho$:
\smallskip
\begin{center}
    {\AxiomC{$ $}
    \LeftLabel{\scriptsize{$ax$}}
    \UnaryInfC{$p\specialvdash{\varnothing}{\varnothing}p$}
    \LeftLabel{\scriptsize{$RW$}}
    \UnaryInfC{$p\specialvdash{\varnothing}{\varnothing}p,q$}
    \AxiomC{$ $}
    \LeftLabel{\scriptsize{$ax$}}
    \UnaryInfC{$q\specialvdash{\varnothing}{\varnothing}q$}
    \LeftLabel{\scriptsize{$RW$}}
    \UnaryInfC{$q\specialvdash{\varnothing}{\varnothing}p,q$}
    \LeftLabel{\scriptsize{$L\vee$}}
    \BinaryInfC{$p\vee q\specialvdash{\varnothing}{\varnothing}p,q$} 
    \AxiomC{$ $}
    \RightLabel{\scriptsize{$ax$}}
    \UnaryInfC{$p\specialvdash{\varnothing}{\varnothing}p$}
    \RightLabel{\scriptsize{$RW$}}
    \UnaryInfC{$p\specialvdash{\varnothing}{\varnothing}p,q$}
    \AxiomC{$ $}
    \RightLabel{\scriptsize{$ax$}}
    \UnaryInfC{$q\specialvdash{\varnothing}{\varnothing}q$}
    \RightLabel{\scriptsize{$RW$}}
    \UnaryInfC{$q\specialvdash{\varnothing}{\varnothing}p,q$}
    \RightLabel{\scriptsize{$L\vee$}}
    \BinaryInfC{$p\vee q\specialvdash{\varnothing}{\varnothing}p,q$}
    \RightLabel{\scriptsize{$\delta_{\mathcal{D}''}$}}
    \UnaryInfC{$p\vee q\specialvdash{\{\neg\neg p\}}{\{p\vee q\}}\neg p$}
    \RightLabel{\scriptsize{$cut_{asa}$}}
    \BinaryInfC{$p\vee q\specialvdash{\{\neg\neg p\}}{\{p\vee q\}}q$}
    \DisplayProof}
\end{center}
\smallskip
By Definition \ref{proof}, $\rho$ is a $\Gastc$-proof. Unlike the $cut_{asa}$ application in $\pi$, the $cut_{asa}$ application in $\rho$ is safe. %Recall that the hypersequent $p\vee q\vdash q\mid\mathcal{R}$ is not provable in the $\fullHGastc$ calculus, for any $\mathcal{R}$.
\end{example}

\begin{theorem}\label{atomicanalytic}
    Let $A$ be an atom occurring in some formula in $\Gamma,\Pi,\mathcal{W}$. The rule of safe, atomic semi-analytic Cut
    \smallskip
    \begin{center}
        {\AxiomC{$\Gamma\specialvdashdef{{\bf S_{1}}}{{\bf T_{1}}}\Delta,A$}
        \AxiomC{$\Pi\specialvdashdef{{\bf S_{2}}}{{\bf T_{2}}}\Sigma,\neg A$}
        \RightLabel{$cut_{a}$}
        \BinaryInfC{$\Pi,\Gamma\specialvdashdef{{\bf S}}{{\bf T}}\Delta,\Sigma$}
        \DisplayProof}
    \end{center}
    \smallskip
    is admissible in $\Gastc$. 
\end{theorem}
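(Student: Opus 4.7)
The plan is to proceed by strong induction on $h(\pi_1) + h(\pi_2)$, where $\pi_1$ and $\pi_2$ are the given $\Gastc$-proofs of the premises, adapting the standard cut-elimination strategy to the controlled setting. The high-level idea, in light of the primitive $cut_{asa}$, is to reduce each instance of the rule either to a smaller instance (by permuting it above a non-principal final rule), or to a situation where $cut_{asa}$ combined with $LW/RW$ and the closure conditions on extra-logical rules can be invoked directly to merge the contexts $\Gamma, \Pi$ and $\Delta, \Sigma$.

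First, I would handle the cases in which the cut literal $A$ (respectively $\neg A$) is not principal in the last rule of $\pi_1$ (respectively $\pi_2$). Here the cut can be permuted upward past the last rule, producing smaller-height instances of the rule to which the induction hypothesis applies. This is routine for all logical rules, for $LW/RW$, for $\sigma$-steps, and for $cut_{asa}$; it also works for any extra-logical rule whose principal multiset does not contain the cut literal. The semi-analyticity hypothesis on $A$, namely that $A$ be an atom already occurring in $\Gamma \cup \Pi \cup \mathcal{W}$, is preserved throughout these permutations.

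Second, I would treat the principal cases. Since $A$ is a literal, the rule introducing it at the bottom of $\pi_1$ (resp. $\pi_2$) can only be $ax$, $RW$, or an extra-logical rule. The $RW$ case is immediate: discard the weakening and merge the remaining subderivation with the opposite premise via $LW$ and $RW$. The $ax$ cases are resolved using Definition \ref{definitioncalculi}(i) together with the fact that $\topfstar(\vdash W)$ is closed under Contraction and Cut, after suitable weakening of the opposite premise. The critical case is when both $A$ and $\neg A$ come from extra-logical rules $\delta_{\mathcal{D}_1}$ and $\delta_{\mathcal{D}_2}$: instantiating both schematic rules with the merged antecedent $\Pi \cup \Gamma$ and invoking closure condition (vii) of Definition \ref{definitioncalculi}, one obtains a single extra-logical rule whose conclusion is precisely $\Pi, \Gamma \specialvdashdef{\mathbf{S}}{\mathbf{T}} \Delta, \Sigma$, thereby absorbing the atomic cut.

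The main obstacle will be verifying that the constructed derivation is a $\Gastc$-proof in the sense of Definition \ref{proof}, not merely a paraproof. Soundness under conditions propagates easily via Lemma \ref{compatibility}(2), since enlarging the antecedent preserves derivability of each required condition. Soundness under constraints is the delicate part, because $LW$ bringing $\Pi$ (resp. $\Gamma$) into the left-hand side of $\pi_1$'s (resp. $\pi_2$'s) sequents could in principle make formerly underivable formulas in $\bigcup \mathbf{S_1}$ (resp. $\bigcup \mathbf{S_2}$) derivable from the merged context. To control this, one must track how the sets $D_{\pi_1}, D_{\pi_2}$ combine into $D_\pi$ for the constructed derivation, and exploit the analyticity of $A$ --- which guarantees that no genuinely new atomic content is injected by the merging --- to conclude that each constraint in $\mathbf{S_1} \cup \mathbf{S_2}$ remains incompatible with the combined antecedent $\Pi \cup \Gamma$ enriched by $[W, D_\pi]$. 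This is precisely the point at which the ``safe'' qualifier in the rule's name earns its keep.
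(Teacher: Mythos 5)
Your proposal takes a genuinely different and much heavier route than the paper, and as sketched it has a real gap. The paper's own argument is a two-step simulation, not an induction: weaken the left premise by $LW$ and $RW$ to $\Pi,\Gamma\specialvdashdef{{\bf S_{1}}}{{\bf T_{1}}}\Delta,\Sigma,A$, weaken the right premise to $\Pi,\Gamma\specialvdashdef{{\bf S_{2}}}{{\bf T_{2}}}\Delta,\Sigma,\neg A$, and then apply the \emph{primitive} context-sharing rule $cut_{asa}$; its side condition (Definition \ref{definitioncalculi}$(iii)$, that the cut atom occur in some formula of the antecedent of the conclusion or of $\mathcal{W}$) is exactly the semi-analyticity hypothesis that $A$ occurs in $\Gamma,\Pi,\mathcal{W}$, and the ``safe'' hypothesis on the cut instance is what guarantees that all the weakenings preserve soundness, so the resulting derivation is a proof. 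In your proposal the hypothesis on $A$ never does this work; instead you launch a full cut-elimination induction, which is unnecessary here precisely because the calculus already contains the atomic semi-analytic cut as a primitive rule.

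The gap in your argument is in the two places where you wave at the difficulty. First, permuting the cut above the last rule of a premise is \emph{not} routine in this calculus: when the last rule of $\pi_{2}$ is $L\vee$ or $L\neg\wedge$, the smaller cut instances you hand to the induction hypothesis need not be safe (soundness under constraints of, say, $B,\Pi',\Gamma$ against ${\bf S_{1}}\cup{\bf S_{21}}$ does not follow from safety of the original instance), so the induction hypothesis is not applicable as stated; this is exactly the obstacle the paper has to overcome with predecessor trees and a global rewriting of the premise derivation in the companion Theorem \ref{atomicnonanalytic}, and your proof would need that machinery too. Second, your justification of soundness under constraints for the merged conclusion --- that analyticity of $A$ ``guarantees that no genuinely new atomic content is injected by the merging'' --- is not correct: adding $\Pi$ to $\Gamma$ can perfectly well make a formula in $\bigcup{\bf S_{1}}$ derivable, irrespective of whether $A$ is analytic; what rules this out is the safety hypothesis on the cut instance itself, which should be invoked directly (as the paper does) rather than derived from analyticity. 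With those two repairs your induction could be made to work, but it would essentially reprove the harder Theorem \ref{atomicnonanalytic} in a case where the one-line simulation via $LW$, $RW$ and $cut_{asa}$ already suffices.
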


\begin{proof}
    For any such Cut application, consider the following $\Gastc$-derivation:
    \smallskip
    \begin{center}
        {\AxiomC{$\Gamma\specialvdashdef{{\bf S_{1}}}{{\bf T_{1}}}\Delta,A$}
        \doubleLine
        \LeftLabel{\scriptsize{$LW$}}
        \UnaryInfC{$\Pi,\Gamma\specialvdashdef{{\bf S_{1}}}{{\bf T_{1}}}\Delta,A$}
        \doubleLine
        \LeftLabel{\scriptsize{$RW$}}
        \UnaryInfC{$\Pi,\Gamma\specialvdashdef{{\bf S_{1}}}{{\bf T_{1}}}\Delta,\Sigma,A$}
        \AxiomC{$\Pi\specialvdashdef{{\bf S_{2}}}{{\bf T_{2}}}\Sigma,\neg A$}
        \doubleLine
        \RightLabel{\scriptsize{$LW$}}
        \UnaryInfC{$\Pi,\Gamma\specialvdashdef{{\bf S_{2}}}{{\bf T_{2}}}\Sigma,\neg A$}
        \doubleLine
        \RightLabel{\scriptsize{$RW$}}
        \UnaryInfC{$\Pi,\Gamma\specialvdashdef{{\bf S_{2}}}{{\bf T_{2}}}\Delta,\Sigma,\neg A$}
        \RightLabel{\scriptsize{$cut_{asa}$}}
        \BinaryInfC{$\Pi,\Gamma\specialvdashdef{{\bf S}}{{\bf T}}\Delta,\Sigma$}
        \DisplayProof}
    \end{center}
    \smallskip
    By hypothesis, all $LW$ applications are safe: hence, such $\Gastc$-derivation is a proof.
\end{proof}

\begin{definition}
    Let $\pi$ be a $\Gastc$-derivation of $\Gamma\specialvdash{{\bf S}}{{\bf T}}\Delta$ and $A$ be a formula occurrence in $\Gamma$ ($\Delta$). The {\em predecessors tree} of $A$ in $\pi$, in symbols $\mathcal{T}_{\pi}(A)$ is a labelled tree generated according to the following procedure.
    \smallskip
    \begin{itemize}
        \item[$(i)$] The root of $\mathcal{T}_{\pi}(A)$ is labelled by the occurrence of $A$ in $\Gamma$ ($\Delta$).
        \smallskip
        \item[$(ii)$] Let $\nu$ be a node of $\mathcal{T}_{\pi}(A)$ is labelled by an occurrence of $A$ in the antecedent $\Pi$ (succedent $\Sigma$, respectively) of a sequent $\Pi\specialvdash{{\bf S'}}{{\bf T'}}\Sigma$ occurring in $\pi$, and $\star$ be the last rule  in the $\Gastc$-subderivation of $\pi$ concluding $\Pi\specialvdash{{\bf S'}}{{\bf T'}}\Sigma$. 
        \smallskip
        \begin{itemize}
            \item[$(ii.i)$] If the occurrence of $A$ is principal in $\star$, $\nu$ has no immediate children.
            \smallskip
            \item[$(ii.ii)$] If the occurrence of $A$ is not principal in $\star$, $\nu$ has (at most) two immediate children, and their labels are the instances of $A$ occurring in the antecedents (succedents, respectively) of the premises of $\star$ which correspond to the occurrence of $A$.
        \end{itemize}
    \end{itemize}
\end{definition}

\begin{theorem}\label{atomicnonanalytic}
    Let $A$ be an atom which does not occur in any formula in $\Gamma,\Pi,\mathcal{W}$. The rule of safe, atomic non-semi-analytic Cut
    \smallskip
    \begin{center}
        {\AxiomC{$\Gamma\specialvdashdef{{\bf S_{1}}}{{\bf T_{1}}}\Delta,A$}
        \AxiomC{$\Pi\specialvdashdef{{\bf S_{2}}}{{\bf T_{2}}}\Sigma,\neg A$}
        \RightLabel{$cut_{a}$}
        \BinaryInfC{$\Pi,\Gamma\specialvdashdef{{\bf S}}{{\bf T}}\Delta,\Sigma$}
        \DisplayProof}
    \end{center}
    \smallskip
    is admissible in $\Gastc$. 
\end{theorem}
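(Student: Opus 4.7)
The plan is to induct on $h(\pi_1)+h(\pi_2)$, where $\pi_1$ and $\pi_2$ are the given $\Gastc$-proofs of the two premises, tracing the distinguished occurrences of $A$ and $\neg A$ through the predecessors trees $\mathcal{T}_{\pi_1}(A)$ and $\mathcal{T}_{\pi_2}(\neg A)$. The freshness hypothesis that $A$ does not occur in $\Gamma,\Pi,\mathcal{W}$ is the decisive ingredient: it rules out leaves of these trees corresponding to axioms drawn from $\topfstar(\vdash W)$ (by condition $(i)$ of Definition \ref{definitioncalculi}), and since $A$ is atomic neither $A$ nor $\neg A$ can be principal in any logical right-rule. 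Consequently, every leaf of each predecessors tree must correspond to an application of $RW$ or of an extra-logical rule $\delta$.

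First I would push the cut upward through every rule in which the traced occurrence is non-principal. For unary rules, binary logical rules and even the context-sharing $cut_{asa}$ this is routine: the inductive hypothesis applies to the relevant premise(s), and one reapplies the rule, possibly inserting $\sigma$-steps to harmonise constraint sets. The shared-context formulation of $cut_{asa}$ is essential here, as it ensures that the atom driving any internal $cut_{asa}$-application still occurs in the joint antecedent $\Pi,\Gamma\cup\mathcal{W}$ after the outer cut has been permuted over it. Since one of the two proof heights strictly decreases at each such reduction, we are reduced to the base case in which both $A$ and $\neg A$ are principal in the last rules of $\pi_1$ and $\pi_2$.

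The base case splits in two. If $A$ is introduced by $RW$ on either side, the premise of that $RW$ is already a cut-free proof of the weaker sequent, and suitable $LW$, $RW$ and $\sigma$ steps widen it to a derivation of $\Pi,\Gamma \specialvdashdef{{\bf S_1}\cup{\bf S_2}}{{\bf T_i}} \Delta,\Sigma$, with admissibility for conditions satisfied since ${\bf T_i}\subseteq{\bf T_1}\cup{\bf T_2}$. If instead both $A$ and $\neg A$ are introduced by extra-logical rules $\delta_{\mathcal{D}_1}$ and $\delta_{\mathcal{D}_2}$ whose conclusions are $\Gamma \specialvdashdef{{\bf S_1}}{{\bf T_1}}\Phi$ and $\Pi \specialvdashdef{{\bf S_2}}{{\bf T_2}}\Phi'$ with $A\in\Phi$ and $\neg A\in\Phi'$, then after $LW$-weakening both families of premises to the shared antecedent $\Pi,\Gamma$, closure condition $(vii)$ of Definition \ref{definitioncalculi} supplies a merged extra-logical rule concluding $\Pi,\Gamma \specialvdashdef{{\bf S_1}\cup{\bf S_2}}{{\bf T_1}\cup{\bf T_2}} \Delta,\Sigma$; the crucial point is that the target sequent $\vdash \Delta,\Sigma$ belongs to $\topfstaro(\vdash (\bigvee\Phi)\wedge(\bigvee\Phi'))$ precisely because this set is closed under Cut on the atom $A$.

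The hard part will be certifying that the derivation produced in this last sub-case is genuinely a proof rather than a paraproof. Compatibility of the new antecedent $\Pi,\Gamma$ with the enlarged set of conditions ${\bf T_1}\cup{\bf T_2}$ follows from the individual compatibilities witnessed by $\pi_1$ and $\pi_2$ together with Lemma \ref{compatibility}(2), since strengthening an antecedent preserves entailment of conditions. The subtler obligation is to show that $[W,D_\pi]\cup\Pi,\Gamma$ remains compatible with ${\bf S_1}\cup{\bf S_2}$ even though the merged $\delta$-application may enlarge $D_\pi$ beyond $D_{\pi_1}\cup D_{\pi_2}$; this is exactly the content of the safety hypothesis on the original Cut application, combined with the closure of $\topfstar$ under Cut that is built into condition $(i)$ of Definition \ref{definitioncalculi}, and it will require careful bookkeeping of how $D_\pi$ and $E_\rho$ compose along the new derivation.
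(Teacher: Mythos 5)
Your induction scheme and your use of the freshness of $A$ are in the right spirit, but there is a genuine gap exactly at the step you dismiss as routine: permuting the cut above binary \emph{left} logical rules. Suppose the last rule of $\pi_{2}$ is $L\vee$ with $\neg A$ not principal, say with premises $B,\Pi'\specialvdashdef{{\bf U_{1}}}{{\bf V_{1}}}\Sigma,\neg A$ and $C,\Pi'\specialvdashdef{{\bf U_{2}}}{{\bf V_{2}}}\Sigma,\neg A$. Applying the inductive hypothesis to the first premise produces a sequent whose antecedent now contains $B$ and whose constraint set contains ${\bf S_{1}}$. Nothing in the safety of the \emph{original} cut prevents some $B^{\sf f}$ from lying in $\bigcup{\bf S_{1}}$ (safety is checked against the antecedent $B\vee C,\Pi',\Gamma$, not against $B,\Pi',\Gamma$), so the permuted cut instance may fail soundness under constraints: either it is no longer a \emph{safe} cut, so the inductive hypothesis does not apply, or the reassembled derivation is a paraproof, and admissibility does not follow. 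This is precisely the case to which the paper devotes most of its proof: it inspects $\mathcal{T}_{\pi_{2}}(\neg A)$, disposes of the subcase in which every leaf is introduced by $RW$, uses the freshness of $A$ to show that otherwise an extra-logical rule must occur in $\pi_{2}$, and then rewrites $\pi_{2}$ \emph{globally} -- soundness of the target sequent plus Theorem \ref{completenessbase} give $\Gastc$-proofs with empty control pairs of the premises of the topmost extra-logical rules over the antecedent $\Pi$, after which only the right rules and $cut_{asa}$ below them are reapplied -- and closes with a secondary induction on the number of logical rules below the topmost extra-logical applications. The symmetric situation with $L\vee$ or $L\neg\wedge$ ending $\pi_{1}$ needs the same treatment. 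Your proposal introduces the predecessors trees but uses them only to classify leaves, and never confronts this loss of soundness, so the inductive step is incomplete.

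Two smaller points. First, your classification of the leaves is not exhaustive: freshness excludes axioms coming from $\topfstar(\quest W)$ and identity axioms with $A$ in the antecedent, but a leaf may still be a logical axiom of the form $\specialvdashdef{\varnothing}{\varnothing}A,\neg A$; this case is harmless (the paper absorbs it with $LW$, $RW$ and $cut_{asa}$), but it must be mentioned. Second, the worry in your last paragraph that the merged extra-logical rule "may enlarge $D_{\pi}$ beyond $D_{\pi_{1}}\cup D_{\pi_{2}}$" does not arise: by the labelling clause the rule produced by condition $(vii)$ of Definition \ref{definitioncalculi} carries the label $\delta_{\mathcal{D}_{1}\cup\mathcal{D}_{2}}$, so $def$ of the new derivation is just the union and soundness under constraints for it is exactly what the safety hypothesis on the original cut supplies; the real bookkeeping burden in the paper lies instead in the global-rewriting case described above.
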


\begin{proof}
    We focus on the topmost safe, atomic non-semi-analytic Cut application, reasoning by induction on the sum of the heights of the $\Gastc$-proofs $\pi_{1},\pi_{2}$ of the left and right premise, respectively. In the base case, $h(\pi_{1})=h(\pi_{2})=1$: since $A$ does not occur in any formula in $\mathcal{W}$, it cannot be the case that last rule  in $\pi_{1},\pi_{2}$ is an extra-logical instance of $ax$. Moreover, if the last rule in $\pi_{1}$ or $\pi_{2}$ is a logical instance of $ax$, we leverage suitable $LW$ and $cut_{asa}$ applications to conclude. Hence, the only non-trivial scenario arises whenever the last rules in $\pi_{1}$ and $\pi_{2}$ are extra-logical ones: we leverage condition $(vii)$ of Definition \ref{definitioncalculi} to conclude. To prove the induction step, we reason by cases over the last rule in $\pi_{1}$. 
    
    If $A$ is principal in the last rule of $\pi_{1}$, the only non-trivial cases occur when that last rule is extra-logical. To complete the proof, we then distinguish subcases according to the last rule in $\pi_{2}$.

    \smallskip
    \begin{itemize}
        \item[$(a)$] If $\neg A$ is principal in the last rule  in $\pi_{2}$, the latter is either $RW$ or an extra-logical rule $\delta$. In the first scenario, we apply $LW$, $RW$ and $\sigma$ to obtain a $\Gastc$-derivation of $\Pi,\Gamma\specialvdashdef{{\bf S}}{{\bf T_{2}}}\Delta,\Sigma$: it is easy to check that this derivation is a $\Gastc$-proof. In the second scenario, we leverage condition $(vii)$ of Definition \ref{definitioncalculi} to reach the conclusion: again, the application of $\delta$ is safe by construction. 
        \smallskip
        \item[$(b)$] If $\neg A$ is not principal in the last rule  in $\pi_{2}$ and latter is neither $L\vee$ nor $L\neg\wedge$, we simply apply the inductive hypothesis to its premises to permute upwards the Cut application: notice that soundness under constraints is always preserved. If the last rule  in $\pi_{2}$ is either $L\vee$ or $L\neg\wedge$, the same upwards permutation of Cut may fail, due to the fact that soundness under constraints may not be preserved. For instance, consider the following configuration -- where $B^{f}$ belongs to $\bigcup{\bf S_{1}}$, $\Pi=\Pi',B\vee C$, ${\bf T_{2}}={\bf V_{1}}\cup{\bf V_{2}}$ and ${\bf S_{2}}={\bf U_{1}}\cup{\bf U_{2}}$: 
        \smallskip
        \begin{center}
            {\AxiomC{$\vdots$}
            \noLine
            \UnaryInfC{$\Gamma\specialvdashdef{{\bf S_{1}}}{{\bf T_{1}}}\Delta,A$}
            \AxiomC{$\vdots$}
            \noLine
            \UnaryInfC{$B,\Pi'\specialvdashdef{{\bf U_{1}}}{\bf V_{1}}\Sigma,\neg A$}
            \AxiomC{$\vdots$}
            \noLine
            \UnaryInfC{$C,\Pi'\specialvdashdef{{\bf U_{2}}}{\bf V_{2}}\Sigma,\neg A$}
            \RightLabel{\scriptsize{$L\vee$}}
            \BinaryInfC{$B\vee C,\Pi'\specialvdashdef{{\bf S_{2}}}{\bf T_{2}}\Sigma,\neg A$}
            \RightLabel{\scriptsize{$cut_{a}$}}
            \BinaryInfC{$B\vee C,\Pi',\Gamma\specialvdashdef{{\bf S}}{{\bf T}}\Delta,\Sigma$}
            \DisplayProof}
        \end{center}
        \smallskip
        To reach the conclusion, we consider $\mathcal{T}_{\pi_{2}}(\neg A)$. If the label of any leaf in $\mathcal{T}_{\pi_{2}}(\neg A)$ is introduced by an $RW$ application, we remove such $RW$ applications altogether, thus obtaining a $\Gastc$-proof $\pi'_{2}$ of $\Pi\specialvdashdef{{\bf S_{2}}}{{\bf T_{2}}}\Sigma$: we apply $LW$, $RW$ and $\sigma$ to conclude. Otherwise, there must exist one extra-logical rule  in $\pi_{2}$. By contradiction, suppose this is not the case: from the fact that $A$ does not occur in $\mathcal{W}$ we infer the existence of (at least) one leaf in $\mathcal{T}_{\pi_{2}}(\neg A)$ labelled by an occurrence of $\neg A$ introduced by a logical instance of $ax$. This implies that $\neg A$ occurs in $\Pi$ -- contrary to the hypothesis. 

In the case where at least one extra-logical rule is applied in $\pi_{2}$, we first rewrite $\pi_{2}$ globally and then proceed by secondary induction on the number of logical rules occurring below the topmost extra-logical rule applications.
Starting from these topmost applications, the soundness of $\Pi,\Gamma \specialvdashdef{{\bf S}}{{\bf T}} \Delta,\Sigma$ guarantees that $\fullGpn$ proves $W,\Pi \vdash B$, where $B$ is any premise of the corresponding default rule. Hence $B \in Cn(\mathcal{W}\cup \Pi)$. By Theorem \ref{completenessbase}, $\Gastc$ proves $\Pi \specialvdashdef{\varnothing}{\varnothing} \Theta_{i}$ for every $\quest \Theta_{i}$ in $\topfo(\quest B)$.
We can now apply the extra-logical rule to the premises $\Pi \specialvdashdef{\varnothing}{\varnothing}\Theta_{i}$; soundness of $\Pi,\Gamma \specialvdashdef{{\bf S}}{{\bf T}} \Delta,\Sigma$ ensures that this application is safe. Next, we reapply all right rules that occur below this extra-logical rule in $\pi_{2}$. Instances of $cut_{asa}$ are preserved, and all resulting sequents remain sound.
We iterate this procedure for each subsequent extra-logical rule in $\pi_{2}$. The resulting $\Gastc$-proof $\pi_{2}'$ derives $\Pi\specialvdashdef{{\bf S_{2}}}{{\bf T_{2}'}}\Sigma,\neg A$ with strictly fewer logical rules below the highest extra-logical rule. At this point, the secondary inductive hypothesis applies, and the argument is complete.
    \end{itemize}
    \smallskip
    If $A$ is not principal in the last rule of $\pi_{1}$, and this rule is neither $L\vee$ nor $L\neg\wedge$, we apply the inductive hypothesis (at most twice) to obtain the desired conclusion. If the last rule in $\pi_{1}$ is, say, $L\vee$, we focus on $\mathcal{T}_{\pi_{1}}(A)$. If every leaf of $\mathcal{T}_{\pi_{1}}(A)$ is labelled by an occurrence of $A$ introduced by $RW$, we simply remove those $RW$ applications to obtain a $\Gastc$-proof of $\Gamma \specialvdashdef{{\bf S_{1}}}{{\bf T_{1}}} \Delta$, and then apply $LW$, $RW$, and $\sigma$ to conclude. Otherwise, we reason by contradiction -- as in case (b) for $\pi_{2}$ -- to obtain the existence of at least one extra-logical rule applied in $\pi_{1}$. We then rewrite $\pi_{1}$ globally (again as in (b)) to reduce the number of logical rules occurring below the topmost extra-logical applications, and proceed by secondary induction on that number to complete the proof. 
    %{\color{blue}Qui non c'e' il problema di partial invertibility, per intenderci: quando ragioniamo globalmente, non perdiamo informazione a sinistra - e i tagli analitici si conservano. Inoltre, per eliminare i tagli atomici non abbiamo bisogno di contrazione!}
\end{proof}

\begin{lemma}\label{partialinvertibility}
    Let $i=1,2$. The rules
    \smallskip
    \begin{center}
        {\AxiomC{$A_{i},A_{1}\vee A_{2},\Gamma\specialvdashdef{{\bf S}}{{\bf T}}\Delta$}
        \LeftLabel{$E\vee$}
        \UnaryInfC{$A_{i},A_{i},\Gamma\specialvdashdef{{\bf S}}{{\bf T'}}\Delta$}
        \DisplayProof}\quad
        {\AxiomC{$\neg A_{i},\neg(A_{1}\wedge A_{2}),\Gamma\specialvdashdef{{\bf S}}{{\bf T}}\Delta$}
        \RightLabel{$E\neg\wedge$}
        \UnaryInfC{$\neg A_{i},\neg A_{i},\Gamma\specialvdashdef{{\bf S}}{{\bf T'}}\Delta$}
        \DisplayProof}
    \end{center}
    \smallskip
    are admissible in $\Gastc$.
\end{lemma}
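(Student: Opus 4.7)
The plan is to prove both statements by induction on the height of the $\Gastc$-proof $\pi$ of the premise, case-analysing the last rule applied. The pivotal observation, which underpins soundness throughout, is that the multisets $A_i, A_1 \vee A_2, \Gamma$ and $A_i, A_i, \Gamma$ are classically equivalent in $\fullGpn$: the entailment $A_i \vdash A_1 \vee A_2$ is immediate by $ax$, $RW$, and $R\vee$, while the reverse entailment is trivial since $A_i$ already occurs in $A_i, A_1 \vee A_2, \Gamma$. Consequently, compatibility with any control pair $\langle \mathbf{U}, \mathbf{V} \rangle$ (Definition \ref{controlpair}) carries over from one antecedent to the other via Lemma \ref{compatibility}, so that soundness under conditions and under constraints is preserved whenever one antecedent is rewritten as the other in any subderivation. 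The admissibility of $E\neg\wedge$ follows the same plan, with the entailment $\neg A_i \vdash \neg(A_1 \wedge A_2)$ playing the analogous role.

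For the case analysis, we first handle the subcases in which $A_1 \vee A_2$ is principal in the last rule. If that rule is $L\vee$, then the $i$-th premise is already of the form $A_i, A_i, \Gamma \specialvdashdef{\mathbf{S}_i}{\mathbf{T}_i} \Delta$; a single $\sigma$-step raises $\mathbf{S}_i$ to $\mathbf{S} = \mathbf{S}_1 \cup \mathbf{S}_2$, yielding the conclusion with $\mathbf{T}' := \mathbf{T}_i \subseteq \mathbf{T}_1 \cup \mathbf{T}_2 = \mathbf{T}$, as required by admissibility. If the last rule is instead an instance of $LW$ introducing $A_1 \vee A_2$, the premise is $A_i, \Gamma \specialvdashdef{\mathbf{S}}{\mathbf{T}} \Delta$ and one further $LW$ step adds the second copy of $A_i$. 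In every remaining subcase $A_1 \vee A_2$ appears only in the context, so we apply the inductive hypothesis to each premise and reapply the same rule, with soundness inherited from the equivalence noted above.

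The main obstacle arises in this last subcase when the permuted rule is $cut_{asa}$: the side condition of Definition \ref{definitioncalculi}$(iii)$ requires the cut atom to occur in some formula of the new antecedent or of $\mathcal{W}$, and this may fail if the cut atom was previously supplied only by the discarded disjunct $A_j$, $j \ne i$. The obstruction is removed by appealing to Theorem \ref{atomicnonanalytic}, which guarantees the admissibility of safe atomic non-semi-analytic cut: the permuted cut is thereby justified even when it ceases to be semi-analytic. With this adjustment the induction closes uniformly, and the admissibility of both $E\vee$ and $E\neg\wedge$ follows.
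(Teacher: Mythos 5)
There is a genuine gap: your case analysis silently assumes that in every case other than ``$A_1\vee A_2$ principal'' the inductive hypothesis applies, but this fails precisely when the retained occurrence of $A_i$ is itself principal in the last rule of $\pi$. If, say, $A_i=B\vee C$ is decomposed by $L\vee$, the premises have the form $B,A_1\vee A_2,\Gamma\specialvdashdef{{\bf S_1}}{{\bf T_1}}\Delta$ and $C,A_1\vee A_2,\Gamma\specialvdashdef{{\bf S_2}}{{\bf T_2}}\Delta$; if $A_i$ is introduced by $LW$, the premise is $A_1\vee A_2,\Gamma\specialvdashdef{{\bf S}}{{\bf T}}\Delta$. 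In neither situation does the premise contain an occurrence of $A_i$ alongside $A_1\vee A_2$, so $E\vee$ (the inductive hypothesis) cannot be invoked, and you cannot fall back on full invertibility of $L\vee$ either, since that is only available under empty control sets (Lemma \ref{invertibilityempty}) --- indeed the whole point of this lemma is that only this \emph{partial} inversion survives nonempty constraints. Your classical-equivalence observation (that $A_i,A_1\vee A_2,\Gamma$ and $A_i,A_i,\Gamma$ are interderivable, so compatibility transfers by Lemma \ref{compatibility}) correctly justifies soundness of the rewritten end-sequent, but it does nothing to produce a derivation of it in this case.

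The paper devotes the core of its proof exactly to this configuration, and the argument there is not a local permutation but a global rewriting of $\pi$: using soundness under conditions of the premise together with Theorem \ref{completenessbase}, one shows that $\Gastc$ proves $A_i,A_i,\Gamma\specialvdashdef{\varnothing}{\varnothing}\Theta$ for the succedent $\Theta$ of each premise of the topmost extra-logical rules in $\pi$; one then reapplies those extra-logical rules (safe by soundness under constraints and Lemma \ref{compatibility}), replays the rules of $\pi$ below them except the left logical and structural rules, replaces any $cut_{asa}$ whose side condition is broken by the deletion of $A_{3-i}$ with $cut_a$ via Theorem \ref{atomicnonanalytic}, and checks that the $\sigma$ steps remain safe. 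Your treatment of the $cut_{asa}$ obstruction and of the two subcases where $A_1\vee A_2$ is principal matches the paper, but without the $A_i$-principal case the induction does not close, so the proposal as it stands is incomplete.
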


\begin{proof}
    We focus on $E\vee$, since the argument for $E\neg\wedge$ is completely analogous. We reason by induction on the height of the $\Gastc$-proof $\pi$ of the premise. We assume that ${\bf S},{\bf T}\not=\varnothing$: if ${\bf S}={\bf T}=\varnothing$, we apply Lemma \ref{invertibilityempty} to get the conclusion. 
    
    If $h(\pi)=1$, the last rule  is a $0$-ary extra-logical rule: we simply replace $A_{1}\vee A_{2}$ with $A_{i}$ to obtain the conclusion. If $h(\pi)>1$, we reason by cases over the last rule  in $\pi$. If the latter is $L\vee$ and $A_{1}\vee A_{2}$ is principal in it, we perform the following transformation on $\pi$:
    \smallskip
    \begin{center}
        {\AxiomC{$\vdots$}
        \noLine
        \UnaryInfC{$A_{i},A_{i},\Gamma\specialvdashdef{{\bf S_{1}}}{{\bf T_{1}}}\Delta$}
        \AxiomC{$\vdots$}
        \noLine
        \UnaryInfC{$A_{i},A_{3-i},\Gamma\specialvdashdef{{\bf S_{2}}}{{\bf T_{2}}}\Delta$}
        \LeftLabel{\scriptsize{$L\vee$}}
        \BinaryInfC{$A_{i},A_{1}\vee A_{2},\Gamma\specialvdashdef{{\bf S}}{{\bf T}}\Delta$}
        \DisplayProof}\quad$\rightsquigarrow$\quad
        {\AxiomC{$\vdots$}
        \noLine
        \UnaryInfC{$A_{i},A_{i},\Gamma\specialvdashdef{{\bf S_{1}}}{{\bf T_{1}}}\Delta$}
        \doubleLine
        \RightLabel{\scriptsize{$\sigma$}}
        \UnaryInfC{$A_{i},A_{i},\Gamma\specialvdashdef{{\bf S}}{{\bf T_{1}}}\Delta$}
        \DisplayProof}
    \end{center}
    \smallskip
    If the last rule  in $\pi$ is $LW$ and $A_{1}\vee A_{2}$ is principal in it, we perform the following transformation:
    \smallskip
    \begin{center}
        {\AxiomC{$\vdots$}
        \noLine
        \UnaryInfC{$A_{i},\Gamma\specialvdashdef{{\bf S}}{{\bf T}}\Delta$}
        \LeftLabel{\scriptsize{$LW$}}
        \UnaryInfC{$A_{i},A_{1}\vee A_{2},\Gamma\specialvdashdef{{\bf S}}{{\bf T}}\Delta$}
        \DisplayProof}\quad$\rightsquigarrow$\quad
        {\AxiomC{$\vdots$}
        \noLine
        \UnaryInfC{$A_{i},\Gamma\specialvdashdef{{\bf S}}{{\bf T}}\Delta$}
        \RightLabel{\scriptsize{$LW$}}
        \UnaryInfC{$A_{i},A_{i},\Gamma\specialvdashdef{{\bf S}}{{\bf T}}\Delta$}
        \DisplayProof}
    \end{center}
    \smallskip
    If a formula in $\Gamma$ or $\Delta$ is principal in the last rule  in $\pi$, we apply the inductive hypothesis to the premises of the latter to conclude. 

    If the last rule  in $\pi$ is $cut_{asa}$, the application of the inductive hypothesis to the premises may not be possible: if the atomic cut formula is a subformula of $A_{3-i}$ without occurring in any formula in $\Gamma\cup[A_{i}]\cup\mathcal{W}$, the side condition for the applicability of $cut_{asa}$ is no more satisfied. In this scenario, we perform the upwards permutation of $E\vee$: next, we replace the $cut_{asa}$ application with an instance of $cut_{a}$. It suffices to exploit Theorem \ref{atomicnonanalytic} to reach the conclusion.
    
    If $A_{i}$ is principal in the last rule  in $\pi$, we cannot apply the inductive hypothesis to reach the conclusion: instead, we follow another argument, based on the fact that the premise $A_{i},A_{1}\vee A_{2},\Gamma\specialvdashdef{{\bf S}}{{\bf T}}\Delta$ is sound under conditions. Definition \ref{soundness} ensures that $([W,E_{\rho}]\cup[A_{i}]\cup[A_{1}\vee A_{2}]\cup\Gamma)\, ||\, \langle{\bf V},\varnothing\rangle$ for any subderivation $\rho$ of $\pi$ with $\Pi\specialvdashdef{{\bf U}}{{\bf V}}\Sigma$ as conclusion. Starting from the topmost applications of extra-logical rules in $\pi$, if $\Theta$ is the succedent of any premise of such applications, we leverage Theorem \ref{completenessbase} to infer that $\Gastc$ proves $A_{i},A_{i},\Gamma\specialvdashdef{\varnothing}{\varnothing}\Theta$. Hence, we apply the corresponding extra-logical rules to these sequents: since $([W,D_{\pi}]\cup[A_{i}]\cup[A_{1}\vee A_{2}]\cup\Gamma)\, ||\, \langle\varnothing,{\bf S}\rangle$, Lemma \ref{compatibility} ensures that such applications are safe. Next, we apply all the rules applied in $\pi$ below such extra-logical rules instances, with the exception of left (logical and structural) rules. Theorem \ref{atomicnonanalytic} ensures that any application of $cut_{asa}$ which is not preserved by the deletion of $A_{3-i}$ can be replaced by an instance of $cut_{a}$. Moreover, the fact that $([W,D_{\pi}]\cup[A_{i}]\cup[A_{1}\vee A_{2}]\cup\Gamma)\, ||\, \langle\varnothing,{\bf S}\rangle$, together with Lemma \ref{compatibility}, guarantees that all $\sigma$ applications remain safe. Such global rewriting of $\pi$ yields a $\Gastc$-proof of $A_{i},A_{i},\Gamma\specialvdashdef{{\bf S}}{{\bf T}}\Delta$.
\end{proof}

\begin{lemma}\label{leftcontraction}
    The rule of Left Contraction
    \smallskip
    \begin{center}
        {\AxiomC{$A,A,\Gamma\specialvdash{{\bf S}}{{\bf T}}\Delta$}
        \LeftLabel{$LC$}
        \UnaryInfC{$A,\Gamma\specialvdash{{\bf S}}{{\bf T}}\Delta$}
        \DisplayProof}
    \end{center}
    \smallskip
    is admissible in $\Gastc$.
\end{lemma}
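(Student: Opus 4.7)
The strategy is a primary induction on the height of the $\Gastc$-proof $\pi$ of $A,A,\Gamma\specialvdash{{\bf S}}{{\bf T}}\Delta$, with a secondary induction on the logical complexity $\mathrm{C}(A)$. Unlike Lemma \ref{contractionempty}, we cannot appeal to height-preserving invertibility, so the use of the partial inversion lemma (Lemma \ref{partialinvertibility}) --- whose invertibility is not height-preserving --- forces the secondary layer of induction.

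For the base case $h(\pi)=1$, the last rule is an instance of $ax$. Only the extra-logical axiom form can carry the two copies of $A$, so we have $A,A,\Gamma\quest\Delta\in\topfstar(\quest W)$. Since $\topfstar(\quest W)$ is closed under Contraction by construction (cf.~the definition of $\topstar$), the sequent $A,\Gamma\quest\Delta$ is again an axiom. For the inductive step, if neither occurrence of $A$ is principal in the last rule of $\pi$, we permute $LC$ upward: applying the primary inductive hypothesis to the premises and re-applying the last rule gives the desired proof. This covers all logical rules whose principal formula is not $A$, all instances of $RW$ and $\sigma$, and all extra-logical applications $\delta$ (since the antecedent is copied unchanged from premises to conclusion). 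The case of $cut_{asa}$ is included here: the side condition requires the cut atom to occur in $\Gamma\cup\mathcal{W}$, which is preserved when we delete one copy of $A$ from $A,A,\Gamma$, since any atom occurring in $A$ still occurs in the remaining copy.

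The principal cases are treated by the secondary induction on $\mathrm{C}(A)$. If $A$ is atomic and was introduced by $LW$, we simply discard the weakening step. If $A=B\wedge C$ and the last rule is $L\wedge$, we invert (Lemma \ref{leftinvertibility}) on both copies to obtain $B,C,B,C,\Gamma\specialvdash{{\bf S}}{{\bf T}}\Delta$, then apply the secondary inductive hypothesis twice (since $\mathrm{C}(B),\mathrm{C}(C)<\mathrm{C}(A)$) to get $B,C,\Gamma\specialvdash{{\bf S}}{{\bf T}}\Delta$, and finish with an $L\wedge$ application. The cases $A=\neg(B\vee C)$ and $A=\neg\neg B$ proceed identically using the invertibility of $L\neg\vee$ and $L\neg\neg$.

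The essential novelty, and the main obstacle, concerns the disjunctive cases $A=A_1\vee A_2$ (with last rule $L\vee$) and $A=\neg(A_1\wedge A_2)$ (with last rule $L\neg\wedge$), where neither $L\vee$ nor $L\neg\wedge$ is invertible. For $A=A_1\vee A_2$, the premises of the last rule are $A_i,A_1\vee A_2,\Gamma\specialvdash{{\bf S_i}}{{\bf T_i}}\Delta$ for $i=1,2$. Here we invoke $E\vee$ from Lemma \ref{partialinvertibility}, which yields $A_i,A_i,\Gamma\specialvdash{{\bf S_i}}{{\bf T_i'}}\Delta$; since $\mathrm{C}(A_i)<\mathrm{C}(A_1\vee A_2)$, the secondary inductive hypothesis produces $A_i,\Gamma\specialvdash{{\bf S_i}}{{\bf T_i''}}\Delta$, and an application of $L\vee$ delivers $A_1\vee A_2,\Gamma\specialvdash{{\bf S}}{{\bf T}}\Delta$. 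The case $A=\neg(A_1\wedge A_2)$ is handled symmetrically via $E\neg\wedge$. The delicate point is that the partial invertibilities of Lemma \ref{partialinvertibility} have already been proved while tracking soundness under the full control pair, and in particular they require the global rewriting through $cut_a$ (Theorem \ref{atomicnonanalytic}) whenever a $cut_{asa}$ application loses its atomic side condition; this is precisely why the partial inversion lemma was isolated in advance, so that $LC$ admissibility follows by the clean inductive scheme outlined above.
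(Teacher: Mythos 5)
Your proposal is correct and follows essentially the same route as the paper's proof: primary induction on the height of $\pi$ with secondary induction on $\mathrm{C}(A)$, the base case via closure of $\topfstar(\quest W)$ under Contraction, permutation in the non-principal cases (including $cut_{asa}$, whose side condition survives contraction), removal of $LW$ for literals, Lemma \ref{leftinvertibility} for $\wedge$, $\neg\vee$, $\neg\neg$, and Lemma \ref{partialinvertibility} ($E\vee$, $E\neg\wedge$) for the non-invertible cases. The only small omission is that at $h(\pi)=1$ the last rule may also be a $0$-ary extra-logical rule (condition $(v)$ of Definition \ref{definitioncalculi}), not just $ax$; since the antecedent $\Gamma$ of such a rule is schematic, the contracted sequent is again an instance of the same rule, so the fix is immediate.
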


\begin{proof}
We focus on Left Contraction, arguing by primary induction on the height of the $\Gastc$-proof $\pi$ of the premise and secondary induction on the logical complexity of $A$. If $h(\pi)=1$, we leverage conditions $(i)$ and $(v)$ in Definition \ref{definitioncalculi} to reach the conclusion. If $h(\pi)>1$, we reason by cases over the last rule  in $\pi$. If no occurrence of $A$ is principal in the last rule , we simply apply the inductive hypothesis to the premises. If one occurrence of $A$ is principal in the last rule , we proceed by cases over $A$'s principal connective. If $A$ is a literal, then $A$ is introduced by an $LW$ application: it suffices to remove the latter to conclude. If $A$ has either the form $B\wedge C$, or the form $\neg(B\vee C)$, or the form $\neg\neg B$, we leverage Lemma \ref{leftinvertibility} to apply the secondary inductive hypothesis and reach the conclusion. On the other hand, if $A$ has either the form $B\vee C$ or the form $\neg(B\wedge C)$, we exploit Lemma \ref{partialinvertibility} to apply twice the secondary inductive hypothesis and complete the proof. 
\end{proof}

\begin{theorem}\label{cut}
    The rule of safe Cut
    \smallskip
    \begin{center}
        {\AxiomC{$\Gamma\specialvdash{{\bf S_{1}}}{{\bf T_{1}}}\Delta,A$}
        \AxiomC{$\Pi\specialvdash{{\bf S_{2}}}{{\bf T_{2}}}\Sigma,\neg A$}
        \RightLabel{$cut$}
        \BinaryInfC{$\Gamma,\Pi\specialvdash{{\bf S}}{{\bf T}}\Delta,\Sigma$}
        \DisplayProof}
    \end{center}
    \smallskip
    is admissible in $\Gastc$.
\end{theorem}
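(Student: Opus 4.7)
The plan is to proceed by induction on the logical complexity $\mathrm{C}(A)$ of the cut formula, reducing every safe cut to atomic cuts handled by the preceding theorems.

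For the base case, where $A$ is a literal, I split by whether $A$ is an atom or a negated atom. If $A$ is an atom $p$, I apply Theorem \ref{atomicanalytic} when $p$ occurs in some formula of $\Gamma,\Pi$, or $\mathcal{W}$, and Theorem \ref{atomicnonanalytic} otherwise. If $A=\neg p$, I first invert $R\neg\neg$ on the right premise (via Lemma \ref{rightinvertibility}) to rewrite $\neg A=\neg\neg p$ as $p$, then swap the roles of the two premises so as to reduce to the atomic case above.

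For the inductive step, I argue by cases on the main connective of $A$. For $A=B\wedge C$, I invert $R\wedge$ on the left premise to obtain proofs of $\Gamma\specialvdashdef{{\bf S_1'}}{{\bf T_1'}}\Delta,B$ and $\Gamma\specialvdashdef{{\bf S_1''}}{{\bf T_1''}}\Delta,C$, and invert $R\neg\wedge$ on the right premise to obtain $\Pi\specialvdashdef{{\bf S_2'}}{{\bf T_2'}}\Sigma,\neg B,\neg C$. Two applications of the inductive hypothesis --- a cut on $B$ followed by a cut on $C$ --- reduce the problem to cuts of strictly smaller complexity, after which Lemmas \ref{rightcontraction} and \ref{leftcontraction} absorb the duplicated contexts. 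The remaining cases $A=B\vee C$, $A=\neg(B\wedge C)$, $A=\neg(B\vee C)$, and $A=\neg\neg B$ are handled symmetrically by invoking the appropriate right-rule inversions provided by Lemma \ref{rightinvertibility}. For the $\neg(B\otimes C)$ cases one inverts $R\neg\neg$ on the right premise before applying either $R\wedge$ or $R\vee$ inversion so as to expose the atoms on which to recurse.

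The main obstacle will be keeping track of safety as the reductions accumulate control pairs. At each step, the invertibility of right rules (Lemma \ref{rightinvertibility}) only guarantees the existence of a $\Gastc$-proof --- not a height bound --- and the constraint sets may grow through $\sigma$ applications; one must verify that the control pair $\langle{\bf T},{\bf S}\rangle$ attached to the final sequent remains compatible with $\Gamma\cup\Pi\cup\mathcal{W}$ in the sense of Definition \ref{soundness}. Here I will rely crucially on the monotonicity properties of compatibility recorded in Lemma \ref{compatibility}, on the closure conditions $(vi)$ and $(vii)$ of Definition \ref{definitioncalculi} --- which ensure that the extra-logical rules are closed under the combinations triggered by the recursive cuts --- and on the admissibility of contraction established in Lemmas \ref{rightcontraction} and \ref{leftcontraction} for arbitrary control pairs. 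A further delicate subcase is $A=\neg\neg B$, where the naive complexity measure does not decrease in the symmetric way; it is settled by inverting $R\neg\neg$ on both premises simultaneously, reducing to a cut on $B$ that lies strictly below $A$ in the induction order.
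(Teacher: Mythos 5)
Your proposal is correct and follows essentially the same route as the paper's own proof: induction on the complexity of the cut formula, with the literal base case discharged by Theorems \ref{atomicanalytic} and \ref{atomicnonanalytic}, and the inductive step reduced via the right-rule inversions of Lemma \ref{rightinvertibility} followed by the contraction Lemmas \ref{rightcontraction} and \ref{leftcontraction}. The extra care you take with the negated-atom and $\neg\neg B$ subcases and with safety of the accumulated control pairs simply spells out details the paper leaves implicit.
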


\begin{proof}
We reason by induction over the complexity of $A$. If $A$ is a literal, Theorems \ref{atomicanalytic} and \ref{atomicnonanalytic} ensure the conclusion. If $A$ is not a literal, we reason by cases over $A$'s principal connective. We leverage Lemma \ref{rightinvertibility} to lower the complexity of the Cut formula: to reach the conclusion, we apply Lemmas \ref{rightcontraction} and \ref{leftcontraction}.
\end{proof}

\remove{\begin{proposition}
    The rules of Left Negated Contraction
    \smallskip
    \begin{center}
        {\AxiomC{$A,\Gamma\specialvdash{{\bf S}}{{\bf T}}\Delta,\neg A$}
        \LeftLabel{$LNC$}
        \UnaryInfC{$A,\Gamma\specialvdash{{\bf S}}{{\bf T}}\Delta$}
        \DisplayProof}\quad
        {\AxiomC{$\neg A,\Gamma\specialvdash{{\bf S}}{{\bf T}}\Delta,A$}
        \RightLabel{$LNC$}
        \UnaryInfC{$\neg A,\Gamma\specialvdash{{\bf S}}{{\bf T}}\Delta$}
        \DisplayProof}
    \end{center}
    \smallskip
    are admissible in $\Gastc$.
\end{proposition}

\begin{proof}
	Immediate from Theorem \ref{cut} and Lemma \ref{leftcontraction}.
\end{proof}}

$\Gastc$ calculi are expressive enough to formalize weak cumulativity properties satisfied by modified credulous consequence. First, let us consider the following Cumulative Cut rule -- a special case of the standard Cut rule:
\smallskip
\begin{center}
    {\AxiomC{$\Gamma\specialvdashdef{{\bf S_{1}}}{{\bf T_{1}}}A$}
    \AxiomC{$A,\Pi\specialvdashdef{{\bf S_{2}}}{{\bf T_{2}}}\Sigma$}
    \RightLabel{$cCut$}
    \BinaryInfC{$\Pi,\Gamma\specialvdashdef{{\bf S}}{{\bf T}}\Sigma$}
    \DisplayProof}
\end{center}

\begin{theorem}\label{wccut}
    The rule of safe Cumulative Cut is admissible in $\Gastc$.
\end{theorem}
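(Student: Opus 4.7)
The plan is to prove Theorem \ref{wccut} by induction on the logical complexity $\mathrm{C}(A)$ of the cut formula, closely paralleling the structure of the proof of Theorem \ref{cut}. Unlike the empty-control case, where Theorem \ref{cutempty} already delivers a Cumulative Cut and Proposition \ref{negation} would allow a slick reduction to Theorem \ref{cut} via Right Negation, with non-empty control pairs moving $A$ from the antecedent of $\pi_{2}$ to its succedent as $\neg A$ can destroy compatibility (cf. Lemma \ref{compatibility}), so a direct inductive argument is unavoidable.

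In the base case, $A$ is a literal. Here I would first establish an atomic version of Cumulative Cut by mirroring Theorems \ref{atomicanalytic} and \ref{atomicnonanalytic}, with a secondary induction on the sum of the heights of the premise proofs $\pi_{1}$ and $\pi_{2}$. When $A$ appears as a subformula of some element of $\Gamma, \Pi$ or $\mathcal{W}$, I would reduce to a safe analytic cut by shifting $A$ between antecedent and succedent via $LW$, $RW$ and $\sigma$ and invoking $cut_{asa}$. When $A$ does not occur in those multisets, I would distinguish the subcase in which $A$ is principal in the last rule of $\pi_{2}$ -- handled via closure condition $(vii)$ of Definition \ref{definitioncalculi} -- from the subcase in which $A$ is not principal, which is handled by permuting the cut upward, treating $L\vee$ and $L\neg\wedge$ separately through the predecessors-tree technique and the global proof-rewriting that pushes extra-logical applications downward, exactly as in Theorem \ref{atomicnonanalytic}.

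For the inductive step, $A$ is compound, and I would reduce Cumulative Cut on $A$ to Cumulative Cuts on its immediate subformulas using invertibility of the corresponding right rule on $\pi_{1}$ (Lemma \ref{rightinvertibility}), together with either Lemma \ref{leftinvertibility} -- when $A$ has the form $B\wedge C$, $\neg(B\vee C)$, or $\neg\neg B$ -- or the partial invertibility of Lemma \ref{partialinvertibility} combined with Left Contraction -- when $A$ has the form $B\vee C$ or $\neg(B\wedge C)$. For instance, if $A = B\wedge C$: inverting $R\wedge$ on $\pi_{1}$ yields proofs of $\Gamma \specialvdashdef{{\bf S_1}}{{\bf T_1}} B$ and $\Gamma \specialvdashdef{{\bf S_1}}{{\bf T_1}} C$, inverting $L\wedge$ on $\pi_{2}$ yields a proof of $B, C, \Pi \specialvdashdef{{\bf S_2}}{{\bf T_2}} \Sigma$, and two applications of the inductive hypothesis followed by Lemma \ref{leftcontraction} yield $\Pi, \Gamma \specialvdashdef{{\bf S}}{{\bf T}} \Sigma$ as desired.

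The main obstacle I foresee lies in the atomic base case when $A$ does not occur in $\Gamma, \Pi$ or $\mathcal{W}$ and both premises end with extra-logical rule applications: closure condition $(vii)$ of Definition \ref{definitioncalculi} is phrased in terms of formulas on the succedent side, so using it requires exploiting the fact that, for a literal $A$, the presence of $A$ in the antecedent of $\pi_{2}$ is dual to the presence of $\neg A$ in its succedent relative to the axiom patterns of $\fullGpn$. As in Theorem \ref{atomicnonanalytic}, safety of the reassembled proof must be checked via Lemma \ref{compatibility}, and the constraint set ${\bf S} = {\bf S_{1}}\cup{\bf S_{2}}$ must be verified to be compatible with the enlarged antecedent so that the conclusion $\Pi, \Gamma \specialvdashdef{{\bf S}}{{\bf T}}\Sigma$ remains sound under constraints and hence a genuine proof rather than a paraproof.
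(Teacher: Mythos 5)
Your overall skeleton (primary induction on the complexity of the cut formula, a secondary height induction in the atomic case, right/left invertibility plus the contraction lemmas for $B\wedge C$, $\neg(B\vee C)$, $\neg\neg B$) does match the paper, and you are right that Proposition \ref{negation} is unavailable under non-empty controls. But your atomic base case has a genuine gap. When $A$ occurs in $\Gamma,\Pi,\mathcal{W}$ you propose to ``shift $A$ between antecedent and succedent'' and close with $cut_{asa}$: no rule of $\Gastc$ moves a formula across the turnstile, and the only device that does -- Proposition \ref{negation} -- needs empty control pairs, which is exactly the obstruction you flagged in your opening paragraph. Moreover, in a cumulative cut the right cut formula sits in the \emph{antecedent} of $\pi_{2}$, where it can never be principal in an extra-logical rule, so the principal/non-principal case split and closure condition $(vii)$ of Definition \ref{definitioncalculi}, borrowed from Theorem \ref{atomicnonanalytic}, do not transfer in the way you describe. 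The paper's base case instead argues by cases on the last rules of $\pi_{1}$ and $\pi_{2}$ and resolves the hard subcases by a dichotomy: if $\pi_{2}$ contains no extra-logical rules, strip the $\sigma$-applications so that its control pairs become empty, apply Proposition \ref{negation} to flip $A$ into the succedent, finish with Theorem \ref{cut}, and restore ${\bf S_{2}}$ by safe $\sigma$; if $\pi_{2}$ does contain extra-logical rules, globally rewrite $\pi_{2}$ over the enlarged antecedent $\Pi,\Gamma$, using soundness under conditions together with Theorem \ref{completenessbase} to re-derive the prerequisite sequents $\Pi,\Gamma\specialvdashdef{\varnothing}{\varnothing}\Theta_{i}$ and then safely re-apply the extra-logical rules (the procedure of Theorem \ref{atomicnonanalytic}, point $(b)$).

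The second gap is the inductive step for $A=B\vee C$ or $\neg(B\wedge C)$. Inverting $R\vee$ on $\pi_{1}$ yields a sequent with succedent $B,C$, which no longer has the cumulative-cut shape, so the inductive hypothesis cannot be applied to it; and Lemma \ref{partialinvertibility} requires a premise of the form $B,B\vee C,\Pi\specialvdashdef{{\bf S_{2}}}{{\bf T_{2}}}\Sigma$, which you would have to manufacture by a Left Weakening whose safety is not guaranteed -- adding a single disjunct to the antecedent can make a constraint derivable and thereby break soundness under constraints of the intermediate sequents, even though the final conclusion is assumed sound. The paper sidesteps this entirely: for these connectives it reuses the same dichotomy as in the base case (no extra-logical rules in $\pi_{2}$: remove $\sigma$, apply Proposition \ref{negation} and Theorem \ref{cut}, then reinstate the constraints by safe $\sigma$; otherwise: global rewriting of $\pi_{2}$ exploiting soundness under conditions). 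To make your argument go through you would need to replace the $E\vee$/Left-Contraction route for the disjunctive cases, and the ``shifting'' step in the base case, by this global-rewriting mechanism or an equivalent one.
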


\begin{proof}
    We focus on a topmost Cumulative Cut application, reasoning by primary induction on the logical complexity of the Cut formula $A$ and secondary induction on the sum of the heights of the $\Gastc$-proofs $\pi_{1},\pi_{2}$ of the left and right premise, respectively. In view of Theorem \ref{cutempty}, we assume that either ${\bf S}\not=\varnothing$ or ${\bf T}\not=\varnothing$. If $A$ is a literal, we reason by cases over the last rule in $\pi_{1}$. Here, we focus on the most meaningful cases.
    \smallskip
    \begin{itemize}
        \item[$(a)$] If the last rule in $\pi_{1}$ is $ax$, then by hypothesis, the last rule in $\pi_{2}$ is not $ax$. If the last rule  in $\pi_{2}$ is extra-logical, apply the secondary inductive hypothesis $m$ times to the premises to obtain the conclusion: if $m=0$, replace the $\delta$-application concluding $A,\Pi\specialvdashdef{{\bf S_{2}}}{{\bf T_{2}}}\Sigma$ with one concluding $\Pi,\Gamma\specialvdashdef{{\bf S_{2}}}{{\bf T_{2}}}\Sigma$, (possibly) followed by $\sigma$-applications yielding $\Pi,\Gamma\specialvdashdef{{\bf S}}{{\bf T_{2}}}\Sigma$. It is easy to see that all rule applications in the $\Gastc$-derivation thus obtained are safe. If the last rule  in $\pi_{2}$ is $LW$ and $A$ principal in it, we replace the Weakening application with $LW$ applications adding the formulas in $\Gamma$ (if any), plus $\sigma$ applications to add sets in ${\bf S}$ not occurring in ${\bf S_{2}}$ (if any). If the last rule  in $\pi_{2}$ is $cut_{asa}$, we apply the secondary inductive hypothesis to its premises: if the side condition for the applicability of $cut_{asa}$ is not fulfilled, we leverage Theorem \ref{atomicnonanalytic} to reach the conclusion. If the last rule  in $\pi_{2}$ is either $L\vee$ or $L\neg\wedge$, we may not be able to apply the secondary inductive hypothesis without losing soundness under constraints. If no extra-logical rule is applied in $\pi_{2}$, we remove any $\sigma$ application from $\pi_{2}$ and apply Proposition \ref{negation} to infer the existence of a $\Gastc$-proof of $\Pi\specialvdashdef{\varnothing}{\varnothing}\Sigma,\neg A$. We apply $\sigma$ to obtain a $\Gastc$-derivation of $\Pi\specialvdashdef{{\bf S_{2}}}{\varnothing}\Sigma,\neg A$: it is immediate to verify that these $\sigma$-application are safe, and thus that such derivation is a proof. If there exists (at least) one extra-logical rule applied in $\pi_{2}$, we globally rewrite $\pi_{2}$ to obtain a $\Gastc$-proof of $\Pi,\Gamma\specialvdashdef{{\bf S_{2}}}{{\bf T_{2}}}\Sigma$ (following the same procedure as in the proof of Theorem \ref{atomicnonanalytic}, point $(b)$). 
        \smallskip
        \item[$(b)$] If the last rule in $\pi_{1}$ is extra-logical and the last rule in $\pi_{2}$ is $ax$, we apply Lemma \ref{negation} and Theorem \ref{cut} to conclude. All other possible scenarios are treated as in $(a)$ above.
        \smallskip
        \item[$(c)$] If the last rule  in $\pi_{1}$ is either $L\vee$ or $L\neg\wedge$, we may not be able to apply the secondary inductive hypothesis to both premises to reach the conclusion. Instead, we globally rewrite $\pi_{2}$ to obtain a $\Gastc$-proof of $\Pi,\Gamma\specialvdashdef{{\bf S_{2}}}{{\bf T_{2}}}\Sigma$ (again, following the same procedure as in the proof of Theorem \ref{atomicnonanalytic}, point $(b)$).
    \end{itemize}
    \smallskip
    If $A$ is not a literal, we reason by cases over $A$'s principal connective. If $A$ is either $B\wedge C$ or $\neg(B\vee C)$ or $\neg\neg B$, we leverage Lemmas \ref{rightinvertibility} and \ref{leftinvertibility} to apply (at most, twice) the primary inductive hypothesis, and then Lemmas \ref{rightcontraction} and \ref{leftcontraction} to reach the conclusion. If $A$ is either $B\vee C$ or $\neg(B\wedge C)$, and $\pi_{2}$ does not contain applications of extra-logical rules, we remove all $\sigma$ applications (if any) and then apply Lemma \ref{negation} and Theorem \ref{cut} to obtain a $\Gastc$-proof of $\Pi,\Gamma\specialvdashdef{{\bf S_{1}}}{{\bf T_{1}}}\Sigma$: we employ safe $\sigma$ applications to prove $\Pi,\Gamma\specialvdashdef{{\bf S}}{{\bf T_{1}}}\Sigma$. On the other hand, if $\pi_{2}$ contains applications of extra-logical rules, we globally rewrite $\pi_{2}$ exploiting soundness under conditions to obtain a $\Gastc$-proof of $\Pi,\Gamma\specialvdashdef{{\bf S}}{{\bf T_{2}}}\Sigma$ (again, following the same procedure as in the proof of Theorem \ref{atomicnonanalytic}, point $(b)$).
   \remove{we proceed in a non-local way to complete the proof. Since $\Pi,\Gamma\specialvdashdef{{\bf S}}{{\bf T}}\Sigma$ is sound under conditions, Definition \ref{soundness} ensures that $([W,E_{\rho}]\cup\Pi\cup\Gamma)\, ||\, \langle{\bf V},\varnothing\rangle$ for any subderivation $\rho$ of $\pi_{2}$ with $\Pi'\specialvdashdef{{\bf U}}{{\bf V}}\Sigma'$ as conclusion. We reason by induction on the number $N$ of extra-logical rules applied in $\pi_{2}$ to establish that $\Gastc$ proves $\Pi,\Gamma\specialvdash{{\bf U}}{{\bf V}}\Sigma'$ -- and this yields the conclusion.

    If $N=0$, we replace any instance of $ax$ having principal formulas on the left-hand side with an instance of $ax$ without principal formulas on the left-hand side. Next, we apply $LW$ (possibly, multiple times) to add $\Pi,\Gamma,W$ on the left-hand side of any such instance of $ax$. We apply $cut_{asa}$ to remove any literal on the right-hand side which has been introduced so far, and then exploit Theorem \ref{cutempty} to remove the formula $W$ from the left-hand side. Finally, we apply all the rules in $\pi_{2}$, omitting the instances of $LW$ and left logical rules, to reach the conclusion. If $N>0$, we consider a lowermost extra-logical rule application in $\pi_{2}$:
    \smallskip
    \begin{center}
        {\AxiomC{$\vdots_{\sigma_{i}}$}
        \noLine
        \UnaryInfC{$\{\Pi'\specialvdashdef{{\bf U_{i}}}{{\bf V_{i}}}\Theta_{i}\}_{1\leq i\leq m}$}
        \RightLabel{\scriptsize{$\delta$}}
        \UnaryInfC{$\Pi'\specialvdashdef{{\bf U}}{{\bf V}}\Phi$}
        \DisplayProof}
    \end{center}
    \smallskip
    where $\sigma$ is the subderivation of $\pi_{2}$ which has $\Pi'\specialvdashdef{{\bf U}}{{\bf V}}\Phi$ as conclusion. We apply the inductive hypothesis to each $\sigma_{i}$ to obtain $\Gastc$-proofs $\sigma'_{i}$ of $\Pi,\Gamma\specialvdashdef{{\bf U_{i}}}{{\bf V_{i}}}\Theta_{i}$, for any $i$: as in the base case, we omit instances of $LW$ and left logical rules in the construction of the each $\sigma'_{i}$. Now, let's consider the following $\Gastc$-derivation $\sigma'$:
    \smallskip
    \begin{center}
        {\AxiomC{$\vdots_{\sigma'_{i}}$}
        \noLine
        \UnaryInfC{$\{\Pi,\Gamma\specialvdashdef{{\bf U_{i}}}{{\bf V_{i}}}\Theta_{i}\}_{1\leq i\leq m}$}
        \RightLabel{\scriptsize{$\delta$}}
        \UnaryInfC{$\Pi,\Gamma\specialvdashdef{{\bf U}}{{\bf V}}\Phi$}
        \DisplayProof}
    \end{center}
    \smallskip
    By hypothesis, we have that $([W,E_{\sigma}]\cup\Pi\cup\Gamma)\, ||\, \langle{\bf V},\varnothing\rangle$. Notice that if $m=0$, it suffices to replace $\Pi'$ with $\Pi,\Gamma$ without applying the inductive hypothesis. Since $E_{\sigma}=E_{\sigma'}$ by construction, we infer that $\sigma'$ is a $\Gastc$-proof. To conclude, it suffices to apply this argument to all lowermost instances of extra-logical rules in $\pi_{2}$, and then all subsequent rules in $\pi_{2}$ (again, with the exception of $LW$ and left logical rules). }
\end{proof}

\remove{\begin{proposition}
    The rules of Right Negated Contraction
    \smallskip
    \begin{center}
        {\AxiomC{$A,\Gamma\specialvdash{\varnothing}{\varnothing}\Delta,\neg A$}
        \LeftLabel{$RNC$}
        \UnaryInfC{$\Gamma\specialvdash{\varnothing}{\varnothing}\Delta,\neg A$}
        \DisplayProof}\quad
        {\AxiomC{$\neg A,\Gamma\specialvdash{\varnothing}{\varnothing}\Delta,A$}
        \RightLabel{$RNC$}
        \UnaryInfC{$\Gamma\specialvdash{\varnothing}{\varnothing}\Delta,A$}
        \DisplayProof}
    \end{center}
    \smallskip
    are admissible in $\Gastc$.
\end{proposition}

\begin{proof}
We exploit Theorem \ref{wccut} as follows (the other case is completely analogous):
 \smallskip
 \begin{center}
     {\AxiomC{$\vdots$}
     \noLine
     \UnaryInfC{$\specialvdashdef{\varnothing}{\varnothing}A\vee\neg A$}
     \AxiomC{$A,\Gamma\specialvdashdef{\varnothing}{\varnothing}\Delta,\neg A$}
     \AxiomC{$\vdots$}
     \noLine
     \UnaryInfC{$\neg A,\Gamma\specialvdashdef{\varnothing}{\varnothing}\Delta,\neg A$}
     \RightLabel{\scriptsize{$L\vee$}}
     \BinaryInfC{$A\vee\neg A,\Gamma\specialvdashdef{\varnothing}{\varnothing}\Delta,\neg A$}
     \RightLabel{\scriptsize{$wcCut$}}
     \BinaryInfC{$\Gamma\specialvdashdef{\varnothing}{\varnothing}\Delta,\neg A$}
     \DisplayProof}
 \end{center}
\end{proof}}

Next, let us consider the following Cautious Monotony rule:

\begin{theorem}
The rule of safe Cautious Monotony
\smallskip
    \begin{center}
        {\AxiomC{$\Gamma\specialvdashdef{{\bf S_{1}}}{{\bf T_{1}}}\Delta_{1}$}
        \AxiomC{$\Pi\specialvdashdef{{\bf S_{2}}}{{\bf T_{2}}}\Delta_{2}$}
        \RightLabel{$cMon$}
        \BinaryInfC{$\Gamma,\bigvee\Delta_{1},\Pi\specialvdashdef{{\bf S}}{{\bf T}}\Delta_{2}$}
        \DisplayProof}
    \end{center}
    \smallskip
   is admissible in $\Gastc$.
\end{theorem}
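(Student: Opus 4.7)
The plan is to reduce safe Cautious Monotony to safe Cumulative Cut (Theorem \ref{wccut}), which is already established. Let $\pi_{1}$ and $\pi_{2}$ be the given $\Gastc$-proofs of the two premises, and set $\mathbf{S}=\mathbf{S_{1}}\cup\mathbf{S_{2}}$, $\mathbf{T}=\mathbf{T_{1}}\cup\mathbf{T_{2}}$. The standing assumption is that the $cMon$ application is safe, i.e.\ the target sequent $\Gamma,\bigvee\Delta_{1},\Pi\specialvdashdef{\mathbf{S}}{\mathbf{T}}\Delta_{2}$ is sound in the sense of Definition \ref{soundness}.

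First, I would apply $R\vee$ repeatedly to $\pi_{1}$ to build a $\Gastc$-proof $\pi_{1}'$ of $\Gamma\specialvdashdef{\mathbf{S_{1}}}{\mathbf{T_{1}}}\bigvee\Delta_{1}$; since $R\vee$ is unary and leaves both the antecedent and the control pair unchanged, $\pi_{1}'$ remains a proof. Second, I would apply $LW$ to $\pi_{2}$ to obtain a derivation $\pi_{2}'$ of $\bigvee\Delta_{1},\Pi\specialvdashdef{\mathbf{S_{2}}}{\mathbf{T_{2}}}\Delta_{2}$. Third, an application of safe Cumulative Cut to $\pi_{1}'$ and $\pi_{2}'$ with cut formula $\bigvee\Delta_{1}$ produces a proof of $\Pi,\Gamma\specialvdashdef{\mathbf{S}}{\mathbf{T}}\Delta_{2}$. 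Fourth, one last application of $LW$ adds $\bigvee\Delta_{1}$ back into the antecedent, yielding the target conclusion.

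The main obstacle is verifying that the two $LW$ steps are safe, since $LW$ in general can break soundness under constraints by enlarging the antecedent and thereby making a formerly unentailed constraint derivable. For the second step, soundness under conditions transfers by Lemma \ref{compatibility}(2). For soundness under constraints I would argue as follows: the safety assumption on $cMon$ gives $([W,D_{cMon}]\cup\Gamma\cup\{\bigvee\Delta_{1}\}\cup\Pi)\,||\,\langle\varnothing,\mathbf{S}\rangle$; since $def(\pi_{2})\subseteq def(cMon)$ (so $Cn([W,D_{\pi_{2}}]\cup\Pi\cup\{\bigvee\Delta_{1}\})\subseteq Cn([W,D_{cMon}]\cup\Gamma\cup\{\bigvee\Delta_{1}\}\cup\Pi)$) and $\mathbf{S_{2}}\subseteq\mathbf{S}$, compatibility restricts downwards by monotonicity of classical entailment and Lemma \ref{compatibility}(1). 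This yields the required $([W,D_{\pi_{2}}]\cup\Pi\cup\{\bigvee\Delta_{1}\})\,||\,\langle\varnothing,\mathbf{S_{2}}\rangle$, so $\pi_{2}'$ is a proof. The fourth $LW$ step is immediate, since its conclusion coincides with the $cMon$ conclusion, whose soundness is assumed.

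What makes the reduction go through cleanly is the observation that $\bigvee\Delta_{1}$ is, by Theorem \ref{completenessbase} applied to $\pi_{1}'$, already a classical consequence of $\mathcal{W}\cup D_{\pi_{1}}\cup\Gamma$, so adding it to the antecedent never enlarges the set of classically derivable formulas once $\Gamma$ and the defaults of $\pi_{1}$ are present. This is precisely the content of cumulativity, and it is what insulates the control-pair bookkeeping from being disturbed by the two $LW$ applications in the argument.
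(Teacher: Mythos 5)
Your route is genuinely different from the paper's: the paper proves $cMon$ directly, by induction on $h(\pi_{1})+h(\pi_{2})$ with a case analysis on the last rules of the premise proofs, re-applying extra-logical rules after $LW$/$\sigma$ and checking soundness via Lemma \ref{compatibility}; you instead try to reduce $cMon$ to Theorem \ref{wccut} through the chain $R\vee$, $LW$, $cCut$, $LW$. The first two steps are handled correctly: $R\vee$ leaves antecedent and control pair untouched, and your transfer of soundness under constraints to $\bigvee\Delta_{1},\Pi\specialvdashdef{{\bf S_{2}}}{{\bf T_{2}}}\Delta_{2}$ by anti-monotonicity of refutability (together with Lemma \ref{compatibility} for conditions) is the right argument for that $LW$.

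The gap lies in the last two steps, and it comes from the fact that soundness (Definition \ref{soundness}) is relative to a derivation. First, to invoke Theorem \ref{wccut} you must argue that your $cCut$ instance is itself \emph{safe}, i.e.\ that $\Pi,\Gamma\specialvdashdef{{\bf S}}{{\bf T}}\Delta_{2}$ is sound with respect to the derivation obtained by appending the cut to $\pi_{1}'$ and $\pi_{2}'$; you never check this. (It is checkable: the ${\bf T_{1}}$-conditions are provable from $[W,E_{\rho}]\cup\Gamma$ and the ${\bf T_{2}}$-conditions from $[W,E_{\rho}]\cup\Pi$ already, and deleting $\bigvee\Delta_{1}$ from the antecedent can only preserve refutability of the constraints.) Second, and more seriously, the final $LW$ is not ``immediate''. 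Theorem \ref{wccut} hands you \emph{some} proof $\rho^{*}$ of $\Pi,\Gamma\specialvdashdef{{\bf S}}{{\bf T}}\Delta_{2}$, and the soundness of the weakened end-sequent must now be evaluated with respect to $\rho^{*}$ followed by $LW$: soundness under constraints requires $([W,D_{\rho^{*}}]\cup\Gamma\cup\{\bigvee\Delta_{1}\}\cup\Pi)\,||\,\langle\varnothing,{\bf S}\rangle$, whereas your safety hypothesis on the $cMon$ instance only gives this with the defaults of $\pi_{1},\pi_{2}$ in place of $def(\rho^{*})$. The statement of Theorem \ref{wccut} is a bare admissibility claim and says nothing about which extra-logical rules the constructed proof uses, so $def(\rho^{*})\subseteq def(\pi_{1})\cup def(\pi_{2})$ is not available to you without reopening that proof. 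Relatedly, your closing observation is mis-cited: Theorem \ref{completenessbase} concerns only sequents with empty control pairs, so the fact that $\bigvee\Delta_{1}\in Cn(\mathcal{W}\cup concl(def(\pi_{1}))\cup\Gamma)$ needs its own (easy, but unstated) induction on $\pi_{1}$. The reduction could likely be repaired by tracking the defaults used through the proof of Theorem \ref{wccut}, but as written the argument does not close; the paper's direct inductive construction avoids this derivation-relativity problem altogether.
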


\begin{proof}
  We reason by induction over the sum of the heights of the $\Gastc$-proofs $\pi_{1},\pi_{2}$ of the left and right premise, respectively. If $h(\pi_{1})=h(\pi_{2})=1$, the only non-trivial case is when (say) the last rule  in $\pi_{1}$ is an extra-logical rule $\delta$. We then inspect the last rule of $\pi_{2}$. If it is $ax$, applying $LW$ and $\sigma$ yields a $\Gastc$-derivation of $\bigvee\Delta_{1},\Gamma,\Pi \specialvdashdef{\mathbf S}{\varnothing}\Delta_{2}$, which is easily seen to be a proof. If the last rule in $\pi_{2}$ is an extra-logical $\delta'$, we replace its application with one whose conclusion has $\bigvee\Delta_{1},\Gamma,\Pi$ in the antecedent, and then apply $\sigma$; soundness under conditions is immediate, and soundness under constraints follows from Lemma~\ref{compatibility}. 
  
  If $h(\pi_{1})+h(\pi_{2})>2$, we proceed by cases on the last rule of $\pi_{2}$. If a formula in $\Pi$ is principal, the inductive hypothesis applies directly to the premises; the same reasoning covers the case where the last rule is $cut_{asa}$. If instead a formula in $\Delta_{2}$ is principal and the last rule is an extra-logical $\delta$, we apply $LW$ and $\sigma$ to the premises of $\delta$, and then reapply $\delta$. By Lemma~\ref{compatibility}, soundness under constraints is preserved, so the resulting $\Gastc$-derivation is a proof.
\end{proof}

Now, we proceed to the proof of strong adequacy for $\Gastc$ calculi -- the main result of this section.

\begin{theorem}\label{adequacy}
    Let $\langle\mathcal{W},\mathcal{D}\rangle$ be a default theory. Then there exists a $\Gastc$-proof $\pi$ of $\Gamma\specialvdashdef{{\bf S}}{{\bf T}}\Delta$, for some ${\bf S}$ and ${\bf T}$, iff $\bigvee\Delta$ belongs to some modified extension of $\langle\mathcal{W}\cup\Gamma,\mathcal{D}\rangle$.
\end{theorem}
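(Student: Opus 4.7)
The plan is to prove the biconditional in two directions, both exploiting the structural results already established---in particular Theorems \ref{completenessbase}, \ref{cut}, and \ref{wccut}---and the quasi-inductive construction of modified extensions.

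For soundness ($\Rightarrow$), I would proceed by induction on the height of a $\Gastc$-proof $\pi$ of $\Gamma\specialvdashdef{\mathbf{S}}{\mathbf{T}}\Delta$, aiming to show that $\bigvee\Delta \in Cn(\mathcal{W}\cup\Gamma\cup concl(def(\pi)))$ and that the defaults in $def(\pi)$ may be successively fired in a \L ukaszewicz-style construction starting from $\mathcal{W}\cup\Gamma$. Subderivations whose last rule is logical or structural admit a straightforward inductive step, handled uniformly via Theorem \ref{completenessbase} for the zero-control fragment. At an extra-logical rule application corresponding to a default $\frac{B:C_{1},\ldots,C_{k}}{D}$, soundness under conditions (Definition \ref{soundness}) ensures that $W, E_\rho, \Gamma \vdash B$ in $\fullGpn$, so $B$ is derivable from the partial extension built so far, while soundness under constraints guarantees that the collected justifications remain jointly consistent with $\mathcal{W}\cup\Gamma\cup concl(def(\pi))$. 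Topologically sorting the defaults by their appearance in $\pi$ yields a partial modified extension for $\langle\mathcal{W}\cup\Gamma, def(\pi)\rangle$; semi-monotonicity (Lemma \ref{semimon}) then extends this to a full modified extension of $\langle\mathcal{W}\cup\Gamma, \mathcal{D}\rangle$ containing $\bigvee\Delta$.

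For completeness ($\Leftarrow$), given a modified extension $\langle\mathcal{E},\mathcal{F}\rangle$ of $\langle\mathcal{W}\cup\Gamma,\mathcal{D}\rangle$ with $\bigvee\Delta\in\mathcal{E}$, the quasi-inductive construction singles out a set $\mathcal{D}'\subseteq\mathcal{D}$ such that $\mathcal{E}=Cn(\mathcal{W}\cup\Gamma\cup concl(\mathcal{D}'))$. I would build, by induction on the stage $k$ at which each default fires, $\Gastc$-proofs of the sequents $\Gamma\specialvdashdef{\cdot}{\cdot}\Phi$ for every $\quest\Phi\in\topfstaro(\quest D)$ with $D=concl(d)$ and $d\in\mathcal{D}'$. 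At stage $k+1$, the prerequisite $B$ of a firing default lies in $Cn(\mathcal{W}\cup\Gamma\cup concl(\mathcal{D}'_{k}))$, so Theorem \ref{completenessbase} yields $\Gastc$-proofs of $\Gamma,concl(\mathcal{D}'_{k})\specialvdashdef{\varnothing}{\varnothing}\Theta_{i}$ for every $\quest\Theta_{i}\in\topfo(\quest B)$; safe Cumulative Cut (Theorem \ref{wccut}) combined with the inductively available proofs of the components of $concl(\mathcal{D}'_{k})$ produces proofs of $\Gamma\specialvdashdef{\cdot}{\cdot}\Theta_{i}$, which are precisely the premises needed to invoke the extra-logical rule furnished by condition $(v)$ of Definition \ref{definitioncalculi}. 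The control pair attached to the conclusion collects the prerequisite of $d$ among its conditions and the negated justifications among its constraints, and both soundness clauses follow from the defining consistency condition of modified extensions. Finally, since $\bigvee\Delta\in Cn(\mathcal{W}\cup\Gamma\cup concl(\mathcal{D}'))$, a last appeal to Theorem \ref{completenessbase} followed by Cumulative Cut delivers the desired proof of $\Gamma\specialvdashdef{\mathbf{S}}{\mathbf{T}}\Delta$ for a suitable control pair.

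The main obstacle will be the fine-grained management of the $\topfstaro$ decomposition across successive extra-logical rule applications: a default's conclusion $D$ only yields atomic right-sided $\ast$-sequents in $\topfstaro(\quest D)$, so assembling the outputs of several extra-logical rules into a proof of a complex succedent $\bigvee\Delta$ relies crucially on closure conditions $(vi)$ and $(vii)$ of Definition \ref{definitioncalculi} together with semi-analytic cut. As illustrated in remark $(B)$ following that definition, these closure conditions alone do not absorb every required cut, which is why $cut_{asa}$ is primitive in $\Gastc$; its admissible extension to arbitrary Cut (Theorems \ref{atomicnonanalytic} and \ref{cut}) is exactly what allows the completeness induction to go through. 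Verifying that soundness under both conditions and constraints is preserved under the global rewritings involved---especially through branching left rules $L\vee$ and $L\neg\wedge$---will require the same bookkeeping already deployed in Theorems \ref{atomicnonanalytic} and \ref{wccut}.
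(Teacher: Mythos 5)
Your proposal follows essentially the same route as the paper: the left-to-right direction by induction on proof height, using soundness under conditions/constraints to build a modified extension of $\langle\mathcal{W}\cup\Gamma, def(\pi)\rangle$ and then invoking semi-monotonicity (Lemma \ref{semimon}), and the right-to-left direction by induction on the stage of the quasi-inductive construction, recovering the prerequisite via Theorem \ref{completenessbase} and cut-admissibility, firing the extra-logical rule from condition $(v)$, and reading off soundness from the extension's consistency conditions. The only cosmetic divergence is your appeal to safe Cumulative Cut (Theorem \ref{wccut}) where the paper uses safe Cut together with Right Contraction (Theorem \ref{cut}, Lemma \ref{rightcontraction}), which amounts to the same machinery.
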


\begin{proof}
    $(\Rightarrow)$ We reason by induction on the height of $\pi$ to establish the existence of a modified extension $\langle\mathcal{E},\mathcal{F}\rangle$ of $\langle\mathcal{W}\cup\Gamma,def(\pi)\rangle$ such that $\bigvee\Delta$ belongs to $\mathcal{E}$: by Lemma \ref{semimon}, this suffices to the conclusion. 
    
    If $h(\pi)=1$, the only non-trivial case is when the last rule  is an extra-logical rule $\delta_{\mathcal{D}'}$. By conditions $(v)-(vii)$ in Definition \ref{definitioncalculi}, $\bigvee\Delta$ belongs to $Cn(D)$, where $D$ is the conclusion of the single default rule in $\mathcal{D}'$. From the fact that $\Gamma\specialvdash{{\bf S}}{{\bf T}}\Delta$ is sound under constraints we infer that $\bigvee\Delta$ belongs to a modified extension $\langle\mathcal{E},\mathcal{F}\rangle$ of $\langle\mathcal{W}\cup\Gamma,\mathcal{D}'\rangle$. If $h(\pi)>1$, we reason by cases over the last rule  in $\pi$. If the latter is $RW$, $\sigma$ or a unary logical rule, we apply the inductive hypothesis to the premises to reach the conclusion: for left logical rules, we leverage the fact that any modified extension of $\langle\mathcal{W}',def(\pi)\rangle$ is a modified extension of $\langle\mathcal{W}'',def(\pi)\rangle$ and vice versa, whenever $\mathcal{W}'$ is classically equivalent to $\mathcal{W}''$. If the last rule  in $\pi$ is $LW$, the formula $A$ is principal in it and $\Gamma=\Gamma'\cup[A]$, we apply the inductive hypothesis to infer the existence of a modified extension $\langle\mathcal{E},\mathcal{F}\rangle$ of $\langle\mathcal{W}\cup\Gamma',def(\pi)\rangle$: from the fact that $A,\Gamma'\specialvdash{{\bf S}}{{\bf T}}\Delta$ is sound under constraints we obtain the conclusion. If the last rule  in $\pi$ is an extra-logical rule $\delta_{\mathcal{D}'}$ and $\delta_{\mathcal{D}'}$ is generated according to point $(v)$ in Definition \ref{definitioncalculi}, $\pi$ has the  form:
    \smallskip
    \begin{center}
    	{\AxiomC{$\vdots_{\pi_{1}}$}
	\noLine
	\UnaryInfC{$\Gamma\specialvdashdef{{\bf S_{1}}}{{\bf T_{1}}}\Theta_{1}$}
	\AxiomC{$\cdots$}
	\AxiomC{$\vdots_{\pi_{m}}$}
	\noLine
	\UnaryInfC{$\Gamma\specialvdashdef{{\bf S_{m}}}{{\bf T_{m}}}\Theta_{m}$}
	\RightLabel{\scriptsize{$\delta_{\mathcal{D}'}$}}
	\TrinaryInfC{$\Gamma\specialvdashdef{{\bf S}}{{\bf T}}\Phi$}
	\DisplayProof}
    \end{center}
    \smallskip
    with $m>1$. By Definitions \ref{controlpair} and \ref{soundness}, the fact that $\Gamma\specialvdashdef{{\bf S}}{{\bf T}}\Phi$ is sound under conditions entails that $\fullGpn$ proves $W,D'_{\pi},\Gamma\vdash\bigwedge_{i=1}^{m}(\bigvee\Theta_{i})$: by definition, this means that $\fullGpn$ proves $W,\bigwedge concl(def(\pi_{1})\cup\cdots\cup def(\pi_{m})),\Gamma\vdash\bigwedge_{i=1}^{m}(\bigvee\Theta_{i})$. On the other hand, $\Gamma\specialvdashdef{{\bf S}}{{\bf T}}\Phi$ is sound under constraints: Definitions \ref{controlpair}, \ref{soundness} and Lemma \ref{compatibility} ensure the existence of a modified extension $\langle\mathcal{E}',\mathcal{F}'\rangle$ of $\langle\mathcal{W}\cup\Gamma,def(\pi_{1})\cup\cdots\cup def(\pi_{m})\rangle$ such that $\mathcal{E}'=Cn(\mathcal{W}\cup concl(def(\pi_{1})\cup\cdots\cup def(\pi_{m})))$. As a result, we have that $\bigwedge_{i=1}^{m}(\bigvee\Theta_{i})$ belongs to $\mathcal{E}'$: hence, the default rule in $\mathcal{D}'$ is triggered in $\mathcal{E}'$. The soundness of $\Gamma\specialvdashdef{{\bf S}}{{\bf T}}\Phi$ under constraints guarantees the existence of a modified extension $\langle\mathcal{E}'',\mathcal{F}''\rangle$ of $\langle\mathcal{W}\cup\Gamma,def(\pi)\rangle$ such that the conclusion of the default rule belongs to $\mathcal{E}''$: by condition $(v)$ of Definition \ref{definitioncalculi} and closure under {\em modus ponens} of $\mathcal{E}''$ we infer that $\bigvee\Phi$ belongs to $\mathcal{E}''$. 
    The arguments are analogous when $\delta_{\mathcal{D}'}$ is generated according to points $(vi)$--$(vii)$. If the last rule in $\pi$ is a binary right logical rule or $cut_{asa}$, we apply the inductive hypothesis twice and use the soundness of $\Gamma \specialvdashdef{{\bf S}}{{\bf T}} \Delta$ to derive the conclusion. If the last rule is $L\vee$ or $L\neg\wedge$, then $\pi$ takes the form:
    
    \smallskip
    \begin{center}
        {\AxiomC{$\vdots$}
        \noLine
        \UnaryInfC{$B,\Gamma\specialvdashdef{{\bf S_{1}}}{{\bf T_{1}}}\Delta$}
        \AxiomC{$\vdots$}
        \noLine
        \UnaryInfC{$C,\Gamma\specialvdashdef{{\bf S_{2}}}{{\bf T_{2}}}\Delta$}
        \RightLabel{\scriptsize{$L\vee$}}
        \BinaryInfC{$B\vee C,\Gamma\specialvdashdef{{\bf S}}{{\bf T}}\Delta$}
        \DisplayProof}
    \end{center}
    \smallskip
    If no extra-logical rule is applied in $\pi$, Theorem \ref{completenessbase} ensures that $\bigvee\Delta$ belongs to $Cn(\mathcal{W}\cup\Gamma)$, and thus to any modified extension of $\langle\mathcal{W}\cup\Gamma,\mathcal{D}\rangle$. If at least one extra-logical rule occurs in $\pi$, we globally rewrite $\pi$ so as to permute $L\vee$ above any extra-logical rule application, thus obtaining a $\Gastc$-derivation $\rho$. Then, we proceed by induction on the number of instances of extra-logical rules in $\rho$ to conclude.

    $(\Leftarrow)$ If $\langle\mathcal{E},\mathcal{F}\rangle$ is a modified extension of $\langle\mathcal{W},\mathcal{D}\rangle$ and $A\in\mathcal{E}$, then there exists a $k$ such that $A\in\mathcal{E}^{k}$. We show by (transfinite) induction on $k$ that there exists (at least) one $\Gastc$-proof $\pi$ of $\Gamma\specialvdashdef{{\bf S}}{{\bf T}}A$, for some ${\bf S},{\bf T}$, such that $D_{\pi}\in\mathcal{E}$. If $k=j+1$ and $A$ follows from formulas in $\mathcal{E}^{j}$ {\em via} {\em modus ponens},  Theorem \ref{cut}s yields the conclusion. The crucial case is when $k=j+1$, $A$ is not a classical consequence of formulas in $\mathcal{E}^{j}$ and there exists a default rule of the form $\dfrac{B:C_{1},\ldots,C_{k}}{A}$ such that $B\in\mathcal{E}^{j}$. If $A\in\mathcal{E}$ and $\mathcal{D}'$ represents the maximal set of applicable defaults triggered by formulas in $\mathcal{E}^{j}$, then $C_{h}\in(\mathcal{F}\cup just(\mathcal{D}'))$, and so $\neg C_{h}\not\in Cn(\mathcal{E}\cup concl(\mathcal{D}'))$, for any $1\leq h\leq k$. By inductive hypothesis, there exists (at least) one $\Gastc$-proof $\rho$ of $\Gamma\specialvdashdef{{\bf U}}{{\bf V}}B$, for some ${\bf U},{\bf V}$, such that $D_{\rho}\in\mathcal{E}$. Theorem \ref{cut} and Lemma \ref{rightcontraction} guarantee the existence of a $\Gastc$-proof $\rho_{i}$ of $\Gamma\specialvdashdef{{\bf U}}{{\bf V'_{i}}}\Theta_{i}$, for any $\quest\Theta_{i}$ occurring in $\topfo(\quest B)=\{\quest\Theta_{i}\}_{1\leq i\leq m}$, such that $D_{\rho_{i}}\in\mathcal{E}$. Hence, for any $\quest\Phi$ in $\topfo(\quest A)$ consider the following $\Gastc$-derivation $\pi'$:
    \smallskip
    \begin{center}
        {\AxiomC{$\vdots_{\rho_{i}}$}
        \noLine
        \UnaryInfC{$\Gamma\specialvdashdef{{\bf U}}{{\bf V'_{i}}}\Theta_{i}$}
        \RightLabel{\scriptsize{$\delta_{\mathcal{D}'}$}}
        \UnaryInfC{$\Gamma\specialvdashdef{{\bf U'}}{{\bf V'}}\Phi$}
        \DisplayProof}
    \end{center}
    \smallskip
    where ${\bf U'}={\bf U}\cup\{\{\neg C_{1}^{f},\ldots,\neg C_{k}^{f}\}\}$. If $\neg C_{h}\not\in Cn(\mathcal{E}\cup concl(\mathcal{D}'))$, then $\neg C_{h}\not\in Cn(\mathcal{W}\cup\{D_{\rho_{i}}\}_{1\leq i\leq m}\cup\{A\})$: this implies that $\pi'$ is a $\Gastc$-proof, as desired.
\end{proof}

%{\color{blue}Obligations are analytic!}
    
\subsection{Control pairs for normative systems}

In this subsection, we extend the $\Gastc$ calculi for $m$-credulous consequence by introducing rules for ${\sf O}$-labelled and ${\sf P}$-labelled sequents. This yields strongly complete $\Gastc$ calculi for both $m$- and $d$-credulous consequence. These calculi will serve in the next section as the basis for analysing paradoxical or dilemmatic scenarios involving conflicting information (such as typicality-based obligations, contrary-to-duty obligations, specificity-driven reasoning, and obligations with exceptions), as well as for treating deontic notions in dynamic settings (including dynamic positive permissions and dynamic obligations).

In $\Gastc$ calculi for $d$-credulous consequence, factual detachment is formalized {\em via} specific extra-logical rules, which have ${\sf F}$-labelled premises and $\times$-labelled conclusions, with $\times\in\{{\sf O},{\sf P}\}$. To keep trace of the {\em factual assumptions} on the left-hand side of the ${\sf F}$-labelled premises, we enrich the structure of each controlled $\times$-labelled sequent with an additional context, a {\em repository}. Formally, a {\em deontic controlled sequent} is a controlled sequent $$\Gamma\specialvdashany{{\bf S}}{{\bf T}}\Delta$$ with attached a context $\Pi$, as follows: $$\Pi\mid\Gamma\specialvdashany{{\bf S}}{{\bf T}}\Delta$$ The rules of $\Gastc$ transmit repositories along derivations in the same way as they transmit control pairs (see the rules in Figure \ref{fig:Gdeltadeontic}). 

Repositories were originally introduced in \cite{JLC17} to keep trace of the formulas shifted by the rules from the left-hand side of the sequent to the right-hand side, in order to ensure Cut elimination. In \cite{piatesi24}, the authors show that repositories can be dropped {\em via} a polarity-based definition of soundness. In the $\Gastc$ calculi presented here, no rule licenses shifts of formulas from the left-hand side of the sequent to the right-hand side: as a result, there is no need of repositories to store negative formulas, as in \cite{JLC17}, and no need of a polarity-based definition of soundness, like in \cite{piatesi24}. Rather, we employ repositories with the purpose of representing factual assumptions in parallel with deontic assumptions -- i.e., the formulas on the left-hand side of $\times$-labelled sequents.

The provability of a sequent of the form (say) $\Pi\mid\Gamma\specialvdashnorm{{\bf S}}{{\bf T}}\Delta$ corresponds to the truth of the statement `in (at least) one modified extension of $\langle\mathcal{W}\cup\Pi,\mathcal{D}\rangle$, $\bigvee\Delta$ is obligatory according to the normative system 
$\langle\langle\mathcal{W}^{\sf o}\cup\Gamma,\mathcal{O}\rangle,\langle\mathcal{W}^{\sf p},\mathcal{P}\rangle\rangle$'\footnote{We can see a modified extension of $\langle\mathcal{W}\cup\Pi,\mathcal{D}\rangle$ as an incomplete possible world, and $\langle\langle\mathcal{W}^{\sf o}\cup\Gamma,\mathcal{O}\rangle,\langle\mathcal{W}^{\sf p},\mathcal{P}\rangle\rangle$ as an incomplete normative system -- i.e., a normative system such that there exists (at least) one formula  which is neither explicitly required, nor explicitly permitted, nor explicitly prohibited by the system. Under this view, provability of $\Pi\mid\Gamma\specialvdashnorm{{\bf S}}{{\bf T}}\Delta$ would correspond to truth in {\em incomplete} versions of Gibbardian factual-normative worlds \cite{Gibbard90}.}. Let us note that labeling the turnstiles rather than the formulas they separate reflects the idea that a formula's content remains the same regardless of whether it appears in a factual or a normative statement. The same formula can play different roles in different games -- the `game' of facts, the `game' of obligations, the `game' of permissions -- just as a single card in a deck can assume different functions depending on the game being played\footnote{Formulas occurring in control pairs are labeled just for notational convenience.}.

\begin{figure}
\begin{flushleft}\sc axioms and structural rules \\
\medskip
\bigskip
{\AxiomC{}
\RightLabel{$ax^{\times}$}
\UnaryInfC{$\Pi\mid\Theta\specialvdashany{\varnothing}{\varnothing}\Lambda$}
\DisplayProof}
\\[\baselineskip]
{\AxiomC{$\Pi\mid\Gamma\specialvdashany{{\bf S}}{{\bf T}}\Delta$}
\RightLabel{$LW^{\times}$}
\UnaryInfC{$\Pi\mid A,\Gamma\specialvdashany{{\bf S}}{{\bf T}}\Delta$}
\DisplayProof}\quad
{\AxiomC{$\Pi\mid\Gamma\specialvdashany{{\bf S}}{{\bf T}}\Delta$}
\RightLabel{$RW^{\times}$}
\UnaryInfC{$\Pi\mid\Gamma\specialvdashany{{\bf S}}{{\bf T}}\Delta,A$}
\DisplayProof}
\\[\baselineskip]
{\AxiomC{$\Pi_{1}\mid\Gamma\specialvdashany{{\bf S_{1}}}{{\bf T_{1}}}\Delta,A$}
\AxiomC{$\Pi_{2}\mid\Gamma\specialvdashany{{\bf S_{2}}}{{\bf T_{2}}}\Delta,\neg A$}
\RightLabel{$cut_{asa}^{\times}$}
\BinaryInfC{$\Pi_{1},\Pi_{2}\mid\Gamma\specialvdashany{{\bf S}}{{\bf T}}\Delta$}
\DisplayProof}\quad
{\AxiomC{$\Gamma\specialvdashany{{\bf S}}{{\bf T}}\Delta$}
\RightLabel{$\sigma^{\times}$}
\UnaryInfC{$\Gamma\specialvdashany{{\bf S'}}{{\bf T}}\Delta$}
\DisplayProof}
\\[\baselineskip]
\medskip
\sc logical rules \\
\medskip
\bigskip
{\AxiomC{$\Pi\mid \Gamma,A,B\specialvdashany{{\bf S}}{{\bf T}} \Delta$}
\RightLabel{$L\wedge^{\times}$}
\UnaryInfC{$\Pi\mid\Gamma,A\wedge B\specialvdashany{{\bf S}}{{\bf T}} \Delta$}
\DisplayProof}\qquad
{\AxiomC{$\Pi_{1}\mid\Gamma\specialvdashany{{\bf S_{1}}}{{\bf T_{1}}} \Delta,A$}
\AxiomC{$\Pi_{2}\mid\Gamma\specialvdashany{{\bf S_{2}}}{{\bf T_{2}}} \Delta,B$}
\RightLabel{$R\wedge^{\times}$}
\BinaryInfC{$\Pi_{1},\Pi_{2}\mid\Gamma\specialvdashany{{\bf S}}{{\bf T}} \Delta,A\wedge B$}
\DisplayProof}
\\[\baselineskip]
{\AxiomC{$\Pi_{1}\mid\Gamma,A\specialvdashany{{\bf S_{1}}}{{\bf T_{1}}} \Delta$}
\AxiomC{$\Pi_{2}\mid\Gamma,B\specialvdashany{{\bf S_{2}}}{{\bf T_{2}}} \Delta$}
\RightLabel{$L\vee^{\times}$}
\BinaryInfC{$\Pi_{1},\Pi_{2}\mid\Gamma,A\vee B\specialvdashany{{\bf S}}{{\bf T}}\Delta$}
\DisplayProof}\qquad
{\AxiomC{$\Pi\mid\Gamma\specialvdashany{{\bf S}}{{\bf T}} \Delta,A,B$}
\RightLabel{$R\vee^{\times}$}
\UnaryInfC{$\Pi\mid\Gamma\specialvdashany{{\bf S}}{{\bf T}} \Delta, A\vee B$}
\DisplayProof}
\\[\baselineskip]
{\AxiomC{$\Pi_{1}\mid\Gamma,\neg A\specialvdashany{{\bf S_{1}}}{{\bf T_{1}}} \Delta$}
\AxiomC{$\Pi_{2}\mid\Gamma,\neg B\specialvdashany{{\bf S_{2}}}{{\bf T_{2}}} \Delta$}
\RightLabel{$L\neg\wedge^{\times}$}
\BinaryInfC{$\Pi_{1},\Pi_{2}\mid\Gamma,\neg(A\wedge B)\specialvdashany{{\bf S}}{{\bf T}}\Delta$}
\DisplayProof}\qquad
{\AxiomC{$\Pi\mid\Gamma\specialvdashany{{\bf S}}{{\bf T}} \Delta,\neg A,\neg B$}
\RightLabel{$R\neg\wedge^{\times}$}
\UnaryInfC{$\Pi\mid\Gamma\specialvdashany{{\bf S}}{{\bf T}} \Delta, \neg(A\wedge B)$}
\DisplayProof}
\\[\baselineskip]
{\AxiomC{$\Pi\mid\Gamma,\neg A,\neg B\specialvdashany{{\bf S}}{{\bf T}} \Delta$}
\RightLabel{$L\neg\vee^{\times}$}
\UnaryInfC{$\Pi\mid\Gamma,\neg(A\vee B)\specialvdashany{{\bf S}}{{\bf T}} \Delta$}
\DisplayProof}\qquad
{\AxiomC{$\Pi_{1}\mid\Gamma\specialvdashany{{\bf S_{1}}}{{\bf T_{1}}} \Delta,\neg A$}
\AxiomC{$\Pi_{2}\mid\Gamma\specialvdashany{{\bf S_{2}}}{{\bf T_{2}}} \Delta,\neg B$}
\RightLabel{$R\neg\vee^{\times}$}
\BinaryInfC{$\Pi_{1},\Pi_{2}\mid\Gamma\specialvdashany{{\bf S}}{{\bf T}} \Delta,\neg(A\vee B)$}
\DisplayProof}
\\[\baselineskip]
{\AxiomC{$\Pi\mid A,\Gamma\specialvdashany{{\bf S}}{{\bf T}}\Delta$}
\RightLabel{$L\neg\neg^{\times}$}
\UnaryInfC{$\Pi\mid\neg\neg A,\Gamma\specialvdashany{{\bf S}}{{\bf T}}\Delta$}
\DisplayProof}\qquad
{\AxiomC{$\Pi\mid\Gamma\specialvdashany{{\bf S}}{{\bf T}}\Delta,A$}
\RightLabel{$R\neg\neg^{\times}$}
\UnaryInfC{$\Pi\mid\Gamma\specialvdashany{{\bf S}}{{\bf T}}\Delta,\neg \neg A$}
\DisplayProof}
\\[\baselineskip]
\medskip
\sc extra-logical rules \\
\medskip
\bigskip
{\AxiomC{$\{\Gamma\specialvdashdef{{\bf S_{i}}}{{\bf T_{i}}}\Theta_{i}\}_{0\leq i\leq m}$}
\RightLabel{$\delta$}
\UnaryInfC{$\Gamma\mid\ \specialvdashany{{\bf S}}{{\bf T}}\Phi$}
\DisplayProof}\quad
{\AxiomC{$\{\Pi_{i}\mid\Gamma\specialvdashany{{\bf S_{i}}}{{\bf T_{i}}}\Theta_{i}\}_{0\leq i\leq m}$}
\RightLabel{$\delta$}
\UnaryInfC{$\Pi\mid\Gamma\specialvdashany{{\bf S}}{{\bf T}}\Phi$}
\DisplayProof}
\\[\baselineskip]
{\AxiomC{$\{\Gamma\specialvdashdef{{\bf S_{i}}}{{\bf T_{i}}}\Theta_{i}\}_{0\leq i\leq m}$}
\AxiomC{$\{\Pi_{i}\mid\Delta\specialvdashany{{\bf S_{j}}}{{\bf T_{j}}}\Theta_{i}\}_{0\leq j\leq n}$}
\RightLabel{$\delta$}
\BinaryInfC{$\Gamma,\Pi\mid\Delta\specialvdashany{{\bf S}}{{\bf T}}\Phi$}
\DisplayProof}
\end{flushleft}
\caption{Controlled sequent rules for a normative system $\langle\langle\mathcal{W}^{{\sf o}},\mathcal{O}\rangle,\langle\mathcal{W}^{{\sf p}},\mathcal{P}\rangle\rangle$}
\label{fig:Gdeltadeontic}
\end{figure}

\begin{definition}\label{definitioncalculideontic}
    Let $\langle\mathcal{W},\mathcal{D}\rangle$, $\langle\mathcal{W}^{\sf o},\mathcal{O}\rangle$ and $\langle\mathcal{W}^{\sf p},\mathcal{P}\rangle$ be a default theory, a system of obligations and a system of permissions, respectively. The controlled sequent calculus $\Gastc$ for $\langle\mathcal{W},\mathcal{D}\rangle$, $\langle\mathcal{W}^{\sf o},\mathcal{O}\rangle$ and $\langle\mathcal{W}^{\sf p},\mathcal{P}\rangle$ is an extension of the $\Gastc$ calculus for $\langle\mathcal{W},\mathcal{D}\rangle$ with rules in Figure \ref{fig:Gdeltadeontic}, provided that the conditions below are fulfilled. 
    \smallskip
    \begin{itemize}
        \item[$(i_{op})$] For each instance of $ax^{\times}$, one of the following conditions holds: $\Theta=\Lambda=\{A\}$ for some literal $A$; $\Theta=\{A,\neg A\}$ or $\Lambda=\{A,\neg A\}$ for some atom $A$; $\Theta\quest\Lambda$ belongs to $\topfstar(\quest W^{\times})$, with $W^{\times}$ being the conjunction of the formulas in $\mathcal{W}^{\times}$. 
        \smallskip
        \item[$(ii_{op})$] For each structural and logical rule, the left subscript of the turnstile $\specialvdash{}{}$ and the right superscript of the rule name, $\times$, is either ${\sf o}$ or ${\sf p}$.
        \smallskip
        \item[$(iii_{op})$] For any constrained conditional obligation  $\dfrac{B:C_{1},\ldots,C_{k}}{D}$ in $\mathcal{O}^{\sf f}$, if $\topfo(\quest B)=\{\quest\Theta_{1},\ldots,\quest\Theta_{m}\}$ with $m>0$ and $\quest\Phi$ is in $\topfstaro(\quest D)$, then there exists an extra-logical rule:
        \smallskip
        \begin{center}
            {\AxiomC{$\{\Gamma\specialvdashdef{{\bf S_{i}}}{{\bf T_{i}}}\Theta_{i}\}_{1\leq i\leq m}$}
            \RightLabel{$\delta$}
            \UnaryInfC{$\Gamma\mid\ \specialvdashnorm{{\bf S}}{{\bf T}}\Phi$}
            \DisplayProof}\remove{\quad
            {\AxiomC{$\{\Pi_{i}\mid\Gamma\specialvdashnorm{{\bf S_{i}}}{{\bf T_{i}}}\Theta_{i}\}_{1\leq i\leq m}$}
            \RightLabel{$\delta$}
            \UnaryInfC{$\Pi\mid\Gamma\specialvdashnorm{{\bf S'}}{{\bf T'}}\Phi$}
            \DisplayProof}}
        \end{center}
        \smallskip
        where ${\bf T}=\bigcup_{i=1}^{m}{\bf T_{i}}\cup\{\{B^{{\sf f}}\}\}$, ${\bf S}=\bigcup_{i=1}^{m}{\bf S_{i}}\cup\{\{\neg C_{1}^{{\sf o}},\ldots,\neg C_{k}^{{\sf o}}\}\}$\remove{, ${\bf T'}=\bigcup_{i=1}^{m}{\bf T_{i}}\cup\{\{B^{{\sf o}},B^{{\sf p}}\}\}$, ${\bf S'}={\bf S}$ and $\Pi=\Pi_{1},\ldots,\Pi_{m}$}. If instead $\topfo(\quest B)=\varnothing$, then there exists an extra-logical rule:
        \smallskip
        \begin{center}
            {\AxiomC{$ $}
            \RightLabel{$\delta$}\UnaryInfC{$\Pi\mid\ \specialvdashnorm{{\bf S}}{{\bf T}}\Phi$}
            \DisplayProof}
        \end{center}
        \smallskip 
        with ${\bf T}=\{\{\top^{{\sf f}}\}\}$ and ${\bf S}=\bigcup_{i=1}^{m}{\bf S_{i}}\cup\{\{\neg C_{1}^{{\sf o}},\ldots,\neg C_{k}^{{\sf o}}\}\}$.
        
        \noindent The same condition applies to any constrained conditional permission in $\mathcal{P}^{\sf f}$, replacing $\specialvdashnorm{}{}$ with $\specialvdashperm{}{}$, and ${\bf S}=\bigcup_{i=1}^{m}\mathbf{S_{i}}\cup\{\{\neg C_{1}^{\sf p},\ldots,\neg C_{k}^{\sf p},\neg C_{1}^{\sf o},\ldots,\neg C_{k}^{\sf o}\}\}$.
        
        \item[$(iv_{op})$] For any constrained conditional obligation $\dfrac{B:C_{1},\ldots,C_{k}}{D}$ in $\mathcal{O}^{\sf d}$, if $\topfo(\quest B)=\{\quest\Theta_{1},\ldots,\quest\Theta_{m}\}$ with $m>0$ and $\quest\Phi$ is in $\topfstaro(\quest D)$, then there exists an extra-logical rule:
        \smallskip
        \begin{center}
            {\AxiomC{$\{\Pi_{i}\mid\Gamma\specialvdashnorm{{\bf S_{i}}}{{\bf T_{i}}}\Theta_{i}\}_{1\leq i\leq m}$}
            \RightLabel{$\delta$}
            \UnaryInfC{$\Pi\mid\Gamma\mid\ \specialvdashnorm{{\bf S}}{{\bf T}}\Phi$}
            \DisplayProof}\remove{\quad
            {\AxiomC{$\{\Pi_{i}\mid\Gamma\specialvdashnorm{{\bf S_{i}}}{{\bf T_{i}}}\Theta_{i}\}_{1\leq i\leq m}$}
            \RightLabel{$\delta$}
            \UnaryInfC{$\Pi\mid\Gamma\specialvdashnorm{{\bf S'}}{{\bf T'}}\Phi$}
            \DisplayProof}}
        \end{center}
        \smallskip
      where ${\bf T}=\bigcup_{i=1}^{m}{\bf T_{i}}\cup\{\{B^{{\sf o}},B^{\sf p}\}\}$, ${\bf S}=\bigcup_{i=1}^{m}{\bf S_{i}}\cup\{\{\neg C_{1}^{{\sf o}},\ldots,\neg C_{k}^{{\sf o}}\}\}$ and $\Pi=\Pi_{1},\ldots,\Pi_{m}$. If instead $\topfo(\quest B)=\varnothing$, then there exists an extra-logical rule:
        \smallskip
        \begin{center}
            {\AxiomC{$ $}
            \RightLabel{$\delta$}\UnaryInfC{$\Pi\mid\Gamma\specialvdashnorm{{\bf S}}{{\bf T'}}\Phi$}
            \DisplayProof}
        \end{center}
        \smallskip 
        with ${\bf T'}=\{\{\top^{{\sf o}},\top^{{\sf p}}\}\}$ and ${\bf S}=\bigcup_{i=1}^{m}{\bf S_{i}}\cup\{\{\neg C_{1}^{{\sf o}},\ldots,\neg C_{k}^{{\sf o}}\}\}$.
        
        \noindent The same condition applies to any constrained conditional permission in $\mathcal{P}^{\sf d}$, replacing $\specialvdashnorm{}{}$ with $\specialvdashperm{}{}$, ${\bf T}=\bigcup_{i=1}^{m}{\bf T_{i}}\cup\{\{B^{{\sf p}}\}\}$, ${\bf T'}=\{\{\top^{{\sf p}}\}\}$ and ${\bf S}=\bigcup_{i=1}^{m}\mathbf{S_{i}}\cup\{\{\neg C_{1}^{\sf p},\ldots,\neg C_{k}^{\sf p},\neg C_{1}^{\sf o},\ldots,\neg C_{k}^{\sf o}\}\}$.
        \smallskip
        
        \item[$(v_{op})$] For any pair of extra-logical rules
    \smallskip
    \begin{center}
        {\AxiomC{$\{\Gamma\specialvdashdef{{\bf S_{i}'}}{{\bf T_{i}'}}\Theta_{i}\}_{1\leq i\leq m}$}
        \RightLabel{$\delta$}
        \UnaryInfC{$\Gamma\mid\ \specialvdashnorm{{\bf S_{1}}}{{\bf T_{1}}}\Phi$}
        \DisplayProof}\quad
        {\AxiomC{$\{\Gamma\specialvdashdef{{\bf S_{i}'}}{{\bf T_{i}'}}\Theta_{i}\}_{m+1\leq i\leq n}$}
        \RightLabel{$\delta$}
        \UnaryInfC{$\Gamma\mid\ \specialvdashnorm{{\bf S_{2}}}{{\bf T_{2}}}\Phi'$}
        \DisplayProof}
    \end{center}
    \smallskip
    with $m+n\geq0$, if $\quest\Psi$ occurs in $\topfstaro(\quest(\bigvee\Phi)\wedge(\bigvee\Phi'))$ without belonging to $\topfstaro(\quest(\bigvee\Phi))$, $\topfstaro(\quest(\bigvee\Phi'))$ or $\topfstaro(\quest W^{\sf o})$, then there exists an extra-logical rule of the  form:
    \smallskip
    \begin{center}
        {\AxiomC{$\{\Gamma\specialvdashdef{{\bf S_{i}'}}{{\bf T_{i}'}}\Theta_{i}\}_{1\leq i\leq n}$}
        \RightLabel{$\delta$}
       \UnaryInfC{$\Gamma\mid\ \specialvdashnorm{{\bf S}}{{\bf T}}\Psi$}
        \DisplayProof}
    \end{center}
    \smallskip
    with ${\bf S}={\bf S_{1}}\cup{\bf S_{2}}$ and ${\bf T}={\bf T_{1}}\cup{\bf T_{2}}$. 
    
    \noindent The same condition applies to any pair of extra-logical rules of the same form where $\specialvdashperm{}{}$ replaces $\specialvdashnorm{}{}$, with $W^{\sf p}$ replacing $W^{\sf o}$. 
        \smallskip
        \item[$(vi_{op})$] For any pair of extra-logical rules
    \smallskip
    \begin{center}
        {\AxiomC{$\{\Pi_{i}\mid\Gamma\specialvdashnorm{{\bf S_{i}'}}{{\bf T_{i}'}}\Theta_{i}\}_{1\leq i\leq m}$}
        \RightLabel{$\delta$}
        \UnaryInfC{$\Pi'\mid\Gamma\specialvdashnorm{{\bf S_{1}}}{{\bf T_{1}}}\Phi$}
        \DisplayProof}\quad
        {\AxiomC{$\{\Pi_{i}\mid\Gamma\specialvdashnorm{{\bf S_{i}'}}{{\bf T_{i}'}}\Theta_{i}\}_{m+1\leq i\leq n}$}
        \RightLabel{$\delta$}
        \UnaryInfC{$\Pi''\mid\Gamma\specialvdashnorm{{\bf S_{2}}}{{\bf T_{2}}}\Phi'$}
        \DisplayProof}
    \end{center}
    \smallskip
    with $m+n\geq0$, $\Pi'=\Pi_{1},\ldots,\Pi_{m}$ and $\Pi''=\Pi_{m+1},\ldots,\Pi_{n}$, if $\quest\Psi$ occurs in $\topfstaro(\quest(\to\bigvee\Phi)\wedge(\bigvee\Phi'))$ without belonging to $\topfstaro(\quest(\bigvee\Phi))$, $\topfstaro(\quest(\bigvee\Phi'))$ or $\topfstaro(\quest W^{\sf o})$, then there exists an extra-logical rule of the form:
    \smallskip
    \begin{center}
        {\AxiomC{$\{\Pi_{i}\mid\Gamma\specialvdashnorm{{\bf S_{i}'}}{{\bf T_{i}'}}\Theta_{i}\}_{1\leq i\leq n}$}
        \RightLabel{$\delta$}
       \UnaryInfC{$\Pi\mid\Gamma\specialvdashnorm{{\bf S}}{{\bf T}}\Psi$}
        \DisplayProof}
    \end{center}
    \smallskip
    with ${\bf S}={\bf S_{1}}\cup{\bf S_{2}}$, ${\bf T}={\bf T_{1}}\cup{\bf T_{2}}$ and $\Pi=\Pi',\Pi''$. 
    
    \noindent The same condition applies to any pair of extra-logical rules of the same form where $\specialvdashperm{}{}$ replaces $\specialvdashnorm{}{}$, with $W^{\sf p}$ replacing $W^{\sf o}$. %no ${\sf o}$-labelled formula being introduced in the control pair.

    \item[$(vii_{op})$] For any pair of extra-logical rules
        \smallskip
        \begin{center}
            {\AxiomC{$\{\Gamma\specialvdashdef{{\bf S_{i}}}{{\bf T_{i}}}\Theta_{i}\}_{1\leq i\leq m}$}
            \RightLabel{$\delta$}
            \UnaryInfC{$\Gamma\mid\ \specialvdashnorm{{\bf S'}}{{\bf T'}}\Phi$}
            \DisplayProof}\quad
            {\AxiomC{$\{\Delta_{i}\mid\Pi\specialvdashnorm{{\bf S_{i}}}{{\bf T_{i}}}\Theta_{i}\}_{m+1\leq i\leq n}$}
            \RightLabel{$\delta$}
            \UnaryInfC{$\Delta\mid\Pi\specialvdashnorm{{\bf S''}}{{\bf T''}}\Phi'$}
            \DisplayProof}
        \end{center}
        \smallskip
        with $m\geq 0$ and $\Delta=\Delta_{m+1},\ldots,\Delta_{n}$, if $\quest\Phi''$ occurs in $\topfstaro(\quest(\bigvee\Phi)\wedge(\bigvee\Phi'))$ without occurring in $\topfstaro(\quest\bigvee\Phi)$, $\topfstaro(\quest\bigvee\Phi')$ or $\topfstaro(\quest W^{\sf o})$, there exists an extra-logical rule of the form 
        \smallskip
        \begin{center}
            {\AxiomC{$\{\Gamma\specialvdashdef{{\bf S_{i}}}{{\bf T_{i}}}\Theta_{i}\}_{1\leq i\leq m}$}
            \AxiomC{$\{\Delta_{i}\mid\Pi\specialvdashnorm{{\bf S_{i}'}}{{\bf T_{i}'}}\Theta_{i}\}_{m+1\leq i\leq n}$}
            \BinaryInfC{$\Gamma,\Delta\mid\Pi\specialvdashnorm{{\bf S}}{{\bf T}}\Phi''$}
            \DisplayProof}
        \end{center}
        \smallskip
        with ${\bf T}={\bf T'}\cup{\bf T''}$ and ${\bf S}={\bf S'}\cup{\bf S''}$.

        \noindent The same condition applies to any pair of extra-logical rules of the same form where $\specialvdashperm{}{}$ replaces $\specialvdashnorm{}{}$, with $W^{\sf p}$ correspondingly replaced by $W^{\sf o}$.

    \item[$(viii_{op})$] For any extra-logical rule
        \smallskip
        \begin{center}
            {\AxiomC{$\{\Gamma\specialvdashdef{{\bf S_{i}}}{{\bf T_{i}}}\Theta_{i}\}_{1\leq i\leq m}$}
            \RightLabel{$\delta$}
            \UnaryInfC{$\Gamma\mid\ \specialvdashnorm{{\bf S}}{{\bf T}}\Phi$}
            \DisplayProof}\quad
            {\AxiomC{$\{\Pi_{i}\mid\Gamma\specialvdashnorm{{\bf S_{i}}}{{\bf T_{i}}}\Theta_{i}\}_{1\leq i\leq m}$}
            \RightLabel{$\delta$}
            \UnaryInfC{$\Pi\mid\Gamma\specialvdashnorm{{\bf S}}{{\bf T}}\Phi$}
            \DisplayProof}
        \end{center}
        \smallskip
        with $m\geq 0$ and $\Pi=\Pi_{1},\ldots,\Pi_{m}$, if $\quest\Phi'$ occurs in $\topfstaro(\quest(\bigvee\Phi)\wedge W)$ without occurring in $\topfstaro(\quest W^{\sf o})$, there exists an extra-logical rule of the form 
        \smallskip
        \begin{center}
            {\AxiomC{$\{\Gamma\specialvdashdef{{\bf S_{i}}}{{\bf T_{i}}}\Theta_{i}\}_{1\leq i\leq m}$}
            \RightLabel{$\delta$}
            \UnaryInfC{$\Gamma\mid\ \specialvdashnorm{{\bf S}}{{\bf T}}\Phi'$}
            \DisplayProof}\quad
            {\AxiomC{$\{\Pi_{i}\mid\Gamma\specialvdashnorm{{\bf S_{i}}}{{\bf T_{i}}}\Theta_{i}\}_{1\leq i\leq m}$}
            \RightLabel{$\delta$}
            \UnaryInfC{$\Pi\mid\Gamma\specialvdashnorm{{\bf S}}{{\bf T}}\Phi'$}
            \DisplayProof}
        \end{center}
        \smallskip
        The same condition applies to any pair of extra-logical rules of the same form in which $\specialvdashperm{}{}$ replaces $\specialvdashnorm{}{}$, with $W^{\sf p}$ correspondingly replaced by $W^{\sf o}$.
    \end{itemize}
\end{definition}

For any extra-logical rule $\delta$ in $\Gastc$ whose conclusion is ${\sf O}$-labelled (${\sf P}$-labelled), we define the label $\delta_{\mathcal{O}'}$ ($\delta_{\mathcal{P}'}$) as follows:
\smallskip
\begin{itemize}
    \item[$(a_{op})$] if $\delta$ is generated in accordance with point $(iii_{op})$ in Definition \ref{definitioncalculideontic}, then $\mathcal{O}'$ ($\mathcal{P}'$, respectively) is $\Big\{\dfrac{B:C_{1},\ldots,C_{k}}{D}\Big\}$;
    \smallskip
    \item[$(b_{op})$] if $\delta$ is generated from extra-logical rules $\delta'$ and $\delta''$ with labels $\delta_{\mathcal{O}_{1}}$ and $\delta_{\mathcal{O}_{2}}$ ($\delta_{\mathcal{P}_{1}}$ and $\delta_{\mathcal{P}_{2}}$) in accordance with points $(iv_{op})-(viii_{op})$ in Definition \ref{definitioncalculideontic}, then $\mathcal{O}'$ is $\mathcal{O}_{1}\cup\mathcal{O}_{2}$ ($\mathcal{P}'$ is $\mathcal{P}_{1}\cup\mathcal{P}_{2}$, respectively).
\end{itemize}
\smallskip
For each $\G$-derivation $\pi$ we say that a constrained conditional obligation $\dfrac{B:C_{1},\ldots,C_{k}}{D}$ {\em belongs to $obl(\pi)$} if and only if there is (at least) one extra-logical rule labelled $\delta_{\mathcal{O}'}$ which is applied in $\pi$ and such that $\dfrac{B:C_{1},\ldots,C_{k}}{D}$ belongs to $\mathcal{O}'$. 

Moreover, we say that a rule $\dfrac{B:C_{1},\ldots,C_{k}}{D}$ {\em belongs to $perm(\pi)$} if and only if there is (at least) one extra-logical rule labelled $\delta_{\mathcal{P}'}$  in $\pi$ and such that $\dfrac{B:C_{1},\ldots,C_{k}}{D}$ belongs to $\mathcal{P}'$, or it belongs to $obl(\pi)$.

We employ $obl'(\pi)$ ($perm'(\pi)$) to denote $obl(\pi_{1})\cup\cdots\cup obl(\pi_{m})$ ($perm(\pi_{1})\cup\cdots\cup perm(\pi_{m})$, respectively) whenever $\pi_{1},\ldots,\pi_{m}$ are the immediate subderivations yielding the premises of the lowermost extra-logical rules applications in $\pi$. We shall use 
\smallskip
\begin{itemize}
    \item[$-$] $D_{\pi}^{\sf o}$ to refer to the conjunction of the formulas in $concl(obl(\pi))$, and $E_{\pi}$ to denote the conjunction of the formulas in $concl(obl'(\pi))$;
    \smallskip
    \item[$-$] $D_{\pi}^{\sf p}$ to refer to the conjunction of the formulas in $concl(perm(\pi))$, and $E_{\pi}$ to denote the conjunction of the formulas in $concl(perm'(\pi))$.
\end{itemize}

\begin{definition}\label{soundnessobligation}
Let $\times\in\{\sf O,{\sf P}\}$, $\pi$ be a $\Gastc$-derivation of $\Pi\mid\Gamma\specialvdashnorm{{\bf S}}{{\bf T}}\Delta$ and ${\bf T^{\sf o}},{\bf S^{\sf o}}$ (${\bf T^{\sf p}},{\bf S^{\sf p}}$; ${\bf T^{\sf f}},{\bf S^{\sf f}}$) be the sets obtained from ${\bf T},{\bf S}$ by deleting ${\sf f}$- and ${\sf p}$-labelled formulas (${\sf f}$- and ${\sf o}$-labelled formulas; ${\sf o}$- and ${\sf p}$-labelled formulas, respectively). \\ \noindent The sequent $\Pi\mid\Gamma\specialvdashnorm{{\bf S}}{{\bf T}}\Delta$ is
    \smallskip
    \begin{itemize}
        \item[$(i)$] {\em sound under conditions} if and only if 
        \smallskip
        \begin{itemize}
            \item[$(i.i)$] $([W^{\sf o},E_{\rho}^{\sf o}]\cup\Gamma)\, ||\, \langle{\bf V^{\sf o}},\varnothing\rangle$ for any subderivation $\rho$ of $\pi$ with $\Pi'\mid\Gamma'\specialvdashany{{\bf U}}{{\bf V}}\Delta'$ as conclusion;
            \smallskip
            \item[$(i.ii)$] $([W^{\sf p},E_{\rho}^{\sf p}]\cup\Gamma)\, ||\, \langle{\bf V^{\sf p}},\varnothing\rangle$ for any subderivation $\rho$ of $\pi$ with $\Pi'\mid\Gamma'\specialvdashany{{\bf U}}{{\bf V}}\Delta'$ as conclusion;
            \smallskip
            \item[$(i.iii)$] $([W^{\sf f},E_{\rho}^{\sf f}]\cup\Pi)\, ||\, \langle{\bf V^{\sf f}},\varnothing\rangle$ for any subderivation $\rho$ of $\pi$ with $\Pi'\mid\Gamma'\specialvdashany{{\bf U}}{{\bf V}}\Delta'$ as conclusion;
        \end{itemize}
        \smallskip
        \item[$(ii)$] {\em sound under constraints} if and only if 
        \smallskip
        \begin{itemize}
            \item[$(ii.i)$] $([W^{\sf o},D_{\pi}^{\sf o}]\cup\Gamma)\, ||\, \langle\varnothing,{\bf S^{\sf o}}\rangle$;
            \smallskip
            \item[$(ii.ii)$] $([W^{\sf p},D_{\pi}^{\sf p}]\cup\Gamma)\, ||\, \langle\varnothing,{\bf S^{\sf p}}\rangle$;
             \smallskip
            \item[$(ii.iii)$] $([W^{\sf f},D_{\pi}^{\sf f}]\cup\Pi)\, ||\, \langle\varnothing,{\bf S^{\sf f}}\rangle$;
        \end{itemize}
        \smallskip
        \item[$(iii)$] {\em sound} if and only if it is sound under conditions and sound under constraints.
    \end{itemize}
\end{definition}

\begin{definition}\label{soundnesspermission}
Let $\times\in\{\sf O,{\sf P}\}$, $\pi$ be a $\Gastc$-derivation of $\Pi\mid\Gamma\specialvdashany{{\bf S}}{{\bf T}}\Delta$ and ${\bf T^{\sf o}},{\bf S^{\sf o}}$ (${\bf T^{\sf p}},{\bf S^{\sf p}}$; ${\bf T^{\sf f}},{\bf S^{\sf f}}$) be the sets obtained from ${\bf T},{\bf S}$ by deleting ${\sf f}$- and ${\sf p}$-labelled formulas (${\sf f}$- and ${\sf o}$-labelled formulas; ${\sf o}$- and ${\sf p}$-labelled formulas, respectively). \noindent \\ The sequent $\Pi\mid\Gamma\specialvdashperm{{\bf S}}{{\bf T}}\Delta$ is
    \smallskip
    \begin{itemize}
        \item[$(i)$] {\em sound under conditions} if and only if 
        \smallskip
        \begin{itemize}
            \item[$(i.i)$] $([W^{\sf p},E_{\rho}^{\sf p}]\cup\Gamma)\, ||\, \langle{\bf V^{\sf p}},\varnothing\rangle$ for any subderivation $\rho$ of $\pi$ with $\Pi'\mid\Gamma'\specialvdashany{{\bf U}}{{\bf V}}\Delta'$ as conclusion;
            \smallskip
            \item[$(i.ii)$] $([W^{\sf f},E_{\rho}^{\sf f}]\cup\Pi)\, ||\, \langle{\bf V^{\sf f}},\varnothing\rangle$ for any subderivation $\rho$ of $\pi$ with $\Pi'\mid\Gamma'\specialvdashany{{\bf U}}{{\bf V}}\Delta'$ as conclusion;
        \end{itemize}
        \smallskip
        \item[$(ii)$] {\em sound under constraints} if and only if $([W^{\sf p},D_{\pi}^{\sf p}]\cup\Gamma)\, ||\, \langle\varnothing,{\bf S^{\sf p}}\rangle$;
        \begin{itemize}
            \item [$(ii.i)$] $([W^{\sf p},D_{\pi}^{\sf p}]\cup\Gamma)\, ||\, \langle\varnothing,{\bf S^{\sf p}}\rangle$;
            \smallskip
            \item[$(ii.ii)$] $([W^{\sf f},D_{\pi}^{\sf f}]\cup\Pi)\, ||\, \langle\varnothing,{\bf S^{\sf f}}\rangle$;
        \end{itemize}
        \smallskip
        \item[$(iii)$] {\em sound} if and only if it is sound under conditions and sound under constraints.
    \end{itemize}
\end{definition}

\begin{definition}\label{proofdeontic}
Let $\times\in\{{\sf O},{\sf P}\}$ and $\pi$ be a $\Gastc$-derivation of $\Pi\mid\Gamma\specialvdashany{{\bf S}}{{\bf T}}\Delta$. Then, $\pi$ is a {\em proof} of $\Pi\mid\Gamma\specialvdashany{{\bf S}}{{\bf T}}\Delta$ if and only if any sequent $\Pi'\mid\Gamma'\specialvdashany{{\bf S'}}{{\bf T'}}\Delta'$ in $\pi$ is sound -- and a {\em paraproof}, otherwise.
\end{definition}

\begin{lemma}
    Let $\times\in\{{\sf O},{\sf P}\}$. The rule of safe External Weakening
    \smallskip
    \begin{center}
        {\AxiomC{$\Pi\mid\Gamma\specialvdashany{{\bf S}}{{\bf T}}\Delta$}
        \RightLabel{$EW$}
        \UnaryInfC{$A,\Pi\mid\Gamma\specialvdashany{{\bf S}}{{\bf T}}\Delta$}
        \DisplayProof}
    \end{center}
    \smallskip
    is admissible in $\Gastc$. 
\end{lemma}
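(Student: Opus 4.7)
The plan is to proceed by induction on the height $h(\pi)$ of the $\Gastc$-proof $\pi$ of the premise $\Pi \mid \Gamma \specialvdashany{{\bf S}}{{\bf T}} \Delta$, globally rewriting $\pi$ into a proof of $A,\Pi \mid \Gamma \specialvdashany{{\bf S}}{{\bf T}} \Delta$ by propagating $A$ through every repository appearing in the derivation. Since the logical, structural and extra-logical rules of $\Gastc$ all transmit repositories along derivations in a direct way, the construction itself is essentially mechanical; the nontrivial work lies in verifying that soundness is preserved at every intermediate sequent.

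If $h(\pi)=1$, the last rule is either $ax^{\times}$ -- whose control pair is empty, so the enlarged-repository instance is again an axiom -- or a $0$-ary extra-logical rule, which we simply reapply with $A,\Pi$ in place of its original repository; soundness of the resulting sequent follows from the safety hypothesis on the $EW$ application. In the inductive step I argue by cases on the last rule of $\pi$. Unary logical and structural rules transmit the repository unchanged, so I invoke the inductive hypothesis on the unique premise (with repository enlarged by $A$) and reapply the rule. For binary rules -- whose conclusion's repository is the union of the two premises' repositories -- I enlarge only one premise via the inductive hypothesis and reapply the rule, thereby producing $A,\Pi$ in the conclusion. Extra-logical rules with at least one deontic premise are handled analogously, by recursing on such a premise and reapplying the rule; purely ${\sf F}$-labelled premises of mixed extra-logical rules need no modification.

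The main obstacle is the preservation of soundness at every intermediate sequent. Conditions $(i.i)$--$(i.ii)$ and $(ii.i)$--$(ii.ii)$ of Definitions \ref{soundnessobligation}--\ref{soundnesspermission} depend only on $\Gamma$ and are therefore unaffected by the modification of the repository. Condition $(i.iii)$ is preserved because adding $A$ to the antecedent of a provable sequent keeps it provable. The delicate condition is $(ii.iii)$, since compatibility with the set of constraints is not monotone in the antecedent -- extending $\Pi$ to $A,\Pi$ could \emph{a priori} collapse a refutation. To handle it I rely on the safety of the $EW$ application, which delivers soundness of the final conclusion, in particular $\bigl([W^{\sf f},D_{\pi}^{\sf f}]\cup A,\Pi\bigr)\, ||\, \langle\varnothing,{\bf S^{\sf f}}\rangle$. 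Since repositories grow only when descending through binary rules, every subderivation $\rho$ has repository $\Pi'\subseteq\Pi$ and its set of used defaults is included in that of $\pi$; hence $\fullGpn$ proves $[W^{\sf f},D_{\pi}^{\sf f}]\cup A,\Pi \vdash \bigwedge\bigl([W^{\sf f},D_{\rho}^{\sf f}]\cup A,\Pi'\bigr)$, and Lemma \ref{compatibility}(3) transfers the required compatibility downward to each subderivation, securing $(ii.iii)$ throughout the rewritten proof. This monotonicity argument, combined with the safety assumption, is the crucial ingredient that makes the induction go through.
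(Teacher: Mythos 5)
Your overall strategy -- induction on the height of the proof, pushing $A$ through the repositories, and verifying soundness of the rewritten sequents -- is the same as the paper's, and your treatment of unary rules, binary rules, and extra-logical rules with at least one deontic premise, as well as your discussion of condition $(ii.iii)$ via the safety hypothesis and Lemma \ref{compatibility}, is in order. However, your case analysis omits exactly the case the paper flags as the only non-trivial one: an extra-logical rule of positive arity \emph{all} of whose premises are ${\sf F}$-labelled, i.e.\ with premises $\Gamma\specialvdashdef{{\bf S_{i}}}{{\bf T_{i}}}\Theta_{i}$ and conclusion $\Gamma\mid\ \specialvdashany{{\bf S}}{{\bf T}}\Phi$. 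Here there is no deontic premise on which to recurse (the inductive hypothesis is inapplicable, since the ${\sf F}$-labelled premises carry no repository), and the repository of the conclusion is precisely the antecedent $\Gamma$ of those premises. Your claim that ``purely ${\sf F}$-labelled premises need no modification'' is correct for the \emph{mixed} rules, where the conclusion's repository also collects the deontic premises' repositories and $A$ can be inserted there, but for the non-mixed rule leaving the premises untouched reproduces the conclusion with repository $\Gamma$, not $A,\Gamma$, so your construction never yields the target sequent; your base case only covers the $0$-ary instances, whose repository is a free context.

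The repair is the move the paper makes: apply safe $LW$ (in the ${\sf F}$-labelled fragment) to each premise so as to obtain $A,\Gamma\specialvdashdef{{\bf S_{i}}}{{\bf T_{i}}}\Theta_{i}$, and then reapply $\delta$; its conclusion now has repository $A,\Gamma$, and the soundness of the weakened premises and of the new conclusion is secured by the safety of the $EW$ instance, by the same kind of compatibility argument you already run for condition $(ii.iii)$. With this case added, your argument coincides with the paper's proof.
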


\begin{proof}
    We reason by induction on the height of the proof $\pi$ of $\Pi\mid\Gamma\specialvdashany{{\bf S}}{{\bf T}}\Delta$. If $h(\pi)=1$, the conclusion is immediate. Otherwise, we reason by cases over the last rule applied. If the latter is a structural or a logical rule, we simply apply the inductive hypothesis to one of the premises to conclude. The same argument holds whenever the last rule  is extra-logical and the premises are $\times$-labelled sequents. The only non-trivial case arises if the last rule  is an extra-logical rule $\delta$ and the premises are ${\sf F}$-labelled sequents. To reach the conclusion, we apply safe $LW$ to each premise and then apply $\delta$.  
\end{proof}

Let $\star$ vary over ${\sf f},{\sf o},{\sf p}$. For any set $\Gamma$ of $\star$-labelled formulas, we use $[\Gamma]^{u}$ to denote the set of formulas obtained from $\Gamma$ by deleting labels.

\begin{lemma}   
    The rule of safe Deontic Implication
    \smallskip
    \begin{center}
    {\AxiomC{$\Pi\mid\Gamma\specialvdashnorm{{\bf S}}{{\bf T}}\Delta$}
    \RightLabel{$op$}
    \UnaryInfC{$\Pi\mid\Gamma\specialvdashperm{{\bf S'}}{{\bf T'}}\Delta$}
    \DisplayProof}
    \end{center}
    \smallskip
    is admissible in $\Gastc$, with $[\bigcup({\bf T}^{\sf o}\cup{\bf T^{\sf p}})]^{u}=[\bigcup(({\bf T'})^{\sf o}\cup{({\bf T'})^{\sf p}})]^{u}$ and $[\bigcup({\bf S}^{\sf o}\cup{\bf S^{\sf p}})]^{u}=[\bigcup(({\bf S'})^{\sf o}\cup{({\bf S'})^{\sf p}})]^{u}$.
\end{lemma}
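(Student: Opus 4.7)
The plan is to proceed by primary induction on the height of the $\Gastc$-proof $\pi$ of $\Pi\mid\Gamma\specialvdashnorm{{\bf S}}{{\bf T}}\Delta$, translating each rule application into its permission-labelled analogue. For instances of $ax^{\sf O}$ and for the structural and logical rules carrying the ${\sf O}$ superscript, the corresponding ${\sf P}$-labelled rule is available by condition $(ii_{op})$ of Definition \ref{definitioncalculideontic}, so the translation is direct. For extra-logical rules, one exploits the inclusions $\mathcal{W}^{\sf o}\subseteq\mathcal{W}^{\sf p}$ and $\mathcal{O}^{\star}\subseteq\mathcal{P}^{\star}$ built into the notion of a normative system: every constrained conditional obligation is, in particular, a constrained conditional permission, so clauses $(iii_{op})$--$(viii_{op})$ supply, for every $\delta_{\mathcal{O}'}$ applied in $\pi$, a companion $\delta_{\mathcal{P}'}$ whose premises have the shape required by the inductive construction.

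Next, I verify that the translated control pairs respect the flexible matching condition in the statement. At each extra-logical step, the obligation variant produces the constraint set $\{\neg C_{1}^{\sf o},\ldots,\neg C_{k}^{\sf o}\}$ and the condition set $\{B^{\sf f}\}$ or $\{B^{\sf o},B^{\sf p}\}$, while its permission companion produces $\{\neg C_{1}^{\sf p},\ldots,\neg C_{k}^{\sf p},\neg C_{1}^{\sf o},\ldots,\neg C_{k}^{\sf o}\}$ and either $\{B^{\sf f}\}$ or $\{B^{\sf p}\}$. Stripping the ${\sf o}$- and ${\sf p}$-labels from both pairs returns, in each case, the same underlying formulas; accumulating over the whole derivation yields the required identities $[\bigcup({\bf T}^{\sf o}\cup{\bf T}^{\sf p})]^{u}=[\bigcup(({\bf T'})^{\sf o}\cup({\bf T'})^{\sf p})]^{u}$ and $[\bigcup({\bf S}^{\sf o}\cup{\bf S}^{\sf p})]^{u}=[\bigcup(({\bf S'})^{\sf o}\cup({\bf S'})^{\sf p})]^{u}$. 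The ${\sf f}$-labelled components and the repository $\Pi$ are untouched by the translation.

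The main obstacle is to ensure that the resulting $\Gastc$-derivation $\pi'$ is a genuine proof, in the sense of Definition \ref{proofdeontic}, and not merely a paraproof. Soundness under the factual clauses $(i.ii)$ and $(ii.ii)$ of Definition \ref{soundnesspermission} reduces immediately to the corresponding clauses $(i.iii)$ and $(ii.iii)$ of Definition \ref{soundnessobligation} for $\pi$, as $\Pi$ is preserved. The delicate point is verifying the permission clauses $([W^{\sf p},E_{\rho'}^{\sf p}]\cup\Gamma)\,||\,\langle({\bf V'})^{\sf p},\varnothing\rangle$ and $([W^{\sf p},D_{\pi'}^{\sf p}]\cup\Gamma)\,||\,\langle\varnothing,({\bf S'})^{\sf p}\rangle$ at each subderivation $\rho'$ of $\pi'$. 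Here one invokes the inclusions $\mathcal{W}^{\sf o}\subseteq\mathcal{W}^{\sf p}$ and $\mathcal{O}^{\star}\subseteq\mathcal{P}^{\star}$ together with the label-erasure identity above: every formula contributing to $D_{\pi}^{\sf o}$ or $E_{\rho}^{\sf o}$ reappears, possibly relabelled, in $D_{\pi'}^{\sf p}$ or $E_{\rho'}^{\sf p}$, while the underlying constraint formulas $\neg C_{i}$ are preserved. Lemma \ref{compatibility} then transports the compatibility witnessed on the obligation side of $\pi$ to the permission side of $\pi'$, so that every sequent in $\pi'$ is sound. The induction step reduces, after these observations, to reapplying the translation to the last rule of $\pi$ and invoking the inductive hypothesis on its premise(s).
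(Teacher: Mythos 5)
Your skeleton -- induction on the height of the proof of the ${\sf O}$-labelled premise, translating each rule into its ${\sf P}$-labelled counterpart and exploiting $\mathcal{W}^{\sf o}\subseteq\mathcal{W}^{\sf p}$ and $\mathcal{O}^{\star}\subseteq\mathcal{P}^{\star}$ -- is exactly the paper's (its proof is precisely this routine induction), and your bookkeeping of the label-erased control pairs is correct. The gap is in the soundness verification. You claim that Lemma \ref{compatibility} transports the compatibility witnessed on the obligation side of $\pi$ to the permission side of $\pi'$. For the \emph{condition} clauses this is fine, since provability only becomes easier when the auxiliary material grows from $W^{\sf o},E^{\sf o}_{\rho}$ to $W^{\sf p},E^{\sf p}_{\rho'}$. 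For the \emph{constraint} clauses it fails: clause $(ii.i)$ of Definition \ref{soundnesspermission} demands that $\fullGpn$ \emph{refute} $W^{\sf p},D^{\sf p}_{\pi'},\Gamma\vdash\neg C_{i}$ for the freshly introduced ${\sf p}$-labelled constraints, whereas the soundness of $\pi$ only yields refutations relative to the smaller set $W^{\sf o},D^{\sf o}_{\pi}$ (the ${\sf p}$-clauses of Definition \ref{soundnessobligation} are vacuous for $\pi$, because ${\sf O}$-rules introduce only ${\sf o}$-labelled constraints). Refutability is not monotone under enlarging the antecedent, and the constraint clauses of Lemma \ref{compatibility} (items (1) and (3)) go in the opposite direction -- from stronger antecedents and larger control sets to weaker and smaller ones -- so this step does not go through as stated.

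The failure is not merely formal: take $\mathcal{W}^{\sf o}=\varnothing$, $\mathcal{W}^{\sf p}=\{\neg C\}$ and the normal obligation $\top{:}\,C/C$. Then $\varnothing\mid\ \specialvdashnorm{\{\{\neg C^{\sf o}\}\}}{\{\{\top^{\sf f}\}\}}C$ has a one-step proof, but the translated permission derivation is a paraproof, since $\neg C,C\vdash\neg C$ is provable; correspondingly, $C$ lies in an obligation extension but in no permission extension. So the unconditional transfer of soundness you attempt is impossible, and this is exactly why the lemma concerns the \emph{safe} rule: the safety proviso, which your argument never invokes, is what guarantees that the translated conclusion (and, by the monotone direction of Lemma \ref{compatibility}, the sequents above it, whose constraint sets and default/obligation conjunctions are weaker) is sound under constraints. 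Replace the Lemma-\ref{compatibility}-transfer step for constraints by an appeal to safety, and your induction then matches the paper's intended argument.
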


\begin{proof}
 By routine induction on the height of the proof of the premise, exploiting the fact that $\mathcal{W}^{\sf o}\subseteq\mathcal{W}^{\sf p}$ and $\mathcal{O}\subseteq\mathcal{P}$.
\end{proof}

\begin{lemma}\label{structuraldeontic}
Let $\times\in\{{\sf O},{\sf P}\}$ and $\otimes\in\{\wedge,\vee,\neg\}$. $\Gastc$ calculi enjoy the following structural properties.
\smallskip
\begin{itemize}
    \item[$(i)$] The rules $R\otimes^{\times}$, $R\neg\otimes^{\times}$, $L\wedge$, $L\neg\vee$ and $L\neg\neg$ are invertible in $\Gastc$.
    \smallskip
    \item[$(ii)$] The rules $E\vee^{\times}$ and $E\neg\wedge^{\times}$ are admissible in $\Gastc$. 
    \smallskip
    \item[$(iii)$] The $\times$-labelled versions of Left and Right Contraction are admissible in $\Gastc$.
\end{itemize}
\end{lemma}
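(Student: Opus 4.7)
The plan is to transfer the proofs of Lemmas~\ref{rightinvertibility}, \ref{leftinvertibility}, \ref{partialinvertibility}, \ref{rightcontraction} and \ref{leftcontraction} to the deontic setting, adapting them to account for (a) the two labels ${\sf O}$ and ${\sf P}$ on the turnstile, (b) the presence of the repository context $\Pi\mid$, and (c) the three shapes of extra-logical rules in Figure~\ref{fig:Gdeltadeontic} (purely ${\sf F}$-labelled premises, purely ${\times}$-labelled premises, and mixed ${\sf F}$/${\times}$-labelled premises). Throughout, I rely on the fact that the rules in Figure~\ref{fig:Gdeltadeontic} propagate repositories upwards along derivations in the same multiset-additive fashion as they propagate control pairs, so that the inductive case analysis on the last rule of the proof of the premise is essentially the same as in the non-deontic case.

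For (i), I proceed by induction on the height of the $\Gastc$-proof $\pi$ of the premise, exactly as in the proofs of Lemmas~\ref{rightinvertibility} and~\ref{leftinvertibility}. When the last rule of $\pi$ introduces the principal formula of the rule being inverted, I replace it by its premise(s), followed (if necessary) by $\sigma^{\times}$-applications; when some other formula is principal, I apply the induction hypothesis to the premises and reapply that rule. The only new cases concern the extra-logical rules, but there the principal formula of the inverted rule is never principal, so the inductive step simply propagates the inversion through each premise (whether it is ${\sf F}$-labelled or ${\times}$-labelled) and reapplies the same extra-logical rule; soundness is preserved since neither $D_{\pi}^{\sf o}$, $D_{\pi}^{\sf p}$, nor the repository changes.

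For (ii), I mimic the proof of Lemma~\ref{partialinvertibility}, again by induction on the height of $\pi$. The non-trivial case remains the one where one occurrence of the active literal is principal in the last rule: as before, when that rule is extra-logical and the hypothesis forbids applying the inductive hypothesis directly, I exploit soundness under conditions (in the sense of Definitions~\ref{soundnessobligation} and~\ref{soundnesspermission}) together with Theorem~\ref{completenessbase} to rewrite $\pi$ globally, starting from the topmost extra-logical rule application and reapplying each subsequent right rule and $cut_{asa}^{\times}$. Any instance of $cut_{asa}^{\times}$ that fails the side condition after deletion of the auxiliary disjunct is replaced by the deontic analogue of $cut_{a}$; here I use the corresponding deontic versions of Theorems~\ref{atomicanalytic} and~\ref{atomicnonanalytic}, whose transfer to the deontic setting is routine.

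For (iii), I follow Lemmas~\ref{rightcontraction} and~\ref{leftcontraction} by primary induction on height and secondary induction on the complexity of the contracted formula, using (i) in the literal case (via conditions $(v_{op})$--$(viii_{op})$) and invoking (ii) to reduce to the inductive hypothesis whenever the contracted formula is principal and has the form $B\vee C$ or $\neg(B\wedge C)$. The main obstacle I anticipate lies in the mixed extra-logical rule of point $(vii_{op})$, whose conclusion combines a repository inherited from the ${\sf F}$-labelled premises with an antecedent inherited from the ${\times}$-labelled premises: checking that admissible contraction of a formula in the ${\times}$-side antecedent preserves the soundness conditions $(i.i)$--$(i.iii)$ and $(ii.i)$--$(ii.iii)$ of Definitions~\ref{soundnessobligation} and~\ref{soundnesspermission} for every sequent in $\pi$, and not merely for its conclusion, requires carefully verifying that the projections onto ${\sf o}$-, ${\sf p}$- and ${\sf f}$-labelled control sets behave independently under the rewriting. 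Once this is in place, the closure conditions $(v_{op})$--$(viii_{op})$ absorb the new contraction in exactly the same way conditions $(vi)$--$(vii)$ of Definition~\ref{definitioncalculi} do in the non-deontic case, and the argument goes through.
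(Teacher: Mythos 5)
Your proposal matches the paper's own proof, which simply argues for $(i)$ as in Lemmas \ref{rightinvertibility} and \ref{leftinvertibility}, for $(ii)$ as in Lemma \ref{partialinvertibility}, and for $(iii)$ as in Lemmas \ref{rightcontraction} and \ref{leftcontraction}, i.e.\ by transferring those inductions to the labelled, repository-enriched setting exactly as you describe. Your additional remarks on the deontic analogues of Theorems \ref{atomicanalytic} and \ref{atomicnonanalytic} and on the mixed rule of condition $(vii_{op})$ only make explicit details the paper leaves implicit, so there is no substantive divergence.
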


\begin{proof}
    $(i)$ We argue as in the proofs of Lemmas \ref{rightinvertibility} and \ref{leftinvertibility}. $(ii)$ We proceed as in the proof of Lemma \ref{partialinvertibility}. $(iii)$ We argue as in the proofs of Lemmas \ref{rightcontraction} and \ref{leftcontraction}.
\end{proof}

\begin{theorem}\label{safecutdeontic}
    The rule of safe Cut 
    \smallskip
    \begin{center}
        {\AxiomC{$\Pi_{1}\mid\Gamma_{1}\specialvdashdef{{\bf S_{1}}}{{\bf T_{1}}}\Delta_{1},A$}
        \AxiomC{$\Pi_{2}\mid\Gamma_{2}\specialvdashdef{{\bf S_{2}}}{{\bf T_{2}}}\Delta_{2},\neg A$}
        \BinaryInfC{$\Pi_{1},\Pi_{2}\mid\Gamma_{2},\Gamma_{1}\specialvdashdef{{\bf S}}{{\bf T}}\Delta_{1},\Delta_{2}$}
        \DisplayProof}
    \end{center}
    \smallskip
    is admissible in $\Gastc$.
\end{theorem}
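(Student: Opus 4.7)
The plan is to lift the proof strategy of Theorem~\ref{cut} to the deontic setting with repositories, leaning crucially on the structural results collected in Lemma~\ref{structuraldeontic}. First, I would proceed by primary induction on the logical complexity of the Cut formula $A$. If $A$ is not a literal, then I invoke Lemma~\ref{structuraldeontic}(i): each right logical rule is invertible in the labelled setting, so the premises admit $\Gastc$-proofs with strictly simpler right-active formulas. Applying the primary inductive hypothesis (once or twice, depending on whether the principal connective is $\wedge, \vee, \neg\wedge, \neg\vee$ or $\neg\neg$) and then contracting via Lemma~\ref{structuraldeontic}(iii) reduces the general case to the atomic one. The repositories are propagated exactly as the (multiplicative) context $\Gamma$ of Theorem~\ref{cut}, so no new issues appear at this stage.

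Second, for atomic $A$, I would separate the argument into a semi-analytic and a non-semi-analytic subcase, exactly mirroring Theorems~\ref{atomicanalytic} and \ref{atomicnonanalytic}. In the semi-analytic subcase, where $A$ is a subformula of some formula in $\Gamma_{1}\cup\Gamma_{2}\cup\mathcal{W}^{\times}$, I would reduce to the primitive rule $cut^{\times}_{asa}$ by first applying safe $LW^{\times}$, $RW^{\times}$, and $EW$ steps to harmonize contexts and repositories, as in Theorem~\ref{atomicanalytic}. In the non-semi-analytic subcase, I would perform a secondary induction on the sum of the heights of the two sub-proofs $\pi_{1},\pi_{2}$. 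The delicate commutations are those where the Cut atom is principal on both sides in an extra-logical rule: here the closure conditions $(v_{op})$--$(viii_{op})$ of Definition~\ref{definitioncalculideontic} supply directly the extra-logical rule that produces the Cut's conclusion, while conditions $(i_{op})$ and $(iii_{op})$--$(iv_{op})$ handle the interaction between Cut and extra-logical axioms.

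The main obstacle will be the case where the Cut atom is non-principal and the last rule of one sub-proof is $L\vee^{\times}$ or $L\neg\wedge^{\times}$. As in the proof of Theorem~\ref{atomicnonanalytic}(b), a straightforward upwards permutation of Cut can break soundness under constraints for the newly combined control pair ${\bf S}={\bf S_{1}}\cup{\bf S_{2}}$. The workaround is the same global rewriting: inspecting $\mathcal{T}_{\pi_{i}}(A)$ or $\mathcal{T}_{\pi_{i}}(\neg A)$, either every leaf labelled by the Cut formula is introduced by $RW^{\times}$ (in which case those weakenings are removed and the conclusion follows by safe $LW^{\times}, EW$ and $\sigma^{\times}$), or one of them is introduced by an extra-logical rule $\delta$. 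In the latter case soundness under conditions of the Cut conclusion, together with Theorem~\ref{completenessbase}, gives $\Gastc$-proofs of the premises of $\delta$ over the merged context and repository; the rule $\delta$ is then reapplied, with safety under constraints guaranteed by Lemma~\ref{compatibility}, and all rules below $\delta$ (including $cut^{\times}_{asa}$, possibly promoted to $cut_{a}$ via the labelled analogue of Theorem~\ref{atomicnonanalytic}) are replayed. A secondary induction on the number of logical rules lying below the topmost extra-logical rule applications then closes the case.

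Finally, I would verify that the three-fold soundness clauses $(i.i)$--$(i.iii)$ and $(ii.i)$--$(ii.iii)$ of Definitions~\ref{soundnessobligation} and \ref{soundnesspermission} are preserved at each rewriting step. This is the genuinely new bookkeeping relative to Theorem~\ref{cut}, since repositories carry the factual component independently of the $\sf o$- and $\sf p$-components; but because $\mathcal{W}^{\sf o}\subseteq\mathcal{W}^{\sf p}$, $\mathcal{O}^{\star}\subseteq\mathcal{P}^{\star}$, and each label propagates independently through every rule, Lemma~\ref{compatibility}(1)--(3) suffices to transfer compatibility from premises to conclusion of every synthesized rule application. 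Hence the same inductive skeleton as Theorem~\ref{cut} goes through, with the three parallel compatibility checks replacing the single one.
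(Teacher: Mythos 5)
Your proposal is correct and follows essentially the same route as the paper, whose proof of this theorem simply says to proceed as in Theorem~\ref{cut} while exploiting Lemma~\ref{structuraldeontic}: induction on the complexity of the Cut formula, right-invertibility plus contraction in the labelled setting to reduce to the atomic case, and the labelled analogues of Theorems~\ref{atomicanalytic} and~\ref{atomicnonanalytic} for atomic Cuts. The extra detail you supply (the global rewriting for $L\vee^{\times}$/$L\neg\wedge^{\times}$, and the three-fold soundness bookkeeping for repositories and ${\sf o}$/${\sf p}$-labels) is exactly what the paper leaves implicit, so no discrepancy arises.
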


\begin{proof}
    We proceed as in the proof of Theorem \ref{cut}, exploiting Lemma \ref{structuraldeontic}. 
\end{proof}

\begin{theorem}
    Let $\langle\mathcal{W},\mathcal{D}\rangle$ be a default theory and $\langle\langle\mathcal{W}^{\sf o},\mathcal{O}\rangle,\langle\mathcal{W}^{\sf p},\mathcal{P}\rangle\rangle$ be a normative system. Then the following statements hold.
    \smallskip
    \begin{itemize}
        \item[$(i)$] $\Gastc$ proves $\Pi\mid\Gamma\specialvdashnorm{{\bf S}}{{\bf T}}A$ if and only if $A$ belongs to (at least) one $d$-extension of $\langle\langle\mathcal{W}\cup\Pi,\mathcal{D}\rangle,\langle\langle\mathcal{W}^{\sf o}\cup\Gamma,\mathcal{O}\rangle,\langle\mathcal{W}^{\sf p},\mathcal{P}\rangle\rangle\rangle$.
        \smallskip
        \item[$(ii)$] $\Gastc$ proves $\Pi\mid\Gamma\specialvdashperm{{\bf S}}{{\bf T}}A$ if and only if $A$ belongs to (at least) one $d$-extension of $\langle\langle\mathcal{W}\cup\Pi,\mathcal{D}\rangle,\langle\langle\mathcal{W}^{\sf o},\mathcal{O}\rangle,\langle\mathcal{W}^{\sf p}\cup\Gamma,\mathcal{P}\rangle\rangle\rangle$.
    \end{itemize}
\end{theorem}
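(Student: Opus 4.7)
The plan is to adapt the argument of Theorem~\ref{adequacy} to the deontic setting, proving (i) and (ii) by a simultaneous induction that mirrors the defaults case. The forward direction ($\Rightarrow$) proceeds by induction on the height of a $\Gastc$-proof $\pi$; the backward direction ($\Leftarrow$) proceeds by (transfinite) induction on the stage $k$ at which $A$ enters a deontic extension. Semi-monotonicity is now supplied by Lemma~\ref{deonticsemimon}, which lets me shrink attention to $\langle obl(\pi), perm(\pi)\rangle$ and then re-extend to the full normative system. The $op$-rule admissibility establishes (ii) from (i) once both the obligation part and the closure $\mathcal{O}\subseteq\mathcal{P}$ are invoked, so the real work lies in (i).

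For the forward direction of (i), I would show that every $\Gastc$-proof $\pi$ of $\Pi\mid\Gamma\specialvdashnorm{{\bf S}}{{\bf T}}A$ produces a modified extension $\langle\mathcal{E},\mathcal{F}\rangle$ of $\langle\mathcal{W}\cup\Pi,def(\pi)\rangle$ and an obligation extension $\langle\mathcal{S}_{\mathcal{E}},\mathcal{T}_{\mathcal{E}}\rangle$ of the triple $\langle\langle\mathcal{W}\cup\Pi,def(\pi)\rangle,\langle\langle\mathcal{W}^{\sf o}\cup\Gamma,obl(\pi)\rangle,\langle\mathcal{W}^{\sf p},perm(\pi)\rangle\rangle\rangle$ with $A\in\mathcal{S}_{\mathcal{E}}$; Lemma~\ref{deonticsemimon} then delivers the statement for the full normative system. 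Base cases for $ax^{\sf O}$ and for $0$-ary extra-logical rules generated by conditions $(iii_{op})$ and $(iv_{op})$ of Definition~\ref{definitioncalculideontic} are handled by reading off from soundness under constraints (clauses $(ii.i)$--$(ii.iii)$ of Definition~\ref{soundnessobligation}) that no $\neg C_{h}^{\sf o}$ belongs to the classical closure of $\mathcal{S}_{\mathcal{E}}\cup concl(\mathcal{O}_{x})$, so the obligation rule is triggered at the correct stage. Inductive cases for logical and structural rules reduce to the inductive hypothesis; $cut_{asa}^{\sf O}$ is absorbed via Theorem~\ref{safecutdeontic}; and the delicate cases $L\vee^{\sf O}$ and $L\neg\wedge^{\sf O}$ are handled, as in Theorem~\ref{adequacy}, by a global rewriting that permutes the left-disjunction rule above the extra-logical applications, followed by a secondary induction on the number of extra-logical instances. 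For the extra-logical rules themselves I would distinguish two branches: rules coming from $(iii_{op})$ have ${\sf F}$-labelled premises, so Theorem~\ref{adequacy} provides the underlying modified extension and soundness under conditions guarantees that $B$ (the factual condition) belongs to $Cn(\mathcal{W}\cup\Pi\cup D_{\pi}^{\sf f})\subseteq\mathcal{E}$; rules coming from $(iv_{op})$ have ${\sf O}$-labelled premises, and the inductive hypothesis supplies an obligation extension witnessing their succedents, so that $B^{\sf o}$ belongs to $\mathcal{S}_{\mathcal{E}}^{k}$ for some $k$ and the rule fires at stage $k+1$. In both branches soundness under constraints, together with Lemma~\ref{compatibility}, rules out triggering of contradicting justifications.

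For the backward direction of (i), if $A\in\mathcal{S}_{\mathcal{E}}$ then $A\in\mathcal{S}_{\mathcal{E}}^{k}$ for some $k$, and I proceed by (transfinite) induction on $k$. The step where $A$ follows by classical closure from $\mathcal{S}_{\mathcal{E}}^{j}$ is handled by Theorem~\ref{safecutdeontic} applied to the inductive witnesses. The critical case is when $A=D$ for some obligation rule $B:C_{1},\ldots,C_{k}/D$ whose condition $B$ is triggered at stage $j<k$. If the rule sits in $\mathcal{O}^{\sf f}$, Theorem~\ref{adequacy} provides $\Gastc$-proofs of $\Pi\specialvdashdef{{\bf U}}{{\bf V}}\Theta_{i}$ for every $\quest\Theta_{i}\in\topfo(\quest B)$, with $D_{\rho_{i}}\in\mathcal{E}$; condition $(iii_{op})$ then licenses the extra-logical rule producing $\Pi\mid\Gamma\specialvdashnorm{{\bf S}}{{\bf T}}\Phi$ for every $\quest\Phi\in\topfstaro(\quest D)$, and the hypothesis $\neg C_{h}\notin Cn(\mathcal{S}_{\mathcal{E}}\cup concl(\mathcal{O}_{x}))$ is exactly what ensures the resulting derivation is a proof (not a paraproof) by Definitions~\ref{soundnessobligation}--\ref{proofdeontic}. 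If the rule sits in $\mathcal{O}^{\sf d}$, I apply the inductive hypothesis at stage $j$ to obtain ${\sf O}$-labelled proofs of the relevant feeder sequents, and then invoke condition $(iv_{op})$ to assemble the required extra-logical rule whose premises carry repositories built from the feeders. Part (ii) proceeds dually, combining $\mathcal{W}^{\sf o}\subseteq\mathcal{W}^{\sf p}$, $\mathcal{O}\subseteq\mathcal{P}$, and the admissibility of $op$ to transfer witnesses obtained on the obligation side to the permission side.

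The main obstacle will be the bookkeeping around mixed extra-logical rules arising from condition $(vii_{op})$, whose premises combine an ${\sf F}$-labelled block and a $\times$-labelled block, and around the absorption conditions $(v_{op})$, $(vi_{op})$, $(viii_{op})$: here I must verify that the conclusion multiset, the fused repository, and the fused control pair really correspond, through Definitions~\ref{soundnessobligation}--\ref{soundnesspermission}, to the incremental construction of a \emph{single} deontic extension rather than to a union of two incompatible ones. Aligning soundness under conditions $(i.i)$--$(i.iii)$ with the triggering clauses of the obligations and permissions across the three label levels ${\sf f}$, ${\sf o}$, ${\sf p}$ is where the polarity-free, label-based machinery does its work and must be unpacked with care; once this alignment is established, the rest of the induction is a straightforward lift of the proof of Theorem~\ref{adequacy}.
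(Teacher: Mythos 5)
Your proposal is correct and follows essentially the same route as the paper, whose own proof simply states that one proceeds as in Theorem \ref{adequacy} while leveraging Lemma \ref{deonticsemimon}; your elaboration (height induction with soundness under conditions/constraints for the forward direction, transfinite induction on the stage of the deontic extension for the converse, and the $op$-rule plus $\mathcal{O}\subseteq\mathcal{P}$ to transfer from obligations to permissions) is a faithful unpacking of exactly that adaptation.
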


\begin{proof}
    For the proof of each statement, we proceed as in the proof of Theorem \ref{adequacy} -- leveraging Lemma \ref{deonticsemimon}.
\end{proof}

\section{Navigating doxastic and normative conflicts: examples}\label{examples}

In this section, we examine a broad range of deontic scenarios using the proof-theoretic tools introduced so far. We show that controlled sequent calculi are flexible enough to capture deontic notions beyond plain obligations and permissions, and to address paradoxical or dilemmatic situations involving conflicting normative requirements. In certain cases, the $\Gastc$ calculi are extended with appropriate extra-logical rules whose conclusions accord with our intuitive evaluations of the problematic scenarios in question.

To keep the presentation focused, the examples below omit explicit definitions of the underlying default theories and normative systems (we leave these details to the reader). When extending any $\Gastc$ calculus with an extra-logical inference rule, we stipulate that additional extra-logical rules are simultaneously introduced to ensure that the closure conditions specified in Definition \ref{definitioncalculideontic} continue to obtain. This stipulation guarantees that admissibility of safe Cut is preserved. For simplicity, we also assume that constrained obligations and permissions in the underlying normative systems are normal.%, so consistency with extra-logical axioms is always preserved. In constrained I/O logics, this amounts to identifying the set of constraints with the input set.

\begin{example}[Typicality-based obligations]
Let the extra-logical rules
\smallskip
\begin{center}
    {\AxiomC{$\Gamma\specialvdashdef{{\bf S_{1}}}{{\bf T_{1}}}p$}
    \LeftLabel{\scriptsize{$\delta_{\mathcal{D}_{1}}$}}
    \UnaryInfC{$\Gamma\specialvdashdef{{\bf S'_{1}}}{{\bf T'_{1}}}q$}
    \DisplayProof}\quad
    {\AxiomC{$\Gamma\specialvdashdef{{\bf S_{2}}}{{\bf T_{2}}}q$}
    \RightLabel{\scriptsize{$\delta_{\mathcal{O}_{1}}$}}
    \UnaryInfC{$\Gamma\mid\ \specialvdashnorm{{\bf S'_{2}}}{{\bf T'_{2}}}s$}
    \DisplayProof}\quad
    {\AxiomC{$\Gamma\specialvdashdef{{\bf S_{3}}}{{\bf T_{3}}}r$}
    \RightLabel{\scriptsize{$\delta_{\mathcal{O}_{2}}$}}
    \UnaryInfC{$\Gamma\mid\ \specialvdashnorm{{\bf S'_{3}}}{{\bf T'_{3}}}t$}
    \DisplayProof}
\end{center}
stand for the default rule `If Mary is an adult, then she is employed, unless she is a university student', the conditional obligation `If Mary is employed, she ought to fill in an annual income-tax form' and the conditional obligation `If Mary is a university student, she ought to pay the tuition fee', respectively -- where 
\smallskip
\begin{itemize}
    \item[$(i)$] ${\bf T'_{1}}={\bf T_{1}}\cup\{\{p^{\sf f}\}\}$ and ${\bf S'_{1}}={\bf S_{1}}\cup\{\{(\neg p\vee r)^{\sf f}\}\}$;
    \smallskip
    \item[$(ii)$] ${\bf T'_{2}}={\bf T_{2}}\cup\{\{q^{\sf f}\}\}$ and ${\bf S'_{2}}={\bf S_{2}}\cup\{\{\neg s^{\sf o}\}\}$;
    \smallskip
    \item[$(iii)$] ${\bf T'_{3}}={\bf T_{3}}\cup\{\{r^{\sf f}\}\}$ and ${\bf S'_{3}}={\bf S_{3}}\cup\{\{\neg t^{\sf o}\}\}$.
\end{itemize}
\smallskip
Consider the following derivations:
\smallskip
\begin{center}
    {\AxiomC{$ $}
    \LeftLabel{\scriptsize{$ax$}}
    \UnaryInfC{$p\specialvdashdef{\varnothing}{\varnothing}p$}
    \LeftLabel{\scriptsize{$\delta_{\mathcal{D}_{1}}$}}
    \UnaryInfC{$p\specialvdashdef{{\bf U_{1}}}{{\bf V_{1}}}q$}
    \LeftLabel{\scriptsize{$\delta_{\mathcal{O}_{1}}$}}
    \UnaryInfC{$p\mid\ \specialvdashnorm{{\bf U'_{1}}}{{\bf V'_{1}}}s$}
    \DisplayProof}\quad
    {\AxiomC{$ $}
    \RightLabel{\scriptsize{$ax$}}
    \UnaryInfC{$p\specialvdashdef{\varnothing}{\varnothing}p$}
    \RightLabel{\scriptsize{$LW$}}
    \UnaryInfC{$r,p\specialvdashdef{\varnothing}{\varnothing}p$}
    \RightLabel{\scriptsize{$\delta_{\mathcal{D}_{1}}$}}
    \UnaryInfC{$r,p\specialvdashdef{{\bf U_{1}}}{{\bf V_{1}}}q$}
    \RightLabel{\scriptsize{$\delta_{\mathcal{O}_{1}}$}}
    \UnaryInfC{$r,p\mid\ \specialvdashnorm{{\bf U'_{1}}}{{\bf V'_{1}}}s$}
    \DisplayProof}\quad
    {\AxiomC{$ $}
    \RightLabel{\scriptsize{$ax$}}
    \UnaryInfC{$r\specialvdashdef{\varnothing}{\varnothing}r$}
    \RightLabel{\scriptsize{$LW$}}
    \UnaryInfC{$p,r\specialvdashdef{\varnothing}{\varnothing}r$}
    \RightLabel{\scriptsize{$\delta_{\mathcal{O}_{2}}$}}
    \UnaryInfC{$p,r\mid\ \specialvdashnorm{{\bf U_{2}}}{{\bf V_{2}}}t$}
    \DisplayProof}
\end{center}
\smallskip
where ${\bf V_{1}}=\{\{p^{\sf p}\}\}$, ${\bf U_{1}}=\{\{(\neg p\vee r)^{\sf f}\}\}$, ${\bf V'_{1}}={\bf V_{1}}\cup\{\{q^{\sf f}\}\}$, ${\bf U'_{1}}={\bf U_{1}}\cup\{\{\neg s^{\sf o}\}\}$, ${\bf V_{2}}=\{\{r^{\sf f}\}\}$ and ${\bf U_{2}}=\{\{\neg t^{\sf o}\}\}$. The leftmost derivation is a proof: if Mary is an adult, she is employed by default -- and then she ought to fill in an annual income-tax form. The rightmost derivation is a proof too: if Mary is a university student, she ought to pay the tuition fee. On the other hand, the derivation in the middle is a paraproof: accordingly, if Mary is an adult and a university student, there is no obligation for her to fill in an annual income-tax form.
\end{example}

\begin{example}[Practical syllogism]
    Let the extra-logical rule and axioms
    \smallskip
    \begin{center}
        {\AxiomC{$ $}
        \LeftLabel{\scriptsize{$\delta_{\mathcal{O}_{1}}$}}
        \UnaryInfC{$\Pi\mid\Gamma\specialvdashnorm{{\bf S}}{{\bf T}}\neg p$}
        \DisplayProof}\quad
        {\AxiomC{$ $}
        \RightLabel{\scriptsize{$ax$}}
        \UnaryInfC{$q\specialvdashdef{\varnothing}{\varnothing}\neg p$}
        \DisplayProof}
    \end{center}
    \smallskip
    stand for the unconditional obligation `One ought not pollute' and the factual statement `If one rides a bicycle, one does not pollute', respectively -- where ${\bf T}=\{\{\top^{\sf f}\}\}$ and ${\bf S}=\{\{\neg\neg p^{\sf o}\}\}$. In this scenario, a practical syllogism warrants the conclusion that there is an obligation to cycle. Consider the following inference rule for practical syllogistic reasoning:
    \smallskip
    \begin{center}
        {\AxiomC{$\vdots$}
        \noLine
        \UnaryInfC{$\Pi\mid\Gamma\specialvdashnorm{{\bf S_{1}}}{{\bf T_{1}}}\Delta$}
        \AxiomC{$ $}
        \RightLabel{\scriptsize{$ax$}}
        \UnaryInfC{$A_{1},\ldots,A_{m}\specialvdashdef{\varnothing}{\varnothing}\Delta$}
        \RightLabel{\scriptsize{$prsyl$}}
        \BinaryInfC{$\Pi\mid\Gamma\specialvdashnorm{{\bf S}}{{\bf T_{1}}}A_{i}$}
        \DisplayProof}
    \end{center}
    \smallskip
    with ${\bf S}={\bf S_{1}}\cup\{\{\neg A_{i}^{\sf o}\}\}$, $1\leq i\leq m$ and the side condition that $\fullGpn$ refutes $A_{1},\ldots,A_{m}\vdash\Delta$. If we extend the $\Gastc$ calculus for our scenario with $prsyl$, the following derivation becomes available:
    \smallskip
    \begin{center}
        {\AxiomC{$ $}
        \LeftLabel{\scriptsize{$\delta_{\mathcal{O}_{1}}$}}
        \UnaryInfC{$\varnothing\mid\ \specialvdashnorm{{\bf S}}{{\bf T}}\neg p$}
        \AxiomC{$ $}
        \RightLabel{\scriptsize{$ax$}}
        \UnaryInfC{$q\specialvdashdef{\varnothing}{\varnothing}\neg p$}
        \RightLabel{\scriptsize{$prsyl$}}
        \BinaryInfC{$\varnothing\mid\ \specialvdashnorm{{\bf S'}}{{\bf T}}q$}
        \DisplayProof}
    \end{center}
    \smallskip
    with ${\bf S'}={\bf S}\cup\{\{\neg q^{\sf o}\}\}$. Such derivation is a proof. This corresponds to the fact that if there exists an obligation not to pollute and cycling contributes to avoiding pollution, there exists an obligation to cycle. 
\end{example}

\begin{example}[Chisholm's paradox]
    Let the extra-logical rules
    \smallskip
    \begin{center}
        {\AxiomC{$ $}
        \LeftLabel{\scriptsize{$\delta_{\mathcal{O}_{1}}$}}\UnaryInfC{$\Pi\mid\Gamma\specialvdashnorm{{\bf S_{1}}}{{\bf T_{1}}}p$}
        \DisplayProof}\quad
        {\AxiomC{$\Gamma\specialvdashdef{{\bf S_{2}}}{{\bf T_{2}}} p$}
        \RightLabel{\scriptsize{$\delta_{\mathcal{O}_{2}}$}}
        \UnaryInfC{$\Gamma\mid\ \specialvdashnorm{{\bf S'_{2}}}{{\bf T'_{2}}} q$}
        \DisplayProof}\quad
        {\AxiomC{$\Gamma\specialvdashdef{{\bf S_{3}}}{{\bf T_{3}}} \neg p$}
        \RightLabel{\scriptsize{$\delta_{\mathcal{O}_{3}}$}}
       \UnaryInfC{$\Gamma\mid\ \specialvdashnorm{{\bf S'_{3}}}{{\bf T'_{3}}} \neg q$}
        \DisplayProof}
    \end{center}
    \smallskip
    stand for the (un)conditional obligations: `The manager ought to review the budget', `If the manager reviews the budget, then she ought to submit it for approval', and `If the manager does not review the budget, then she ought not submit it for approval', respectively \cite{Chis63} -- where 
    \smallskip
    \begin{itemize}
        \item[$(i)$] ${\bf T_{1}}=\{\{\top^{{\sf f}}\}\}$ and ${\bf S_{1}}=\varnothing$;
        \smallskip
        \item[$(ii)$] ${\bf T'_{2}}={\bf T_{2}}\cup\{\{p^{\sf f}\}\}$ and ${\bf S'_{2}}={\bf S_{2}}\cup\{\{\neg q^{\sf o}\}\}$;
        \smallskip
        \item[$(iii)$] ${\bf T'_{3}}={\bf T_{3}}\cup\{\{\neg p^{\sf f}\}\}$ and ${\bf S'_{3}}={\bf S_{3}}\cup\{\{\neg\neg q^{\sf o}\}\}$.
        %\{p\}$.
    \end{itemize}
    \smallskip
    %Notice that $p$ occurs in ${\bf S'_{3}}$: as a matter of fact, 
    There exists an unconditional obligation to $p$, and the latter is equivalent to the negation of the condition of the norm captured by the rule -- namely, $\neg p$. This corresponds to the scenario where there exists a primary obligation and a violation of the latter triggers a secondary obligation. The intuitive assessment is that the secondary obligation should hold, in spite of the fact that the primary obligation is violated.  

    To formalize such contrary-to-duty obligation, we require ${\sf O}$-labelled conclusions of extra-logical rules to be consistent with facts -- that is, extra-logical axioms in $\mathcal{W}$, conclusions of defaults applied in the derivations of ${\sf F}$-labelled premises and factual assumptions in the repository. For any derivation $\pi$ of an ${\sf O}$-labelled conclusion, we shall use $F_{\pi}$ to denote the conjunction of formulas in $\mathcal{W}$, $concl(def(\pi))$ and in the repository. Hence, we rewrite $\delta_{\mathcal{O}_{1}},\delta_{\mathcal{O}_{2}}$ and $\delta_{\mathcal{O}_{3}}$ according to the following conditions:
    \smallskip
    \begin{itemize}
        \item[$(i')$] ${\bf T_{1}}=\{\{\top^{{\sf f}}\}\}$ and ${\bf S_{1}}=\{\{\neg F_{\pi}^{\sf o}\}\}$;
        \smallskip
        \item[$(ii')$] ${\bf T'_{2}}={\bf T_{2}}\cup\{\{p^{\sf f}\}\}$ and ${\bf S'_{2}}={\bf S_{2}}\cup\{\{\neg q^{\sf o}\},\{\neg F_{\pi}^{\sf o}\}\}$;
        \smallskip
        \item[$(iii')$] ${\bf T'_{3}}={\bf T_{3}}\cup\{\{\neg p^{\sf f}\}\}$ and ${\bf S'_{3}}={\bf S_{3}}\cup\{\{\neg\neg q^{\sf o}\},\{\neg F_{\pi}^{\sf o}\}\}$
    \end{itemize}
    \smallskip
    Now, consider the following derivation:
\smallskip
\begin{center}
    {\AxiomC{$ $}
    \LeftLabel{\scriptsize{$\delta_{\mathcal{O}_{1}}$}}
    \UnaryInfC{$\varnothing\mid\ \specialvdashnorm{{\bf S_{1}}}{{\bf T_{1}}} p$}
    \AxiomC{$ $}
    \RightLabel{\scriptsize{$ax$}}
    \UnaryInfC{$\neg p\specialvdashdef{{\bf S_{3}}}{{\bf T_{3}}} \neg p$}
    \RightLabel{\scriptsize{$\delta_{\mathcal{O}_{3}}$}}
    \UnaryInfC{$\neg p\mid\ \specialvdashnorm{{\bf S'_{3}}}{{\bf T'_{3}}} \neg q$}
    \RightLabel{\scriptsize{$R\wedge^{\sf O}$}}
    \BinaryInfC{$\neg p\mid\ \specialvdashnorm{{\bf S}}{{\bf T}}\, p\wedge\neg q$}
    \DisplayProof}
\end{center}
\smallskip
\noindent Such derivation is a paraproof, whereas its immediate subderivations are proofs. Accordingly, if the manager does not review the budget, she ought not submit it -- even though the primary obligation to review it still holds under no factual assumption.
\remove{Consider the following application of a suitable inference rule $ctd$:
    \smallskip
    \begin{center}
        {\AxiomC{$ $}
        \LeftLabel{\scriptsize{$\delta_{\mathcal{O}_{1}}$}}
        \UnaryInfC{$\varnothing\mid\ \specialvdashnorm{{\bf S_{1}}}{{\bf T_{1}}} p$}
        \AxiomC{$ $}
        \RightLabel{\scriptsize{$ax$}}
        \UnaryInfC{$\neg p\specialvdashdef{{\bf S_{3}}}{{\bf T_{3}}} \neg p$}
        \RightLabel{\scriptsize{$\delta_{\mathcal{O}_{3}}$}}
        \UnaryInfC{$\neg p\mid\ \specialvdashnorm{{\bf S'_{3}}}{{\bf T'_{3}}} \neg q$}
        \RightLabel{\scriptsize{$ctd$}}
        \BinaryInfC{$\neg p\mid\ \specialvdashnorm{{\bf S}}{{\bf T}}p\wedge\neg q$}
        \DisplayProof}
    \end{center}
    \smallskip
    with ${\bf T_{3}}={\bf S_{3}}=\varnothing$, ${\bf T}={\bf T_{1}}\cup{\bf T'_{3}}$, ${\bf S}={\bf S_{1}}\cup{\bf S''_{3}}\cup\{\{p^{\sf o}\}\}$. Such derivation is a paraproof, whereas its immediate subderivations are proofs. Accordingly, if the manager does not review the budget, then she ought not submit it for approval -- in spite of the existence of an unconditional obligation to reviewing the budget.}
\end{example}

\begin{example}[Forrester's paradox]
    Let the extra-logical rules
    \smallskip
    \begin{center}
        {\AxiomC{$ $}
        \LeftLabel{\scriptsize{$\delta_{\mathcal{O}_{1}}$}}
        \UnaryInfC{$\Pi\mid\Gamma\specialvdashnorm{{\bf S_{1}}}{{\bf T_{1}}}\neg p$}
        \DisplayProof}\quad
        {\AxiomC{$\Gamma\specialvdashdef{{\bf S_{2}}}{{\bf T_{2}}}p$}
        \RightLabel{\scriptsize{$\delta_{\mathcal{O}_{2}}$}}
        \UnaryInfC{$\Gamma\mid\ \specialvdashnorm{{\bf S'_{2}}}{{\bf T'_{2}}}q$}
        \DisplayProof}
    \end{center}
    \smallskip
    stand for the (un)conditional obligations `One ought not kill' and `If one kills someone, she ought to kill her gently', respectively \cite{For84} -- where 
    \smallskip
    \begin{itemize}
        \item[$(i)$] ${\bf T_{1}}=\{\{\top^{\sf f}\}\}$ and ${\bf S_{1}}=\{\{\neg\neg p^{o}\},\{\neg F_{\pi}^{\sf o}\}\}$;
        \smallskip
        \item[$(ii)$] ${\bf T'_{2}}={\bf T_{2}}\cup\{\{p^{\sf f}\}\}$ and ${\bf S'_{2}}={\bf S_{2}}\cup\{\{\neg q^{o}\},\{\neg F_{\pi}^{\sf o}\}\}$.
    \end{itemize}
    \smallskip
    Consider the following derivation:
    \smallskip
    \begin{center}
        {\AxiomC{$ $}
        \LeftLabel{\scriptsize{$\delta_{\mathcal{O}_{1}}$}}
        \UnaryInfC{$\varnothing\mid\ \specialvdashnorm{{\bf S_{1}}}{{\bf T_{1}}}\neg p$}
        \AxiomC{$ $}
        \RightLabel{\scriptsize{$ax$}}
        \UnaryInfC{$p\specialvdashdef{{\bf S_{2}}}{{\bf T_{2}}}p$}
        \RightLabel{\scriptsize{$\delta_{\mathcal{O}_{2}}$}}
        \UnaryInfC{$p\mid\ \specialvdashnorm{{\bf S'_{2}}}{{\bf T'_{2}}}q$}
        \RightLabel{\scriptsize{$R\wedge^{\sf O}$}}
        \BinaryInfC{$p\mid\ \specialvdashnorm{{\bf S}}{{\bf T}}\neg p\wedge q$}
        \DisplayProof}
    \end{center}
    \smallskip
    with ${\bf T_{2}}={\bf S_{2}}=\varnothing$, ${\bf T}={\bf T'_{1}}\cup{\bf T'_{2}}$, ${\bf S}={\bf S_{1}}\cup{\bf S'_{2}}\cup\{\{\neg p^{\sf o}\}\}$. Such derivation is a paraproof, whereas its immediate subderivations are proofs. This corresponds to the fact that if one kills someone, she ought to kill them gently -- in spite of the existence of an unconditional obligation not to kill anyone under the empty set of factual assumptions.
\end{example}

\remove{\begin{example}[Pragmatic Oddity paradox]
    Let the extra-logical axiom and rules
    \smallskip
    \begin{center}
        {\AxiomC{$ $}
        \LeftLabel{\scriptsize{$ax$}}
        \UnaryInfC{$\specialvdashdef{\varnothing}{\varnothing}p$}
        \DisplayProof}\quad
        {\AxiomC{$ $}
        \RightLabel{\scriptsize{$\delta_{\mathcal{O}_{1}}$}}
        \UnaryInfC{$\Pi\mid\Gamma\specialvdashnorm{{\bf S_{1}}}{{\bf T_{1}}}\neg p$}
        \DisplayProof}\quad
        {\AxiomC{$\Gamma\specialvdashdef{{\bf S_{2}}}{{\bf T_{2}}}p$}
        \RightLabel{\scriptsize{$\delta_{\mathcal{O}_{2}}$}}
        \UnaryInfC{$\Gamma\mid\Pi\specialvdashnorm{{\bf S_{2}'}}{{\bf T_{2}'}}q$}
        \DisplayProof}
    \end{center}
    stand for the factual statement `There is a dog', the (un)conditional obligations `There should be no dog' and `If there is a dog, there ought to be a warning sign', respectively \cite{PS96} -- where
    \smallskip
    \begin{itemize}
        \item[$(i)$] 
        \smallskip
        \item[$(ii)$]
    \end{itemize}
    \smallskip
\end{example}}

\begin{example}[Specificity principle]
    Let the extra-logical rules
    \smallskip
    \begin{center}
        {\AxiomC{$\Gamma\specialvdashdef{{\bf S_{1}}}{{\bf T_{1}}}p $}
        \LeftLabel{\scriptsize{$\delta_{\mathcal{O}_{1}}$}}
        \UnaryInfC{$\Gamma\mid\ \specialvdashnorm{{\bf S'_{1}}}{{\bf T'_{1}}}\neg q$}
        \DisplayProof}\quad
        {\AxiomC{$\Gamma\specialvdashdef{{\bf S_{2}}}{{\bf T_{2}}}p$}
        \AxiomC{$\Gamma\specialvdashdef{{\bf S'_{2}}}{{\bf T'_{2}}}r$}
        \RightLabel{\scriptsize{$\delta_{\mathcal{O}_{2}}$}}
        \BinaryInfC{$\Gamma\mid\ \specialvdashnorm{{\bf S''_{2}}}{{\bf T''_{2}}}q$}
        \DisplayProof}
    \end{center}
    \smallskip
    stand for the conditional obligations `If Fido is a dog, you ought not kill Fido' and `If Fido is a dog and Fido is attacking a child, you ought to kill Fido', respectively -- where
    \smallskip
    \begin{itemize}
        \item[$(i)$] ${\bf T'_{1}}={\bf T_{1}}\cup\{\{p^{\sf f}\}\}$ and ${\bf S'_{1}}={\bf S_{1}}\cup\{\{\neg\neg q^{\sf o}\}\}$;
        \smallskip
        \item[$(ii)$] ${\bf T''_{2}}={\bf T_{2}}\cup{\bf T'_{2}}\cup\{\{p^{\sf f},r^{\sf f}\}\}$ and ${\bf S''_{2}}={\bf S_{2}}\cup{\bf S'_{2}}\cup\{\{\neg q^{\sf o}\}\}$.
    \end{itemize}
    \smallskip
    If Fido is a dog and is gravely attacking a child, one ought to kill him -- an obligation that overrides the less specific obligation not to kill him merely because he is a dog \cite{vdTT95specific}.

  To formalize scenarios where the specificity principle applies, we compare obligations pairwise and require that the conditions of more specific obligations appear as factual constraints in less specific ones. Thus, we rewrite  $\delta_{\mathcal{O}_{1}}$ under the following condition:
    \smallskip
    \begin{itemize}
        \item[$(i')$] ${\bf T'_{1}}={\bf T_{1}}\cup\{\{p^{\sf f}\}\}$ and ${\bf S'_{1}}={\bf S_{1}}\cup\{\{\neg\neg q^{\sf o}\},\{r\wedge p^{\sf f}\}\}$.
    \end{itemize}
    \smallskip
    Now, consider the following derivation:
    \smallskip
    \begin{center}
        {\AxiomC{$ $}
        \RightLabel{\scriptsize{$ax$}}
        \UnaryInfC{$p\specialvdashdef{\varnothing}{\varnothing}p$}
        \RightLabel{\scriptsize{$LW$}}
        \UnaryInfC{$r,p\specialvdashdef{\varnothing}{\varnothing}p$}
        \RightLabel{\scriptsize{$\delta_{\mathcal{O}_{1}}$}}
        \UnaryInfC{$r,p\mid\ \specialvdashnorm{{\bf U}}{{\bf V}}\neg q$}
        \DisplayProof}
    \end{center}
    \smallskip
    with ${\bf V}=\{\{p^{\sf f}\}\}$ and ${\bf U}=\{\{\neg\neg q^{\sf o}\},\{r\wedge p^{\sf f}\}\}$. Such a derivation constitutes a paraproof. This matches our intuitive assessment: if Fido is a dog and is posing extreme danger to a child, one ought to kill him, even though there is a weaker obligation not to kill him under the mere condition that he is a dog.
    \remove{Consider the following application of a suitable inference rule $(\delta_{\mathcal{O}_{1}})^{sp}$:
    \smallskip
    \begin{center}
        {\AxiomC{$ $}
        \RightLabel{\scriptsize{$ax$}}
        \UnaryInfC{$p\specialvdashdef{\varnothing}{\varnothing}p$}
        \RightLabel{\scriptsize{$LW$}}
        \UnaryInfC{$r,p\specialvdashdef{\varnothing}{\varnothing}p$}
        \RightLabel{\scriptsize{$(\delta_{\mathcal{O}_{1}})^{sp}$}}
        \UnaryInfC{$r,p\mid\ \specialvdashnorm{{\bf U}}{{\bf V}}\neg q$}
        \DisplayProof}
    \end{center}
    \smallskip
    with ${\bf V}=\{\{p^{\sf f}\}\}$ and ${\bf U}=\{\{\neg\neg q^{\sf o}\},\{r^{\sf f}\}\}$. Such derivation is a paraproof. This corresponds to the fact that if Fido is a dog and is attacking a child, one ought to kill him -- and this obligation overrides the obligation not to kill him under the weaker condition that he is just a dog.}
\end{example}

\begin{example}[Extended Forrester's paradox]
    Let the extra-logical rules
    \smallskip
    \begin{center}
        {\AxiomC{$ $}
        \LeftLabel{\scriptsize{$\delta_{\mathcal{O}_{1}}$}}
        \UnaryInfC{$\Pi\mid\Gamma\specialvdashnorm{{\bf S_{1}}}{{\bf T_{1}}}\neg p$}
        \DisplayProof}\quad
        {\AxiomC{$\Gamma\specialvdashdef{{\bf S_{2}}}{{\bf T_{2}}}p$}
        \RightLabel{\scriptsize{$\delta_{\mathcal{O}_{2}}'$}}
        \UnaryInfC{$\Gamma\mid\ \specialvdashnorm{{\bf S'_{2}}}{{\bf T'_{2}}}p$}
        \DisplayProof}\quad
        {\AxiomC{$\Gamma\specialvdashdef{{\bf S_{2}}}{{\bf T_{2}}}p$}
        \RightLabel{\scriptsize{$\delta_{\mathcal{O}_{2}}''$}}
        \UnaryInfC{$\Gamma\mid\ \specialvdashnorm{{\bf S'_{2}}}{{\bf T'_{2}}}q$}
        \DisplayProof}
    \end{center}
    \smallskip
    \begin{center}
         {\AxiomC{$\Gamma\specialvdashdef{{\bf S_{3}}}{{\bf T_{3}}}r$}
        \RightLabel{\scriptsize{$\delta_{\mathcal{O}_{3}}'$}}
        \UnaryInfC{$\Gamma\mid\ \specialvdashnorm{{\bf S'_{3}}}{{\bf T'_{3}}}p$}
        \DisplayProof}\quad
        {\AxiomC{$\Gamma\specialvdashdef{{\bf S_{3}}}{{\bf T_{3}}}r$}
        \RightLabel{\scriptsize{$\delta_{\mathcal{O}_{3}}''$}}
        \UnaryInfC{$\Gamma\mid\ \specialvdashnorm{{\bf S'_{3}}}{{\bf T'_{3}}}q$}
        \DisplayProof}
    \end{center}
    \smallskip
    stand for the (un)conditional obligations `There ought not be a fence around your house', `If there is a fence around your house, there ought be a white fence around your house' and `If you have a dog, there ought be a white fence around your house', respectively \cite{vdTT97} -- where 
    \smallskip
    \begin{itemize}
        \item[$(i)$] ${\bf T_{1}}=\{\{\top^{\sf f}\}\}$ and ${\bf S_{1}}=\{\{\neg\neg p^{\sf o}\},\{r^{\sf f}\}\}$;
        \smallskip
        \item[$(ii)$] ${\bf T'_{2}}={\bf T_{2}}\cup\{\{p^{\sf f}\}\}$, ${\bf S'_{2}}={\bf S_{2}}\cup\{\{\neg (p\wedge q)^{\sf o}\},\{\neg F_{\pi}^{\sf o}\}\}$;
        \smallskip
        \item[$(iii)$] ${\bf T'_{3}}={\bf T_{3}}\cup\{\{r^{\sf f}\}\}$, ${\bf S'_{3}}={\bf S_{3}}\cup\{\{\neg (p\wedge q)^{\sf o}\},\{\neg F_{\pi}^{\sf o}\}\}$.
    \end{itemize}
    \smallskip
    Consider the following derivations:
    \smallskip
    \begin{center}
         {\AxiomC{$ $}
        \LeftLabel{\scriptsize{$\delta_{\mathcal{O}_{1}}$}}
        \UnaryInfC{$r\mid\ \specialvdashnorm{{\bf S'_{1}}}{{\bf T_{1}}}\neg p$}
        \DisplayProof}\quad
        {\AxiomC{$ $}
        \UnaryInfC{$\varnothing\mid\ \specialvdashnorm{{\bf S_{1}}}{{\bf T_{1}}}\neg p$}
        \AxiomC{$ $}
        \RightLabel{\scriptsize{$ax$}}
        \UnaryInfC{$p\specialvdashdef{\varnothing}{\varnothing}p$}
        \RightLabel{\scriptsize{$\delta_{\mathcal{O}_{2}}'$}}
        \UnaryInfC{$p\mid\ \specialvdashnorm{{\bf U_{1}}}{{\bf V_{1}}}p$}
        \AxiomC{$ $}
        \RightLabel{\scriptsize{$ax$}}
        \UnaryInfC{$p\specialvdashdef{\varnothing}{\varnothing}p$}
        \RightLabel{\scriptsize{$\delta_{\mathcal{O}_{2}}''$}}
        \UnaryInfC{$p\mid\ \specialvdashnorm{{\bf U_{1}}}{{\bf V_{1}}}q $}
        \BinaryInfC{$p\mid\ \specialvdashnorm{{\bf U_{1}}}{{\bf V_{1}}}p\wedge q$}
        \RightLabel{\scriptsize{$R\wedge^{\sf O}$}}
        \BinaryInfC{$p\mid\ \specialvdashnorm{{\bf U}}{{\bf V}}\neg p\wedge(p\wedge q) $}
        \DisplayProof}
    \end{center}
    \smallskip
    where ${\bf V_{1}}=\{\{p^{\sf f}\}\}$, ${\bf U_{1}}=\{\{\neg (p\wedge q)^{\sf o}\},\{\neg p^{\sf o}\}\}$, ${\bf V}={\bf T_{1}}\cup{\bf V_{1}}$ and ${\bf U}={\bf S_{1}}\cup{\bf U_{1}}$. It is easy to verify that both derivations are paraproofs, and that both subderivations in the rightmost derivation are proofs. This corresponds to the fact that (a) the unconditional prohibition of having a fence around your house is overridden by the obligation of having a fence in case you possess a dog, and that (b) when the unconditional prohibition of having a fence around your house is violated, a secondary obligation is triggered for you to have a white one.
\end{example}

\remove{\begin{example}[Sartre's dilemma]
    Let the extra-logical rules
    \smallskip
    \begin{center}
        {\AxiomC{$\Gamma\specialvdashdef{{\bf S_{1}}}{{\bf T_{1}}}p$}
        \LeftLabel{\scriptsize{$\delta_{\mathcal{O}_{1}}$}}
        \UnaryInfC{$\Gamma\mid\ \specialvdashnorm{{\bf S'_{1}}}{{\bf T'_{1}}}q$}
        \DisplayProof}\quad
        {\AxiomC{$\Gamma\specialvdashdef{{\bf S_{2}}}{{\bf T_{2}}} r$}
        \RightLabel{\scriptsize{$\delta_{\mathcal{O}_{2}}$}}
        \UnaryInfC{$\Gamma\mid\ \specialvdashnorm{{\bf S'_{2}}}{{\bf T'_{2}}} \neg q$}
        \DisplayProof}
    \end{center}
    \smallskip
    stand for the conditional obligations: `Any man ought to stay home whenever his mother is ill' and `Any man ought not stay home whenever his country is attacked', respectively \cite{Sartre46} -- where 
    \smallskip
    \begin{itemize}
        \item[$(i)$] ${\bf T'_{1}}={\bf T_{1}}\cup\{\{p^{\sf f}\}\}$ and ${\bf S'_{1}}={\bf S_{1}}\cup\{\{\neg q^{\sf o}\}\}$;
        \smallskip
        \item[$(ii)$] ${\bf T'_{2}}={\bf T_{2}}\cup\{\{r^{\sf f}\}\}$ and ${\bf S'_{2}}={\bf S_{2}}\cup\{\{\neg\neg q^{\sf o}\}\}$. 
    \end{itemize}
    \smallskip
    Consider the following derivation:
    \smallskip
    \begin{center}
        {\AxiomC{$ $}
        \LeftLabel{\scriptsize{$ax$}}
        \UnaryInfC{$p\specialvdashdef{{\bf S_{1}}}{{\bf T_{1}}} p$}
        \LeftLabel{\scriptsize{$\delta_{\mathcal{O}_{1}}$}}
        \UnaryInfC{$p\mid\ \specialvdashnorm{{\bf S'_{1}}}{{\bf T'_{1}}}q$}
        \AxiomC{$ $}
        \RightLabel{\scriptsize{$ax$}}
        \UnaryInfC{$r\specialvdashdef{{\bf S_{2}}}{{\bf T_{2}}} r$}
        \RightLabel{\scriptsize{$\delta_{\mathcal{O}_{2}}$}}
        \UnaryInfC{$r\mid\ \specialvdashnorm{{\bf S'_{2}}}{{\bf T'_{2}}} \neg q$}
        \RightLabel{\scriptsize{$R\wedge^{\sf O}$}}
        \BinaryInfC{$p,r\mid\ \specialvdashnorm{{\bf S}}{{\bf T}}q\wedge \neg q$}
        \DisplayProof}
    \end{center}
    \smallskip
    with ${\bf T_{2}}={\bf S_{2}}=\varnothing$, ${\bf T}={\bf T'_{1}}\cup{\bf T'_{2}}$ and ${\bf S}={\bf S'_{1}}\cup{\bf S'_{2}}$. Such derivation is a paraproof. This corresponds to the fact that no man ought and ought not stay home whenever his mother is ill and his country is attacked. 
\end{example}}

\begin{example}[Euthyphro's dilemma]\label{euthyphro}
    Euthyphro is prosecuting his father for murder, despite the traditional belief that it is impious to prosecute or dishonor a parent. Let the extra-logical axiom and rules 
    \smallskip
    \begin{center}
        {\AxiomC{$ $}
        \LeftLabel{\scriptsize{$ax$}}
        \UnaryInfC{$\specialvdashdef{\varnothing}{\varnothing}\neg p,q$}
        \DisplayProof}\quad
        {\AxiomC{$ $}
        \LeftLabel{\scriptsize{$\delta_{\mathcal{O}_{1}}$}}\UnaryInfC{$\Pi\mid\Gamma\specialvdashnorm{{\bf S_{1}}}{{\bf T_{1}}}p$}
        \DisplayProof}\quad
        {\AxiomC{$ $}
        \RightLabel{\scriptsize{$\delta_{\mathcal{O}_{2}}$}}
        \UnaryInfC{$\Pi\mid\Gamma\specialvdashnorm{{\bf S_{2}}}{{\bf T_{2}}}\neg q$}
        \DisplayProof}
    \end{center}
    \smallskip
    stand for the factual statement `If one prosecutes her father for murder, one dishonors him' and the unconditional obligations: `One ought to prosecute her father for murder' and `One ought not to dishonor her father', respectively -- where ${\bf T_{1}}={\bf T_{2}}=\{\{\top^{\sf f}\}\}$, ${\bf S_{1}}=\{\{\neg p^{\sf o}\}\}$ and ${\bf S_{2}}=\{\{\neg\neg q^{\sf o}\}\}$. 

    To formalize such dilemmatic scenario, we require ${\sf O}$-labelled conclusions of extra-logical rules to be compatible under the background of factual extra-logical axioms. Hence, we rewrite $\delta_{\mathcal{O}_{1}}$ and $\delta_{\mathcal{O}_{2}}$ so that ${\bf S_{1}}=\{\{\neg p^{\sf o}\},\{\neg F_{\pi}^{\sf o}\}\}$ and ${\bf S_{2}}=\{\{\neg\neg q^{\sf o}\},\{\neg F_{\pi}^{\sf o}\}\}$. Now, consider the following derivation:
    \smallskip
    \begin{center}
        {\AxiomC{$ $}
        \LeftLabel{\scriptsize{$\delta_{\mathcal{O}_{1}}$}}
        \UnaryInfC{$\varnothing\mid\ \specialvdashnorm{{\bf S_{1}}}{{\bf T_{1}}}p$}
        \AxiomC{$ $}
        \RightLabel{\scriptsize{$\delta_{\mathcal{O}_{2}}$}}
        \UnaryInfC{$\varnothing\mid\ \specialvdashnorm{{\bf S_{2}}}{{\bf T_{2}}}\neg q$}
        \RightLabel{\scriptsize{$R\wedge^{\sf O}$}}
        \BinaryInfC{$\varnothing\mid\ \specialvdashnorm{{\bf S}}{{\bf T}}p\wedge\neg q$}
        \DisplayProof}
    \end{center}
    \smallskip
    with ${\bf T}={\bf T_{1}}\cup{\bf T_{2}}$ and ${\bf S}={\bf S_{1}}\cup{\bf S_{2}}$. Such derivation is a paraproof. This corresponds to the fact that Eutyphro cannot be obliged to prosecute his father and not to dishonor him at the same time.
\end{example}

\remove{\begin{example}[Disjunctive response]
    Let the extra-logical rules
    \smallskip
    \begin{center}
        {\AxiomC{$ $}
        \LeftLabel{\scriptsize{$\delta_{\mathcal{O}_{1}}$}}\UnaryInfC{$\Pi\mid\Gamma\specialvdashnorm{{\bf S_{1}}}{{\bf T_{1}}}p$}
        \DisplayProof}\quad
        {\AxiomC{$ $}
        \RightLabel{\scriptsize{$\delta_{\mathcal{O}_{2}}$}}
        \UnaryInfC{$\Pi\mid\Gamma\specialvdashnorm{{\bf S_{2}}}{{\bf T_{2}}}\neg q$}
        \DisplayProof}
    \end{center}
    \smallskip
    be as in Example \ref{euthyphro}. Consider the following derivations:
    \smallskip
    \begin{center}
        {\AxiomC{$ $}
        \LeftLabel{\scriptsize{$\delta_{\mathcal{O}_{1}}$}}\UnaryInfC{$\Pi\mid\ \specialvdashnorm{{\bf S_{1}}}{{\bf T_{1}}}p$}
        \LeftLabel{\scriptsize{$RW^{\sf O}$}}
        \UnaryInfC{$\Pi\mid\ \specialvdashnorm{{\bf S_{1}}}{{\bf T_{1}}}p,\neg q$}
        \LeftLabel{\scriptsize{$R\vee^{\sf O}$}}
        \UnaryInfC{$\Pi\mid\ \specialvdashnorm{{\bf S_{1}}}{{\bf T_{1}}}p\vee\neg q$}
        \DisplayProof}\quad
        {\AxiomC{$ $}
        \RightLabel{\scriptsize{$\delta_{\mathcal{O}_{2}}$}}\UnaryInfC{$\Pi\mid\ \specialvdashnorm{{\bf S_{2}}}{{\bf T_{2}}}\neg q$}
        \RightLabel{\scriptsize{$RW^{\sf O}$}}
        \UnaryInfC{$\Pi\mid\ \specialvdashnorm{{\bf S_{2}}}{{\bf T_{2}}}\neg q,p$}
        \RightLabel{\scriptsize{$R\vee^{\sf O}$}}
        \UnaryInfC{$\Pi\mid\ \specialvdashnorm{{\bf S_{2}}}{{\bf T_{2}}}\neg q\vee p$}
        \DisplayProof}
    \end{center}
    \smallskip
    These derivations are proofs. This corresponds to the fact that if one ought to prosecute her father and ought not dishonor her father, then one ought to either prosecute or not dishonor her father \cite{Goble13}.
\end{example}}

There are noteworthy deontic notions that emerge from the non-trivial interaction among facts, (un)conditioned (constrained) obligations, and permissions. In the following examples, we approach these non-primitive deontic notions through the lens of controlled calculi. In certain cases, we allow control pairs to {\em decrease} in size along derivations, contrary to what happens under standard extra-logical rules in $\Gastc$ calculi.

%which cannot be formalized {\color{blue}e.g.} in the standard formulation of constrained I/O logic.

\begin{example}[Obligations with exceptions]
Let the extra-logical rule
\smallskip
\begin{center}
    {\AxiomC{$\Gamma\specialvdashdef{{\bf S}}{{\bf T}}p$}
    \RightLabel{\scriptsize{$\delta_{\mathcal{O}'}$}}
    \UnaryInfC{$\Gamma\mid\ \specialvdashnorm{{\bf S'}}{{\bf T'}}\neg q$}
    \DisplayProof}
\end{center}
\smallskip
stand for the conditional obligation `If one is served a meal, one ought not eat with fingers -- provided the meal is not asparagus' \cite{Horty97}, where ${\bf T'}={\bf T}\cup\{\{p^{\sf f}\}\}$ and ${\bf S'}={\bf S}\cup\{\{\neg\neg q^{\sf o}\},\{\neg r^{\sf f}\}\}$. Consider the following derivation:
\smallskip
\begin{center}
    {\AxiomC{$ $}
    \RightLabel{\scriptsize{$ax$}}
    \UnaryInfC{$p\specialvdashdef{\varnothing}{\varnothing}p$}
    \RightLabel{\scriptsize{$LW$}}
    \UnaryInfC{$r,p\specialvdashdef{\varnothing}{\varnothing}p$}
    \RightLabel{\scriptsize{$\delta_{\mathcal{O}'}$}}
    \UnaryInfC{$r,p\mid\ \specialvdashnorm{{\bf U}}{{\bf V}}\neg q$}
    \DisplayProof}
\end{center}
\smallskip
with ${\bf V}=\{\{p^{\sf f}\}\}$ and ${\bf U}=\{\{\neg\neg q^{\sf o}\},\{\neg r^{\sf f}\}\}$. Such derivation is a paraproof. This corresponds to the fact that if one is served asparagus, the obligation not to eat with fingers ceases to apply.
\end{example}

\begin{example}[Violations and sanctions]
    Let the extra-logical rule
    \smallskip
    \begin{center}
        {\AxiomC{$ $}
        \LeftLabel{\scriptsize{$\delta_{\mathcal{O}_{1}}$}}
        \UnaryInfC{$\Pi\mid\ \specialvdashnorm{{\bf S_{1}}}{{\bf T_{1}}}\neg p$}
        \DisplayProof}\quad
        {\AxiomC{$\Pi\specialvdashdef{{\bf S'_{2}}}{{\bf T'_{2}}}p$}
        \AxiomC{$\Pi\mid\Gamma\specialvdashnorm{{\bf S''_{2}}}{{\bf T''_{2}}}\neg p $}
        \RightLabel{\scriptsize{$\delta_{\mathcal{O}_{2}}$}}
        \BinaryInfC{$\Pi\mid\Gamma\specialvdashnorm{{\bf S_{2}}}{{\bf T_{2}}}q$}
        \DisplayProof}
    \end{center}
    \smallskip
    stand for `One ought not double-park' and `If one double-parks and she ought not double-park, she ought to pay a fine', respectively -- where
    \smallskip
    \begin{itemize}
        \item[$(i)$] ${\bf T_{1}}=\{\{\top^{\sf f}\}\}$ and ${\bf S_{1}}=\varnothing$;
        \smallskip
        \item[$(ii)$]  ${\bf T_{2}}={\bf T'_{2}}\cup{\bf T''_{2}}\cup\{\{p^{\sf f}\},\{\neg p^{\sf o}\}\}$ and ${\bf S_{2}}={\bf S'_{2}}\cup{\bf S''_{2}}\cup\{\{\neg q^{\sf o}\}\}$.
    \end{itemize}
    \smallskip
    Consider the following derivation:
    \smallskip
    \begin{center}
        {\AxiomC{$\vdots$}
        \noLine
        \UnaryInfC{$\Pi\specialvdashdef{{\bf S'_{2}}}{{\bf T'_{2}}}p$}
        \AxiomC{$ $}
        \RightLabel{\scriptsize{$\delta_{\mathcal{O}_{1}}$}}
        \UnaryInfC{$\Pi\mid\ \specialvdashnorm{{\bf S_{1}}}{{\bf T_{1}}}\neg p$}
        \RightLabel{\scriptsize{$\delta_{\mathcal{O}_{2}}$}}
        \BinaryInfC{$\Pi\mid\Gamma\specialvdashnorm{{\bf S}}{{\bf T}}q$}
        \DisplayProof}
    \end{center}
    \smallskip
    with ${\bf T}={\bf T'_{2}}\cup{\bf T_{1}}\cup\{\{p^{\sf f}\},\{\neg p^{\sf o}\}\}$ and ${\bf S}={\bf S'_{2}}\cup{\bf S_{1}}\cup\{\{\neg q^{\sf o}\}\}$. Such derivation is a proof. This corresponds to the fact that if the obligation of not double-parking is actual, then the obligation of paying a fine is actual.
\end{example}

\begin{example}[Guarded free choice permission]
    Let the extra-logical axiom and rule
    \smallskip
    \begin{center}
    {\AxiomC{$ $}
        \LeftLabel{\scriptsize{$ax^{\sf P}$}}
        \UnaryInfC{$\Pi\mid\ \specialvdashperm{\varnothing}{\varnothing}p,q,\neg r$}
        \DisplayProof}\quad
     {\AxiomC{$ $}
        \RightLabel{\scriptsize{$\delta_{\mathcal{O}'}$}}
        \UnaryInfC{$\Pi\mid\Gamma\specialvdashnorm{{\bf S_{1}}}{{\bf T_{1}}}r$}
        \DisplayProof}
    \end{center}
    \smallskip
    stand for the unconditional permission `One is permitted to work or relax or not pay the bill' and the unconditional obligation `One ought to pay the bill', respectively -- where ${\bf T_{1}}=\{\{\top^{\sf f}\}\}$ and ${\bf S_{1}}=\{\{\neg r^{\sf o}\}\}$.
   
    The free choice principle allows to infer that a formula $A_{i}$ is permitted whenever $A_{1}\vee\cdots\vee A_{m}$ is permitted, for any $1\leq i\leq m$ \cite[p. 21]{wrightbook}. Even when limiting ourselves to consider disjunctions obtained without any application of $RW$, unrestricted application of the free choice principle leads to permission explosion \cite{AB05}. In our scenario, the free choice principle ensures the existence of a permission not to pay the bill, whereas the obligation to pay the bill entails the permission to pay the bill (since $\mathcal{O}\subseteq\mathcal{P}$ in the underlying normative system). 

    The unrestricted free choice principle over unconstrained permissions is captured by closing ${\sf P}$-labelled extra-logical axioms under the rule
    \smallskip
    \begin{center}
    {\AxiomC{$\Pi\mid\ \specialvdashperm{\varnothing}{\varnothing}A_{1},\ldots,A_{m}$}
        \RightLabel{\scriptsize{$fcp$}}
        \UnaryInfC{$\Pi\mid\ \specialvdashperm{\varnothing}{\varnothing}A_{i}$}
        \DisplayProof}
    \end{center}
    \smallskip
    To block undesired permissions, we close ${\sf P}$-labelled extra-logical axioms under the rule
    \smallskip
    \begin{center}
    {\AxiomC{$\Pi\mid\ \specialvdashperm{\varnothing}{\varnothing}A_{1},\ldots,A_{m}$}
        \RightLabel{\scriptsize{$fcp$}}
        \UnaryInfC{$\Pi\mid\ \specialvdashperm{{\bf S}}{\varnothing}A_{i}$}
        \DisplayProof}
    \end{center}
    \smallskip
    where ${\bf S}=\{\{\neg O_{u},\neg O_{c}\}\}$, with $O_{u}$ being the conjunction of all formulas in $\mathcal{W}^{\sf o}$ and $O_{c}$ being the conjunction of all formulas in $concl(\mathcal{O})$. 
\end{example}

\begin{example}[Permissions as exceptions]
Let the extra-logical rule
\smallskip
\begin{center}
    {\AxiomC{$ $}
    \RightLabel{\scriptsize{$\delta_{\mathcal{O}'}$}}
    \UnaryInfC{$\Pi\mid\Gamma\specialvdashnorm{{\bf S}}{{\bf T}}\neg p$}
    \DisplayProof}
\end{center}
\smallskip
stand for the unconditional obligation `One ought not kill herself' -- where ${\bf T}=\{\{\top^{\sf f}\}\}$ and ${\bf S}=\{\{\neg\neg p^{\sf o}\}\}$. The Torah contains such obligation as a commandment (cf. Genesis 9:5). However, some Rabbinic scholars state that suicide (or surrender to death) is permissible when one is threatened with conversion (as in the case of King Saul in 2 Samuel, 1:5-10). Let the extra-logical rule
\smallskip
\begin{center}
    {\AxiomC{$ $}
    \RightLabel{\scriptsize{$\delta_{\mathcal{O}'}'$}}
    \UnaryInfC{$\Pi\mid\Gamma\specialvdashnorm{{\bf S'}}{{\bf T}}\neg p$}
    \DisplayProof}
\end{center}
\smallskip
stand for the unconditional obligation with exceptions `One ought not kill herself, provided that she is not threatened with conversion' -- where ${\bf S'}={\bf S}\cup\{\{q^{\sf f}\}\}$. To infer the positive permission to kill oneself, we can apply the following rule $pe$
\smallskip
\begin{center}
    {\AxiomC{$ $}
    \RightLabel{\scriptsize{$\delta_{\mathcal{O}'}'$}}
    \UnaryInfC{$\varnothing\mid\ \specialvdashnorm{{\bf S'}}{{\bf T}}\neg p$}
    \RightLabel{\scriptsize{$pe_{\mathcal{W}^{\sf p\star}}$}}
    \UnaryInfC{$q\mid\ \specialvdashperm{{\bf S}}{{\bf T}}p$}
    \DisplayProof}
\end{center}
\smallskip
with $\mathcal{W}^{\sf p\star}=\Big\{\dfrac{\top:\top}{\neg p}\Big\}$.
\remove{{\color{red}da riscrivere} {\color{blue}
    The Torah contains the commandment `One ought not kill herself' (cf. Genesis 9:5). Even the commandment to observe the Sabbath is overridden by the commandment of saving human life. Rav Tam and Rav Ritba state that suicide (or surrender to death) is permissible when one is threatened with conversion (as in the case of King Saul in 2 Samuel, 1:5-10), so that suicide may be justified for the sake of {\em kiddush ha-Shem}, the sanctification of God's name. Rav Meln, by contrast, maintains that a woman may convert to Christianity to save her life, provided that she continues to regard herself as Jewish.  Finally, during the German occupation of the Warsaw Ghetto, Rav Nissenbaum declared that suicide (or surrender to death) is impermissible even when one is compelled to desecrate the holy books, invoking instead the principle of {\em kiddush ha-Hayyim}, the sanctification of life.}}
\end{example}

\begin{example}[Dynamic positive permissions] 
The notion of dynamic positive permission is introduced in \cite{MvT03}: $B$ is dynamically permitted under a condition $A$ if prohibiting $B$ in the context of $A$ would  block the exercise of some explicit (static) permission, thereby generating incoherence. \remove{Consider e.g. the following scenario \cite{Stolpe2010b}:
\smallskip
\begin{quote}
    Freedom of expression [...] is recognized as a human right under Article 19 of the Universal Declaration of Human Rights [...]. An example that comes to mind is the Jyllands-Posten incident of 2005, when Muslim organizations led a complaint with the Danish police, following the publication of twelve cartoons depicting the Islamic prophet Mohammad. The investigation was discontinued by the Regional Prosecutor in Viborg, who concluded that Jyllands-Posten must be reckoned protected by the freedom of expression. [...] One may say, therefore, that the printing of the cartoons was deemed [dynamically] permitted by the Danish authorities.
\end{quote}
\smallskip}
Let the extra-logical rules
\smallskip
\begin{center}
    {\AxiomC{$ $}
    \LeftLabel{\scriptsize{$\delta_{\mathcal{P}_{1}}$}}
    \UnaryInfC{$\Pi\mid\Gamma\specialvdashperm{{\bf S_{1}}}{{\bf T_{1}}}p$}
    \DisplayProof}\quad
    {\AxiomC{$\Pi\mid\Gamma\specialvdashnorm{{\bf S_{2}}}{{\bf T_{2}}}\neg q$}
    \RightLabel{\scriptsize{$\delta_{\mathcal{O}_{1}}$}}
    \UnaryInfC{$\Pi\mid\Gamma\specialvdashnorm{{\bf S'_{2}}}{{\bf T'_{2}}}\neg p$}
    \DisplayProof}
\end{center}
\smallskip
stand for the unconditional permission `One is permitted to express herself freely' and the conditional obligation `If one ought not print cartoons depicting Mohammed, then one ought not express herself freely', respectively -- where
\smallskip
\begin{itemize}
    \item[$(i)$] ${\bf T_{1}}=\{\{\top^{\sf f}\}\}$ and ${\bf S_{1}}=\{\{\neg p^{\sf p}\}\}$;
    \smallskip
    \item[$(ii)$] ${\bf T'_{2}}={\bf T_{2}}\cup\{\{\neg q^{\sf o}\}\}$ and ${\bf S'_{2}}={\bf S_{2}}\cup\{\{\neg\neg p^{\sf o}\}\}$.
\end{itemize}
\smallskip
Suppose one adds to $\Gastc$ the extra-logical rule
\smallskip
\begin{center}
    {\AxiomC{$ $}
    \RightLabel{\scriptsize{$\delta_{\mathcal{O}_{2}}$}}
    \UnaryInfC{$\Pi\mid\Gamma\specialvdashnorm{{\bf U}}{{\bf V}}\neg q$}
    \DisplayProof}
\end{center}
\smallskip
standing for the unconditional obligation `One ought not print the cartoons depicting Mohammed' -- where ${\bf V}=\{\{\top^{\sf f}\}\}$ and ${\bf U}=\{\{\neg\neg q^{\sf o}\}\}$. The derivation of the dynamic permission to print the cartoons depicting Mohammed corresponds to the following application of a suitable inference rule $dp$:
\smallskip
\begin{center}
    {\AxiomC{$ $}
    \LeftLabel{\scriptsize{$\delta_{\mathcal{P}_{1}}$}}
    \UnaryInfC{$\varnothing\mid\ \specialvdashperm{{\bf S_{1}}}{{\bf T_{1}}}p$}
    \AxiomC{$ $}
    \RightLabel{\scriptsize{$\delta_{\mathcal{O}_{2}}$}}
    \UnaryInfC{$\varnothing\mid\ \specialvdashnorm{{\bf U}}{{\bf V}}\neg q$}
    \RightLabel{\scriptsize{$\delta_{\mathcal{O}_{1}}$}}
    \UnaryInfC{$\varnothing\mid\ \specialvdashnorm{{\bf U'}}{{\bf V'}}\neg p$}
    \RightLabel{\scriptsize{$dp_{\mathcal{W}^{\sf p\star}}$}}
    \BinaryInfC{$\varnothing\mid\ \specialvdashperm{{\bf S}}{{\bf T}}q$}
    \DisplayProof}
\end{center}
\smallskip
with ${\bf V'}={\bf V}\cup\{\{\neg q^{\sf o}\}\}$, ${\bf U'}={\bf U}\cup\{\{\neg\neg p^{\sf o}\}\}$, ${\bf T}={\bf T_{1}}\cup{\bf V}$, ${\bf S}={\bf S_{1}}$ and $\mathcal{W}^{\sf p\star}=\Big\{\dfrac{\top:\top}{q}\Big\}$. Notice that neither ${\bf T}$ is ${\bf T_{1}}\cup{\bf V'}$, nor ${\bf S}$ is ${\bf S_{1}}\cup{\bf U'}$. In other words, the $dp$ application causes the removal of sets of formulas from ${\bf T_{1}}\cup{\bf V'}$ and ${\bf S_{1}}\cup{\bf U'}$.
\end{example}

\begin{example}[Talmudic {\em Qal wa-\d homer}]
Let the extra-logical rules
\smallskip
\begin{center}
    {\AxiomC{$ $}
    \LeftLabel{\scriptsize{$\delta_{\mathcal{O}_{1}}$}}
    \UnaryInfC{$\Pi\mid\ \specialvdashnorm{{\bf S_{1}}}{{\bf T_{1}}}\neg p$}
    \DisplayProof}\quad
    {\AxiomC{$ $}
    \RightLabel{\scriptsize{$\delta_{\mathcal{P}_{1}}$}}
    \UnaryInfC{$\Pi\mid\ \specialvdashperm{{\bf S_{2}}}{{\bf T_{2}}}q$}
    \DisplayProof}\quad
    {\AxiomC{$\Pi\mid\Gamma\specialvdashperm{{\bf S_{3}}}{{\bf T_{3}}}q$}
    \RightLabel{\scriptsize{$\delta_{\mathcal{P}_{2}}$}}
    \UnaryInfC{$\Pi\mid\Gamma\specialvdashperm{{\bf S'_{3}}}{{\bf T'_{3}}}p$}
    \DisplayProof}
\end{center}
\smallskip
stand for the unconditional obligation `One ought not marry the daughter of his daughter', the unconditional permission `One is allowed to marry one's daughter' and the conditional permission `If one is allowed to marry one's daughter, one is allowed to marry the daughter of his daughter', respectively -- where
    \smallskip
    \begin{itemize}
    \item[$(i)$] ${\bf T_{1}}=\{\{\top^{\sf o}\}\}$ and ${\bf S_{1}}=\{\{\neg\neg p^{\sf o}\}\}$;
    \smallskip
    \item[$(ii)$] ${\bf T_{2}}=\{\{\top^{\sf p}\}\}$ and ${\bf S_{2}}=\{\{\neg q^{\sf p}\}\}$;
    \smallskip
    \item[$(iii)$] ${\bf T'_{3}}={\bf T_{3}}\cup\{\{q^{\sf p}\}\}$ and ${\bf S'_{3}}={\bf S_{3}}\cup\{\{\neg p^{\sf p}\}\}$.
    \end{itemize}
    \smallskip
The Talmudic principle of {\em Qal wa-\d homer} licenses the inference to the unconditional obligation `One ought not marry his daughter' \cite[p. 223]{Steinsaltz}. This corresponds to the following application of a suitable inference rule $qw$:
\smallskip
\begin{center}
    {\AxiomC{$ $}
    \LeftLabel{\scriptsize{$\delta_{\mathcal{O}_{1}}$}}
    \UnaryInfC{$\varnothing\mid\ \specialvdashnorm{{\bf S_{1}}}{{\bf T_{1}}}\neg p$}
    \AxiomC{$ $}
    \RightLabel{\scriptsize{$\delta_{\mathcal{P}_{1}}$}}
    \UnaryInfC{$\varnothing\mid\ \specialvdashperm{{\bf S_{2}}}{{\bf T_{2}}}q$}
    \RightLabel{\scriptsize{$\delta_{\mathcal{P}_{2}}$}}
    \UnaryInfC{$\varnothing\mid\ \specialvdashperm{{\bf S'_{2}}}{{\bf T'_{2}}}p$}
    \RightLabel{\scriptsize{$qw_{\mathcal{W}^{\sf o\star}}$}}
    \BinaryInfC{$\varnothing\mid\ \specialvdashnorm{{\bf S}}{{\bf T}}\neg q$}
    \DisplayProof}
\end{center}
\smallskip
with ${\bf T'_{2}}={\bf T_{2}}\cup\{\{q^{\sf p}\}\}$,  ${\bf S'_{2}}={\bf S_{2}}\cup\{\{\neg p^{\sf p}\}\}$, ${\bf T}={\bf T_{1}}$, ${\bf S}=\{\{\neg\neg q^{\sf o}\}\}$ and $\mathcal{W}^{\sf o\star}=\Big\{\dfrac{\top:\top}{\neg q}\Big\}$.
\end{example}

\remove{\begin{example}[Talmudic {\em Qal wa-\d homer} with {\em dayo}]
{\color{blue}You can derive as an obligation something which, if its negation is permitted, generates a conflict with an existing obligation -- provided that the derived obligation is not stronger than the existing one}
\end{example}}

\remove{Here is an example of an extra-logical rule application which simultaneously adds and removes formulas from the control pairs of the premises.

\begin{example}[Deontic sure-thing principle]
    Let the extra-logical rules
    \smallskip
    \begin{center}
        {\AxiomC{$\Gamma\specialvdashdef{{\bf S_{1}}}{{\bf T_{1}}}p$}
        \LeftLabel{\scriptsize{$\delta_{\mathcal{O}_{1}}$}}
        \UnaryInfC{$\Gamma\mid\ \specialvdashnorm{{\bf S'_{1}}}{{\bf T'_{1}}}q$}
        \DisplayProof}\quad
        {\AxiomC{$\Gamma\specialvdashdef{{\bf S_{2}}}{{\bf T_{2}}}\neg p$}
        \RightLabel{\scriptsize{$\delta_{\mathcal{O}_{1}}$}}
        \UnaryInfC{$\Gamma\mid\ \specialvdashnorm{{\bf S'_{2}}}{{\bf T'_{2}}}q$}
        \DisplayProof}
    \end{center}
    \smallskip
    stand for the conditional obligations `If there will be a nuclear war, we ought to be disarmed' and `If there will not be a nuclear war, we ought to be disarmed', respectively -- where
    \smallskip
    \begin{itemize}
        \item[$(i)$] ${\bf T'_{1}}={\bf T_{1}}\cup\{\{p^{\sf f}\}\}$ and ${\bf S'_{1}}={\bf S_{1}}\cup\{\{\neg q^{\sf o}\}\}$;
        \smallskip
        \item[$(ii)$] ${\bf T'_{2}}={\bf T_{2}}\cup\{\{\neg p^{\sf f}\}\}$ and ${\bf S'_{2}}={\bf S_{2}}\cup\{\{\neg q^{\sf o}\}\}$.
    \end{itemize}
    \smallskip
    The deontic sure-thing principle licenses the inference to the unconditional obligation `One ought to be disarmed' \cite{Jeffrey83}. This corresponds to the following application of a suitable inference rule $st$:
    \smallskip
    \begin{center}
        {\AxiomC{$ $}
        \LeftLabel{\scriptsize{$ax$}}
        \UnaryInfC{$p\specialvdashdef{\varnothing}{\varnothing}p$}
        \LeftLabel{\scriptsize{$\delta_{\mathcal{O}_{1}}$}}
        \UnaryInfC{$p\mid\ \specialvdashnorm{{\bf S'}}{{\bf T'}}q$}
        \AxiomC{$ $}
        \RightLabel{\scriptsize{$ax$}}
        \UnaryInfC{$\neg p\specialvdashdef{\varnothing}{\varnothing}\neg p$}
        \RightLabel{\scriptsize{$\delta_{\mathcal{O}_{2}}$}}
        \UnaryInfC{$\neg p\mid\ \specialvdashnorm{{\bf S''}}{{\bf T''}}q$}
        \RightLabel{\scriptsize{$st$}}
        \BinaryInfC{$p\vee\neg p\mid\ \specialvdashnorm{{\bf S}}{{\bf T}}q$}
        \DisplayProof}
    \end{center}
    \smallskip
    with ${\bf T'}=\{\{p^{\sf f}\}\}$, ${\bf S'}={\bf S''}={\bf S}=\{\{\neg q^{\sf o}\}\}$, ${\bf T''}=\{\{\neg p^{\sf f}\}\}$ and ${\bf T}=\{\{p\vee\neg p^{\sf f}\},\{p\vee\neg p^{\sf o}\}\}$. 
    
    \noindent Let us mention another example of a scenario where the deontic sure-thing principle is applied: ``a priestess enjoined upon her son not to take to public speaking: `For', she said, `if you say what is right, men will hate you; if you say what is wrong, the gods will hate you' '' (Aristotle, {\em Rhetoric}, 1399a).
\end{example}}

Control pairs can be further refined  to address problematic scenarios involving the unrestricted weakening of obligations. A symmetric control pair is a structure of the form $$\langle{\bf T},\langle{\bf S},{\bf S}'\rangle\rangle$$ where ${\bf T}$ collects sets of conditions, ${\bf S}$ collects sets of constraints for formulas occurring in antecedent position, and ${\bf S}'$ collects sets of constraints for formulas occurring in succedent position. We can now adapt the notions of compatibility and soundness to this symmetric setting as follows.

\begin{definition}\label{dualcompatibility}
\noindent Let $\langle{\bf T},{\frak S}\rangle$ be a symmetric control pair, with ${\frak S}=\langle{\bf S},{\bf S'}\rangle$.
\smallskip
\begin{itemize}
    \item[$(i)$]  $\Delta$ is {\em compatible} with $\langle{\bf T},{\frak S}\rangle$ exactly when $\fullGpn$ refutes $A\vdash\bigvee\Delta$ for every $A\in\bigcup{\bf S'}$.
    \smallskip
    \item[$(ii)$] $\Pi\mid\Gamma\specialvdashnorm{{\frak S}}{{\bf T}}\Delta$ is {\em sound} if and only if $\Pi\mid\Gamma\specialvdashnorm{{\bf S}}{{\bf T}}\Delta$ is sound and $\Delta$ is compatible with ${\bf S'}$.
    \smallskip
    \item[$(iii)$] A $\Gastc$-derivation $\pi$ of $\Pi\mid\Gamma\specialvdashnorm{{\frak S}}{{\bf T}}\Delta$ is a {\em proof} if and only if the conclusion is sound and $\pi$ is a proof when removing ${\bf S'}$ and its ancestors along $\pi$.
\end{itemize}
\end{definition}

We end this section with the following applications of symmetric control pairs.

\begin{example}[Ross's paradox]
    Let the extra-logical axiom and rules
    \smallskip
    \begin{center}
        {\AxiomC{$ $}
        \LeftLabel{\scriptsize{$\delta_{\mathcal{O}_{1}}$}}
        \UnaryInfC{$\Pi\mid\Gamma\specialvdashnorm{{\bf S_{1}}}{{\bf T_{1}}}p$}
        \DisplayProof}\quad
        {\AxiomC{$ $}
        \LeftLabel{\scriptsize{$\delta_{\mathcal{O}_{1}}$}}
        \UnaryInfC{$\Pi\mid\Gamma\specialvdashnorm{{\bf S_{2}}}{{\bf T_{2}}}\neg q$}
        \DisplayProof}
        {\AxiomC{$ $}
        \RightLabel{\scriptsize{$ax$}}
        \UnaryInfC{$r\specialvdashdef{\varnothing}{\varnothing}q$}
        \DisplayProof}
    \end{center}
    \smallskip
    stand for the unconditional obligations `One ought to mail the letter', `One ought not destroy the letter' and the factual statement `If one burns the letter, she destroys it', respectively -- where ${\bf T_{1}}={\bf T_{2}}=\{\{\top^{\sf o}\}\}$, ${\bf S_{1}}=\{\{\neg p^{\sf o}\}\}$ and ${\bf S_{2}}=\{\{\neg\neg q^{\sf o}\}\}$. 
    
    `One mails the letter or one burns the letter' can be inferred from `One mails the letter' in classical logic, closure of obligations under classical consequence yields the unconditional obligation  `One ought to post the letter or burn it'. This means that the obligation to mail the letter would entail an obligation that could be satisfied by burning it -- an outcome that appears undesirable \cite{Ross41}.

    To avoid this scenario and filter out the undesired instance of Right Weakening, we rewrite $\delta_{\mathcal{O}_{1}}$ replacing ${\bf S_{1}}$ with $\mathfrak{S}_{1}=\langle{\bf S_{1}},\{\{p\vee q^{\sf o}\}\}\rangle$. As a result, the following derivation is a paraproof:
    \smallskip
    \begin{center}
        {\AxiomC{$ $}
        \RightLabel{\scriptsize{$\delta_{\mathcal{O}_{1}}$}}
        \UnaryInfC{$\varnothing\mid\ \specialvdashnorm{{\mathfrak{S}_{1}}}{{\bf T_{1}}}p$}
        \RightLabel{\scriptsize{$LW^{\sf O}$}}
        \UnaryInfC{$\varnothing\mid\ \specialvdashnorm{{\mathfrak{S}_{1}}}{{\bf T_{1}}}p,q$}
        \DisplayProof}
    \end{center}
\end{example}

\begin{example}[Good Samaritan paradox]
Let the following extra-logical rules
\smallskip
\begin{center}
    {\AxiomC{$ $}
    \LeftLabel{\scriptsize{$\delta_{\mathcal{O}_{1}}'$}}
    \UnaryInfC{$\Pi\mid\Gamma\specialvdashnorm{{\frak S_{1}}}{{\bf T_{1}}}p $}
    \DisplayProof}\quad
    {\AxiomC{$ $}
    \RightLabel{\scriptsize{$\delta_{\mathcal{O}_{1}}''$}}
    \UnaryInfC{$\Pi\mid\Gamma\specialvdashnorm{{\frak S_{1}}}{{\bf T_{1}}}q $}
    \DisplayProof}
\end{center}
\smallskip
stand for the unconditional obligation `It ought to be that the Samaritan helps Jones whom
Smith has robbed', and the extra-logical rule
\smallskip
\begin{center}
    {\AxiomC{$ $}
    \RightLabel{\scriptsize{$\delta_{\mathcal{O}_{2}}$}}
    \UnaryInfC{$\Pi\mid\Gamma\specialvdashnorm{{\frak S_{2}}}{{\bf T_{2}}}\neg q $}
    \DisplayProof}
\end{center}
\smallskip
stand for the unconditional obligation `Smith ought not rob Jones' \cite{Aqvist67} -- where ${\bf T_{1}}={\bf T_{2}}=\{\{\top^{\sf f}\}\}$, ${\frak S_{1}}=\langle\{\{\neg(p\wedge q)^{\sf o}\}\},\{\{q^{\sf o}\}\}\rangle$ and ${\frak S_{2}}=\langle\{\{\neg\neg q^{\sf o}\}\},\{\{q^{\sf o}\}\}\rangle$. Consider the following derivation:
\smallskip
\begin{center}
    {\AxiomC{$ $}
    \LeftLabel{\scriptsize{$\delta_{\mathcal{O}_{1}}'$}}
    \UnaryInfC{$\varnothing\mid\ \specialvdashnorm{{\frak S_{1}}}{{\bf T_{1}}}p$}
    \AxiomC{$ $}
    \LeftLabel{\scriptsize{$\delta_{\mathcal{O}_{1}}''$}}
    \UnaryInfC{$\varnothing\mid\ \specialvdashnorm{{\frak S_{1}}}{{\bf T_{1}}}q$}
    \BinaryInfC{$\varnothing\mid\ \specialvdashnorm{{\frak S_{1}}}{{\bf T_{1}}}p\wedge q$}
    \DisplayProof}
\end{center}
\smallskip
Such derivation is a proof, while the immediate leftmost subderivation is a paraproof. This reflects the fact that the obligation to help victims does not entail that the existence of victims is itself obligatory.
\remove{where ${\bf T}={\bf T_{1}}\cup{\bf T_{2}}$ and ${\frak S}=\langle{\bf S_{1}}\cup{\bf S_{2}},\{\{q^{\sf o}\}\}\rangle$. Such a derivation is a paraproof. This reflects the fact that the obligation to help victims does not entail that the existence of victims is itself obligatory. Notice that this use of $rrw$ matches the intuitive assessment of the paradox even when consistency constraints are omitted -- i.e., when ${\bf S_{1}} = {\bf S_{2}} = \varnothing$.}
\end{example}

We conclude this section by observing that symmetric control pairs can be used to enforce the metaethical principle of `no vacuous
obligations', according to which tautologies are never obligatory and contradictions are never prohibited \cite{vW51}. Indeed, we could design $\Gastc$ calculi without ${\sf O}$-labelled identity axioms, and in which every ${\sf O}$-labelled sequent includes, among its constraints, the singleton containing an arbitrary tautology for formulas in succedent position. Such $\Gastc$ calculi would thereby capture deontic reasoning without allowing unconstrained Right Weakening \cite{stolpephd,PvdT14}.

\section{Concluding remarks}\label{conclusion}

In this paper, we introduced a uniform proof-theoretic framework for defaults, obligations, and permissions based on controlled sequent calculi. We defined deontic extensions, paralleling \L ukasiewicz extensions for defaults and generalizing outfamilies in I/O logics. Controlled calculi were shown to admit contraction and non-analytic cut, to exhibit a weak form of analyticity, and to be strongly complete with respect to credulous consequence grounded on \L ukasiewicz and deontic extensions. Moreover, they support weak versions of cumulative transitivity and cautious monotony, thus providing a unified and robust basis for deontic and nonmonotonic reasoning. 

Several perspectives open up. First, control sets can be used to embed specific instances of paraconsistent reasoning into the classical base logic \cite{JLC17}. Furthermore, natural extensions of the framework seem capable of capturing default deontic reasoning on non-classical bases, particularly ${\bf FDE}$-based logics \cite{anticut}.

It would be interesting to design $\Gastc$ calculi in which extra-logical rules transmit arbitrary auxiliary formulas from the succedents of the premises to the succedent of the conclusion, thereby providing a uniform and modular treatment of {\em disjunctive} default and deontic reasoning \cite{IOhypersequent}. With appropriate adjustments to the labelling of turnstiles and to the notion of soundness, it should also be possible to obtain $\Gastc$ calculi for constrained I/O logics. Unlike the hybrid hypersequent calculi developed for I/O logics in \cite{IOhypersequent}, these $\Gastc$ calculi would enjoy strong adequacy.

Moreover, with respect to prohibitions understood as obligations to the contrary, certain contexts call for a more refined treatment, for instance, within Talmudic deontic logic \cite{Talmudicdeontic}. It would therefore be of particular interest to integrate explicit prohibitions into the very definition of normative system, introducing a dedicated labelled turnstile $\specialvdashprohib{}{}$ to develop strongly complete $\Gastc$ calculi. Additional extra-logical rules involving $\specialvdashprohib{}{}$ could then capture the intricate interactions between prohibitions and other deontic notions.%primitives, such as obligations and permissions.

Additionally, it would be worthwhile to examine whether $\Gastc$ calculi can be employed to address prioritized default and normative reasoning \cite{Brewka95, Horty07, vdPS13}. Since prioritized $m$- and $d$-extensions lack the key property of semimonotonicity, defining a sound notion of $\Gastc$-proof capable of tracking prioritized credulous consequence may require a suitable notion of {\em hypothetical soundness} -- i.e., soundness under the assumption that some formulas are not provable.

\bibliographystyle{plain}
\bibliography{default}

\end{document}